\documentclass[10pt,twoside,onecolumn]{IEEEtran}
\usepackage[dvipsnames,table]{xcolor}
\usepackage{algorithmic}
\usepackage{graphicx}
\usepackage{textcomp}
\definecolor{subsectioncolor}{rgb}{0,0.541,0.855}
\usepackage{amsmath, amssymb, amsfonts, enumerate, bbold}
\usepackage{amsthm}
\usepackage[font=small,labelfont=bf]{caption}
\captionsetup{compatibility=false}
\usepackage{tikz}
\usetikzlibrary{arrows.meta,positioning,calc,decorations.pathmorphing,intersections,fit,patterns,backgrounds}
\usepackage{tcolorbox}
\usepackage{hyperref}
\hypersetup{
colorlinks = true,
citecolor = NavyBlue!80!Black,
urlcolor = NavyBlue!80!Black,
linkcolor = NavyBlue!80!Black
}
\pdfstringdefDisableCommands{\def\mh#1{#1}\def\new#1{#1}\def\red#1{#1}\def\jqe#1{#1}}
\usepackage{wrapfig,graphicx,lipsum}
\usepackage{subcaption}
\usepackage{bm}
\usepackage{bbm}
\usepackage{mathtools}  
\usepackage{booktabs}
\usepackage{tabularx}
\usepackage[numbers,sort&compress]{natbib}
\makeatletter
\let\journalname\relax
\makeatother


\providecommand{\journalname}{IEEE Transactions}
\renewcommand{\journalname}{IEEE Transactions}
\newtheorem{definition}{Definition}
\newtheorem{lemma}{Lemma}
\newtheorem{remark}{Remark}
\newtheorem{proposition}{Proposition}
\newtheorem{theorem}{Theorem}
\newtheorem{corollary}{Corollary}
\newtheorem{example}{Example}
\def\nr{n_\mathrm{R}}
\def\np{n_\mathrm{P}}

\def\mt{m_\mathrm{T}}
\def\mp{m_\mathrm{P}}
\def\nc{n_\mathrm{C}}
\def\lp{\ell_\mathrm{P}}
\def\lt{\ell_\mathrm{T}}
\def\Alr{\mathbf A^\mathrm{LR}}
\def\Acb{\mathbf A^\mathrm{CB}}
\def\Acr{\mathbf A^\mathrm{CR}}

\newcommand{\mh}[1]{{\color{black}{#1}}}
\newcommand{\new}[1]{{\color{black}{#1}}}
\newcommand{\red}[1]{{\color{black}{#1}}}
\newcommand{\jqe}[1]{{\color{black} {#1}}}
\usepackage{etoolbox}

\def\BibTeX{{\rm B\kern-.05em{\sc i\kern-.025em b}\kern-.08em
    T\kern-.1667em\lower.7ex\hbox{E}\kern-.125emX}}
\markboth{\journalname, VOL. XX, NO. XX, XXXX 2017}
{Mou and Qin: Braess' Paradoxes in Coupled Power and Transportation Systems}
\begin{document}
\title{Braess' Paradoxes in Coupled Power and Transportation Systems}
\author{Minghao Mou and Junjie Qin\thanks{M.M. and J.Q. are with the Elmore Family School of Electrical and Computer Engineering at Purdue University, West Lafayette, IN, USA. Emails: \texttt{\{mmou,jq\}@purdue.edu}.}
}

\maketitle

\begin{abstract}
Transportation electrification introduces strong coupling between the power and transportation systems. In this paper, we generalize the classical notion of Braess' paradox to coupled power and transportation systems, and examine how the cross-system coupling induces new types of Braess' paradoxes. To this end, we model the power and transportation networks as graphs, coupled with charging points connecting to nodes in both graphs. The power system operation is characterized by the economic dispatch optimization, while the transportation system user equilibrium models travelers' route and charging choices. By analyzing simple coupled systems, we demonstrate that capacity expansion in either transportation or power system can deteriorate the performance of both systems, and uncover the fundamental mechanisms for such new Braess' paradoxes to occur. We also provide necessary and sufficient conditions of the occurrences of Braess' paradoxes for general coupled systems, leading to managerial insights for infrastructure planners. For general networks, through characterizing the generalized user equilibrium of the coupled systems, we develop novel charging pricing policies to mitigate them. 
\end{abstract}

\begin{IEEEkeywords}
Braess' Paradoxes, Coupled Power and Transportation Systems, Electric Vehicle Charging, Generalized User Equilibrium, Charging Pricing Policies
\end{IEEEkeywords}

\section{Introduction}
The global trend of transportation electrification calls for massive upgrades of supporting infrastructure. In addition to the rapidly expanding charging infrastructure \citep{afdc2024trends}, accommodating the increasing charging loads requires additional power grid capacities, which are usually achieved by costly and time-consuming grid asset upgrades.
 Independently, like many countries in the world, the U.S. transportation infrastructure requires continuous investments to address aging roads/structures and accommodate increasing usage. Given budgetary constraints, it is therefore critical to strategically plan and invest in infrastructure upgrades to streamline transportation electrification while improving the performance of the power and transportation systems. 

Meanwhile, transportation electrification deepens the \emph{coupling} between the transportation and power systems. Electric vehicle (EV) routing and charging choices determine the spatial charging load profile, which affects dispatch and locational marginal prices (LMPs) in the grid; these prices in turn enter drivers' decision-making and reshape equilibrium \new{travel/charing patterns}. This feedback motivates planning and operational questions that cannot be addressed by studying the two infrastructures in isolation.

A key concern in infrastructure planning is whether capacity expansion necessarily improves performance. In transportation networks, Braess' paradox (BP) shows that under selfish routing, adding capacity can \emph{increase} equilibrium travel time \cite{braess2005paradox}. Similar paradoxical effects appear in power systems  \cite{schafer2022understanding,wu1996folk}. In coupled power and transportation systems, \jqe{impact of} capacity changes in either system can propagate through the feedback loop, potentially creating \emph{cross-network} 
\jqe{paradoxical outcomes}. This paper asks: \emph{Can the closed-loop coupling between power and transportation systems manifest new mechanisms and types of \mh{BPs}? If yes, how to screen and mitigate them?}

\subsection{\jqe{Contributions and Paper Organization}}
We study a coupled system where the transportation network and power network are interconnected via charging locations. EV drivers are modeled as a nonatomic population \jqe{who select\new{s}} routes and charging points under a transportation user equilibrium (UE). The power network is modeled by an economic dispatch problem that takes the induced charging loads as inputs and returns LMPs and generation schedules. \new{The equilibrium of the coupled system is characterized by the fixed point of the mutual dependence between traffic flows and LMPs: given LMPs, EV travelers are at UE, and given the induced charging demand, LMPs satisfy economic dispatch. We refer to such an equilibrium as a \emph{generalized user equilibrium} (GUE).} Within the framework, we analyze the sensitivity of GUE to capacity expansions in roads and transmission lines, develop conditions for BP occurrences, and investigate BP mitigation via charging pricing. Our contributions are: 

\new{
\emph{1) Generalized BP.} We formalize BP in coupled power and transportation systems induced by transportation or power expansions, and distinguish whether the BP appears in the same network or in the other network through the coupling. Specifically, we study Transportation/Power expansion induced BP in Transportation/Power systems, abbreviated as type T-T, T-P, P-T, and P-P.
}

\emph{2) BP \new{m}echanisms and characterization.} We identify \new{the underlying} mechanisms of each BP type, and derive necessary and sufficient conditions for BP occurrences across increasingly general \jqe{power network }congestion regimes (including uncongested and radial power networks with varying congestion patterns); 

\emph{4) BP \new{m}itigation.} We propose alternative charging pricing policies, including system-optimal adaptive prices that steer the GUE toward minimizing transportation, power, or joint costs, and a convex formulation that identifies (when feasible) static prices that preclude all BP types.

The rest of the paper is organized as follows: Section~\ref{sec:literature} reviews related literature; Section~\ref{sec:model} introduces the model, GUE and generalized BP; Section~\ref{sec:examples} demonstrates the mechanisms of BPs; Section~\ref{sec:NS_conditions_sec} presents necessary and sufficient conditions for BP occurrences; Section~\ref{sec:mitigation} investigates BP mitigation via charging pricing; Section~\ref{sec:numerical} provides numerical results; and Section~\ref{sec:Conclusion} concludes the paper. Section~\ref{sec:appendix} includes details omitted in the main text.

\subsection{Related Literature} \label{sec:literature}
Our work builds on a growing literature on coupled power and transportation systems, with a particular focus on network equilibria and inefficiencies arising from network expansion. The related work can be grouped into three streams: classical \mh{BP}, coupled system modeling, and optimal pricing.

\subsubsection{Classical Braess' Paradox}
Braess' paradox was first identified in transportation networks and has since been extensively studied \cite{braess2005paradox,arnott1994economics,steinberg1983prevalence,pas1997braess,dafermos1969traffic}, with empirical evidence reported in \cite{fisk1981empirical}. Related paradoxical effects have also been observed in power systems, where power line expansion may worsen equilibrium by increasing generation cost or creating unintended inefficiencies \cite{wu1996folk,schafer2022understanding}. Several \new{work} characterizes \jqe{the conditions under which} BP arises in transportation networks. For example, \new{Pas and Principio} \cite{pas1997braess} shows that paradoxical outcomes depend critically on both travel demand and congestion characteristics, while \new{Milchtaich} \cite{milchtaich2006network} proves that under separable, strictly increasing travel costs, the occurrence of BP depends only on network topology. \new{Acemoglu et al. \cite{acemoglu2018informational} extend BP to traffic networks with heterogeneous information, where travelers know different subsets of routes (information sets). They show that enlarging the information set of a population, thereby giving its users access to additional routes, can paradoxically increase that population’s equilibrium travel cost.} In contrast to these single-network settings, we study BP induced by the \emph{coupling} between transportation and power systems.

\subsubsection{Coupled Power and Transportation System Modeling}
A growing body of work develops mathematical models for electrified transportation. These studies mostly consider two operating paradigms: \emph{decentralized operation} \cite{wei2017network,he2013integrated,alizadeh2016optimal,rossi2019interaction}, in which travelers make self-interested routing and charging decisions while the power grid is operated independently; and \emph{centralized or coordinated operation} \cite{wei2016optimal, mou2024nexus}, in which a planner jointly manages both infrastructures. Existing work largely focuses on equilibrium characterization and computation. Consistent with much of this literature, we adopt a decentralized equilibrium framework, but address a different question: how \emph{capacity perturbations} in either infrastructure propagate through and under what conditions they induce BP.


\subsubsection{Optimal Pricing in Coupled Power and Transportation Systems}
Pricing is a key mechanism for aligning individual decisions with system-level objectives. Electricity tariffs or road tolls designed in isolation generally fail to capture cross-network externalities. Optimal pricing frameworks jointly design tariffs and tolls to reflect marginal costs across the systems and reduce the inefficiency of decentralized behavior. This literature can be broadly divided into \textit{\mh{centralized or cooperative pricing}} \cite{alizadeh2016optimal,cui2021optimal,mou2024nexus}, where one planner or multiple coordinated operators jointly set prices and tolls to minimize total social cost, and \textit{\mh{non-cooperative pricing}} \cite{lu2024optimal,li2022strategic,liu2023pricing}, where multiple operators act strategically and electricity prices emerge from game-theoretic interactions. Although both strands study how pricing steers users toward efficient equilibria under fixed network configurations, to the best of our knowledge, no prior work examines mitigating BP via pricing. Complementarily, we study pricing as a \emph{mitigation mechanism} for BP.
 
\section{Model}\label{sec:model}
\noindent \textbf{Notation:} We denote $[n]$ as the set of positive integers no more than $n$, and $\mathbbm{1}\{\cdot\}$ as the indicator function. We also adopt the convention that $\mathbf{a}$ stands for a vector with $a_i$ denoting its $i$-th element, and $\mathbf{A}$ for a matrix. For an $\mathbb R^n$-valued differentiable function $\mathbf h(x)$, we use $\partial \mathbf h/\partial x$ to denote the $n$-dimensional vector whose $i$-th element is $\partial h_i /\partial x$.

\subsection{Coupled System Model}
\subsubsection{Power Network Model}
We model the power grid as a  graph $\mathcal{G}_\mathrm{P} = (\mathcal{V}_\mathrm{P},\mathcal{E}_\mathrm{P})$ with nodes modeling buses and edges modeling the power lines.  Denote the number of buses by $n_\mathrm{P}$ 
and power injection by $\mathbf p \in \mathbb R^{n_\mathrm{P}}$. 
The  (linearized) power flow constraints for the power network can be 
written as 
\begin{equation}
	 \jqe{\mathbf 1^\top \mathbf p = 0 \quad \mbox{and} \quad \mathbf H \mathbf p \leq \bar {\mathbf f},}
\end{equation}
where the equality constraint enforces network-wide power balance, and the inequality constraint imposes the line flow limit with $m_\mathrm{P}$ being the number of flow constraints and $\new{\mathbf{\bar f} > \mathbf{0}}$ modeling the flow capacities. Here matrix $\mathbf H\in \mathbb{R}^{m_\mathrm{P} \times n _\mathrm{P}}$, commonly referred to as the \emph{shift factor matrix}, characterizes the linear mapping from the power injection vector to the flow vector. See \cite{qin2018submodularity} for details on deriving these constraints. 

\subsubsection{Transportation Network Model}
We model the transportation system as a  graph $\mathcal G_\mathrm{T}=(\mathcal{V}_\mathrm{T}, \mathcal{E}_\mathrm{T})$, where the nodes represent locations, and links represent roads connecting locations. Denote the number of links by $m_\mathrm{T}$. We focus on the travel and charging decisions of a collection of EV drivers (or travelers) who share the same \emph{origin} and \emph{destination}\footnote{We allow the possibility of \emph{substitutable destinations}, in which case there are multiple nodes $\mathcal V_\mathrm{T}^\mathrm{dest} \subset \mathcal V_\mathrm{T}$ and it is indifferent for each traveler to arrive at any $v^\mathrm{dest}_\mathrm{T}\in \mathcal V_\mathrm{T}^\mathrm{dest}$. This can be used to model, e.g., grocery trips to similar nearby grocery stores. \mh{Results in this paper can be easily generalized to the case with multiple origin-destination pairs. See details in Section \ref{sec:generalization_to_multiple_O-D_pairs}.}}, both being nodes in set $\mathcal V_\mathrm{T}$, and model the travelers as a continuum $\mathcal J\triangleq[0,1]$. Given the origin and destination, we consider the set of \emph{routes} from the origin to the destination that a traveler may pick. Let the total number of such routes by $n_\mathrm{R}$. Each route $r=1, \dots, n_\mathrm{R}$ defines a way to traverse through the nodes and links in $\mathcal G_\mathrm{T}$ from the origin to destination. In particular, each route $r$ will contain a finite subset of links. The membership, i.e., which links are associated with each route, can be summarized by the \emph{link-route incidence matrix} $\mathbf A^\mathrm{LR} \in \mathbb R^{m_\mathrm{T}\times n_\mathrm{R}}$, whose $(\ell,r)$-entry is $\mathbbm{1}\{\mbox{if route $r$ contains link $\ell$}\}$.
 
 \subsubsection{Charging Points} We assume each EV $j\in \mathcal J$ must charge at one of $n_\mathrm{C}$ charging points located at the nodes en route (including the origin and destination). Furthermore, we associate each route with exactly one charging location\footnote{This can be done without loss of generality. In the event that a route spans $\mh{Z}>1$ charging points, we can create $\mh{Z}$ copies of the same route and associate the $\mh{z}$-th copy with the $\mh{z}$-th charging point. Effectively, each route in this paper can be viewed as a travel and charging plan for an EV driver.}. We define the \emph{charger-route incidence matrix} $\mathbf \Acr \in \mathbb R^{n_\mathrm{C}\times n_\mathrm{R}}$, whose $(\mh{z},r)$-th entry is $\mathbbm{1}\{\mbox{if charging point $\mh{z}$ is associated with route $r$}\}$, to summarize how charging points are associated with routes. 
 Under our setting, each column of $\Acr$ contains exactly one nonzero entry that is 1. Since every charger is connected with a bus, we define the \emph{charger-bus incidence matrix} $\Acb \in \mathbb R^{n_\mathrm{C}\times n_\mathrm{P}}$, whose $(\mh{z},i)$-th entry is $\mathbbm{1}\{\mbox{if charger $\mh{z}$ is connected to bus $i$}\}$.
 
 \subsubsection{Traveler Choices}
 Each traveler $j \in \mathcal J$ picks a route, which also implies the charging point that the traveler will use. Let $x_r \in [0,1]$ be the fraction of the travelers who pick route $r$, and $\mathbf x \in \mathbb R^{n_\mathrm{R}}$ be the vector collecting all such fractions. A travel pattern $\mathbf x$ is deemed \emph{admissible} if $\mathbf 1^\top \mathbf x =1$ and $\mathbf x \ge \mathbf 0$.
 
 Given the traveler choices summarized by $\mathbf x$, it is easy to verify that the \emph{link flows}, i.e., the fractions of the travelers driving on the links, can be written as $\Alr \mathbf x \in \mathbb R^{m_\mathrm{T}}$. The fractions of travelers charging at different charging points can be obtained from the elements in the vector $\Acr \mathbf x \in \mathbb R^{n_\mathrm{C}}$. 
 
 Travelers make their route/charging choices to minimize their cost associated with traffic congestion delays and charging. As commonly done in the literature \cite{alizadeh2016optimal}, we model the travel cost of link $\ell$ as an affine function of the flow on the link, i.e., $\alpha_\ell (\Alr \mathbf x)_\ell + \beta_\ell$, where $\alpha_\ell \ge 0$ \emph{inversely} scales with the capacity of the link, and $\beta_\ell \ge 0 $ models the travel cost without traffic. By summing the link-based travel costs of all links in a route, we obtain the travel cost for a traveler picking route $r$ as
\begin{equation}
	c^\mathrm{tr}_r (\mathbf x)= (\Alr )^\top_r \left[\mathrm{diag}(\bm \alpha) \Alr \mathbf x + \bm \beta\right],
\end{equation}
 where $(\Alr )^\top_r$ is the $r$-th row of the matrix $(\Alr )^\top$. 
 
 The charging cost of a traveler depends on the electricity price at the charging point en route. Denote the electricity price at bus $i$ by $\lambda_i$. Given the vector of prices $\bm \lambda\in \mathbb R^{n_\mathrm{P}}$, we can obtain the prices at the charging locations as $\Acb \bm \lambda \in \mathbb R^{n_\mathrm{C}}$, and the prices for the charging locations associated with the routes as $(\Acr)^\top\Acb \bm \lambda \in \mathbb R^{n_\mathrm{R}}$. 
 We assume that all the travelers have an identical charging energy need\footnote{\jqe{Non-identical charging needs, even vehicles with no charging needs (e.g., gasoline cars), can be incorporated by considering a finite number of vehicle groups, each with its own $\rho$ value. Such an extension is omitted due to the page limit.}} $\rho>0$. Then the charging cost of a traveler picking route $r$ is 
 \begin{equation}\label{eq:pi}
 	\pi_r\triangleq\pi_r (\bm \lambda) = \rho (\Acr)_r^\top\Acb \bm \lambda,
 \end{equation}
 where $(\Acr)_r^\top$ denotes the $r$-th row of the matrix $(\Acr)^\top$, and the total cost of a traveler picking route $r$ is then 
 \begin{equation}\label{eq:t:cost}
 	c_r(\mathbf x,\ \bm \lambda) = (\Alr )^\top_r \left[\mathrm{diag}({\bm \alpha}) \Alr \mathbf x + \bm \beta\right] + \pi_r(\bm \lambda).
 \end{equation}
 Given~\eqref{eq:pi} and~\eqref{eq:t:cost}, the vector-valued functions $\bm \pi(\bm \lambda)$ and $\mathbf c(\mathbf x, \bm \lambda)$ are defined accordingly.

  \subsection{Economic Dispatch for the Power Network}
Given the spatial distribution of the power loads, the \emph{economic dispatch} problem is the optimization that the power system operator solves to (a) determine the generator dispatch, and (b) calculate the \emph{locational marginal prices} (LMPs) of electricity at different locations in the power network. In our setting, we are interested in how the travel decisions will impact the economic dispatch solution and vice versa. 

To this end, let $\mathbf d(\mathbf x)\triangleq\rho (\Acb)^\top \Acr \mathbf x \in \mathbb R^{n_\mathrm{P}}$ be the charging loads at different buses induced by the travel choices. Given the spatial power load profile, we consider the following economic dispatch optimization
\begin{subequations}\label{eq:economic_dispatch}
    \begin{align}
        \Phi_\mathrm{P}(\mathbf x) \triangleq \min_{\mathbf g \in \mathbb{R}^{\np}, \ \mathbf p \in \mathbb R^{\np}} \quad & \frac{1}{2} \mathbf g^\top \mathbf Q \mathbf g + \bm \mu^\top \mathbf g;\\
        \mathrm{s.t.} \qquad \quad& \bm \lambda : \mathbf g - \mathbf d(\mathbf x)   = \mathbf p \label{eq:lmp_constraint};\\
        &\gamma: \mathbf 1^\top \mathbf p = 0;\\
        &\bm \eta: \mathbf H \mathbf p \le \bar{\mathbf f},
    \end{align}
\end{subequations}
where $\mathbf g$ is the power generation for generators located at different buses, $\mathbf p$ is the net power injection, $\mathbf Q \in \mathbb R^{\np \times \np}$ is a diagonal matrix with strictly positive diagonals modeling the quadratic generation cost coefficients \mh{and for simplicity we use $Q_i$ to denote the $i$-th diagonal entry of $\mathbf{Q}$}, $\bm \mu \in \mathbb R^{\np}$ models the linear generation cost coefficients, and
$\bm \lambda\in \mathbb R^{\np}$, $\gamma \in \mathbb R$, and $\bm \eta \in \mathbb R^{m_\mathrm{P}}$ are dual variables associated with the corresponding constraints.

Problem \eqref{eq:economic_dispatch} is a parametric quadratic program: With different travel pattern $\mathbf x$, the problem has different optimal value and  dual solutions. In particular, we denote the optimal value 
by $\Phi_\mathrm{P}(\mathbf x)$ and the dual solution associated with the power balance constraint\footnote{Except in Section~\ref{sec:mitigation}, we assume a simple pass-through price setting so the electricity price for a charge point is identical to the LMP at the corresponding bus.  In practice, the LMPs are prices in the wholesale electricity market, which may not be the same as the electricity price for charging stations connected to the bus via a power distribution network managed by a utility company. However, given the utility business model, we argue that the spatial profile of the long term averages of the wholesale electricity prices should resemble a similar pattern as that of retail electricity prices. }~\eqref{eq:lmp_constraint} by $\bm \lambda^\star(\mathbf x)$. When optimality is clear from the context, we sometimes simply use $\bm \lambda(\mathbf x)$. 

\subsection{Generalized User Equilibrium}
We have described how the travel decisions will impact the charging prices through the economic dispatch problem. The other direction, i.e., how the prices impact the travel patterns, can be characterized through a notion of \emph{Generalized User Equilibrium} (GUE).

In the transportation literature, \emph{User Equilibrium} (UE), or Wardrop Equilibrium, is the classical notion used to model the outcome of decentralized route choices of a population of travelers. Within such an equilibrium state, no traveler has an incentive (via travel cost reduction) to unilaterally deviate to a different route. In our setting, travelers care about both the travel cost and the charging cost. This leads to the following modified notion of UE when some fixed charging electricity prices $\bm \lambda$ are considered. 
\begin{definition}[Transportation UE given LMPs]\label{def:t:eq}
	Given fixed LMPs $\bm \lambda$, an admissible travel pattern $\mathbf x^\star$ is said to be a transportation UE if for any $r \in [n_\mathrm{R}]$ such that $\mathbf x^\star_r >0$, we have $c_r(\mathbf x^\star,\ \bm \lambda) \le c_{r'}(\mathbf x^\star,\ \bm \lambda)$ for all $r' \in [n_\mathrm{R}]$.  
\end{definition}

It is not hard to see that Definition \ref{def:t:eq} coincides with the notion of Nash Equilibrium defined for the non-atomic game with players $\mathcal J$ and cost defined by \eqref{eq:t:cost}. Similar to the standard results for UE, we have the following alternative characterization: 
\begin{lemma}[Uniform Cost]\label{lemma:equal_cost}
   If $\mathbf x^\star$ constitutes a \jqe{transportation }UE given $\bm \lambda$, there exists a constant $C \geq 0$ such that $c_r(\mathbf x^\star,\ \bm \lambda) \equiv C$, for all $r$ such that $x^\star_r > 0$.
\end{lemma}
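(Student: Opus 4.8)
The plan is to derive the uniform-cost property directly from the GUE inequality in Definition~\ref{def:t:eq} via a two-sided comparison between used routes, and then to check non-negativity of the common value from the explicit form of the cost~\eqref{eq:t:cost}.

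First I would note that admissibility of $\mathbf{x}^\star$ forces $\mathbf{1}^\top \mathbf{x}^\star = 1$, so the set of used routes $\mathcal{R}^\star := \{r : x_r^\star > 0\}$ is nonempty. I would then fix arbitrary $r_1, r_2 \in \mathcal{R}^\star$ and apply Definition~\ref{def:t:eq} twice: with $r_1$ as the used route and $r_2$ as the comparison route it gives $c_{r_1}(\mathbf{x}^\star, \bm{\lambda}) \le c_{r_2}(\mathbf{x}^\star, \bm{\lambda})$, and interchanging the roles (legitimate because $r_2$ is also used) gives the reverse inequality. Combining the two yields $c_{r_1}(\mathbf{x}^\star, \bm{\lambda}) = c_{r_2}(\mathbf{x}^\star, \bm{\lambda})$; since $r_1, r_2$ are arbitrary, every used route carries the same cost, and I would set $C$ to be this common value. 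Because the equilibrium inequality also bounds this value below the cost of every route, $C$ is in fact the minimum route cost, a convenient by-product.

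It then remains to show $C \ge 0$. Expanding~\eqref{eq:t:cost} for a used route $r$, I would argue the travel-cost term $(\Alr)_r^\top[\mathrm{diag}(\bm{\alpha}) \Alr \mathbf{x}^\star + \bm{\beta}]$ is non-negative term by term: it sums the per-link quantities $\alpha_\ell (\Alr \mathbf{x}^\star)_\ell + \beta_\ell$ over the links of $r$, and each is non-negative since $\alpha_\ell, \beta_\ell \ge 0$ and the link flow $(\Alr \mathbf{x}^\star)_\ell \ge 0$ (as $\Alr$ has $0/1$ entries and $\mathbf{x}^\star \ge \mathbf{0}$). The charging-cost term $\rho (\Acr)_r^\top \Acb \bm{\lambda}$ collapses, via the $0/1$ incidence structure, to $\rho \lambda_{i(r)}$ for the unique bus $i(r)$ serving route $r$'s charger; with $\rho > 0$ this contributes non-negatively provided the relevant price is non-negative.

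The equal-cost half of the statement has no real obstacle, being an immediate two-sided use of the equilibrium inequality that mirrors the classical Wardrop characterization. The only delicate point is the sign of the charging-cost term in the $C \ge 0$ claim, which hinges on $\lambda_{i(r)} \ge 0$; I would make this dependence explicit (or invoke the governing non-negativity assumption on the LMPs) rather than treat $C \ge 0$ as automatic, since LMPs from an economic dispatch are not unconditionally non-negative.
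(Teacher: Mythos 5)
Your proof is correct and takes essentially the same approach as the paper: a two-sided application of Definition~\ref{def:t:eq} to an arbitrary pair of used routes, which is exactly the paper's argument. You are in fact slightly more careful than the paper's own proof, which establishes only the equality of costs and passes over the $C \geq 0$ claim entirely; your observation that non-negativity of $C$ hinges on the sign of the LMPs (and the non-negativity of $\bm\alpha$, $\bm\beta$, and the link flows) addresses a point the paper leaves implicit.
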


As the  $\bm \lambda$ depends on $\mathbf x$ via the economic dispatch problem and $\mathbf x$ depends on $\bm \lambda$ via the transportation UE, this forms a feedback loop. It is natural to consider the following notion for the equilibrium of the coupled system:
\begin{definition}[GUE for the Coupled System]\label{def:GUE}
	The \emph{(travel pattern, price)} pair $(\mathbf x^\star, \bm \lambda^\star)$ constitutes a GUE for the coupled power and transportation system if 
 \begin{enumerate}
 	\item[(a)] $\mathbf x^\star$ constitutes a GUE given $\bm \lambda^\star$, and 
 	\item [(b)] $\bm \lambda^\star =\bm  \lambda^\star(\mathbf x^\star)$ defined via the economic dispatch optimization~\eqref{eq:economic_dispatch}.
 \end{enumerate}
\end{definition}

\new{The existence of GUE for general networks is automatically guaranteed under our assumptions. The uniqueness of GUE can be established under additional assumptions. Details are deferred to Section \ref{sec:GUE_as_optimal_solution}.}

\subsection{Braess' Paradoxes}\label{sec:social_cost_metrics}
In the classical \jqe{transportation} BP, increasing the capacity of certain roads can increase the total social costs. When the coupling is considered, we are interested in exploring whether and when such paradoxical phenomena can occur intra and across systems. This amounts to examine how the road capacities, embedded in $\bm \alpha$, and the power line capacities $\mathbf{\bar f}$ can impact various social cost metrics. 

Given $(\bm \alpha, \mathbf{\bar f})$, let $(\mathbf x^\star, \bm \lambda^\star)$ be a GUE. We consider the following social cost metrics that all depend on $(\bm \alpha, \mathbf{\bar f})$:
\begin{enumerate}
	\item[(a)] \emph{Transportation System Social Cost}: $\Phi_\mathrm{T}\triangleq (\mathbf{x}^\star)^\top \mathbf{c}^\mathrm{tr}(\mathbf{x}^\star)$. 
	\item[(b)] \emph{Power System Social Cost}: $\Phi_\mathrm{P}\triangleq \Phi_\mathrm{P}(\mathbf x^\star)$.
	\item[(c)] \emph{Coupled System Social Cost}: $\Phi_\mathrm{C}\triangleq \Phi_\mathrm{T}+ \Phi_\mathrm{P}$.
\end{enumerate}

We can then define BP for our coupled system. 
\begin{definition}[Generalized Braess' Paradox]\label{def:gbp}
	For any $s \in \{\mathrm{T},\, \mathrm{P},\, \mathrm{C}\}$, we say the coupled system exhibits a (generalized) BP (a) if there exists an $\ell_\mathrm{T} \in [m_\mathrm{T}]$ such that $\partial \Phi_s/\partial \alpha_{\ell_\mathrm{T}} <0$, \emph{or} (b) if there exists an $\ell_\mathrm{P} \in [m_\mathrm{P}]$ such that $\partial \Phi_s/\partial \bar f_{\ell_\mathrm{P}} >0$, provided that the derivatives exist.\footnote{We opt for a local/derivative based notion of BP instead of considering the change of social cost metrics with a finite change of capacities. The almost-everywhere existence of the derivatives for general networks is established in Section~\ref{sec:detection_general_networks}, thus Definition \ref{def:gbp} does not limit the practicality of our results. It also allows us to focus on the current network parameters in our analysis, rather than examining all possible ways to expand the capacities. }
\end{definition}


Under the hood of this definition is six types of derivatives as there are three types of costs and two types of capacities. Putting aside $\Phi_\mathrm{C}$, which is obtained by summing the other two cost metrics, we still have four different types of derivatives. This leads to the following taxonomy of BPs: (a) \emph{Type T-T}: increasing a road capacity increases $\Phi_\mathrm{T}$, (b) \emph{Type P-P}: increasing a power line capacity increases $\Phi_\mathrm{P}$, (c) \emph{Type T-P}: increasing a road capacity increases $\Phi_\mathrm{P}$, and (d) \emph{Type P-T}: increasing a power line capacity increases $\Phi_\mathrm{T}$. We also define \emph{Type T-C} and \emph{Type P-C} BPs when $\Phi_\mathrm{C}$ is considered.


\section{Occurrence of Braess' Paradoxes}\label{sec:examples}
We study the occurrence of BPs \jqe{via constructing and analyzing} simple coupled systems. In Section~\ref{sec:2by2-example}, a 2-Route 2-Bus system exhibits type T-T and T-P BPs. In Section~\ref{sec:2by3-example}, a slightly richer 2-Route 3-Bus system displays nearly all BP types. Despite their simplicity, these examples reveal the key mechanisms behind BPs and motivate the analytical characterizations developed in Section~\ref{sec:NS_conditions_sec}. \new{Proofs of this section are deferred to Section~\ref{sec:supp_materials}.}

\subsection{Transportation Expansion Induced Braess' Paradoxes}\label{sec:2by2-example}
\mh{
In this section, we demonstrate the existence of type T-T and T-P \mh{BPs, i.e., transportation system expansion induced BP (\new{TBP}),} through a simple coupled system with non-trivial route choices (Fig. \ref{fig:2routes}).} The transportation system consists of one origin and two substitutable destinations. The destinations are connected to charging stations which are connected to two different buses in a two-bus power network. 

\begin{figure}[!htbp]
\scriptsize
\usetikzlibrary{arrows.meta, positioning, shapes.geometric}
\tikzset{>=stealth} 
\centering
\begin{tikzpicture}[scale=.4]
\draw[thick, fill =black](-18,2) circle (0.5);
\draw[thick, fill =black](-8.5,4.1) circle (.5);
\draw[thick, fill =black](-8.5,-0.1) circle (.5);

\draw (-14,3) node[above] {$x_1$};
\draw (-14,-0.5) node[above] {$x_2$};
\draw[->](-18,2)--(-9.2,4);
\draw[->](-18,2)--(-9.2,0);

\begin{scope}[shift = {(0, .5)}]
\draw [line width=2pt,color=blue] (-1,7) -- ++(0,-3.0);
\draw [line width=2pt,color=blue] (-1,.5) -- ++(0,-3.0);
\draw [-, color =blue] (-1.1,5.5) -- (1,5.5);
\draw [-, color =blue] (-1.1,-.5) -- (1.03,-.5);
\draw [->, color =blue] (1,5.5) -- (1,2.275);
\draw [-, color =blue] (1,2.274) -- (1,-.5);
\path (1,2.275) node[blue,right] {$f \leq \bar f$};
\draw [blue] (-3.5,6) circle (.8);
\path [blue] (-3.5,6) node {$ g_1$};
\draw [-, blue] (-2.7,6) -- (-1,6);
\draw [-, blue] (-1,4.5) -- ++ (-2,0) -- ++ (0,-1.5);
\draw [fill, blue] (-3,3) -- ++(-0.3,0.3) -- ++(.6,0) -- cycle;
\path [blue] (-4,3) node[below] {$\rho x_1$};

\draw [blue] (-3.5,0) circle (.8);
\path [blue] (-3.5,0) node {$ g_2$};
\draw [-, blue] (-2.7,0) -- (-1,0);
\draw [-, blue] (-1,0-1.5) -- ++ (-2,0) -- ++ (0,-1.5);
\draw [fill, blue] (-3,0-3) -- ++(-0.3,0.3) -- ++(.6,0) -- cycle;
\path [blue] (-3,0-3) node[below] {$\rho x_2$};
\end{scope}

\path (-13.3,-4.2) node {Transportation Network};
\path(-0.8,-4.2) node[blue]{Power Network};
\end{tikzpicture}
\caption{2-Route 2-Bus example. The arrow on the power line indicates the positive flow direction. }
\label{fig:2routes}
\end{figure}

We consider linear travel cost in this setting, with $\bm \beta = \mathbf 0$ and without loss of generality $\alpha_1 > \alpha_2$. We assume \new{$Q_1 = Q_2$ and} $\bm \mu = \mathbf{0}$, and we only constrain the power flow from bus 1 to 2 and not the reverse direction since $\alpha_1 > \alpha_2$.
The \new{transportation UE} condition for an admissible $\mathbf x$, given the LMPs $\bm \lambda$, can be expressed as
\begin{equation}\label{eq:2bus:eq}
	\alpha_1 x_1 + \rho \lambda_1 = \alpha_2 x_2 + \rho \lambda_2. 
\end{equation}

\subsubsection{Uncongested Power Network}\label{sec:2x2:unc}
It turns out that the characteristics of the GUE depend on whether the power network is congested. When it is not congested, the charging costs of the two routes are identical and therefore it does not have an impact on travelers' route decisions. Consequently, similar to the results about the classical \new{BP} in taffic networks \mh{(see e.g., Theorem 1 in \cite{milchtaich2006network})}, we do not expect type T-T \mh{BP} for this system. Meanwhile, as the power network is uncongested, adjusting the line capacity impacts neither the power system cost nor the LMPs, and there is neither type P-T nor P-P \mh{BP}. 

When the power network is uncongested, spatial imbalance of loads can be eliminated with the unconstrained power flow, and thus the power system cost is not impacted. We summarize our results in the following proposition. 
\begin{proposition}[2-Route 2-Bus System: Uncongested Case]\label{prop:2routes_uncongested}
	Generalized \mh{BP} does not occur for the 2-Route 2-Bus coupled system when the power network is uncongested. 
\end{proposition}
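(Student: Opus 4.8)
The plan is to rule out all six derivative-based BP conditions of Definition~\ref{def:gbp} by exhibiting the correct (non-paradoxical) sign in each case, leveraging the three facts already recorded in the text for the uncongested regime: $\lambda_1^\star = \lambda_2^\star$, $f^\star = \rho(x_2^\star - x_1^\star)/2$, and $g_1^\star = g_2^\star = \rho/2$. It is natural to separate the argument into the power-line capacity derivatives $\partial \Phi_s/\partial \bar f$ and the road capacity derivatives $\partial \Phi_s/\partial \alpha_\ell$, and to note at the outset that ``uncongested'' is an \emph{open} condition: the defining inequality $f^\star < \bar f$ is strict, so it persists under sufficiently small perturbations of $(\bm\alpha,\bar f)$. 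This guarantees that the derivatives in question exist and that the closed-form expressions derived below remain valid throughout a neighborhood of the point of interest.

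For the power-line capacity, I would argue that since the flow constraint $f \le \bar f$ is strictly inactive at the GUE, a small change in $\bar f$ leaves the economic dispatch solution, the LMPs, and hence (through the UE condition) the equilibrium travel pattern entirely unchanged. Consequently $\partial \Phi_\mathrm{P}/\partial \bar f = \partial \Phi_\mathrm{T}/\partial \bar f = 0$, which immediately excludes Type P-P and Type P-T BP and contributes a zero term to the coupled-cost derivative.

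For the road capacities, the first step is to pin down the GUE in closed form. Because $\lambda_1^\star = \lambda_2^\star$, the charging terms cancel in the equilibrium condition~\eqref{eq:2bus:eq}, which collapses to $\alpha_1 x_1 = \alpha_2 x_2$; combined with admissibility $x_1 + x_2 = 1$ this yields $x_1^\star = \alpha_2/(\alpha_1+\alpha_2)$ and $x_2^\star = \alpha_1/(\alpha_1+\alpha_2)$. For the power cost, I would observe that admissibility forces the total generation $g_1 + g_2 = \rho(x_1+x_2) = \rho$ to be constant, so minimizing $\tfrac12(g_1^2+g_2^2)$ over this fixed sum gives $g_1^\star = g_2^\star = \rho/2$ regardless of $\mathbf x^\star$; hence $\Phi_\mathrm{P} \equiv \rho^2/4$ is independent of $\bm\alpha$ and $\partial \Phi_\mathrm{P}/\partial \alpha_\ell = 0$, ruling out Type T-P BP. For the transportation cost, substituting the closed-form GUE into $\Phi_\mathrm{T} = \alpha_1 (x_1^\star)^2 + \alpha_2 (x_2^\star)^2$ gives the compact expression $\Phi_\mathrm{T} = \alpha_1\alpha_2/(\alpha_1+\alpha_2)$, whose partials $\partial \Phi_\mathrm{T}/\partial \alpha_1 = \alpha_2^2/(\alpha_1+\alpha_2)^2$ and $\partial \Phi_\mathrm{T}/\partial \alpha_2 = \alpha_1^2/(\alpha_1+\alpha_2)^2$ are manifestly non-negative, ruling out Type T-T BP. Finally, since $\Phi_\mathrm{C} = \Phi_\mathrm{T} + \Phi_\mathrm{P}$ and $\Phi_\mathrm{P}$ is constant, $\Phi_\mathrm{C}$ inherits the non-negative $\alpha$-derivatives and zero $\bar f$-derivative of its summands, excluding Type T-C and Type P-C BP as well.

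I do not anticipate a genuine obstacle; the computation is short once the closed-form GUE is in hand. The one point that deserves care is the justification that the differentiation is carried out with the congestion status held fixed—i.e., that the explicit GUE formulas above are the correct local description of the equilibrium—which follows from the openness of the uncongested condition noted at the start. A secondary check worth stating explicitly is that the closed-form pattern is consistent with that condition, i.e., $f^\star = \rho(x_2^\star - x_1^\star)/2 < \bar f$ holds at the point being differentiated by hypothesis and, by continuity, on a neighborhood of it.
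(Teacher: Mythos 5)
Your proposal is correct, and its overall decomposition mirrors the paper's: the paper's proof of this proposition simply points back to the discussion in \S\ref{sec:2x2:unc}, which rests on exactly the three facts you start from ($\lambda_1^\star=\lambda_2^\star$, the slack flow constraint killing the $\bar f$-derivatives, and $g_1^\star=g_2^\star=\rho/2$ killing the $\alpha$-derivatives of $\Phi_\mathrm{P}$). The one place you genuinely diverge is the exclusion of Type T-T BP: the paper argues that with equal LMPs the system reduces to a classical two-route UE and then appeals to the topological immunity result (Theorem~1 of \cite{milchtaich2006network}) for series-parallel networks, whereas you solve the equilibrium in closed form, obtain $\Phi_\mathrm{T}=\alpha_1\alpha_2/(\alpha_1+\alpha_2)$, and verify $\partial\Phi_\mathrm{T}/\partial\alpha_\ell\ge 0$ by direct differentiation. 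Your route is more elementary and self-contained—it avoids invoking an external theorem and simultaneously produces the explicit cost formula and the consistency check $f^\star=\rho(x_2^\star-x_1^\star)/2<\bar f$—while the paper's route is shorter and situates the observation within the general topological theory of BP-immune networks, which it exploits again in Theorem~\ref{thm:uncongested}-(b). Your explicit remark that uncongestedness is an open condition, so the congestion pattern (and hence the closed-form description of the GUE) is locally invariant under perturbation of $(\bm\alpha,\bar f)$, is a point the paper leaves implicit and is worth stating; it is what licenses differentiating the closed-form expressions at all.
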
 


\subsubsection{Congested Power Network}\label{sec:2x2:c}
When the power network is congested, we have (a) the charging cost at bus 1 is lower than that at bus 2, (b) the travel cost of route 1 is larger than that at route 2, and (c) $x_1^\star < x_2^\star$. In fact, we can explicitly compute $\mathbf x^\star$, $\bm \lambda^\star$, and cost metrics, which yields the following results. 
\begin{proposition}[2-Route 2-Bus System: Congested Case]\label{prop:2routes_congested}
	The 2-Route 2-Bus coupled system always exhibits type T-P \mh{BP} when the power network is congested. Furthermore, it also exhibits type T-T \mh{BP} for certain $\alpha_1, \alpha_2, \bar f$, and $\rho$.  
\end{proposition}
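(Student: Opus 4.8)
The plan is to first pin down the GUE in closed form in the congested regime, and then handle the two assertions separately by differentiating the resulting social cost expressions. In the congested regime $f^\star=\bar f$, the optimality conditions of \eqref{eq:ed:2bus} give $\lambda_1^\star = g_1^\star = \rho x_1^\star + \bar f$ and $\lambda_2^\star = g_2^\star = \rho x_2^\star - \bar f$. Substituting these into the UE condition \eqref{eq:2bus:eq} and imposing $x_1^\star + x_2^\star = 1$ reduces to a single linear equation in $x_1^\star$, whose solution is
\[
x_1^\star = \frac{\alpha_2 + \rho^2 - 2\rho\bar f}{\alpha_1+\alpha_2+2\rho^2}, \qquad
x_2^\star = \frac{\alpha_1 + \rho^2 + 2\rho\bar f}{\alpha_1+\alpha_2+2\rho^2}.
\]
Writing $S := \alpha_1+\alpha_2+2\rho^2>0$, these are smooth in $(\alpha_1,\alpha_2,\bar f)$ throughout the open congested regime, so the derivatives in Definition~\ref{def:gbp} exist there, and I would record $\partial x_1^\star/\partial\alpha_1 = -x_1^\star/S$ and $\partial x_1^\star/\partial\alpha_2 = x_2^\star/S$ by direct differentiation.

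For the type T-P assertion, the key structural observation is that total generation $g_1^\star+g_2^\star = \rho(x_1^\star+x_2^\star)=\rho$ is constant, so
\[
\Phi_\mathrm{P} = \tfrac12\big((g_1^\star)^2+(g_2^\star)^2\big) = \tfrac14\big(\rho^2 + (g_1^\star-g_2^\star)^2\big),
\]
i.e.\ $\Phi_\mathrm{P}$ is increasing in $(g_1^\star-g_2^\star)^2$ alone. Congestion and \eqref{eq:2bus:con} force $g_1^\star=\lambda_1^\star<\lambda_2^\star=g_2^\star$, hence $g_1^\star-g_2^\star<0$. Since $g_1^\star-g_2^\star = \rho(2x_1^\star-1)+2\bar f$, I would compute $\partial(g_1^\star-g_2^\star)/\partial\alpha_2 = 2\rho\,\partial x_1^\star/\partial\alpha_2 = 2\rho x_2^\star/S>0$, giving $\partial\Phi_\mathrm{P}/\partial\alpha_2 = (g_1^\star-g_2^\star)\,\rho x_2^\star/S<0$. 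By Definition~\ref{def:gbp} this is exactly a type T-P BP; because the sign argument uses only $g_1^\star-g_2^\star<0$ and $x_2^\star>0$, both guaranteed throughout the congested regime, it holds always, proving the first claim. The mechanism is that expanding road~2 diverts travelers onto route~2, piling charging load onto the import-constrained high-cost bus~2 and widening the generation imbalance.

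For the type T-T assertion it suffices to exhibit parameters. Using $\Phi_\mathrm{T} = \alpha_1(x_1^\star)^2 + \alpha_2(x_2^\star)^2$ with $x_2^\star=1-x_1^\star$, differentiating in $\alpha_1$ and substituting $\partial x_1^\star/\partial\alpha_1=-x_1^\star/S$ gives
\[
\frac{\partial\Phi_\mathrm{T}}{\partial\alpha_1} = (x_1^\star)^2 - \frac{2x_1^\star}{S}\big(\alpha_1 x_1^\star - \alpha_2 x_2^\star\big),
\]
which is negative precisely when $(\alpha_1-\alpha_2-2\rho^2)\,x_1^\star > 2\alpha_2\,x_2^\star$. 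I would then show this region is nonempty inside the congested regime by letting $\bar f\to 0^+$ (the congestion condition $2\bar f(\alpha_1+\alpha_2)<\rho(\alpha_1-\alpha_2)$ holds for small $\bar f$ since $\alpha_1>\alpha_2$) and $\alpha_1\to\infty$ with $\alpha_2,\rho$ fixed: a short asymptotic expansion shows $\partial\Phi_\mathrm{T}/\partial\alpha_1$ behaves like $(\alpha_2+\rho^2)(\alpha_2-\rho^2)/\alpha_1^2$, whose sign equals that of $\alpha_2-\rho^2$. Thus choosing $\rho^2>\alpha_2$, $\alpha_1$ large, and $\bar f$ small yields $\partial\Phi_\mathrm{T}/\partial\alpha_1<0$, a type T-T BP, establishing the second claim; here the cheaper charging at bus~1 over-compensates travelers for the heavier congestion they create on route~1.

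The main obstacle is the type T-T part. Unlike the type T-P conclusion, which falls out of the structural identity $g_1^\star+g_2^\star=\rho$ and holds unconditionally, the type T-T inequality constrains the parameters, and one must verify that the sign-condition region actually intersects the congested regime. I expect the delicate point to be checking compatibility of the two regimes — fortunately both are served by taking $\bar f$ small — and carrying out the $\alpha_1\to\infty$ asymptotics cleanly enough to isolate the sign $\alpha_2-\rho^2$; an explicit numerical instance (e.g.\ $\alpha_2$ and $\rho$ with $\rho^2>\alpha_2$, large $\alpha_1$, small $\bar f$) can be supplied to make the existence claim concrete.
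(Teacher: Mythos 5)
Your proposal is correct, and its skeleton---solving the congested-regime GUE in closed form from \eqref{eq:2bus:eq}, \eqref{eq:2bus:con}, and $\mathbf 1^\top\mathbf x^\star=1$, then differentiating the cost metrics---is the same as the paper's. The difference lies in how the two claims are certified, and your version is the more self-contained one. For type T-P, the paper argues verbally through spatial load imbalance ($x_1^\star<x_2^\star$ implies $d_1(\mathbf x^\star)<d_2(\mathbf x^\star)$, and expanding road 2 worsens the imbalance); your identity $\Phi_\mathrm{P}=\tfrac14\bigl(\rho^2+(g_1^\star-g_2^\star)^2\bigr)$, which exploits the conservation $g_1^\star+g_2^\star=\rho$, turns that mechanism into the one-line sign computation $\partial\Phi_\mathrm{P}/\partial\alpha_2=(g_1^\star-g_2^\star)\rho x_2^\star/S<0$, valid throughout the congested regime. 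For type T-T, the paper derives the same decomposition \eqref{eq:dphi_t_dalpha1} you use but then establishes existence of the paradox via the numerical demonstration in Figure \ref{fig:partial_derivatives_alpha1} (with $\alpha_1=100$, $\alpha_2=1$, $\bar f=0.2$); you instead exhibit an analytic parameter regime---$\rho^2>\alpha_2$, $\alpha_1$ large, $\bar f$ small---and verify both the sign condition $(\alpha_1-\alpha_2-2\rho^2)x_1^\star>2\alpha_2 x_2^\star$ and its compatibility with the congestion condition $\rho(\alpha_1-\alpha_2)>2\bar f(\alpha_1+\alpha_2)$. I checked your asymptotics: at $\bar f=0$ the bracketed factor of $\partial\Phi_\mathrm{T}/\partial\alpha_1$ equals $\alpha_1(\alpha_2-\rho^2)+O(1)$ as $\alpha_1\to\infty$, giving exactly your leading behavior $(\alpha_2+\rho^2)(\alpha_2-\rho^2)/\alpha_1^2$, and the strict sign persists for small $\bar f>0$ by continuity of the rational expressions. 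Your regime is also consistent with the paper's figure (where $\alpha_2=1$ and the paradox appears only for $\rho>1$, not too large), so nothing in your argument conflicts with the paper's findings; it simply replaces the paper's partly empirical step with a proof.
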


%

To understand type T-P \mh{BP}, note that as the power line is congested, it cannot eliminate the load imbalance. When a road capacity is increased, one of the routes and the corresponding charging point becomes more attractive as the travel cost is reduced. As the total flow is fixed to $1$, it is necessarily the case that increasing the capacity of one of the roads will increase the imbalance. It turns out that increasing the capacity of road 2 (i.e., reducing $\alpha_2$) makes the already attractive route/charging point more attractive, \emph{increasing the spatial load imbalance and resulting in a higher power system cost at the GUE}. In this case, expanding road capacity generates \emph{negative externality} to the power system. 

To understand the potential type T-T \mh{BP}, we observe 
\begin{equation}\label{eq:dphi_t_dalpha1}
	\frac{\partial \Phi_\mathrm{T}}{\partial \alpha_1} = (x_1^\star)^2 + 2 \rho(\lambda_2^\star - \lambda_1^\star) \frac{\partial x_1^\star}{\partial \alpha_1}. 
\end{equation}
The first term characterizes how changing the capacity of a road impacts the travel cost of drivers choosing this road, while the second term characterizes how the relocation of traffic flow induced by the road capacity change, i.e., $\partial x_1^\star/\partial \alpha_1$, impacts the total transportation cost. With nonzero spatial charging price differential, i.e., $\lambda_2^\star > \lambda_1^\star$, \emph{shift in traffic flows can marginally deteriorate the traffic at equilibrium as long as drivers are compensated by charging price differentials.} Consequently, the signs of the two terms are different, and when the effect of the second term dominates, type T-T \mh{BP} occurs. Fig. \ref{fig:partial_derivatives_alpha1} depicts how the derivatives of social cost metrics with respect to $\alpha_1$ vary when the charging energy $\rho$ changes, where it is demonstrated that with some $\rho$, $\partial \Phi_\mathrm{T} /\partial \alpha_1 <0$. In this case, the coupling introduces type T-T \mh{BP} into a transportation system where the classical \mh{BP} does not occur.

\begin{figure}[!htbp]
    \centering
\includegraphics[width=.23\textwidth]{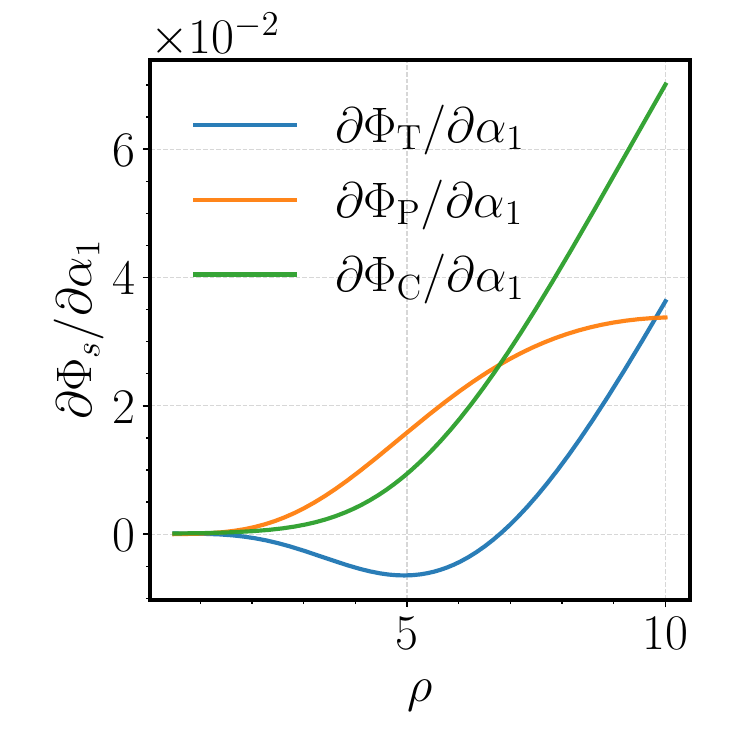}
    \caption{Derivatives of $\Phi_s$, $s \in \{\mathrm{T},\, \mathrm{P},\, \mathrm{C}\}$ with respect to $\alpha_1$ when $\alpha_1 = 100, \alpha_2 = 1, \bar f = 0.2$, and $\rho \in [0.5,10]$.}
\label{fig:partial_derivatives_alpha1}
\end{figure}

We close the section by proving that power system expansion induced \mh{BP}s will never occur for the coupled network.

\begin{proposition}[No Type P-T and P-P in the 2-Route 2-Bus System]\label{prop:perfect_power_expansion_2by2}
	Type P-T and P-P BPs never occur for the 2-Route 2-Bus coupled network.
\end{proposition}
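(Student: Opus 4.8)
The plan is to split the argument by the two operating regimes, since Definition~\ref{def:gbp} only asks us to control the signs of $\partial \Phi_\mathrm{T}/\partial \bar f$ and $\partial \Phi_\mathrm{P}/\partial \bar f$, and these behave qualitatively differently depending on whether the power line is congested at the GUE. In the uncongested regime the argument is immediate: as already observed in \S\ref{sec:2x2:unc}, when $f^\star < \bar f$ the optimal generation is $g_1^\star = g_2^\star = \rho/2$ and $\lambda_1^\star = \lambda_2^\star$ independent of $\bar f$, so marginally perturbing the line capacity changes neither the dispatch cost nor the LMPs (hence neither the charging differential driving the transportation equilibrium). This gives $\partial \Phi_\mathrm{P}/\partial \bar f = \partial \Phi_\mathrm{T}/\partial \bar f = 0$, and neither derivative is positive.

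The substantive case is the congested regime, $f^\star = \bar f$, where I would first obtain $\mathbf x^\star$ in closed form as a function of $\bar f$. Substituting the congested LMPs $\lambda_1^\star = \rho x_1^\star + \bar f$ and $\lambda_2^\star = \rho x_2^\star - \bar f$ from \eqref{eq:2bus:con} into the UE condition \eqref{eq:2bus:eq} yields \eqref{eq:trans_equilibrium} with $f=\bar f$, and combining with $x_1^\star + x_2^\star = 1$ gives $x_1^\star = \big[(\alpha_2+\rho^2) - 2\rho\bar f\big]/S$ where $S := \alpha_1 + \alpha_2 + 2\rho^2 > 0$. Differentiating, $\partial x_1^\star/\partial \bar f = -2\rho/S$ and $\partial x_2^\star/\partial \bar f = 2\rho/S$. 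The point to stress is that the social costs $\Phi_\mathrm{P} = \Phi_\mathrm{P}(\mathbf x^\star)$ and $\Phi_\mathrm{T}$ depend on $\bar f$ through two channels, the direct relaxation of the congestion constraint and the equilibrium relocation $\partial \mathbf x^\star/\partial \bar f$, so I would compute the total derivatives rather than invoke the envelope theorem alone.

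Next I would evaluate the two derivatives. Writing $g_1^\star = \rho x_1^\star + \bar f$ and $g_2^\star = \rho x_2^\star - \bar f$ and using the displayed derivatives of $\mathbf x^\star$, a short calculation gives $\partial g_1^\star/\partial \bar f = -\partial g_2^\star/\partial \bar f = (\alpha_1+\alpha_2)/S > 0$, whence $\partial \Phi_\mathrm{P}/\partial \bar f = g_1^\star\,\partial g_1^\star/\partial \bar f + g_2^\star\,\partial g_2^\star/\partial \bar f = (g_1^\star - g_2^\star)(\alpha_1+\alpha_2)/S$. Since $g_i^\star = \lambda_i^\star$ and \eqref{eq:2bus:con} gives $\lambda_1^\star < \lambda_2^\star$, the factor $g_1^\star - g_2^\star < 0$, so $\partial \Phi_\mathrm{P}/\partial \bar f < 0$. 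For transportation, $\Phi_\mathrm{T} = \alpha_1 (x_1^\star)^2 + \alpha_2 (x_2^\star)^2$ yields $\partial \Phi_\mathrm{T}/\partial \bar f = (4\rho/S)(\alpha_2 x_2^\star - \alpha_1 x_1^\star)$; since the congested equilibrium forces the route-1 travel cost to exceed that of route 2, i.e. $\alpha_1 x_1^\star > \alpha_2 x_2^\star$ (consequence (b) of \eqref{eq:2bus:con}), this derivative is also negative.

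Thus in both regimes $\partial \Phi_\mathrm{P}/\partial \bar f \le 0$ and $\partial \Phi_\mathrm{T}/\partial \bar f \le 0$, so by Definition~\ref{def:gbp} neither type P-P nor type P-T BP can occur. I expect the only delicate step to be bookkeeping the indirect channel: it is tempting to conclude $\partial \Phi_\mathrm{P}/\partial \bar f \le 0$ from the nonnegativity of the congestion shadow price $\eta^\star$ alone, but that ignores the equilibrium shift $\partial \mathbf x^\star/\partial \bar f$. The real content is that once this shift is included, the signs of $g_1^\star - g_2^\star$ and of $\alpha_2 x_2^\star - \alpha_1 x_1^\star$, both pinned down by the congested ordering in \eqref{eq:2bus:con}, still keep both total derivatives non-positive.
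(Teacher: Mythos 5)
Your proof is correct, but it takes a genuinely different route from the paper's. The paper proves this proposition as a two-line corollary of its general machinery: since the 2-Route 2-Bus network is radial and (in the congested regime) fully congested with two singleton active route bundles, it invokes the necessary-and-sufficient conditions of Theorem~\ref{thm:fully_congested}-(b3) and (b4) and checks that \eqref{eq:pt_fully_congested} and \eqref{eq:pp_fully_congested} both fail, because each reduces to a quantity proportional to $\lambda_2^\star - \lambda_1^\star$ with the sign that precludes BP (the uncongested regime being already covered by Proposition~\ref{prop:2routes_uncongested}). You instead compute everything from scratch: the closed-form congested GUE $x_1^\star = (\alpha_2+\rho^2-2\rho\bar f)/S$ with $S=\alpha_1+\alpha_2+2\rho^2$, the derivatives $\partial g_1^\star/\partial\bar f = -\partial g_2^\star/\partial\bar f = (\alpha_1+\alpha_2)/S$, and the resulting total derivatives
$\partial\Phi_\mathrm{P}/\partial\bar f = (g_1^\star-g_2^\star)(\alpha_1+\alpha_2)/S$ and $\partial\Phi_\mathrm{T}/\partial\bar f = (4\rho/S)(\alpha_2 x_2^\star-\alpha_1 x_1^\star)$, both negative by the orderings $\lambda_1^\star<\lambda_2^\star$ and $\alpha_1 x_1^\star>\alpha_2 x_2^\star$ guaranteed by \eqref{eq:2bus:con} -- the same driver, $\lambda_2^\star-\lambda_1^\star>0$, that underlies the paper's reduction. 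What each approach buys: the paper's is short and slots the example into a general classification, but it leans on the full apparatus of \S\ref{subsec:completely_congested}; yours is elementary, self-contained, and makes the mechanism transparent, including your correct caution that one cannot conclude from the shadow price $\eta^\star\ge 0$ alone (the envelope argument only controls $\Phi_\mathrm{P}$ at fixed $\mathbf x$), since the equilibrium relocation $\partial\mathbf x^\star/\partial\bar f$ must be accounted for -- in this example it partially offsets, but never reverses, the sign of the direct effect.
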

 
The lack of \mh{type P-T and P-P BPs} stems from the simple characteristics of the power network considered.\footnote{In fact, for this power network, increasing the line capacity will always \jqe{(weakly)} reduce the spatial LMP differential across the charging points, which cannot deteriorate transportation or power system performance.} In the next section, we consider a power system with a loop where \mh{both type P-T and P-P BPs} can occur.

\subsection{Power Expansion Induced Braess' Paradoxes}\label{sec:2by3-example}
\begin{figure}[!htbp]
\scriptsize
\usetikzlibrary{arrows.meta, positioning, shapes.geometric}
\tikzset{>=stealth}
\centering
\begin{tikzpicture}[scale=0.4]

\draw[thick, fill=black] (-18,2) circle (0.5);
\draw[thick, fill=black] (-8.5,4.1) circle (0.5);
\draw[thick, fill=black] (-8.5,-0.1) circle (0.5);

\draw (-14,3) node[above] {$x_1$};
\draw (-14,-0.5) node[above] {$x_2$};
\draw[->] (-18,2) -- (-9.2,4);
\draw[->] (-18,2) -- (-9.2,0);

\begin{scope}[shift={(0,.5)}]

\draw[line width=2pt,color=blue] (-1,7) -- ++(0,-3.0);
\draw[line width=2pt,color=blue] (-1,0.5) -- ++(0,-3.0);
\draw[line width=2pt,color=blue] (6,3.75) -- ++(0,-3.0);

\draw[-,color=blue] (-1.1,5.5) -- ++(1.1,0);
\draw[-,color=blue] (-1.1,0) -- ++(1.1,0);

\draw[->,color=blue] (0,5.5) -- (0,2);
\draw[-,color=blue] (0,2.274) -- (0,0);
\path (0,1.5) node[blue,left] {$f_1 \leq \bar f_1$};

\draw[-,blue] (-1.1,6) -- ++(2,0) -- ++(0,-2.5) -- ++(5.1,0);
\draw[->,blue] (6,3.5) -- ++(-3,0);
\path (3.5,3.5) node[blue,above] {$f_2 \leq \bar f_2$};

\draw[-,blue] (-1.1,-1.5) -- ++(2,0) -- ++(0,3.5) -- ++(5.1,0);
\draw[->,blue] (6,2.0) -- ++(-3,0);
\path (3.5,2.0) node[blue,below] {$f_3 \leq \bar f_3$};

\draw[blue] (-3.5,6) circle (.8);
\path[blue] (-3.5,6) node {$g_1$};
\draw[-,blue] (-2.7,6) -- (-1,6);
\draw[-,blue] (-1,4.5) -- ++(-2,0) -- ++(0,-1.5);
\draw[fill,blue] (-3,3) -- ++(-0.3,0.3) -- ++(.6,0) -- cycle;
\path[blue] (-3,3) node[below] {$\rho x_1$};

\draw[blue] (-3.5,0) circle (.8);
\path[blue] (-3.5,0) node {$g_2$};
\draw[-,blue] (-2.7,0) -- (-1,0);
\draw[-,blue] (-1,-1.5) -- ++(-2,0) -- ++(0,-1.5);
\draw[fill,blue] (-3,-3) -- ++(-0.3,0.3) -- ++(.6,0) -- cycle;
\path[blue] (-3,-3) node[below] {$\rho x_2$};

\draw[blue] (8.5,2.25) circle (.8);
\path[blue] (8.5,2.25) node {$g_3$};
\draw[-,blue] (7.7,2.25) -- ++(-1.7,0);

\end{scope}

\path (-13.3,-4.2) node {Transportation Network};
\path (3,-4.2) node[blue] {Power Network};

\end{tikzpicture}
\caption{2-Route 3-Bus example. The arrows on power lines indicate the positive flow direction.}
\label{fig:3-bus_2-route}
\end{figure}
We demonstrate power system expansion induced Braess' paradoxes \new{(PBP)}, by analyzing a slightly more complex coupled system shown in Fig. \ref{fig:3-bus_2-route}. The power system here is structurally different from that in Section~\ref{sec:2by2-example}. The additional bus and the loop lead to new phenomena when power capacities are expanded. In particular, the LMP differential across bus 1 and 2 can \emph{increase} when certain line capacities are expanded.

The destinations are \jqe{co-located with} charging stations connecting to two different buses (buses 1 and 2) in a 3-Bus power network. We assume $\bm \beta = \mathbf{0}$ and consider only linear travel costs, and the UE condition is identical to \eqref{eq:2bus:eq}. We assume $\bm \mu = \mathbf{0}$, $f_\ell$ denotes the power flow between three buses with positive direction indicated in Fig. \ref{fig:3-bus_2-route}, and $\bar f_\ell$ is flow capacity for the $\ell$-th line (note that here we have deviated from our notation in \eqref{eq:economic_dispatch} slightly to simplify the exposition). We can in fact analytically solve for GUE given any power system congestion pattern; \new{see Section~\ref{apd:auxiliary_results} for the analytical expressions}. Without loss of generality, \mh{to show PBP, we assume we change $\bar f_3$, and to show TBP, we assume we change $\alpha_1$.} \mh{We summarize main results of this section in Theorem \ref{thm:3by2_summary}. 

\begin{theorem}[\mh{BP}s in 2-Route 3-Bus Coupled System]\label{thm:3by2_summary}
	The 2-Route 3-Bus coupled system can exhibit any one of type T-T, T-P, P-P, P-T, and P-C \mh{BP}s for some model parameters.
\end{theorem}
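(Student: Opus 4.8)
The plan is to work entirely within the closed-form GUE supplied above for the congestion pattern $f_1=\bar f_1,\ f_3=\bar f_3$, under which $\mathbf x^\star,\mathbf g^\star,\bm\lambda^\star=\mathbf Q\mathbf g^\star$ are explicit smooth (rational) functions of $(\alpha_1,\alpha_2,\rho,Q_1,Q_2,Q_3,\bar f_1,\bar f_3)$. Specializing the social costs gives $\Phi_\mathrm{T}=\alpha_1(x_1^\star)^2+\alpha_2(x_2^\star)^2$ (since $\bm\beta=\mathbf 0$ and each route is a single link) and $\Phi_\mathrm{P}=\tfrac12\sum_i Q_i(g_i^\star)^2$, with $\Phi_\mathrm{C}=\Phi_\mathrm{T}+\Phi_\mathrm{P}$. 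By Definition~\ref{def:gbp}, each BP type is a sign condition on a single derivative along the GUE, so it suffices to compute five derivatives and, for each, exhibit parameters realizing the paradoxical sign while keeping the postulated congestion pattern valid. I would vary a road capacity ($\alpha_1$ or $\alpha_2$) for the transportation-expansion types T-T and T-P, and vary $\bar f_3$ for the power-expansion types P-T, P-P and P-C.

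The derivatives simplify dramatically using two facts: $x_1^\star+x_2^\star\equiv 1$ (so $\partial x_2^\star=-\partial x_1^\star$ for any parameter), and the UE identity $\alpha_1 x_1^\star-\alpha_2 x_2^\star=\rho(\lambda_2^\star-\lambda_1^\star)$ from~\eqref{eq:2bus:eq}, with $\lambda_1^\star<\lambda_2^\star$ by~\eqref{eq:2bus:con}. Writing $S:=\alpha_1+\alpha_2+\rho^2(Q_1+Q_2)$ for the common denominator, I expect the transportation derivatives to collapse to $\partial\Phi_\mathrm{T}/\partial\alpha_1=(x_1^\star)^2-\tfrac{2\rho x_1^\star}{S}(\lambda_2^\star-\lambda_1^\star)$ and $\partial\Phi_\mathrm{P}/\partial\alpha_2=\tfrac{\rho x_2^\star}{S}(\lambda_1^\star-\lambda_2^\star)$. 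The second is strictly negative, so T-P BP holds automatically for any congested instance; the first mirrors~\eqref{eq:dphi_t_dalpha1}, so exactly as in \S\ref{sec:2x2:c} an intermediate $\rho$ makes the coupling (second) term dominate $(x_1^\star)^2$ and yields T-T BP. For power expansion, an envelope-theorem decomposition along the GUE is cleanest: $\partial\Phi_\mathrm{P}/\partial\bar f_3=-\eta_3^\star+\rho(\lambda_1^\star-\lambda_2^\star)\,\partial x_1^\star/\partial\bar f_3$, where $\eta_3^\star\ge 0$ is the multiplier on $f_3\le\bar f_3$ and $\partial x_1^\star/\partial\bar f_3=\tfrac{\rho}{S}(\tfrac{Q_1}{3}-Q_2)$; likewise $\partial\Phi_\mathrm{T}/\partial\bar f_3=2\rho(\lambda_2^\star-\lambda_1^\star)\,\partial x_1^\star/\partial\bar f_3$ and, summing, $\partial\Phi_\mathrm{C}/\partial\bar f_3=\tfrac{\rho^2(\lambda_2^\star-\lambda_1^\star)}{S}(\tfrac{Q_1}{3}-Q_2)-\eta_3^\star$. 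The sign of the coupling term is governed by $\mathrm{sign}(\tfrac{Q_1}{3}-Q_2)$, which neatly separates the regimes: P-T (and hence P-C) needs $Q_1>3Q_2$, whereas P-P needs $Q_1<3Q_2$ with the coupling term overpowering the direct relaxation benefit $\eta_3^\star$.

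With these expressions I would produce one explicit parameter tuple per type: an intermediate $\rho$ with $\lambda_2^\star>\lambda_1^\star$ for T-T; any congested instance for T-P; a tuple with $Q_1>3Q_2$ for P-T, and one with $Q_1>3Q_2$, large $\rho$ and small $Q_3$ (shrinking $\eta_3^\star$ relative to the coupling factor $\rho^2/S$) for P-C; and a tuple with $Q_1<3Q_2$ and large $\rho$ for P-P. For each tuple I would substitute the closed forms and confirm the paradoxical sign. The incompatibility $Q_1>3Q_2$ versus $Q_1<3Q_2$ is precisely the co-occurrence structure the paper advertises, so I would record that P-P cannot coexist with P-T or P-C under $\bar f_3$ expansion.

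The main obstacle is not the sign analysis but the \emph{self-consistency} of the postulated congestion pattern at each chosen tuple. For every candidate I must verify primal feasibility ($|f_2^\star|=|\bar f_1-\bar f_3|/3<\bar f_2$ and $x_1^\star,x_2^\star\in(0,1)$), that lines $1$ and $3$ are genuinely binding in the positive direction while line $2$ is slack, and dual feasibility of the economic-dispatch KKT system (nonnegativity of $\eta_1^\star,\eta_3^\star$ and consistency of $\bm\lambda^\star=\mathbf Q\mathbf g^\star$ with the structure of $\widehat{\mathbf H}$). This carves out an admissible parameter window, and the delicate point is that for P-P and P-C the coupling term must dominate $\eta_3^\star$ \emph{while} these constraints stay satisfied; I expect to secure this by taking $\rho$ large and tuning $Q_3,\bar f_1,\bar f_3$ so that $g_3^\star>0$ and the binding pattern persists. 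Showing that such a window is nonempty for P-P/P-C, rather than the routine derivative algebra, is where the real work lies.
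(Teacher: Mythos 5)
Your strategy---solve the GUE in closed form under a fixed congestion pattern, do sign analysis of the derivatives, then exhibit parameters---is exactly the paper's (its proof of Theorem~\ref{thm:3by2_summary} simply points to \S\ref{subsec:1}, \S\ref{subsec:3}, and \S\ref{sec:2by3_c}), and your algebra is sound where it is carried out: your expression for $\partial\Phi_\mathrm{T}/\partial\bar f_3$ is precisely \eqref{eq:phitf3}, your envelope formula for $\partial\Phi_\mathrm{P}/\partial\bar f_3$ agrees with Lemma~\ref{lemma:derivatives}, and your observation that the two coupling terms carry opposite signs (so P-T and P-P cannot co-occur when perturbing the same line at the same GUE) is correct. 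The genuine gap is that the theorem is an \emph{existence} statement and you never discharge the existence step: you say you \emph{would} produce one tuple per type and \emph{expect} to secure dominance over $\eta_3^\star$ for P-P and P-C, and you yourself concede that nonemptiness of the admissible window ``is where the real work lies.'' That deferred step \emph{is} the proof. The paper completes it with concrete instances plus verification of the binding pattern, e.g.\ for P-P it takes $\bm\alpha=[1,1]^\top$, $\rho=6$, $\mathbf Q=\mathrm{diag}([0,1,1])$, $\bar{\mathbf f}=[0.1,0.3,\bar f_3]^\top$ and shows $\partial\Phi_\mathrm{P}/\partial\bar f_3\approx 1.7251\,\bar f_3-0.0525>0$ on the whole range where lines $1$ and $3$ remain congested, with analogous instances (Figures~\ref{fig:phi_t_bar_f}, \ref{fig:phi_t_alpha}, \ref{fig:phi_p_alpha}) for the other types.

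A second, substantive error compounds this. You justify $\lambda_1^\star<\lambda_2^\star$ ``by \eqref{eq:2bus:con},'' but that inequality is derived from the two-bus dispatch and does not carry over: in the 3-bus loop, $\lambda_2^\star-\lambda_1^\star=Q_2g_2^\star-Q_1g_1^\star$ is parameter-dependent, and it is \emph{reversed} in the paper's own P-T instance, where $\bm\alpha=[1,10]^\top$ and $\mathbf Q=\mathrm{diag}([2,1,1])$ place both the larger load and the larger $Q$ at bus~1, so $\lambda_1^\star>\lambda_2^\star$---and where, moreover, the realized congestion pattern is $f_1=-\bar f_1$, $f_3=\bar f_3$, not the positive--positive pattern you fixed. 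Consequently your regime split (``P-T needs $Q_1>3Q_2$, P-P needs $Q_1<3Q_2$'') is valid only conditionally on $\lambda_2^\star>\lambda_1^\star$, which you never verify and which may well be incompatible with $Q_1>3Q_2$ inside your pattern; unconditionally, each BP sign is a product of two parameter-dependent signs. This makes your hardest feasibility question---whether P-T/P-C is realizable at all within the positive--positive pattern---genuinely open in your write-up, whereas allowing the pattern the paper uses renders the P-T/P-C instances easy to verify. Your T-P claim does survive with a one-line fix: since $\partial\Phi_\mathrm{P}/\partial\alpha_1=-\rho(\lambda_1^\star-\lambda_2^\star)\,x_1^\star/S$ and $\partial\Phi_\mathrm{P}/\partial\alpha_2=\rho(\lambda_1^\star-\lambda_2^\star)\,x_2^\star/S$ have opposite signs, whenever the LMPs differ and both routes are active, expanding one of the two roads raises $\Phi_\mathrm{P}$ no matter which ordering holds.
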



}

These results suggest that all except type T-C BP\footnote{It is unclear how to obtain type T-C \mh{BP} in this coupled system. However, one can prove that if the transportation system can itself exhibit transportation \mh{BP} without power system (i.e. type T-T \mh{BP} when $\rho = 0$), then the coupled system can exhibit type T-C \mh{BP}.} can occur. In general, type T-C BP can also occur for coupled systems with more complex transportation systems. In the following, detailed explanations of various types of \mh{BP} are provided.

\subsubsection{Type P-T and P-C BPs} We show the occurrences of type P-T and P-C \mh{BP}s (see Fig.~\ref{fig:phi_t_bar_f}). The high-level idea of why type P-T BP occurs is \emph{increasing line capacity can result in LMP changes that incentivizes traffic flows that worsen traffic congestion.} We have

\begin{equation}\label{eq:phitf3}
	\frac{\partial \Phi_\mathrm{T}}{\partial \bar f_3} = 2 (\alpha_1 x_1^\star - \alpha_2 x_2^\star) \frac{\partial x_1^\star}{\partial \bar f_3} =  2 \rho (\lambda_2^\star - \lambda_1^\star) \frac{\partial x_1^\star}{\partial \bar f_3}, 
\end{equation}
i.e., the change in the transportation cost is driven by relocation of traffic flow induced line expansion. The relocation of traffic flow is due to the change in LMPs, as $x_1^\star = (\rho(\lambda_2^\star - \lambda_1^\star) + \alpha_2)/(\alpha_1 + \alpha_2).$
When $\alpha_1 x_1^\star - \alpha_2 x_2^\star = \rho(\lambda_2^\star - \lambda_1^\star)$ and $\partial x_1^\star/\partial \bar f_3$ have the same sign, i.e., traffic flow is nudged to a route already with more traffic, we expect to observe type P-T \mh{BP}.\footnote{An alternative explanation based on the change of LMP differentials is provided in Section~\ref{apd:auxiliary_results}.}

\mh{In addition to the counter-intuitive behaviour of $\Phi_\mathrm{T}$, we also see  $\Phi_\mathrm{C}$ is increasing in $\bar f_3$ over some parameter region, which means type P-C \mh{BP} also occurs. The effect of increasing the line capacity on incentivizing traffic flows that worsen traffic congestion dominates that on shifting power loads to reduce generation costs, explaining the occurrence of type P-C \mh{BP}. 
}

\begin{figure}[!htbp]
\centering

\begin{subfigure}[t]{0.23\textwidth}
    \centering
\includegraphics[width=\textwidth]{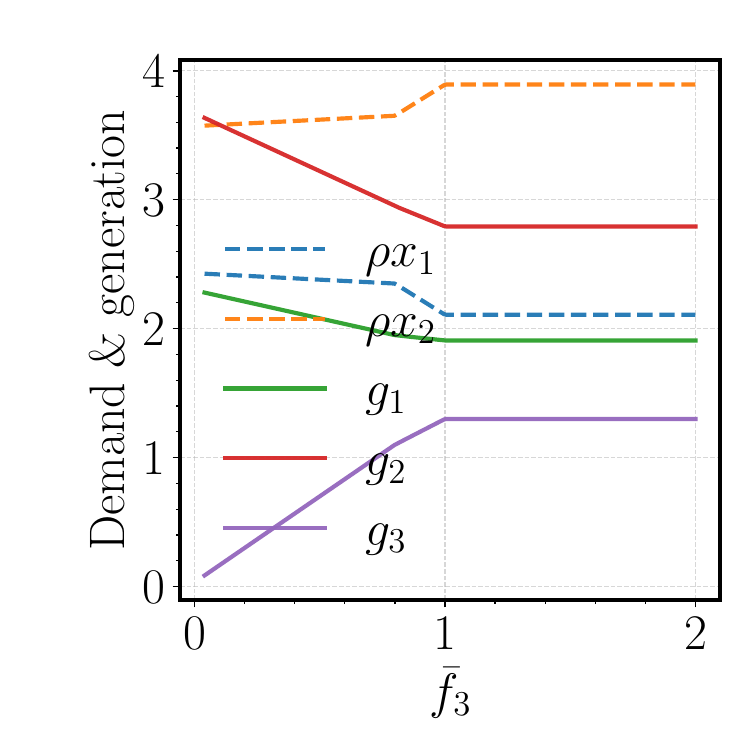}
    \caption{Demand \& generation}
\end{subfigure}
\qquad
\begin{subfigure}[t]{0.23\textwidth}
    \centering
\includegraphics[width=\textwidth]{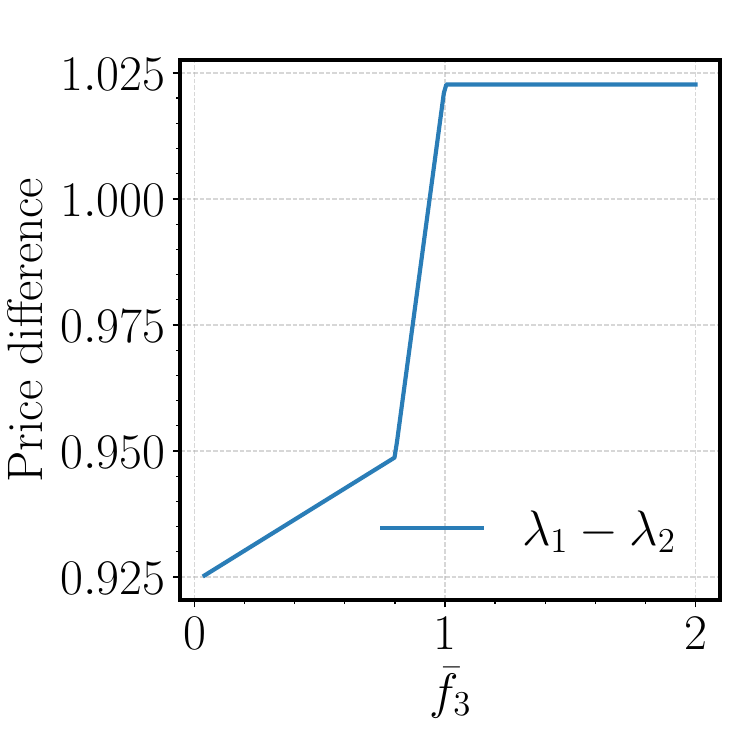}
    \caption{Price difference}
\end{subfigure}
\qquad
\begin{subfigure}[t]{0.23\textwidth}
    \centering
    \includegraphics[width=\textwidth]{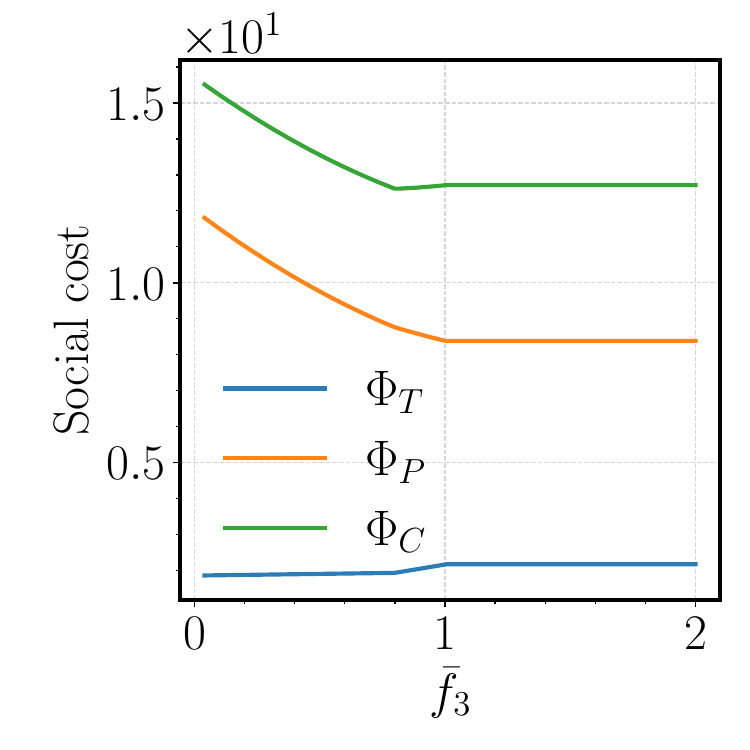}
    \caption{Social cost}
\end{subfigure}

\caption{Model parameters are set as $\bm \alpha := [1,10]^\top$, $\rho := 6$, $\mathbf{Q} := \mathrm{diag}([2,1,1])$, and $\mathbf{\bar f} := [0.1,0.3,\bar f_3]^\top$, where $\bar f_3 \in [0.04,2.0]$.}
\label{fig:phi_t_bar_f}
\end{figure}

\subsubsection{Type P-P BP} We consider a different model  setting to demonstrate  type P-P \mh{BP}. The model parameters and how GUE changes with $\bar f_3$ are shown in Fig.~\ref{fig:phi_p_bar_f}. The presented example is special since bus 1 generates for free. Type P-P \mh{BP} occurs because \emph{LMP change induced by line capacity increase can trigger relocation of loads (driven by the transportation UE) unfavorable for the grid.}

\new{
With the coupling, increasing $\bar{f}_3$ induces LMP change that triggers load relocation increasing $g_3^\star$. Since $g_2^\star$ remains almost unchanged, $\Phi_\mathrm{P}$ increases. We observe in Fig.~\ref{fig:phi_p_bar_f}-(b) that power system congestion pattern plays an important role. Indeed, the congestion pattern switches at a critical $\bar f_3$ value; before the switch, we observe type P-P BP. Under this congestion pattern, we have $\frac{\partial \Phi_\mathrm{P}}{\partial \bar f_3} = 1.7251 \bar f_3-0.0525 > 0$, for all $\bar f_3$ inducing the same congestion pattern.
}


\begin{figure}[!htbp]
\centering

\begin{subfigure}[t]{0.23\textwidth}
    \centering
\includegraphics[width=\textwidth]{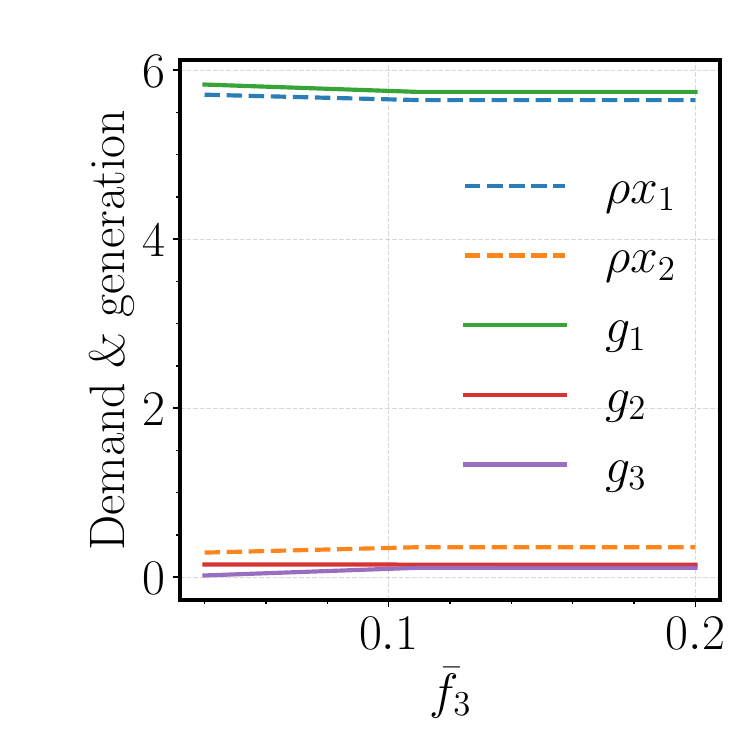}
    \caption{Demand \& generation}
\end{subfigure}
\qquad
\begin{subfigure}[t]{0.23\textwidth}
    \centering
\includegraphics[width=\textwidth]{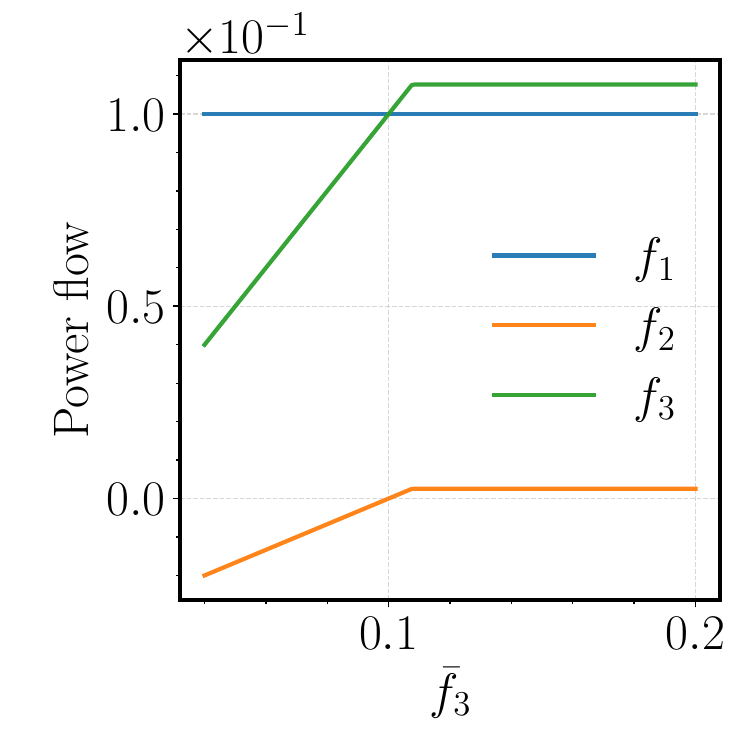}
    \caption{Power flow}
\end{subfigure}
\qquad
\begin{subfigure}[t]{0.23\textwidth}
    \centering
    \includegraphics[width=\textwidth]{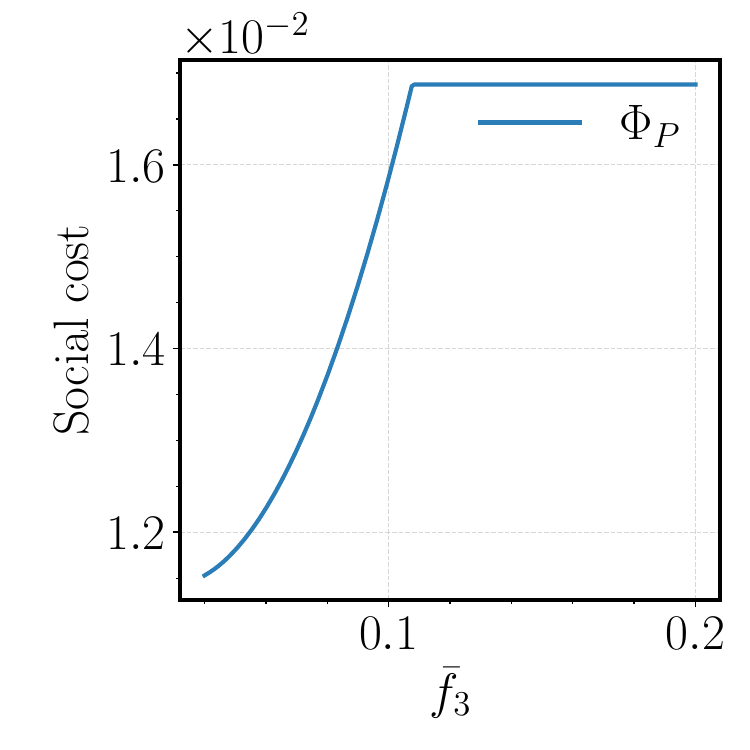}
    \caption{Social cost}
\end{subfigure}

\caption{Model parameters are set as $\bm \alpha := [1,1]^\top, \rho := 6, \mathbf{Q} := \mathrm{diag}([0,1,1])$, and $\mathbf{\bar f} := [0.1,0.3,\bar f_3]^\top$, where $\bar f_3 \in [0.04,0.2]$.}
\label{fig:phi_p_bar_f}
\end{figure}



\subsubsection{Type T-T and T-P BPs}
\mh{Not surprisingly, type T-T and type T-P \mh{BP}s occur in this coupled system. The mechanisms behind stay the same as those in Section~\ref{sec:2by2-example}. We visualize them in Fig.~12 and Fig.~13 in Section~\ref{apd:auxiliary_results}.
}

%
\section{Necessary and Sufficient Conditions}\label{sec:NS_conditions_sec}
The previous sections demonstrate through simple coupled systems that \textit{all types of BPs can occur}. In this section, we explore BPs in \textit{general coupled systems}. Specifically, we aim to identify \emph{analytical} characterizations, i.e., necessary and sufficient conditions, of the occurrences of \mh{BPs}, and computational tools to screen BP for real-world systems. 
We do this for progressively more complex power network settings, starting with the uncongested power network case in Section \ref{subsec:uncongested}, \new{and then the \emph{radial} power network with arbitrary congestion pattern case in Section~\ref{subsec:arbitrary_pattern}}. Subsequently, we \new{briefly discuss in Section \ref{sec:detection_general_networks} how to screen BP for general systems where the power network is not radial.} 

Throughout this section, we assume $\bm \alpha > \mathbf{0}$ and $\mathbf{Q} > \mathbf{0}$ to avoid degeneracies. \mh{We denote by $\mathcal{R}^\mathrm{act} \subseteq [\nr]$ the set of \textit{active routes} at GUE $(\mathbf{x}^\star, \bm\lambda^\star)$, i.e. routes $r$ with $x_r^\star > 0$, and let $\mh{R}\triangleq|\mathcal{R}^\mathrm{act}|$ be the number of active routes.} Since all entries of $\mathbf{x}^\star$ corresponding to inactive routes are $0$'s, we abuse \mh{$\mathbf{x}^\star$} to denote the vector containing only nonzero entries \mh{of $\mathbf{x}^\star$.}

It can be proved that \new{the non-atomic game with players $\mathcal{J}$ and cost \eqref{eq:t:cost} is a potential game, and thus GUEs of a given system are equivalent to the optimal solutions of a convex program\footnote{See Section~\ref{sec:GUE_as_optimal_solution} for details.}} parameterized by $(\bm \alpha, \bar{\mathbf{f}})$. 
As social cost metrics $\Phi_s, s \in \{\mathrm{T},\mathrm{P},\mathrm{C}\}$ depend on GUE, perturbing $(\bm \alpha, \mathbf{\bar f})$ affects $\Phi_s$. 
\new{Given the definition of BPs, we only consider infinitesimal perturbation on $(\bm \alpha, \bar{\mathbf{f}})$ such that the constraint binding pattern of the convex program remains unchanged. Supplementary materials can be found in Section~\ref{sec:supp_materials}.} 


\subsection{Uncongested Power Network Case}\label{subsec:uncongested}
We first characterize \mh{BPs} for systems with \textit{uncongested} power network at GUE. For such systems, \mh{LMPs} are identical. The following theorem provides characterizations of \mh{BPs}.
\begin{theorem}[\mh{Necessary and Sufficient Conditions of BP, Uncongested Power Network Case}]\label{thm:uncongested}
	For a system with uncongested power network at GUE, PBP  and type T-P BP never occur. Moreover, type T-T BP occurs when perturbing link $\ell_\mathrm{T}$ if and only if classical BP occurs when perturbing $\ell_\mathrm{T}$, in the transportation network consisting of the $\mh{R}$ active routes with \mh{a unit flow}.
\end{theorem}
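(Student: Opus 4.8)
The plan is to exploit the fact that, in the interior of a critical region of \eqref{eq:base_optimization} where the power network is uncongested, the coupled problem decouples: the economic dispatch becomes insensitive to how the (fixed) total charging load is distributed across buses, and the transportation layer reduces exactly to a classical Wardrop equilibrium. I would organize the proof around establishing this decoupling and then reading off the three conclusions.

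For part (a), I would note that in the interior of a critical region the binding pattern of \eqref{eq:base_optimization} is constant, so under an uncongested power network the constraint $\mathbf{H}\mathbf{p}\le\bar{\mathbf f}$ stays strictly slack for every infinitesimal perturbation of $\bar{\mathbf f}$. Since $\bar{\mathbf f}$ appears in \eqref{eq:base_optimization} only through this slack constraint, the optimizer $(\mathbf x^\star,\mathbf g^\star,\mathbf p^\star)$, and hence every metric $\Phi_s$, is locally constant in $\bar{\mathbf f}$; thus $\partial\Phi_s/\partial\bar f_{\ell_\mathrm{P}}=0$ for all $s$ and all $\ell_\mathrm{P}$, and no PBP can arise.

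For part (b), the first step is a structural identity for the charging load and cost. Because each row of $\Acb$ and each column of $\Acr$ has exactly one unit entry and $\mathbf 1^\top\mathbf x^\star=1$, the total load satisfies $\mathbf 1^\top\mathbf d(\mathbf x^\star)=\rho\,\mathbf 1^\top(\Acb)^\top\Acr\mathbf x^\star=\rho$, independent of $\mathbf x^\star$. Lemma \ref{lemma:relation_LMP_congestion} gives $\bm\lambda^\star=\lambda^\star\mathbf 1$ in the uncongested case, and the dispatch minimizing $\tfrac12\mathbf g^\top\mathbf Q\mathbf g+\bm\mu^\top\mathbf g$ subject only to $\mathbf 1^\top\mathbf g=\rho$ depends solely on the fixed scalar $\rho$; hence $\mathbf g^\star$, $\Phi_\mathrm{P}$, and $\lambda^\star$ are all constant throughout the critical region and in particular do not vary with $\mathbf x^\star$. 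This immediately yields $\partial\Phi_\mathrm{P}/\partial\alpha_{\ell_\mathrm{T}}=0$, so type T-P BP never occurs. For the transportation side, $\bm\lambda^\star=\lambda^\star\mathbf 1$ collapses the per-route charging cost in \eqref{eq:pi} to $\pi_r(\bm\lambda^\star)=\rho\lambda^\star(\Acr)_r^\top\Acb\mathbf 1=\rho\lambda^\star$, the same constant for every route. The GUE condition of Definition \ref{def:t:eq} therefore reduces to equalization of the pure travel costs $c^\mathrm{tr}_r(\mathbf x^\star)$ across active routes, which is exactly the classical UE condition on the subnetwork of the $R$ active routes carrying unit flow. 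Since the offset $\rho\lambda^\star$ is constant in $\alpha_{\ell_\mathrm{T}}$, the GUE pattern $\mathbf x^\star$ coincides, throughout the critical region, with the (unique, by $\bm\alpha\in\mathbb R^{\mt}_{++}$) classical UE of this subnetwork, and $\Phi_\mathrm{T}=\sum_r x_r^\star c^\mathrm{tr}_r(\mathbf x^\star)$ equals the classical transportation social cost. Thus $\partial\Phi_\mathrm{T}/\partial\alpha_{\ell_\mathrm{T}}$ in the coupled system equals the corresponding classical derivative, and type T-T BP occurs if and only if classical BP occurs.

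The step needing the most care is the last one: confirming that the fixed-point travel pattern truly agrees with the classical UE as a function of $\alpha_{\ell_\mathrm{T}}$, and that the derivative is well defined. This hinges on the feedback from power to transportation being a single constant offset $\rho\lambda^\star$ with $\lambda^\star$ pinned by the fixed total load $\rho$, so the fixed-point loop is trivial; together with the critical-region machinery of the appendix (fixed active-route set and differentiability of $\mathbf x^\star$), the coupled and classical derivatives match term by term.
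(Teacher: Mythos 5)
Your proof is correct, but it takes a genuinely different route from the paper's. The paper's formal proof (in the appendix) obtains Theorem \ref{thm:uncongested} as a corollary of the general aggregated-system machinery of \S\ref{subsec:general_networks}: when the power network is uncongested, all buses form a single subnetwork and all active routes a single route bundle ($K=1$), so the matrix $\widehat{\mathbf{B}}_\mathrm{ul}^{-1}$ degenerates to the scalar $0$; Theorem \ref{thm:NS_conditions_for_reduced_network} then gives $\partial\Phi_\mathrm{T}/\partial\hat\alpha_{1,1}=(\hat x_1^\star)^2=1>0$ and $\partial\Phi_\mathrm{P}/\partial\hat\alpha_{1,1}=\partial\Phi_\mathrm{P}/\partial\hat\beta_1=0$, so ATBP strictly fails; the alternative in Theorem \ref{thm:NS_conditions_independent_network}-(a) therefore collapses to ``type T-T BP $\Leftrightarrow$ classical BP in the single route bundle,'' the chain rule \eqref{eq:chain_rule} rules out T-P, and Theorem \ref{thm:NS_conditions_power} rules out PBP. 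You instead argue directly from the structure of \eqref{eq:base_optimization}: slack flow constraints make the optimizer locally constant in $\bar{\mathbf f}$ (part (a)); the fixed total load $\rho$ together with uniform LMPs makes the dispatch, $\Phi_\mathrm{P}$, and $\lambda^\star$ independent of the traffic pattern (no T-P); and the route-independent charging offset $\rho\lambda^\star$ reduces the GUE to the classical Wardrop UE on the $R$ active routes, so the two derivatives of $\Phi_\mathrm{T}$ coincide (T-T $\Leftrightarrow$ classical BP). Your route essentially formalizes the intuition the paper only sketches in the main text after the theorem statement; it is more elementary and self-contained, needing neither route bundles nor $\widehat{\mathbf{B}}_\mathrm{ul}$ nor radiality. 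One small caveat: you cite Lemma \ref{lemma:relation_LMP_congestion} for uniform LMPs, but that lemma is stated for radial networks, whereas \S\ref{subsec:uncongested} makes no such assumption; this is harmless, since for an uncongested network uniform LMPs follow in one line from the dispatch KKT conditions ($\bm\eta=\mathbf 0$ and $\bm\lambda=\gamma\mathbf 1-\mathbf H^\top\bm\eta$ give $\bm\lambda=\gamma\mathbf 1$), which is exactly the general-network fact the paper invokes. What the paper's approach buys in exchange for its heavier machinery is uniformity: the uncongested case appears as the $K=1$ degenerate instance of the same necessary and sufficient conditions that cover the fully congested and general radial cases.
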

PBP \mh{never occurs} since every line is uncongested. When $\mh{R = 1}$, the unique route always carries all \mh{traffic flow}, precluding TBP. When $R\geq 2$, type T-P BP never occurs since even though charging load may relocate, the uncongested power system can recover the original generation profile. Since \mh{LMPs} are identical, travelers' decisions only depend on travel costs, and type T-T BP reduces to the classical BP.

\subsection{\mh{Radial Power Network Case}}\label{subsec:arbitrary_pattern}
\jqe{With power network congestion, new types of BPs emerge due to coupling between the systems. We start by consider the settings where the power network has a radial (i.e., tree) structure. 
The lack of cycles in radial power networks renders clear relationship between the power network congestion patterns and the spatial LMP profile, which enables the following network reduction facilitating analytical development of necessary and sufficient conditions in this section. }

\subsubsection{Network Reduction} \jqe{A coupled system with radial power network can be reduced into a simpler coupled system as follows (see Fig.~\ref{fig:aggregation}): For the power network, 
 the line congestion patterns partition it into \emph{subnetworks},
within which buses are connected by uncongested lines and share the same LMP. Subnetworks induce corresponding \emph{route bundles} in the transportation network,} i.e., active routes whose associated chargers are connected to buses in the same subnetwork. Consequently, routes in the same route bundle see the same charging price and can be treated collectively. 

It can be shown that each subnetwork admits an aggregate quadratic generation cost representation and each route bundle admits an aggregate travel cost representation.\footnote{The detailed derivations are deferred to Section~\ref{sec:supp_materials}.} Hence the original coupled system can be reduced to an aggregated system in which subnetworks become buses and route bundles become routes.

\jqe{Given the network reduction, the complexity for deriving necessary and sufficient conditions for BPs can be summarized as two fold: a) understanding BP conditions for the aggregated system, which is structurally similar to a coupled system with a \emph{fully congested radial network}, and b) translating those aggregate-system insights back to the original system. We tackle these complexities subsequently.}

\begin{figure}[h]
\centering
\resizebox{!}{0.3\textheight}{ 
\begin{tikzpicture}[
    node distance=1cm,
    every node/.style={font=\small},
    road/.style={draw, thick, -{Latex[length=1.5mm]}},
    powerline/.style={draw, thick},
    congested/.style={draw, thick, dashed, red!70!black},
    couple/.style={draw, dashed, gray, -{Latex[length=1.5mm]}},
    maparrow/.style={draw, line width=4pt, gray!20, -{Latex[length=3mm]}},
    tnode/.style={circle, draw=blue!80!black, thick, fill=blue!5, minimum size=5.5mm, inner sep=0pt},
    cnode/.style={rectangle, rounded corners=2pt, draw=green!60!black, thick, fill=green!5, minimum size=5.5mm},
    pnode/.style={circle, draw=red!80!black, thick, fill=red!5, minimum size=5.5mm, inner sep=0pt},
    gen/.style={rectangle, draw=orange!80!black, thick, fill=orange!10, minimum size=5mm},
    bundlebox/.style={draw=blue!50, rounded corners=4pt, dashed, fill=blue!5, fill opacity=0.3},
    subnetbox/.style={draw=red!50, rounded corners=4pt, dashed, fill=red!5, fill opacity=0.3},
    >=Latex
]

\begin{scope}[shift={(0,0)}]
    \node[font=\small\bfseries, anchor=west] at (-4, 4.2) {I. Original Coupled System};

    \node[anchor=west] at (-3.2, -1.3) {Transportation Network};
    \node[anchor=west] at (4.0, -1.3) {Power Network};

    \node[tnode] (O)  at (-4, 1.5) {$o$};
    \node[tnode] (1)  at (-2.5, 3.0) {$1$};
    \node[tnode] (4)  at (-2.5, 0.0) {$4$};
    \node[tnode] (2)  at (-1, 3.0) {$2$};
    \node[tnode] (5)  at (-1, 0.0) {$5$};
    \node[tnode] (3)  at (-1.75, 1.5) {$3$};
    \node[tnode] (D)  at (0.5, 1.5) {$d$};

    \draw[road] (O) -- (1); \draw[road] (1) -- (2); \draw[road] (2) -- (D);
    \draw[road] (O) -- (4); \draw[road] (4) -- (5); \draw[road] (5) -- (D);
    \draw[road] (O) -- (3) -- (D);

    \node[bundlebox, minimum width=3.2cm, minimum height=0.8cm, label={[blue]above: Route bundle $1$ ($\mathcal{P}^1$)}] at (-1.75, 3.0) {};
    \node[bundlebox, minimum width=3.2cm, minimum height=2.3cm, label={[blue]below: Route bundle 2 ($\mathcal{P}^2$)}] at (-1.75, 0.75) {};

    \node[pnode] (B1) at (3.5, 2.5) {$b_1$};
    \node[pnode] (B2) at (4.5, 2.5) {$b_2$};
    \node[pnode] (B3) at (5.5, 2.5) {$b_3$};

    \node[pnode] (B4) at (3.5, 0.5) {$b_4$};
    \node[pnode] (B5) at (4.5, 0.5) {$b_5$};
    \node[pnode] (B6) at (5.5, 0.5) {$b_6$};
    \node[pnode] (B7) at (7.5, 1.5) {$b_7$};

    \draw[powerline] (B1) -- (B2); \draw[powerline] (B2) -- (B3);
    \draw[powerline] (B4) -- (B5); \draw[powerline] (B5) -- (B6);
    \draw[congested] (B3) -- node[sloped,above]{\scriptsize congested} (B7); \draw[congested] (B6) -- node[sloped,above]{\scriptsize congested} (B7);

    \node[subnetbox, minimum width=3cm, minimum height=1.2cm, label={[red!70!black]above:Subnetwork $1$ ($\hat{b}_1$)}] at (4.5, 2.5) {};
    \node[subnetbox, minimum width=3cm, minimum height=1.2cm, label={[red!70!black]below:Subnetwork $2$ ($\hat{b}_2$)}] at (4.5, 0.5) {};
    \node[subnetbox, minimum width=1cm, minimum height=1.2cm, label={[red!70!black]below:Subnetwork $3$ ($\hat{b}_3$)}] at (7.5, 1.5) {};
    
    \draw[couple, <->]  (2) to node[above] {\scriptsize{charger}} (B1);
    \draw[couple, <->, bend left=25] (3) to node[above] {\scriptsize{charger}} (B5);
    \draw[couple, <->] (5) to node[above] {\scriptsize{charger}} (B4);
\end{scope}

\draw[maparrow] (1.5, -0.8) -- node[right, font=\footnotesize\bfseries, black] {Aggregation} (1.5, -1.8);

\begin{scope}[shift={(0,-5.2)}]
    \node[font=\small\bfseries, anchor=west] at (-4, 3.2) {II. Aggregated System};

    \node[tnode] (Oh) at (-4, 1.5) {$o$};
    \node[cnode, minimum width=1.5cm] (P1) at (-1.75, 2.5) {$\mathcal{P}^1$};
    \node[cnode, minimum width=1.5cm] (P2) at (-1.75, 0.5) {$\mathcal{P}^2$};
    \node[tnode] (Dh) at (0.5, 1.5) {$d$};

    \draw[road] (Oh) -- (P1); \draw[road] (P1) -- (Dh);
    \draw[road] (Oh) -- (P2); \draw[road] (P2) -- (Dh);

    \node[pnode] (H1)  at (4.5, 2.5) {$\hat{b}_1$};
    \node[pnode] (H2)  at (4.5, 0.5) {$\hat{b}_2$};
    \node[pnode] (H3)  at (6, 1.5) {$\hat{b}_3$};

    \draw[congested] (H3) -- node[above, font=\tiny, black] {$\hat{f}_1$} (H1);
    \draw[congested] (H3) -- node[below, font=\tiny, black] {$\hat{f}_2$} (H2);

    \draw[couple, <->] (P1.east) to (H1.west);
    \draw[couple, <->] (P2.east) to (H2.west);
\end{scope}
\end{tikzpicture}
}
\caption{Illustration of subnetworks, route bundles, and the aggregation procedure. \new{The transportation network in the original coupled system contains three routes $r_1: o \to 1 \to 2 \to d$, $r_2: o \to 3 \to d$, and $r_3: o \to 4 \to 5 \to d$, with corresponding chargers connecting to buses $b_1, b_5$, and $b_4$.} Buses $b_1,b_2,b_3$ ($b_4,b_5,b_6$) with lines connecting them form subnetwork $1$ ($2$). Bus $b_7$ alone forms subnetwork $3$. \new{The route $r_1$ form route bundle $\mathcal{P}^1$, and routes $r_2$ and $r_3$ form route bundle $\mathcal{P}^2$.} }
\label{fig:aggregation}
\end{figure}




\subsubsection{Fully Congested Radial Power Network Case}\label{subsec:completely_congested}
\jqe{We start with the fully congested radial power network case because it captures the essential BP mechanisms in the aggregated system without the additional notation introduced by network reduction. It is also the simplest aggregated setting, in which each subnetwork consists of a single bus.
More specifically,} we characterize BPs for systems satisfying the following at GUE: 
\begin{enumerate}
	\item[\textbf{A}1] {T}he \mh{radial} power network is \textit{fully congested} (i.e., all \mh{power} lines are congested);
	\item[\textbf{A}2] {A}ny two \textit{active routes} share neither \mh{links nor buses}.
\end{enumerate}  

{While \textbf{A}1 is \jqe{a temporary assumption to be relaxed in Section~\ref{subsec:general_networks}).}} \textbf{A}2 simplifies the analysis of TBP: perturbing $\mh{\ell_\mathrm{T}}$ directly affects at most one route, with other routes affected only indirectly through traffic flow \mh{relocation}. By \textbf{A}2, \mh{each active route $r \in \mathcal{R}^\mathrm{act}$ is associated with a unique bus $i_r \in [\np]$. \new{We remark on that even if active routes share buses, we can still obtain BP characterizations but with more involved notations.} We denote by $\mathcal{I}^\mathrm{act} \subseteq [\np]$ the set of buses that are associated with an active route. Moreover, for any bus $i \in \mathcal{I}^\mathrm{act}$, we denote $r_i$ as the (unique) active route $r \in \mathcal{R}^\mathrm{act}$ associated with it. Since no two active routes share links,} when we say \textit{perturbing a route $r$}, we mean perturbing \jqe{the capacity/cost parameter of }a link contained (only) by route $r$.

\new{The below theorem provides complete characterizations of BPs for coupled systems satisfying \textbf{A}1 and \textbf{A}2 at GUE.}
\begin{theorem}[Necessary and Sufficient Conditions of BP, Fully Congested Radial Power Network Case]\label{thm:fully_congested}
For a system satisfying \textbf{A}1 and \textbf{A}2 at GUE $(\mathbf{x}^\star, \bm \lambda^\star)$, the following hold\mh{:}
\begin{enumerate}
	\item[(a)] If $\mh{R} = 1$, neither TBP nor PBP occurs.

	\item[(b)] If $\mh{R} \geq 2$,
	\begin{enumerate}
\item[(b1)] type T-T BP occurs \mh{when perturbing a route $r \in \mathcal{R}^\mathrm{act}$ if and only if}
\begin{align}\label{eq:tt_fully_congested}
x^\star_{\mh{r}} < \psi_{\mh{r}} \triangleq \omega_{\mh{r}} \sum_{\mh{r'}=1}^R \tilde{\omega}_{\mh{r'}} (2\hat{\alpha}_{\mh{r}} x_{\mh{r}}^\star + \hat{\beta}_{\mh{r}} -2\hat{\alpha}_{\mh{r'}} x_{\mh{r'}}^\star + \hat{\beta}_{\mh{r'}}),
\end{align}
where $\hat{\alpha}_{\mh{r}} \triangleq \sum_{\ell:\Alr_{\ell,\mh{r}} = 1} \alpha_\ell$, $\hat{\beta}_{\mh{r}} \triangleq \sum_{\ell:\Alr_{\ell,\mh{r}} = 1} \beta_\ell$, $\omega_{\mh{r}} \triangleq 1/(\hat{\alpha}_{\mh{r}}+\rho^2 Q_{\mh{i_r}})$, and $\tilde{\omega}_{\mh{r}} \triangleq \omega_{\mh{r}} / \sum_{\mh{r'}=1}^R \omega_{\mh{r'}}$;

\item[(b2)] type T-P BP occurs when perturbing \mh{a} route $\mh{r} \in \mathcal{R}^\mathrm{act}$ if and only if
\begin{align}\label{eq:tp_fully_congested}
\sum_{\mh{i_{r'} \in \mathcal{I}^\mathrm{act}}} \tilde{\omega}_{\mh{r'}} (\lambda_{\mh{i_r}}^\star - \lambda_{\mh{i_{r'}}}^\star) > 0;
\end{align}

\item[(b3)] type P-T BP occurs when perturbing $\ell_\mathrm{P} = (i,i')$ with power flow $i \to i'$ if and only if
\begin{align}\label{eq:pt_fully_congested}
Q_{i} \psi_{r_i}\mh{\mathbbm{1}\{i \in \mathcal{I}^\mathrm{act}\}} - Q_{i'} \psi_{r_{i'}} \mh{\mathbbm{1}\{i' \in \mathcal{I}^\mathrm{act}\}} < 0;
\end{align}  

\item[(b4)] type P-P BP occurs when perturbing $\ell_\mathrm{P} = (i,i')$ with power flow $i \to i'$ if and only if
\begin{align}\label{eq:pp_fully_congested}
\varsigma_{i}\mh{\mathbbm{1}\{i \in \mathcal{I}^\mathrm{act}\}} - \varsigma_{i'}\mh{\mathbbm{1}\{i' \in \mathcal{I}^\mathrm{act}\}} > 0,
\end{align}
where $\varsigma_i \triangleq \lambda_i^\star - \rho^2 Q_i \omega_{\mh{r_i}} \sum_{\mh{j\in \mathcal{I}^\mathrm{act}}} \tilde{\omega}_{j} (\lambda_i^\star - \lambda_{j}^\star)$.
\end{enumerate}
\end{enumerate}
\end{theorem}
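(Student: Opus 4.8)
The plan is to first turn the GUE into a fully explicit object using \textbf{A}1 and \textbf{A}2, and then reduce each of the four BP tests to the sign of a single total derivative obtained by the chain rule on the explicit solution. The key structural consequence of \textbf{A}1 is that a fully congested tree pins down the nodal injections: since the power graph is radial, the flow on edge $(i,i')$ equals the net injection summed over the component cut off by deleting that edge, so with every line congested all flows are fixed at $\pm\bar f_\ell$, and solving the resulting linear system together with $\mathbf 1^\top\mathbf p=0$ gives $\mathbf p=\mathbf p(\bar{\mathbf f})$ as an affine map \emph{independent of} $\mathbf x$. Differentiating it yields the localized sensitivity $\partial p_j/\partial\bar f_{\ell_\mathrm{P}}=+1$ if $j=i$, $-1$ if $j=i'$, and $0$ otherwise for $\ell_\mathrm{P}=(i,i')$ oriented $i\to i'$; this single fact drives all PBP results. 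By \textbf{A}2 each active route $r$ owns a private bus $i_r$, so $g_{i_r}=p_{i_r}(\bar{\mathbf f})+\rho x_r$, the route travel cost collapses to $\hat\alpha_r x_r+\hat\beta_r$, and the charging cost is $\rho\lambda_{i_r}=\rho(Q_{i_r}g_{i_r}+\mu_{i_r})$. Substituting into the uniform-cost condition of Lemma~\ref{lemma:equal_cost} gives, for each active route, $(\hat\alpha_r+\rho^2 Q_{i_r})x_r+\phi_r=C$ with $\phi_r:=\hat\beta_r+\rho Q_{i_r}p_{i_r}(\bar{\mathbf f})+\rho\mu_{i_r}$, so $x_r^\star=\omega_r(C-\phi_r)$ with $C=1/W+\sum_{r'}\tilde\omega_{r'}\phi_{r'}$ fixed by $\mathbf 1^\top\mathbf x^\star=1$ and $W:=\sum_{r'}\omega_{r'}$. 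This is the explicit GUE I will differentiate.

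Next I would exploit that $\omega_r$ depends only on $(\bm\alpha,\mathbf Q,\rho)$ while $\phi_r$ depends only on $(\bar{\mathbf f},\bm\beta,\bm\mu,\mathbf Q,\rho)$, so the two perturbations act on disjoint pieces of the solution. A road perturbation on route $\hat r$ moves only $\omega_{\hat r}$, giving $\partial x_{\hat r}^\star/\partial\alpha_{\ell_\mathrm{T}}=\omega_{\hat r}x_{\hat r}^\star(\tilde\omega_{\hat r}-1)$ and $\partial x_r^\star/\partial\alpha_{\ell_\mathrm{T}}=\omega_r\tilde\omega_{\hat r}x_{\hat r}^\star$ for $r\neq\hat r$ (consistency check: these sum to zero). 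A line perturbation moves only $\phi_{r_i},\phi_{r_{i'}}$, by $\rho Q_i$ and $-\rho Q_{i'}$, so $\partial x_r^\star/\partial\bar f_{\ell_\mathrm{P}}=\omega_r\big(\partial C/\partial\bar f_{\ell_\mathrm{P}}-\partial\phi_r/\partial\bar f_{\ell_\mathrm{P}}\big)$. I then feed these into $\Phi_\mathrm{T}=\sum_r(\hat\alpha_r x_r^2+\hat\beta_r x_r)$ and $\Phi_\mathrm{P}=\sum_j(\tfrac12 Q_j g_j^2+\mu_j g_j)$ through $\partial\Phi_\mathrm{T}/\partial x_r=2\hat\alpha_r x_r+\hat\beta_r$ and $\partial\Phi_\mathrm{P}/\partial g_j=\lambda_j$.

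Collecting terms, the road derivatives telescope into $\partial\Phi_\mathrm{T}/\partial\alpha_{\ell_\mathrm{T}}=x_{\hat r}^\star(x_{\hat r}^\star-\psi_{\hat r})$ and $\partial\Phi_\mathrm{P}/\partial\alpha_{\ell_\mathrm{T}}=-\rho\,\omega_{\hat r}x_{\hat r}^\star\sum_{r'}\tilde\omega_{r'}(\lambda_{i_{\hat r}}^\star-\lambda_{i_{r'}}^\star)$; since larger road capacity means $\alpha_{\ell_\mathrm{T}}\!\downarrow$, BP is equivalent to a \emph{negative} derivative, and because the prefactors $x_{\hat r}^\star,\ \rho\omega_{\hat r}x_{\hat r}^\star$ are strictly positive on active routes these sign tests are exact, giving (b1) and (b2). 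The line derivative of $\Phi_\mathrm{T}$ sees the two endpoints only through $\partial\phi$, producing $\partial\Phi_\mathrm{T}/\partial\bar f_{\ell_\mathrm{P}}=-\rho\big(Q_i\psi_{r_i}\mathbbm{1}\{i\in\mathcal I^\mathrm{act}\}-Q_{i'}\psi_{r_{i'}}\mathbbm{1}\{i'\in\mathcal I^\mathrm{act}\}\big)$, and since BP here means a positive derivative we obtain (b3). For $\Phi_\mathrm{P}$ the relocation part regroups into $\varsigma_i-\varsigma_{i'}$, but $\Phi_\mathrm{P}$ also depends \emph{directly} on $p_i,p_{i'}$, adding $\sum_j\lambda_j\partial p_j/\partial\bar f_{\ell_\mathrm{P}}=\lambda_i^\star-\lambda_{i'}^\star$, and assembling these yields (b4). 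Part (a) then falls out of $R=1$: here $\tilde\omega_{\hat r}=1$ forces $\psi_{\hat r}=0$ and all $\partial x^\star/\partial(\cdot)=0$, so $\partial\Phi_\mathrm{T}/\partial\alpha_{\ell_\mathrm{T}}=(x_{\hat r}^\star)^2>0$, the two cross derivatives vanish, and $\partial\Phi_\mathrm{P}/\partial\bar f_{\ell_\mathrm{P}}=\lambda_i^\star-\lambda_{i'}^\star\le 0$ because on a congested line power flows from the lower to the higher LMP.

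I expect the main obstacle to be the bookkeeping of inactive buses in (b3)--(b4). Whereas $\Phi_\mathrm{T}$ is insensitive to an inactive endpoint (no associated route, hence no $\phi$-perturbation), $\Phi_\mathrm{P}$ is not, since an inactive bus still hosts a generator whose output tracks $p_j(\bar{\mathbf f})$; reconciling the resulting direct $\pm\lambda^\star$ contribution with the masked $\varsigma$-form (equivalently, adopting the convention $\omega_{r_j}=0$, hence $\varsigma_j=\lambda_j^\star$, for inactive $j$) is where the case split must be carried out carefully, and I would lean on the congested-line price ordering $\lambda_i^\star\le\lambda_{i'}^\star$ to keep the signs coherent. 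The remaining work is clerical: confirming that each telescoping identity survives verbatim once the $\mathbbm{1}\{\cdot\}$ masks are inserted and that every positive prefactor is genuinely nonzero so that each ``if and only if'' is tight.
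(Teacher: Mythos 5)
Your proposal is correct in substance and takes a genuinely different route from the paper. The paper does not prove Theorem~\ref{thm:fully_congested} directly: it obtains it in Appendix~\ref{sec:supp_NS_conditions_proofs} as a specialization of the general aggregated-system results (Theorems~\ref{thm:NS_conditions_for_reduced_network} and~\ref{thm:NS_conditions_power}), i.e., it sets up the linear system $\widehat{\mathbf{B}}\,[\hat{\mathbf{x}}^\top,\eta]^\top=\hat{\mathbf{v}}$, inverts $\widehat{\mathbf{B}}$ via a Schur complement, expresses the four derivatives of $\Phi_\mathrm{T},\Phi_\mathrm{P}$ in terms of $\widehat{\mathbf{B}}_\mathrm{ul}^{-1}$, and then uses the fact that under \textbf{A}2 the matrix $\widehat{\bm\alpha}$ is diagonal, so each column satisfies $\widehat{\mathbf{B}}_\mathrm{ul}^{-1}\mathbf{e}_k=\omega_k(\mathbf{e}_k-\tilde{\bm\omega})$, which yields the $\psi$ and $\varsigma$ forms. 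You instead solve the GUE in closed form, $x_r^\star=\omega_r(C-\phi_r)$ with $C$ pinned by $\mathbf{1}^\top\mathbf{x}^\star=1$, exploit that road and line perturbations act on the disjoint pieces $\omega_r$ and $\phi_r$, and telescope. I checked your sensitivity formulas (they sum to zero as required) and the resulting expressions for $\partial\Phi_\mathrm{T}/\partial\alpha_{\ell_\mathrm{T}}$, $\partial\Phi_\mathrm{P}/\partial\alpha_{\ell_\mathrm{T}}$, and $\partial\Phi_\mathrm{T}/\partial\bar f_{\ell_\mathrm{P}}$; they agree with the paper's. Your route is more elementary and makes the direct-versus-relocation decomposition transparent; the paper's route has the advantage that the identical computation carries over to the general radial case of \S\ref{subsec:general_networks}.

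The one point you flag as delicate, the inactive endpoints in (b3)--(b4), is exactly where the difficulty sits, but your hoped-for reconciliation will not go through, and you should not spend effort on it. For (b3) the masks are correct for the reason you give: $\Phi_\mathrm{T}$ feels $\bar f_{\ell_\mathrm{P}}$ only through the $\phi_r$ of active routes. For (b4), however, your derivative is the right one: the generator at an inactive endpoint still tracks $p_j(\bar{\mathbf{f}})$, so the unmasked contribution $\lambda_i^\star-\lambda_{i'}^\star$ survives, and the true derivative equals $\varsigma_i-\varsigma_{i'}$ under your convention $\varsigma_j=\lambda_j^\star$ for inactive $j$. This does \emph{not} coincide with the masked test \eqref{eq:pp_fully_congested} when an endpoint is inactive, and the ordering $\lambda_i^\star\le\lambda_{i'}^\star$ cannot repair the discrepancy: with $i=i_r$ active and $i'$ inactive, the theorem tests $\varsigma_{i_r}>0$ while the true condition is $\varsigma_{i_r}-\lambda_{i'}^\star>0$, and taking $\varsigma_{i_r}$ slightly positive with $\lambda_{i'}^\star$ large and positive makes the two disagree. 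So your proof, carried to completion, establishes (a) and (b1)--(b3) verbatim, and a corrected (b4); the literal masked (b4) is valid only when both endpoints lie in $\mathcal{I}^\mathrm{act}$, which is also the only case the paper's own proof covers rigorously (its appendix treats inactive bundles as ``slightly more bookkeeping burdens,'' and the claim $\partial\Phi_\mathrm{P}/\partial\bar f_{\ell_\mathrm{P}}=\varsigma_{i_r}$ in the discussion following the theorem omits precisely the $-\lambda_{i'}^\star$ term your calculation produces). Keep your unmasked form.
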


When $\mh{R} = 1$, the non-occurrence of TBP adopts the same interpretation as that of \new{Theorem \ref{thm:uncongested}}. On the power side, increasing line capacity cannot induce \mh{traffic flow} relocation {as there is only one active route}, preventing PBP.

\new{Theorem \ref{thm:fully_congested}-(b1) characterizes type T-T BP.} \mh{Inequality} \eqref{eq:tt_fully_congested} has a similar form with \eqref{eq:dphi_t_dalpha1} in Section~\ref{sec:2x2:c}\new{, so the same interpretation applies}. The LHS \new{(up to algebraic manipulation)} \mh{is related to} \textit{how capacity change of \mh{route $r$} impacts the travel cost of travelers choosing route $r$}, which can be understood as a direct BP-preventing \mh{effect}, \mh{since increasing link capacity decreases $\Phi_\mathrm{T}$ through decreasing travel costs of routes using this link, preventing type T-T BP from occurring}{;} the RHS \mh{$\psi_r$ is related to} \textit{how unit \mh{traffic} flow relocation \mh{from route $r$} impacts the total travel cost}, which can be viewed as an indirect BP-promoting effect, \mh{since the traffic flow relocation has the effect of increasing $\Phi_\mathrm{T}$ through moving traffic flow into  route $r$, which then promotes type T-T BP.} Intuitively, {i}nequality \eqref{eq:tt_fully_congested} compares the two effects. If the BP-promoting effect dominates, type T-T BP occurs.

\new{Theorem \ref{thm:fully_congested}-(b2) characterizes type T-P BP.} One sufficient condition for \eqref{eq:tp_fully_congested} to hold is \mh{LMPs associated with active routes} are not identical. One can then expand the active route with the highest charging price to make it more attractive, then the \mh{traffic flow} relocation induces higher load at the bus with the highest price, which causes type T-P BP.
		
\new{Theorem \ref{thm:fully_congested}-(b3) characterizes type P-T BP.} \mh{Inequality} \eqref{eq:pt_fully_congested} characterizes the effect of \mh{traffic} flow relocation induced by line capacity change on $\Phi_\mathrm{T}$. The same term $\psi_r$ in \eqref{eq:tt_fully_congested} appears, but it now characterizes \textit{how unit \mh{traffic} flow relocation from route $r$ induced by LMP change at bus $i_r$ impacts the total generation cost}. One can argue when line $\ell_\mathrm{P} = (i,i')$ with power flow $i \to i'$ is perturbed, $\psi_{r_i} <(>) 0$ corresponds to a BP-promoting(preventing) effect, and $\psi_{r_{i'}} <(>)0$ corresponds to a BP-preventing(promoting) effect. Therefore, the LHS of \eqref{eq:pt_fully_congested} is the aggregate BP-promoting(preventing) effect. If strictly negative, it corresponds to an aggregate BP-promoting effect, and thus type P-T BP occurs.

	
\new{Theorem \ref{thm:fully_congested}-(b4) characterizes type P-P BP.} The $\varsigma_{i}$ term in \eqref{eq:pp_fully_congested} characterizes \textit{how unit load relocation from bus $i$ induced by LMP change impacts the total generation cost}. Similarly, the LHS of \eqref{eq:pp_fully_congested} can be viewed as the aggregate BP-promoting(preventing) effect. If strictly positive, it is BP-promoting, and thus type P-P BP occurs. 

Theorem \ref{thm:fully_congested} implies the following corollary that summarizes relations of different BPs. 
	\begin{corollary}[\mh{Relations of BPs}]\label{cor:BP_relations}
		Under \mh{\textbf{A}1 and \textbf{A}2}, the following statements hold\mh{:}
		\begin{enumerate}
			\item[(a)] If type P-T BP occurs when perturbing $\ell_\mathrm{P}\triangleq(i_r,i),i \notin \mathcal{I}^\mathrm{act}$, \mh{with power flow $i_r \to i$}, then type T-T BP \mh{does} 
			not occur when perturbing route $r$;
			\item[(b)] If type P-T BP does not occur when perturbing $\ell_\mathrm{P}\triangleq(i,i_r), i \notin \mathcal{I}^\mathrm{act}$, \mh{with power flow $i \to i_r$}, then type T-T BP \mh{does} not occur when perturbing route $r$.
			\item[(c)] If type P-P BP does not occur when perturbing $\ell_\mathrm{P}\triangleq(i_r,i),i \notin \mathcal{I}^\mathrm{act}$, \mh{with power flow $i_r \to i$}, then type T-P BP occurs when route $r$ is perturbed;
			\item[(d)] If type P-P BP occurs when perturbing $\ell_\mathrm{P}\triangleq(i,i_r), i \notin \mathcal{I}^\mathrm{act}$, \mh{with power flow $i \to i_r$}, then type T-P BP occurs when route $r$ is perturbed;
			\item[(e)] If a transmission line $\ell_\mathrm{P} \triangleq (i,i'), i,i' \notin \mathcal{I}^\mathrm{act}$ is perturbed, then \mh{PBP does not occur}.
		\end{enumerate}
	\end{corollary}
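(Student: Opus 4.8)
The plan is to derive all five statements as direct consequences of the necessary-and-sufficient conditions \eqref{eq:tt_fully_congested}--\eqref{eq:pp_fully_congested} in Theorem \ref{thm:fully_congested}, by specializing each condition to the particular line being perturbed and exploiting the simplifications that arise when one or both endpoints lie outside $\mathcal{I}^\mathrm{act}$. The single structural fact that powers parts (c) and (d) is the identity $\varsigma_{i_r} = \lambda_{i_r}^\star - \rho^2 Q_{i_r}\omega_r \, T_r$, where $T_r := \sum_{i_{r'} \in \mathcal{I}^\mathrm{act}} \tilde{\omega}_{r'}(\lambda_{i_r}^\star - \lambda_{i_{r'}}^\star)$ is exactly the left-hand side of the type T-P condition \eqref{eq:tp_fully_congested}. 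This identity follows immediately from the definition of $\varsigma_i$ upon noting $r_{i_r} = r$, so I would record it first.

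For parts (a) and (b) I would specialize the type P-T condition \eqref{eq:pt_fully_congested}. When $\ell_\mathrm{P} = (i_r, i)$ with $i \notin \mathcal{I}^\mathrm{act}$ and flow $i_r \to i$, the indicator $\mathbbm{1}\{i \in \mathcal{I}^\mathrm{act}\}$ vanishes and \eqref{eq:pt_fully_congested} reduces to $Q_{i_r}\psi_r < 0$, i.e.\ $\psi_r < 0$ since $Q_{i_r} > 0$; for the reversed line $(i,i_r)$ with flow $i \to i_r$ it reduces to $-Q_{i_r}\psi_r < 0$, i.e.\ $\psi_r > 0$. The type T-T condition \eqref{eq:tt_fully_congested} requires $x_r^\star < \psi_r$, while $x_r^\star > 0$ at an active route. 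In (a), P-T occurring gives $\psi_r < 0 < x_r^\star$; in (b), P-T not occurring gives $\psi_r \le 0 < x_r^\star$. Either way $x_r^\star > \psi_r$, so \eqref{eq:tt_fully_congested} fails and type T-T BP does not occur.

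For parts (c), (d), and (e) I would specialize the type P-P condition \eqref{eq:pp_fully_congested} in the same manner. For $\ell_\mathrm{P} = (i_r,i)$, $i \notin \mathcal{I}^\mathrm{act}$, flow $i_r \to i$, it reduces to $\varsigma_{i_r} > 0$; for the reversed line it reduces to $\varsigma_{i_r} < 0$. Substituting the identity, in (c) non-occurrence of P-P gives $\varsigma_{i_r} \le 0$, hence $\rho^2 Q_{i_r}\omega_r T_r \ge \lambda_{i_r}^\star$, while in (d) occurrence of P-P gives $\varsigma_{i_r} < 0$, hence $\rho^2 Q_{i_r}\omega_r T_r > \lambda_{i_r}^\star$. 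Since $\rho^2 Q_{i_r}\omega_r > 0$ and $\lambda_{i_r}^\star > 0$, both force $T_r > 0$, i.e.\ type T-P BP occurs. Part (e) is the simplest: with $i,i' \notin \mathcal{I}^\mathrm{act}$, all indicators in \eqref{eq:pt_fully_congested} and \eqref{eq:pp_fully_congested} vanish, so both left-hand sides equal $0$, satisfying neither the strict ``$<0$'' (P-T) nor the strict ``$>0$'' (P-P) requirement; hence no PBP occurs.

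The main obstacle I anticipate is establishing the strict positivity $\lambda_{i_r}^\star > 0$ of the LMP at an active bus, on which (c) and (d) rely. I would argue this from the dispatch optimality relation $\bm\lambda^\star = \mathbf{Q}\mathbf{g}^\star + \bm\mu$ together with the fact that an active route places a strictly positive charging load $\rho x_r^\star > 0$ at bus $i_r$, forcing a strictly positive marginal generation cost there (using nonnegativity of the cost coefficients). A secondary bookkeeping point is the treatment of the boundary cases $\psi_r = 0$ and $\varsigma_{i_r} = 0$: because we perturb in the interior of a critical region the relevant derivatives are well defined and the inequalities defining BP occurrence are strict, so the non-occurrence statements absorb the boundary into the ``$\le$'' side exactly as used above.
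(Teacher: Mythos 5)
Your proof follows essentially the same route as the paper's own argument: both derive all five parts by specializing the necessary-and-sufficient conditions \eqref{eq:tt_fully_congested}--\eqref{eq:pp_fully_congested} of Theorem \ref{thm:fully_congested} to lines with one or both endpoints outside $\mathcal{I}^\mathrm{act}$, letting the indicator functions vanish and reading off the sign of $\psi_r$ or $\varsigma_{i_r}$. If anything, your treatment is slightly more careful than the paper's: you correctly handle the boundary case $\psi_r \le 0$ (rather than $\psi_r < 0$) in part (b), and you explicitly flag the strict positivity $\lambda_{i_r}^\star > 0$ that part (c) implicitly requires, a point the paper glosses over.
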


\begin{figure}[!htbp]
\centering
\begin{tikzpicture}[
    node distance=2cm,
    main/.style={circle, draw, thick, minimum size=1.5cm},
    intermediate/.style={circle, draw, thick, minimum size=0.7cm},
    bpbox/.style={
        draw,
        rectangle,
        rounded corners,
        thick,
        align=center,
        fill opacity=0.3,
        text opacity=1
    }
]
    \node (tt) [bpbox, fill=blue!20] at (0,2) {T-T BP \\ occurs};
    \node (tp) [bpbox, fill=blue!20] at (3,2) {T-P BP\\ occurs};
        
    \node (pt) [bpbox, fill=blue!20] at (0,0) {P-T BP\\ occurs};
    \node (pp) [bpbox, fill=blue!20] at (3,0) {P-P BP\\ occurs};
        
    \node (tt_non) [bpbox, fill=magenta!20] at (6,2) {T-T BP \\ not occur};
    \node (tp_non) [bpbox, fill=magenta!20] at (9,2) {T-P BP\\ not occur};
        
    \node (pt_non) [bpbox, fill=magenta!20] at (6,0) {P-T BP\\ not occur};
    \node (pp_non) [bpbox, fill=magenta!20] at (9,0) {P-P BP\\ not occur};

    \draw[->,>=latex,thick,ForestGreen!80!Black,transform canvas={yshift=0pt}]
        (pt) -- node[pos=0.2,sloped,above]{\scriptsize C1a} (tt_non);
    \draw[->,>=latex,thick,ForestGreen!80!Black,transform canvas={yshift=0pt}]
        (pp_non) -- node[pos=0.2,sloped,above]{\scriptsize C1c} (tp);
    \draw[->,>=latex,thick,ForestGreen!80!Black,transform canvas={yshift=0pt}]
        (tt) -- node[pos=0.2,sloped,above]{\scriptsize C1a-C} (pt_non);
    \draw[->,>=latex,thick,ForestGreen!80!Black,transform canvas={yshift=0pt}]
        (tp_non) -- node[pos=0.2,sloped,above]{\scriptsize C1c-C} (pp);
    
    \draw[->,>=latex,thick,Magenta!80!Black,transform canvas={yshift=0pt}]
        (pt_non) -- node[pos=0.2,right]{\scriptsize C1b} (tt_non);
    \draw[->,>=latex,thick,Magenta!80!Black,transform canvas={yshift=0pt}]
        (pp) -- node[pos=0.2,left]{\scriptsize C1d} (tp);
    \draw[->,>=latex,thick,Magenta!80!Black,transform canvas={yshift=0pt}]
        (tt) -- node[left]{\scriptsize C1b-C} (pt);
    \draw[->,>=latex,thick,Magenta!80!Black,transform canvas={yshift=0pt}]
        (tp_non) -- node[right]{\scriptsize C1d-C} (pp_non);
    
    \node at (4.5,-1.0) {%
    \protect\tikz{\protect\draw[->, ForestGreen!80!Black, thick, transform canvas={yshift=3pt}] (0,0) -- (0.4,0);}%
    Either route $r$ or line $(i_r,i)$ with $i \notin \mathcal{I}^\mathrm{act}$, {and power flow $i_r \to i$,} is perturbed
    };
    \node at (4.5,-1.5) {%
    \protect\tikz{\protect\draw[->, Magenta!80!Black, thick, transform canvas={yshift=3pt}] (0,0) -- (0.4,0);}%
    Either route $r$ or line $(i,i_r)$ with $i \notin \mathcal{I}^\mathrm{act}$, {and power flow $i \to i_r$,} is perturbed
    };
\end{tikzpicture}
	\caption{Relations of occurrences and non-occurrences of BPs when the network at GUE is fully congested and active routes are non-overlapping. \mh{$X$\protect\tikz{\protect\draw[->, red, Black, thick, transform canvas={yshift=3pt}] (0,0) -- (0.4,0);} $Y$} stands for \mh{$X$ implies $Y$}, and the text C1x on each arrow represents Corollary \ref{cor:BP_relations}-(x), and the postfix -C represents the contrapositive.}
\label{fig:BP_relations}
\end{figure}
	
%
%
	
\new{We illustrate the relations in Fig.~\ref{fig:BP_relations}}. Additional discussions can be found in Section~\ref{sec:supp_materials}. 
\subsubsection{\new{Beyond the Fully Congested Case}}\label{subsec:general_networks}
\new{As discussed before, even though the power network is neither uncongested nor fully congested, the aggregated network is fully congested. Therefore, the logic of Section~\ref{subsec:completely_congested} still applies after replacing routes/buses by route bundles/subnetworks. However,} aggregation induces a new notion of transportation-side BP in the aggregated system, termed \emph{type T-T (T-P) aggregated TBP} \new{(type T-T (T-P) ATBP)}, since perturbing a link affects route bundles containing the link. Under an assumption similar to \textbf{A}2 (cf. \textbf{A}2' in Section~\ref{sec:supp_materials}), it is expected that the underlying mechanisms of TBP are almost the same as stated in Theorem \ref{thm:fully_congested}-(a), (b1), and (b2), except that all quantities are replaced by their aggregated version. Moreover, it can be shown (cf. Theorem 10 in Section~\ref{sec:supp_materials}) that if type T-T (T-P) BP occurs when perturbing a link $\ell_\mathrm{T}$, either type T-T (T-P) ATBP occurs when perturbing the route bundle containing $\ell_\mathrm{T}$, or classical BP occurs in the transportation network formed by the routes in the route bundle. The reverse holds under mild conditions. 

Similarly, the underlying mechanisms of PBP are identical to those stated in Theorem \ref{thm:fully_congested}-(b3) and (b4) with quantities replaced by their aggregated version.

\subsection{General Networks}\label{sec:detection_general_networks}
\jqe{When the power network is not radial, analytically characterizing the necessary and sufficient conditions of BPs becomes challenging. However, numerically screening BPs for given systems can be done efficiently via derivative computation. The following steps are taken to facilitate derivative computation which enables BP screening: }
1) we show that GUEs for any system is the optimal solutions to a convex program \mh{assuming that $(\bm \alpha, \bar{\mathbf{f}}) \in \mathbb{R}_{++}^{\mt} \times \mathbb{R}_{++}^{\mp}$}, and 2) differentiate the optimal solution through the convex program with respect to parameters, either approximately or analytically, and 3) compute the derivatives of $\Phi_s, s \in \{\mathrm{T,P,C}\}$ with respect to parameters and check the signs, which suffices to conclude if certain type of \mh{BP} occurs. In fact, we \jqe{have also shown} the derivatives of $\Phi_s, s \in \{\mathrm{T,P,C}\}$ with respect to $\alpha_{\lt}, \lt \in [\mt]$ and $\bar{f}_{\lp}, \lp \in [\mp]$ exist almost everywhere  over $(\bm \alpha,  \bar{\mathbf{f}}) \in \mathbb{R}_{++}^{\mt} \times  \mathbb{R}_{++}^{m_\mathrm{p}}$. \jqe{Therefore, the proposed approach can be applied in essentially all practical coupled systems for BP screening.}

\section{Mitigation}\label{sec:mitigation}

If a coupled system exhibits BPs, it is of interest to mitigate them. Here we focus on mitigation via \emph{charging pricing} -- as the charging infrastructure is being expanded and the charging business ecosystem grows rapidly, less regulatory and institutional friction exists today for adopting new charging pricing rules than, e.g., modifying grid dispatch protocols. Taking a different point of view, the \mh{LMP-based} pricing considered in the bulk of the paper, despite naturally capturing long-term spatial electricity cost variation and having been used in prior work \citep{he2013optimal}, it may not be the precise pricing policy adopted in practice. In this sense, our results on alternative charging pricing policies complement those for LMP-based pricing. 

In the case of \mh{LMP-based} pricing, given LMPs $\bm \lambda \in \mathbb R^{\np}$, the route-specific charging cost is $\bm\pi(\bm \lambda) \in \mathbb R^{\nr}$ computed as in \eqref{eq:pi}. We consider an alternative charging pricing policy which results in different route-specific charging cost $\bm \Pi \in \mathbb R^{\nr}$. 

In general, the route-specific charging cost $\bm \Pi$ can be \emph{adaptive} or \emph{static}. In the adaptive case, $\bm \Pi$ may be a function of the traffic state $\mathbf x$ and price $\bm \lambda$, which may in turn be induced by this charging policy. 
While in the static case, $\bm \Pi$ is determined in an open-loop manner so it will be treated as a constant vector determined exogenously. 
We first investigate a particular form of adaptive pricing policy, i.e., system-optimal pricing, and then analyze the problem of optimizing static pricing policies to mitigate BP. 

\subsubsection{System-Optimal Pricing} Such pricing policies nudge the GUE so it optimizes one of the social cost metrics. These policies are obtained by leveraging our structural characterization of the GUE \mh{(see Section~\ref{sec:GUE_as_optimal_solution}).}
Enforcing system-optimal pricing aligns individual travelers' incentives with the system costs, and eliminates certain (but not all) types of \mh{BP}.

\begin{theorem}[System-Optimal Charging Pricing]\label{thm:sysoptPi}
    Given fixed model parameters, 
    there exist pricing policies $\bm \Pi_\mathrm{T}^\star$, $\bm \Pi_\mathrm{P}^\star$, and $\bm \Pi_\mathrm{C}^\star$, 
    that induce GUE achieving the minimal $\Phi_\mathrm{T}$, $\Phi_\mathrm{P}$, and $\Phi_\mathrm{C}$ respectively. Furthermore, the following statements hold:
\begin{enumerate}
	\item[(a)] $\bm \Pi_\mathrm{T}^\star \triangleq (\Alr)^\top \mathrm{diag}(\bm \alpha) \Alr \mathbf{x} \in \mathbb{R}^{\nr}$ eliminates type T-T, P-T, P-P, and P-C \mh{BPs};
	\item[(b)] $\bm \Pi_\mathrm{P}^\star \triangleq\bm \pi (\bm \lambda) - (\Alr)^\top \mathrm{diag}(\bm \alpha) \Alr \mathbf{x} - (\Alr)^\top \bm \beta \in \mathbb{R}^{\nr}$ eliminates type T-T, T-P, T-C, and P-P \mh{BPs};
	\item[(c)] $\bm \Pi_\mathrm{C}^\star \triangleq \bm \Pi_\mathrm{T}^\star(\mathbf x) + \bm \pi (\bm \lambda)$ eliminates type T-C and P-C \mh{BPs},
\end{enumerate}
where $(\mathbf x, \bm \lambda)$ involved is the GUE induced by the respective pricing policies.
\end{theorem}
Our structural characterization of GUE implies if $\bm \Pi^\star_s$ is enforced, GUE is equivalent to the optimal solution to centralized optimization with objective function $\Phi_s, s \in \{\mathrm{T},\mathrm{P},\mathrm{C}\}$. We next discuss the individual system-optimal pricing: \new{(a)} $\bm \Pi^\star_\mathrm{T}$ optimizes $\Phi_\mathrm{T}$ as it internalizes individual's negative impact to traffic congestion, similar to what is considered in \citep{he2013integrated}.
It eliminates \mh{PBP} since the centralized optimization does not depend on $\mathbf{\bar f}$ and therefore the travel pattern $\mathbf x$ at the GUE is not affected by the power line expansion. 
Moreover, sensitivity analysis implies increasing $\alpha_{\ell_\mathrm{T}}$ for any $\ell_\mathrm{T}$ would never decrease optimal transportation social cost; \new{(b)} $\bm \Pi^\star_\mathrm{P}$ optimizes $\Phi_\mathrm{P}$ as it ``reimburses'' the travel costs so effectively incentivizing route selection based on the LMPs. As a result, the charging loads at the GUE are spatially distributed to buses with the minimal LMPs. It
eliminates \mh{TBP} since the corresponding centralized optimization does not depend on $\bm \alpha$. Increasing $\mathbf{\bar f}$ enlarges feasible region and thus would never increase optimal power system social cost; \new{(c)} If $\bm \Pi_\mathrm{C}^\star$ is enforced, centralized optimization minimizes total social cost and thus type T-C and P-C BPs are eliminated.

\subsubsection{Static Pricing} In this case, we are interested in identifying exogenous charging prices $\bm \Pi \in \mathcal P \subseteq \mathbb R^{\nr}$, where $\mathcal P$ is a convex subset of $\mathbb R^{\nr}$ embedding elementary constraints for the pricing policy (i.e., box constraints). When the charging prices $\bm \Pi$ are fixed exogenously, the influence between travel pattern and power grid dispatch is no longer mutual. Indeed, given $\bm \Pi$, the  transportation UE $\mathbf x(\bm \Pi)$ \mh{can be characterized by a convex program (see Lemma 12 in Section~\ref{sec:proofs}),
which in turn through the charging load drives the economic dispatch solution \new{$\bm \lambda(\bm \Pi)$}.} \new{We denote the two convex programs determining $(\mathbf{x}(\bm \Pi), \bm \lambda(\bm \Pi))$ compactly by:}
\begin{align}\label{eq:two_convex_programs}
	\mathbf{x}(\bm \Pi) = \mathrm{UE}(\bm \Pi), \qquad \bm \lambda(\bm \Pi) = \mathrm{ED}(\mathbf{x}(\bm \Pi)).
\end{align}
As $\mathbf{x}(\bm \Pi)$ does not depend on $\mathbf{\bar f}$, changing line capacity has no effect on $\Phi_\mathrm{T}$. Meanwhile, charging load $\mathbf{d}$ also does not change with $\mathbf{\bar f}$ so expanding line would never increase $\Phi_\mathrm{P}$.
\begin{corollary}[Static Pricing Eliminates PBP]\label{cor:perfect_power_expansion}
    Under static charging pricing $\bm \Pi$, increasing transmission line capacity $\bar f_{\ell_\mathrm{P}}, \forall \ell_\mathrm{P} \in [m_\mathrm{P}]$  never increases $\Phi_s, s \in \{\mathrm{T},\mathrm{P},\mathrm{C}\}$.
\end{corollary}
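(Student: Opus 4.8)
The plan is to exploit the decoupling that static pricing induces between the two subsystems, which severs the feedback loop responsible for power-expansion BPs. First I would record that, under an exogenous $\bm\Pi$, the equilibrium travel pattern is the unique minimizer $\mathbf{x}(\bm\Pi)$ of the convex program~\eqref{eq:solving_transportation_equilibrium_main_text}. The crucial observation is that neither the objective nor the constraints of this program involves the line capacities $\mathbf{\bar f}$; hence $\mathbf{x}(\bm\Pi)$ is constant as each $\bar f_{\ell_\mathrm{P}}$ varies. Since the transportation social cost $\Phi_\mathrm{T}=\sum_{r} x_r c^\mathrm{tr}_r(\mathbf{x})$ is a function of $\mathbf{x}$ and the transportation parameters $(\bm\alpha,\bm\beta)$ alone, it follows immediately that $\Phi_\mathrm{T}$ does not change with $\mathbf{\bar f}$, so $\partial\Phi_\mathrm{T}/\partial\bar f_{\ell_\mathrm{P}}=0$ for every $\ell_\mathrm{P}\in[m_\mathrm{P}]$.

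Next I would turn to $\Phi_\mathrm{P}$. Because the charging load $\mathbf{d}(\mathbf{x})=\rho(\Acb)^\top\Acr\mathbf{x}$ depends only on $\mathbf{x}=\mathbf{x}(\bm\Pi)$, it too is independent of $\mathbf{\bar f}$. Thus, viewing the economic dispatch~\eqref{eq:economic_dispatch} as a parametric QP in $\mathbf{\bar f}$ with the load held fixed, increasing any $\bar f_{\ell_\mathrm{P}}$ only relaxes the inequality constraint $\mathbf{H}\mathbf{p}\le\mathbf{\bar f}$ while leaving the objective and the balance and injection constraints untouched. Enlarging the feasible set of a minimization problem cannot increase its optimal value; therefore $\Phi_\mathrm{P}(\mathbf{x}(\bm\Pi))$ is monotone non-increasing in each $\bar f_{\ell_\mathrm{P}}$, giving $\partial\Phi_\mathrm{P}/\partial\bar f_{\ell_\mathrm{P}}\le 0$ wherever the derivative exists, and more strongly monotonicity under finite increases.

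Finally, since $\Phi_\mathrm{C}=\Phi_\mathrm{T}+\Phi_\mathrm{P}$ is the sum of a quantity that is constant in $\mathbf{\bar f}$ and one that is non-increasing in $\mathbf{\bar f}$, $\Phi_\mathrm{C}$ is itself non-increasing, which settles the three cases $s\in\{\mathrm{T},\mathrm{P},\mathrm{C}\}$. I do not anticipate a substantive obstacle: the entire argument rests on the $\mathbf{\bar f}$-independence of $\mathbf{x}(\bm\Pi)$ together with the elementary monotonicity of an optimal value under constraint relaxation. The one point deserving care is that I route the power-side argument through the monotonicity of the optimal value rather than through an explicit derivative computation; this avoids any appeal to differentiability of $\Phi_\mathrm{P}$ and hence keeps the conclusion valid even across critical-region boundaries where the derivative may fail to exist.
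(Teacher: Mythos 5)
Your proposal is correct and matches the paper's own argument: both rest on the observation that the static-price UE $\mathbf{x}(\bm\Pi)$ from \eqref{eq:solving_transportation_equilibrium_main_text} is independent of $\bar{\mathbf{f}}$ (so $\Phi_\mathrm{T}$ and the load $\mathbf{d}$ are unchanged), followed by the monotonicity of the economic dispatch optimal value under enlarging the feasible set. Your explicit note that the monotonicity argument sidesteps differentiability across critical regions is a minor refinement of presentation, not a different proof.
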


Corollary \ref{cor:perfect_power_expansion} indicates \mh{PBP is} eliminated, as a static pricing policy breaks the coupling. Can it eliminate \mh{TBP}? To answer this question, we derive analytical expressions of $\partial \Phi_\mathrm{T}/\partial \alpha_{\lt}$ and $\partial \Phi_\mathrm{P}/\partial \alpha_{\lt}$ under $\bm \Pi$ (see Section~\ref{sec:proofs}). The two derivatives depend on $(\mathbf{x}(\bm \Pi),\bm \lambda(\bm \Pi))$ and thus on $\bm \Pi$. To identify $\bm \Pi$ that makes both derivatives non-negative so TBP are eliminated, we can benefit from explicit characterizations of the mapping from $\bm \Pi$ to $(\mathbf{x}(\bm \Pi),\bm \lambda(\bm \Pi))$. Towards the end, we view convex programs \new{defined in \eqref{eq:two_convex_programs}} as parametric programs with $\bm \Pi$ as the parameters and employ tools from parametric programming. In particular, we will partition $\mathcal P$ into subsets where the constraint binding patterns at the solution are invariant across the elements in each subset.

\begin{definition}[\mh{Critical Region}]\label{def:constraint_binding}
Given constraint binding pattern $\mathcal B \triangleq (\mathcal{R}, \mathcal{L}_\mathrm{P}) \subseteq [\nr]\times [\mp]$,
the set of static pricing policies $\bm \Pi \in \mathcal P$ that induce $(\mathbf{x}(\bm \Pi), \bm \lambda(\bm \Pi))$ such that $\{r:x_r(\bm \Pi) = 0\} =\mathcal{R}$ and $\{\ell_\mathrm{P}:(\mathbf{Hp})_\mathrm{\ell_\mathrm{P}} = \bar f_{\ell_\mathrm{P}}\} = \mathcal{L}_\mathrm{P}$ is called a  critical region and denoted by $\mathcal P^\mathrm{cr}_{\mathcal B}$, where $\mathbf p$ is the optimal power injection of \eqref{eq:economic_dispatch} given $\mathbf{x}(\bm \Pi)$.
\end{definition}

In fact, $\mathbf{x}(\bm \Pi)$ and $\bm \lambda(\bm \Pi)$ are affine functions of the parameters $\bm \Pi$ in each critical region, and overall piecewise affine. 
\begin{theorem}[GUE is Piecewise Affine in Charging Price]\label{thm:GUE_affine}
    Let $\bm \Pi \in \mathcal P^\mathrm{cr}_{\mathcal B}$ for some $\mathcal B$. There exist $\mathbf{K} \in \mathbb{R}^{\nr \times \nr}$, $\mathbf{C} \in \mathbb{R}^{\np \times \nr}$, $\mathbf{v}\in \mathbb{R}^{\nr}$, and $\mathbf{w}\in \mathbb{R}^{\np}$ such that $\mathbf{x}(\bm \Pi) = \mathbf{K \bm \Pi + v}$ and $\bm \lambda(\bm \Pi) = \mathbf{C\bm \Pi+w}$. Furthermore, $\mathcal P^\mathrm{cr}_{\mathcal B}$ is a convex set.
\end{theorem}

The explicit expressions of $\mathbf{x}(\bm \Pi)$ and $\bm \lambda(\bm \Pi)$ allow us to express $\partial \Phi_\mathrm{T}/\partial \alpha_{\ell_\mathrm{T}}$ and $\partial \Phi_\mathrm{P}/\partial \alpha_{\ell_\mathrm{T}}$ in terms of $\bm \Pi$. 
We now state the \mh{BP} mitigation problem as a feasibility program:
\begin{subequations}\label{eq:BP_elimination}
    \begin{align}
        \mathrm{find}\quad &  \bm \Pi \in \mathcal P^\mathrm{cr}_{\mathcal B} \\
    \mathrm{s.t.} \quad &   
    \frac{\partial \Phi_\mathrm{T}}{\partial \alpha_{\ell_\mathrm{T}}} \geq 0, \quad \frac{\partial \Phi_\mathrm{P}}{\partial \alpha_{\ell_\mathrm{T}}} \geq 0, \quad \forall \ell_\mathrm{T} \in [m_\mathrm{T}],\label{eq:constraint_BP}\\
    &\bm \Pi^\top \mathbf{x}(\bm \Pi) \ge \theta,\label{eq:constraint_revenue}
    \end{align}
\end{subequations}
where 
\eqref{eq:constraint_BP} requires eliminating type T-T and T-P BPs through $\bm \Pi$ (which together also ensure type T-C BP will not occur), and \eqref{eq:constraint_revenue} ensures the collected charging revenue is no smaller than some threshold $\theta \in \mathbb{R}$. 
Problem \eqref{eq:BP_elimination} appears to be daunting, but it turns out to be structurally simple as all constraints are either affine or convex quadratic:
\begin{theorem}[BP Mitigation as A Convex Program]\label{thm:BP_convex}
    All constraints in \eqref{eq:BP_elimination} are convex, and thus \eqref{eq:BP_elimination} is convex.
\end{theorem}

Optimization \eqref{eq:BP_elimination} identifies static charging pricing $\bm \Pi$ within a critical region for some binding pattern $\mathcal B$. In practice, the admissible set $\mathcal P$ may contain many critical regions. We can leverage tools from parametric programming to iterate over these regions \citep{kvasnica2004multi}, and terminate until a feasible $\bm \Pi$ is found within some critical region.  If no such $\bm \Pi$ can be found, we then obtain a numerical certificate that 
there is no way to eliminate all BPs by static pricing policies.

\section{Numerical Study}\label{sec:numerical}
We conduct our numerical study on a coupled system to demonstrate occurrences of BPs.
{The coupled system is adapted from \cite{alizadeh2016optimal}, combining the IEEE 9-bus test system with the transportation network representing a portion of the California Bay area (see Fig.~\ref{fig:transporation_network}). The full details of the system settings can be found in Table I in Section \ref{sec:additional_numerical_study}.}

\emph{System Settings.}
We use the same setting as that in \cite{alizadeh2016optimal} with slight modifications. The quadratic and linear coefficients of the generation cost are $\mathbf{Q}=\mathrm{diag}([0.11,0.085,0.1225,0,0,0,0,0,0]^\top)$ \$/MW$^2$ and $\bm \mu = [5,1.2,1,0,0,0,0,0,0]^\top$ \$/MW, respectively. The base loads are $\mathbf{d}_0 = [0,480,0,10,160,80,0,40,120]^\top$ MW, and $\mathbf{\bar f} = [250,250,25,300,10,250,250,250,250]^\top$ MW. We increase the load at bus 2 to $480$MW to promote power system congestion. The \jqe{original }line capacities $\bar{\mathbf{f}}$ are too large for the power system to be congested, so we decrease capacities of lines $(5,6)$, {i.e., the third entry in $\mathbf{\bar f}$,} and $(6,7)$, {i.e., the fifth entry in $\mathbf{\bar f}$,} to $25$MW and $10$MW, respectively. The {right panel} of Fig.~\ref{fig:power_network} shows the power network.


We consider the one O-D pair (\textit{Davis}, \textit{San Jose}) case. We show even in the simplest one O-D pair case, we discover TBPs and PBPs. We place four chargers at \textit{Winters}, \textit{Fairfield}, \textit{Mtn.View}, and \textit{Fremont}, respectively. We assume $N=15,000$ EVs travel from \textit{Davis} to \textit{San Jose}, and each EV has a fixed charging demand of $\rho = 20\mathrm{KW}$. Similar to \cite{alizadeh2016optimal}, we assume the travel cost $c_{\ell}$ (in \$) of link $\ell$ is a linear function of the number of EVs $x_\ell$ on $\ell$, i.e., $c_{\ell} = \alpha_{\ell} x_{\ell} + \beta_{\ell}$, where $\alpha_{\ell} x_{\ell}$ is the \textit{traffic-dependent} cost, and $\beta_{\ell}$ the fixed travel cost. Note that $\bm \beta$ is obtained by multiplying the same fixed travel times from \cite{alizadeh2016optimal} by a value of time $0.32$\$/min ($\approx 19$\$/hour \citep{goldszmidt2020value}). There are in total $10$ routes since the same physical route with different chargers are considered as different routes. We {refer to} the setting discussed above as the \emph{base setting}, and {we perturb it to demonstrate BPs.}

\begin{figure}[h]
  \centering
  \begin{subfigure}[t]{0.23\textwidth}
    \centering
   \includegraphics[width=\textwidth]{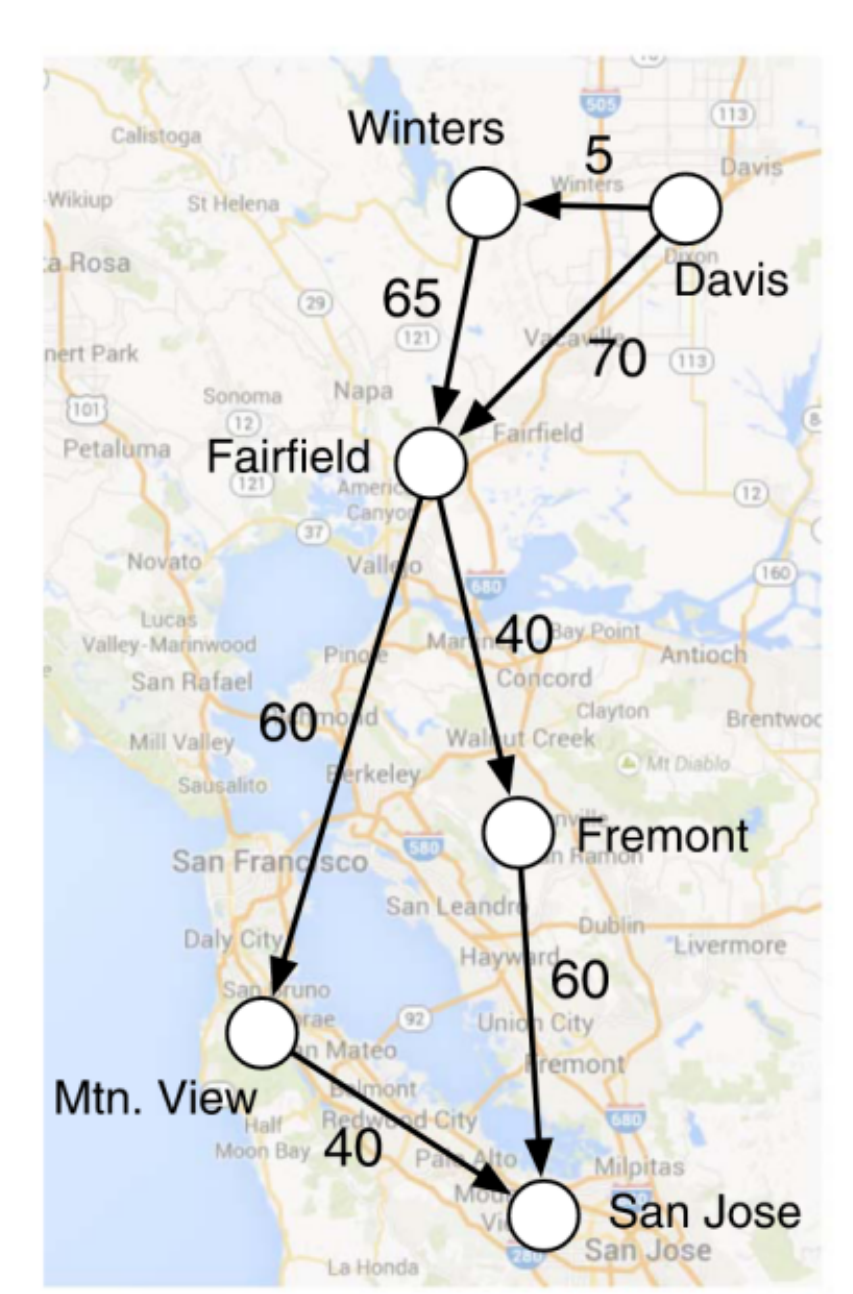}
    \caption{Transportation Network}
    \label{fig:transporation_network}
  \end{subfigure}
  \quad
  \begin{subfigure}[t]{0.23\textwidth}
    \centering
    \includegraphics[width=\textwidth]{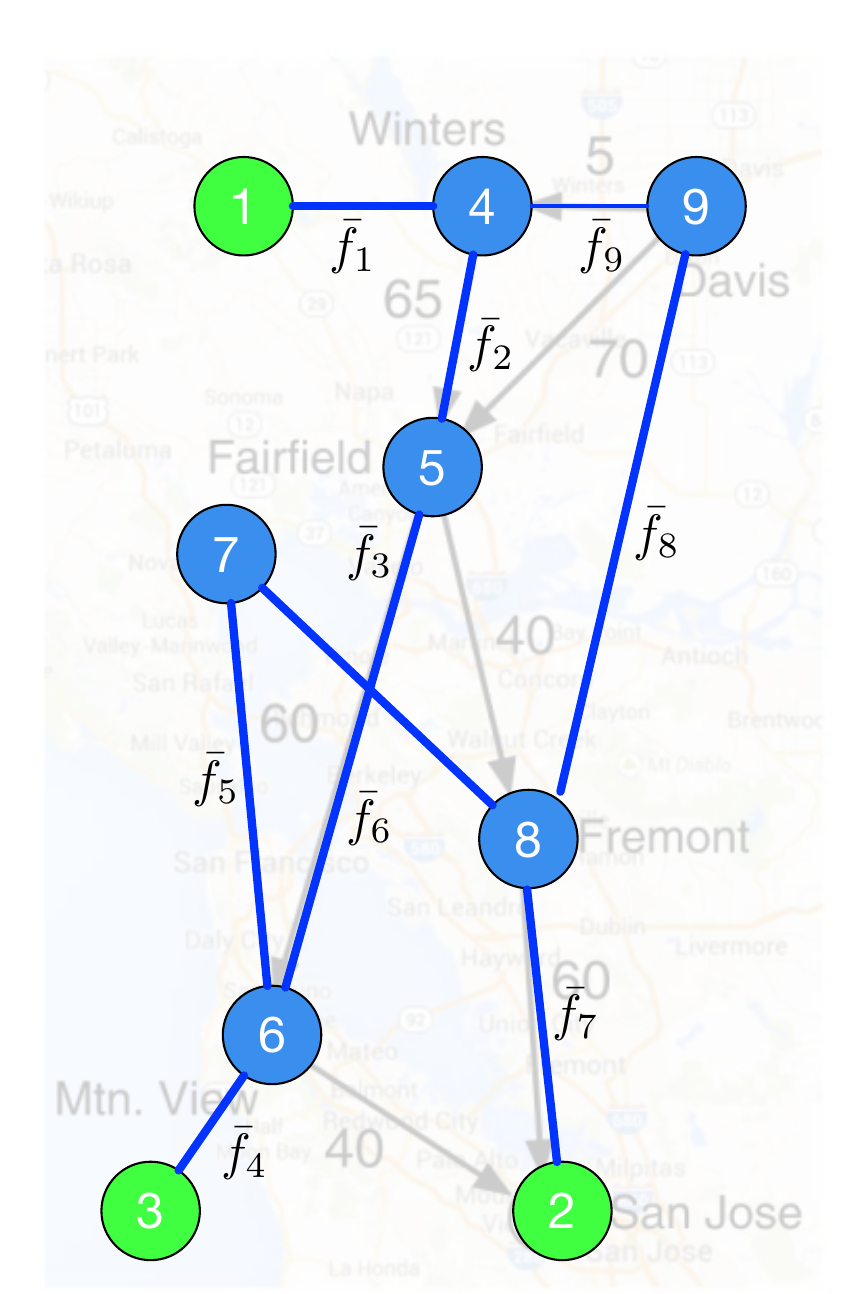}
    \caption{Power Network}
    \label{fig:power_network}
  \end{subfigure}
  \caption{The coupled system. The {left panel} is the transportation network. Numbers on links represent the fixed travel times  {in minutes}. The {right panel} is the power network, in which buses in green are generators and others in blue are load buses.}
\end{figure}

\subsubsection{Type P-P}
As explained in Sections~\ref{subsec:uncongested} and \ref{subsec:completely_congested}, it is necessary for the power network to be congested for PBP to occur. \new{We modify the base setting by decreasing} the capacity of line $(2,8)$, {i.e., the seventh entry in $\mathbf{\bar f}$,} to $160$MW to further promote congestion, and keep other settings unchanged. Expanding the capacity of $(2,8)$ from $160$MW to $200$MW induces a load relocation increasing generations at bus $2$ and decreasing generations at bus $3$ (Fig.~\ref{fig:new_pp}-(a)). However, Fig.~\ref{fig:new_pp}-(b) shows that the marginal generation cost of bus $2$ is much larger than that of bus $1$, and the price difference increases. Thus, $\Phi_\mathrm{P}$ increases (by $\approx 2.4\%$) (Fig.~\ref{fig:new_pp}-(c)).

\begin{figure}[!htbp]
\centering
\begin{subfigure}[t]{0.23\textwidth}
    \centering
    \includegraphics[width=\textwidth]{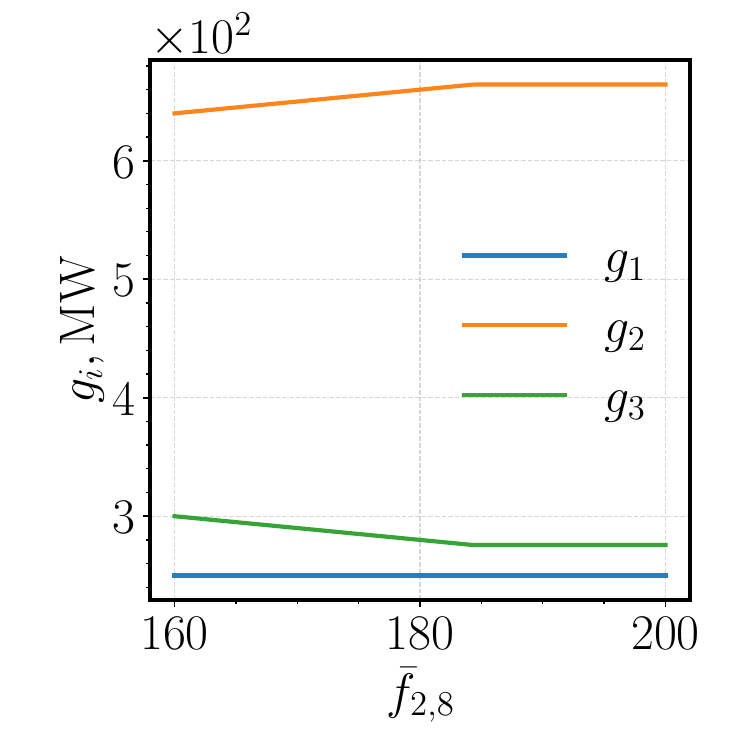}
    \caption{Generation}
\end{subfigure}
\qquad
\begin{subfigure}[t]{0.23\textwidth}
    \centering
    \includegraphics[width=\textwidth]{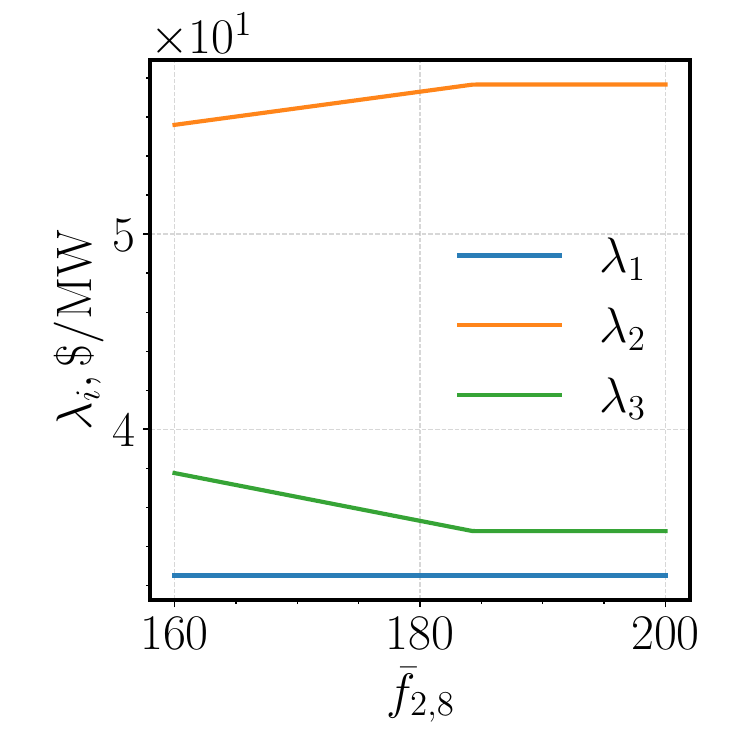}
    \caption{Price}
\end{subfigure}
\qquad
\begin{subfigure}[t]{0.23\textwidth}
    \centering
    \includegraphics[width=\textwidth]{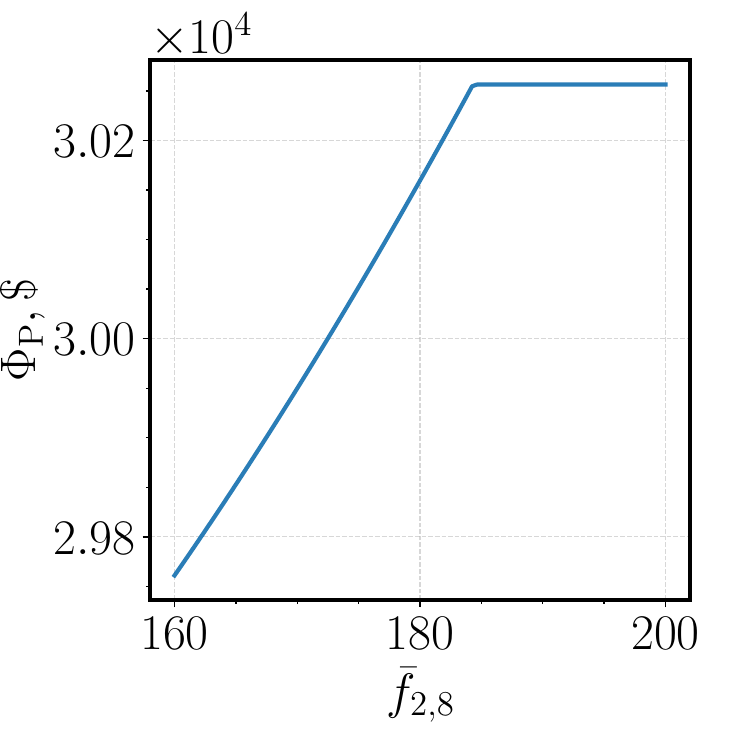}
    \caption{$\Phi_\mathrm{P}$}
\end{subfigure}
\caption{Type P-P BP occurs.}
\label{fig:new_pp}
\end{figure}

\begin{figure}[!htbp]
\centering

\begin{subfigure}[t]{0.23\textwidth}
    \centering
    \includegraphics[width=\textwidth]{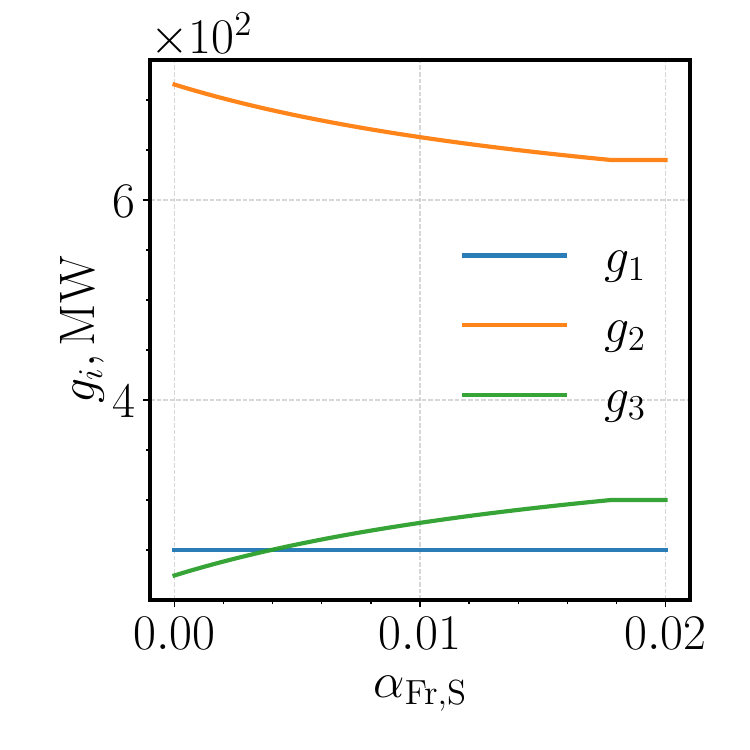}
    \caption{Generation}
\end{subfigure}
\qquad
\begin{subfigure}[t]{0.23\textwidth}
    \centering
    \includegraphics[width=\textwidth]{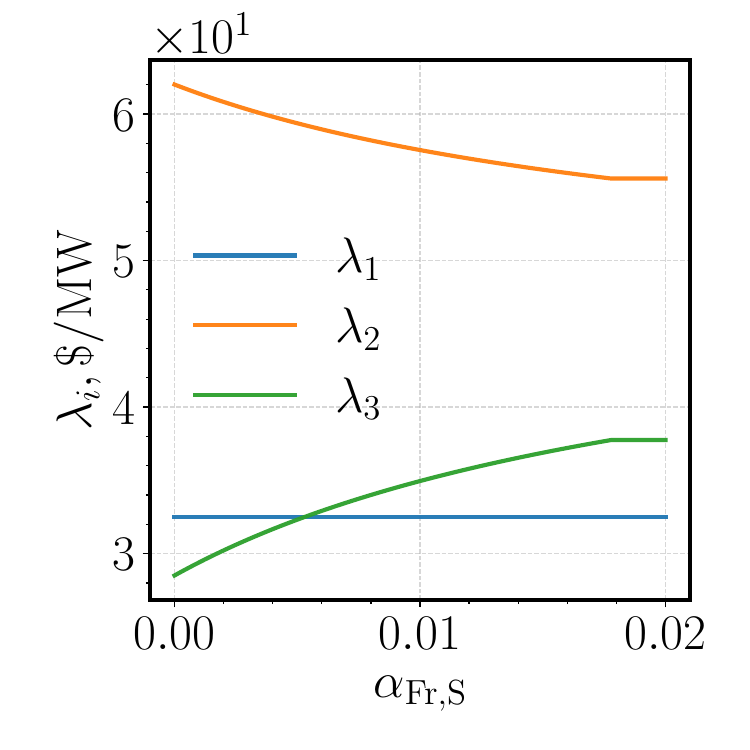}
    \caption{Price}
\end{subfigure}
\qquad
\begin{subfigure}[t]{0.23\textwidth}
    \centering
    \includegraphics[width=\textwidth]{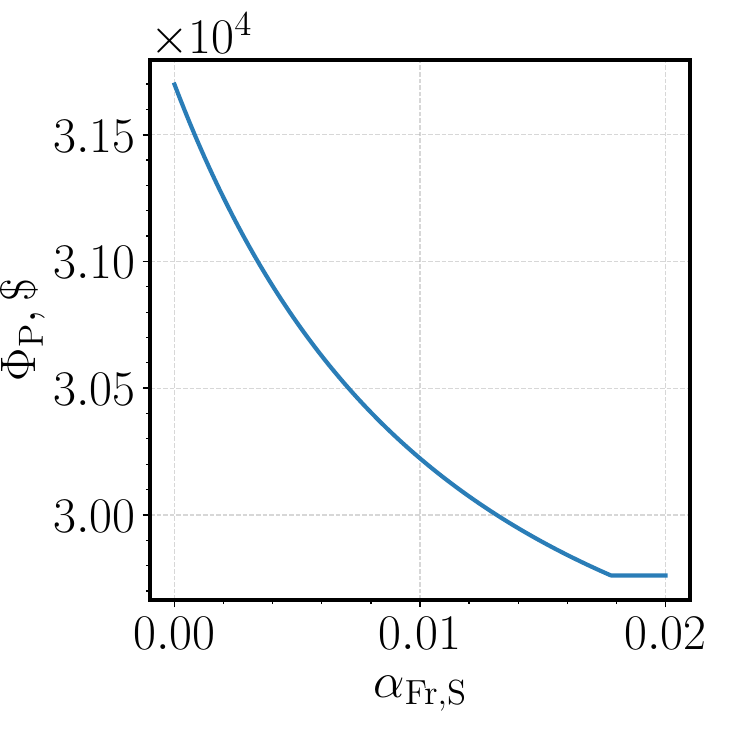}
    \caption{$\Phi_\mathrm{P}$}
\end{subfigure}
\caption{Type T-P BP occurs.}
\label{fig:new_tp}
\end{figure}

%

\subsubsection{Type T-P}
Recall that the underlying mechanism driving the occurrence of type P-P BP is that \textit{load relocation induces heavier generation burden at expensive generators}. Rather than expanding line $(2,8)$ to increase the generation burden of bus $2$, we can achieve the same effect \new{by modifying the setting of one road in the base setting}. \new{Specifically, we expand the capacity of the road connecting \textit{Fremont} and \textit{San Jose}.} 
Expanding (\textit{Fremont}, \textit{San Jose}), i.e., decreasing $\alpha_{\mathrm{Fr,S}}$, induces the same effect that expensive bus $2$ bares an even heavier generation burden (Fig.~\ref{fig:new_tp}-(a) and (b)), which thus increases $\Phi_\mathrm{P}$ (by $\approx 6.4\%$), as shown in Fig.~\ref{fig:new_tp}-(c).
{Remarkably}, the ``P-P occurs $\to$ T-P occurs'' implication shown in Fig.~\ref{fig:BP_relations} still holds, even if the underlying power network is \emph{not} radial. 

\begin{figure}[!htbp]
\centering

\begin{subfigure}[b]{0.23\textwidth}
    \centering
    \includegraphics[width=\textwidth]{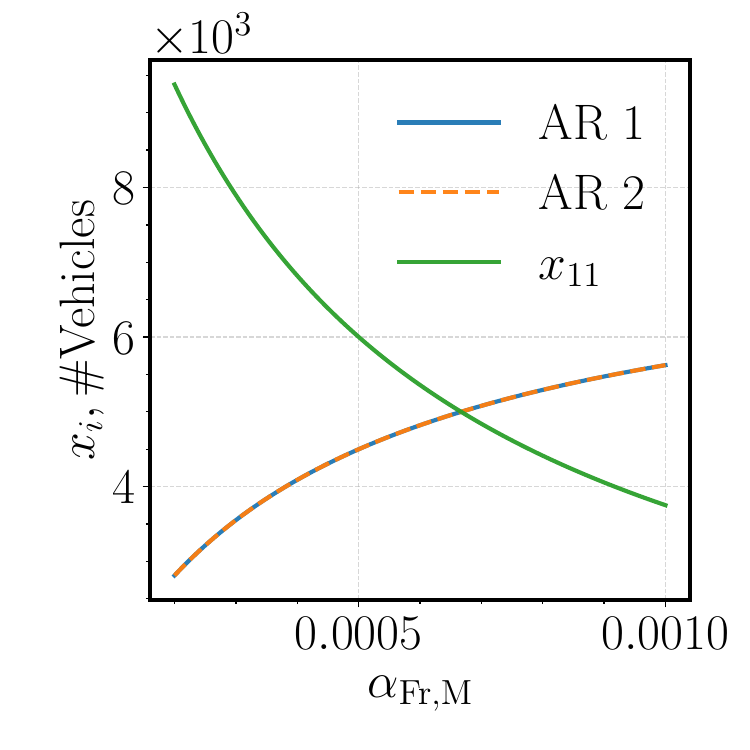}
    \caption{Traffic flow}
\end{subfigure}
\qquad
\begin{subfigure}[b]{0.23\textwidth}
    \centering
    \includegraphics[width=\textwidth]{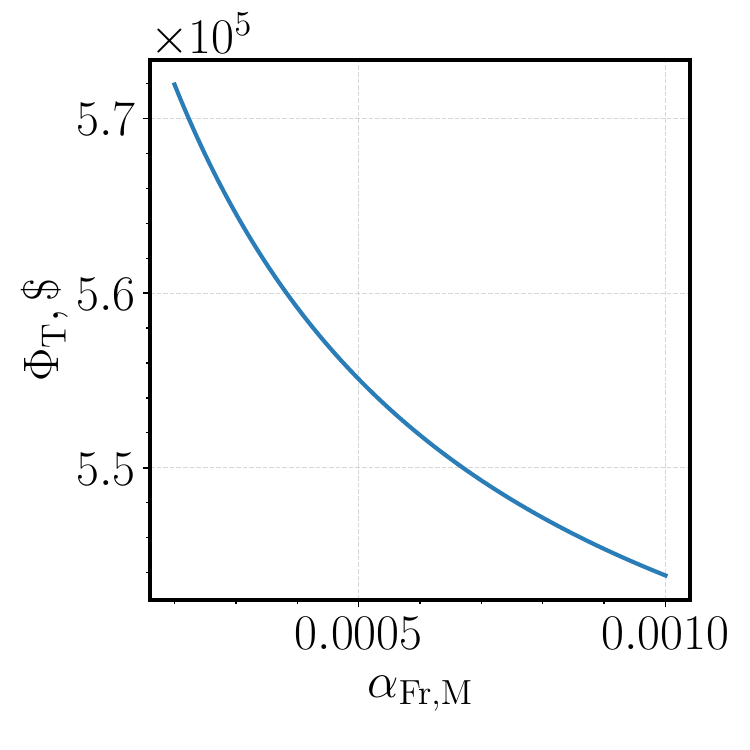}
    \caption{$\Phi_\mathrm{T}$}
\end{subfigure}
\qquad
\begin{subfigure}[b]{0.23\textwidth}
    \centering
    \includegraphics[width=\textwidth]{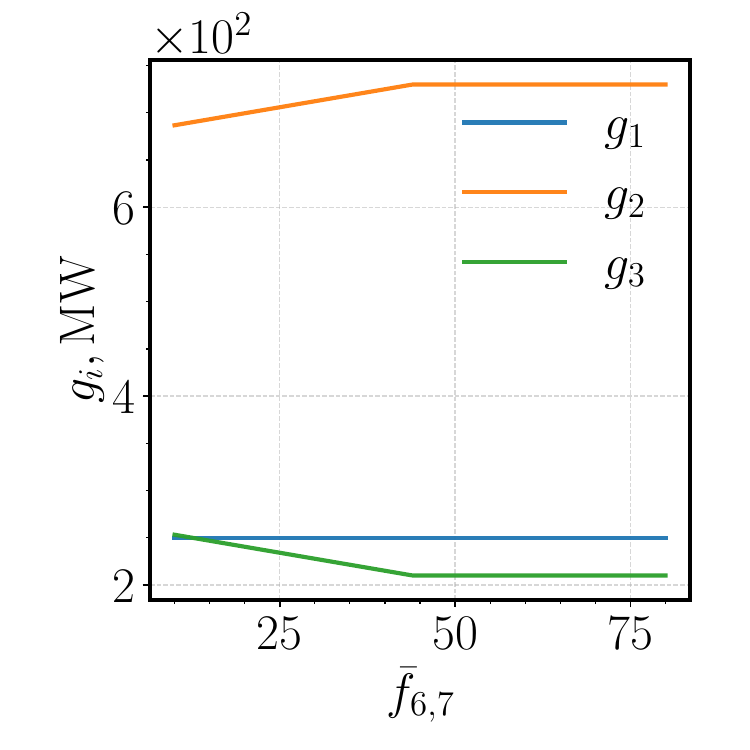}
    \caption{Generation}
\end{subfigure}

\begin{subfigure}[b]{0.23\textwidth}
    \centering
    \includegraphics[width=\textwidth]{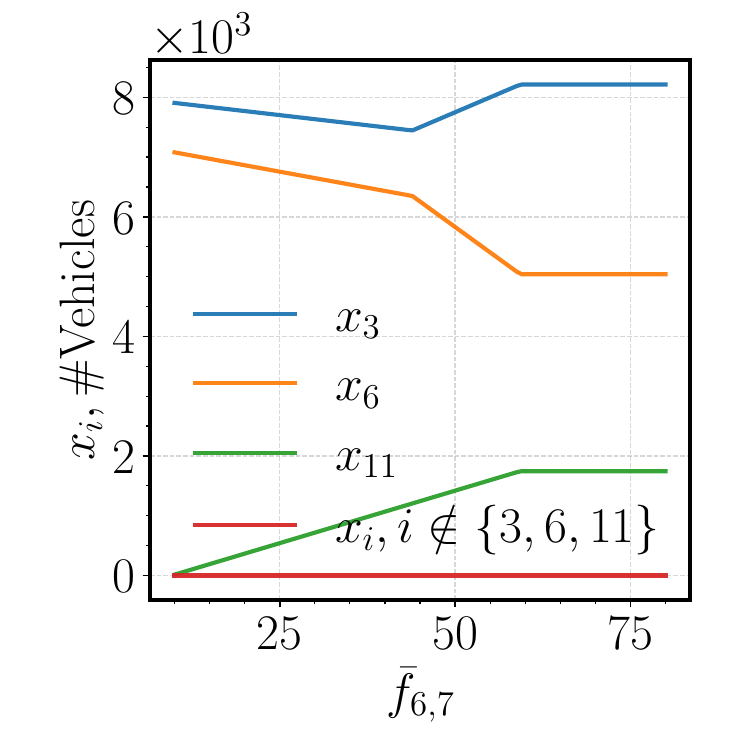}
    \caption{Traffic flow}
\end{subfigure}
\qquad
\begin{subfigure}[b]{0.23\textwidth}
    \centering
    \includegraphics[width=\textwidth]{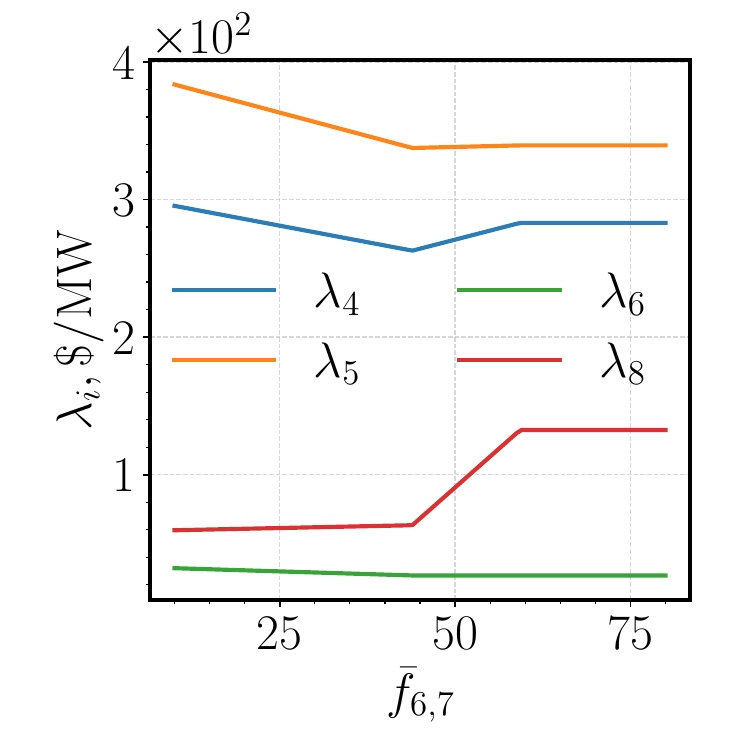}
    \caption{Price}
\end{subfigure}
\qquad
\begin{subfigure}[b]{0.23\textwidth}
    \centering
    \includegraphics[width=\textwidth]{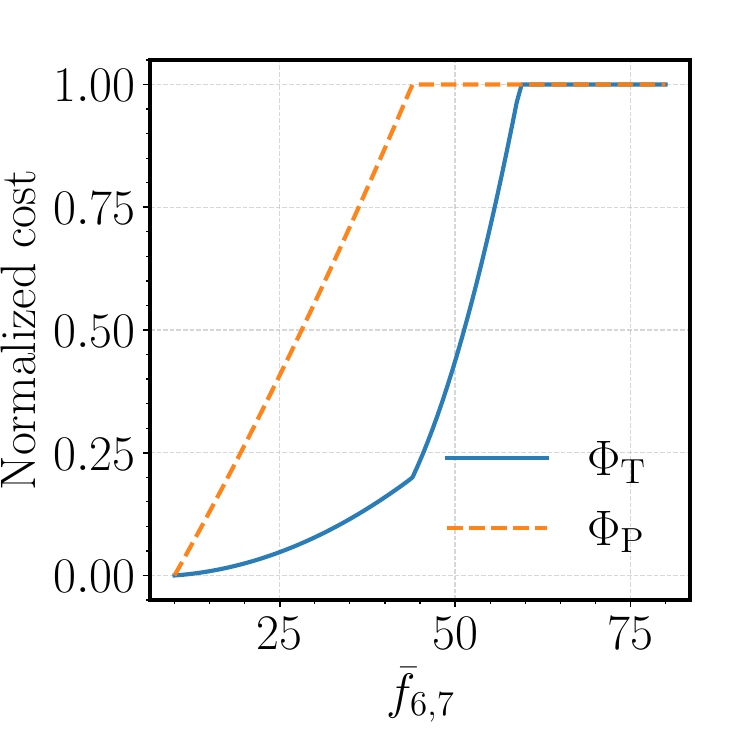}
    \caption{$\Phi_\mathrm{T}$ and $\Phi_\mathrm{P}$}
\end{subfigure}

\caption{(a)-(b): type T-T BP occurs; (c)-(f): Both type P-T and P-P BPs occur.}
\label{fig:my_2by3}
\end{figure}


\subsubsection{Type T-T}
{Type T-T BP does not occur for the base setting and perturbed versions.} \new{To produce type T-T BP, we therefore need to introduce completely different system settings compared to the base setting.} Inspired by the Wheatstone network \citep{braess2005paradox}, we add a road\footnote{{It can be shown that the previous type P-P and T-P BPs still occur in the new coupled system under the same setting, as long as we set the cost of traveling through (\textit{Fremont}, \textit{Mtn.View}) to be sufficiently large.}} (\textit{Fremont}, \textit{Mtn.View}).  The new transportation network contains the Wheatstone network formed by \textit{Fairfield, Mtn.View, Fremont}, and \textit{San Jose}.
{To mimic the setting in classical BP, we a) set the capacities of $(5,6)$ and $(6,7)$ to be $250$MW so that the power network is {uncongested} at GUE, and b) set $\bm \alpha=10^{-4}\times[0,0,0,0,6.67,6.67,0,10]^\top$, and $\bm \beta=[0,0,0,20,10,10,20,0]$, with the last entries representing the cost coefficients for the new road.} Then, the transportation network is {effectively} reduced to a Wheatstone network \mh{with three routes, i.e., two symmetric routes (we call aggregate routes (AR) $1$ and $2$) not passing through (\textit{Fremont, Mtn.View}), and one passing through it}. {The new route containing the new road, route $11$ (\textit{Davis, Winters, Fairfield, Fremont, Mtn.View, San Jose}), is associated with the charger at \textit{Winters}.}
{In Fig.~\ref{fig:my_2by3}-(a) and (b),} we expand {the capacity of} (\textit{Fremont, Mtn.View}). 
Fig.~\ref{fig:my_2by3}-(a) shows that expanding the road induces a flow relocation to route $11$. When $\alpha_{\mathrm{(Fr,M)}}$ is small, almost all traffic gets attracted to route $11$, which is consistent with the mechanism inducing the classical BP. Fig.~\ref{fig:my_2by3}-(b) confirms that $\Phi_\mathrm{T}$ increases (by $\approx 6.5\%$) as road gets expanded. Since all travelers see the same price, it recovers the classical BP.

\subsubsection{Type P-T} {Interestingly, we can observe} the simultaneous occurrences of type P-T and P-P BPs in {this} modified network \new{(i.e., the modified network considered for producing type T-T BP)},\footnote{Note the simultaneous occurrence of type P-T and P-P BPs is not covered in Fig.~\ref{fig:BP_relations}, which suggests that the diagram like Fig.~\ref{fig:BP_relations} for general systems might be much more complicated.} we keep the same $\bm \alpha$ and $\bm \beta$ as those inducing type T-T BP, except that we set $\alpha_{\mathrm{Fr,M}} = 2\times 10^{-4}$. Fig.~\ref{fig:my_2by3}-(d) shows that the same flow dynamics as that induces type T-T BP is achieved by expanding line $(6,7)$. Such a flow relocation is due to the decreasing charging prices differentials (Fig.~\ref{fig:my_2by3}-(e)). As line $(6,7)$ gets expanded, traffic flow is attracted to route $11$, which increases $\Phi_\mathrm{T}$ (by $\approx 1.0\%)$. Moreover, the traffic accumulation on route $11$ puts heavier generation burden on the already expensive bus $2$ (Fig.~\ref{fig:my_2by3}-(c)), and thus $\Phi_\mathrm{P}$ increases (by $\approx 5.2\%$, Fig.~\ref{fig:my_2by3}-(f)).

In the above we fix network setting{s} and study BPs induced by perturbations of $\bm \alpha$ and $\bar{\mathbf{f}}$. It is also interesting to study 1) \emph{sensitivities of BPs}, i.e., \mh{how parameters other than $\bm \alpha$ and $\bar{\mathbf{f}}$ impact the \emph{extent} of BPs}, and 2) \emph{BP mitigation}, i.e., do pricing policies in Section~\ref{sec:mitigation} work? 
\new{For 1), we specifically study how $\rho$ and $\mathbf{Q}$ affect the extent of BPs under the same settings producing type P-T BP. It is observed that the extent of type P-T BP is significantly affected by both $\rho$ and $\mathbf{Q}$. For 2), we investigate the effectiveness of adaptive pricing on eliminating BPs in the four settings producing four types of BPs. It is observed that adaptive pricing successfully eliminates certain types of BPs as predicted by our theoretical results.}
We leave the additional numerical studies in Section~\ref{sec:additional_numerical_study}.

\section{Conclusion}\label{sec:Conclusion}

Braess’ paradox in coupled power and transportation systems has been studied in this paper. Through simple examples, we show that all types of BP can arise and identify the underlying mechanisms. We establish necessary and sufficient conditions for BP occurrence in coupled systems with progressively richer power network structures, and for general systems with non-radial power networks, we propose a computational framework for BP screening. We also study mitigation through charging pricing and provide theoretical guarantees for BP elimination. Overall, this work represents a first step toward understanding how power-transportation coupling affects equilibrium and infrastructure planning. The present analysis is based on stylized models with limitations that invite future study, including mixed EV and internal-combustion fleets and explicit charging-station capacity constraints.

\bibliographystyle{IEEEtran} 
\bibliography{bib} 

\section{Appendices}\label{sec:appendix}
\subsection{GUEs as the Optimal Solutions to A Convex Program}\label{sec:GUE_as_optimal_solution}
In order to screen BPs, we need to compute the GUE for general networks, which amounts to identify a \emph{(travel pattern, price)} pair satisfying Definition \ref{def:GUE}. As observed in Section~\ref{sec:2by2-example}, the problem of identifying a transportation system equilibrium given the charging prices can be cast as a variant of the standard problem of finding the transportation system UE, with the linear part of the travel cost modified to include the charging cost. This allows us to adopt the following classical result using convex optimization to compute the UE for a transportation system \citep{beckmann1956studies}:
\begin{lemma}[Transportation  UE given $\bm \lambda$]\label{lemma:tse}
    An admissible $\mathbf{x^\star}$ is a transportation system UE for a given $\bm \lambda$ if and only if it solves
\begin{subequations}\label{eq:solving_transportation_equilibrium}
    \begin{align}
        \min_{\mathbf{x}} ~& \quad  \frac{1}{2} \mathbf{(\Alr x)^\top \mathrm{diag}(\bm \alpha) \Alr x + [ (\Alr)^\top \bm \beta + \bm \pi(\bm \lambda)]^\top \mathbf x},\\
        \mathrm{s.t.} ~& \quad \nu: \mathbf{1^\top x} = 1; \quad
         \bm \xi: \mathbf{x} \geq \mathbf{0}.
    \end{align}
    \end{subequations}
\end{lemma}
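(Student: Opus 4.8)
The plan is to recognize \eqref{eq:solving_transportation_equilibrium} as a Beckmann-type potential-function characterization and to prove the equivalence by matching the KKT system of the convex program with the equilibrium condition in Definition~\ref{def:t:eq}. The central observation, which does all the real work, is that the gradient of the objective is exactly the vector of route costs $\mathbf c(\mathbf x,\bm\lambda)$. First I would verify that the objective $F(\mathbf x):=\tfrac12(\Alr\mathbf x)^\top\mathrm{diag}(\bm\alpha)\Alr\mathbf x + [(\Alr)^\top\bm\beta + \bm\pi(\bm\lambda)]^\top\mathbf x$ is convex: its Hessian is $(\Alr)^\top\mathrm{diag}(\bm\alpha)\Alr$, which is positive semidefinite because $\alpha_\ell\ge 0$ for every $\ell$ forces $\mathrm{diag}(\bm\alpha)\succeq\mathbf 0$. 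Since the feasible set $\{\mathbf x:\mathbf 1^\top\mathbf x=1,\ \mathbf x\ge\mathbf 0\}$ is a nonempty polytope, \eqref{eq:solving_transportation_equilibrium} is a convex program.

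Next I would compute the gradient and identify it with the route-cost vector. Differentiating gives $\nabla F(\mathbf x) = (\Alr)^\top\mathrm{diag}(\bm\alpha)\Alr\mathbf x + (\Alr)^\top\bm\beta + \bm\pi(\bm\lambda)$, which by \eqref{eq:pi} and \eqref{eq:t:cost} equals $\mathbf c(\mathbf x,\bm\lambda)$ row by row. This identity is the crux: the potential function is precisely a function whose gradient reproduces the per-route costs. Because both constraints are affine, the linear constraint qualification holds, so the KKT conditions are both necessary and sufficient for global optimality of the convex program. Introducing the multiplier $\nu$ for $\mathbf 1^\top\mathbf x=1$ and $\bm\xi\ge\mathbf 0$ for $\mathbf x\ge\mathbf 0$, stationarity reads $\mathbf c(\mathbf x,\bm\lambda)-\nu\mathbf 1-\bm\xi=\mathbf 0$, together with $\bm\xi\ge\mathbf 0$, $\mathbf x\ge\mathbf 0$, and complementary slackness $\xi_r x_r=0$ for all $r$.

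Finally I would show this KKT system is equivalent to Definition~\ref{def:t:eq}. For any $r$ with $x_r^\star>0$, complementary slackness forces $\xi_r=0$, hence $c_r(\mathbf x^\star,\bm\lambda)=\nu$; for every $r$ we have $c_r(\mathbf x^\star,\bm\lambda)=\nu+\xi_r\ge\nu$. Thus all used routes attain the common value $\nu$ and no route has strictly smaller cost, which is exactly the equilibrium condition, and simultaneously identifies the uniform cost $C$ of Lemma~\ref{lemma:equal_cost} with $\nu$. Conversely, given an equilibrium $\mathbf x^\star$ with common used-route cost $C$, setting $\nu=C$ and $\xi_r=c_r(\mathbf x^\star,\bm\lambda)-C\ge 0$ recovers a valid KKT point, so $\mathbf x^\star$ solves \eqref{eq:solving_transportation_equilibrium}.

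The main obstacle here is not analytical depth but the careful bookkeeping in the final step: one must confirm \emph{both} directions of the equivalence and check that the equality-constraint multiplier $\nu$ plays the role of the uniform route cost, so that the complementary-slackness form of the KKT conditions translates faithfully into the ``used routes are cost-minimal'' statement of Definition~\ref{def:t:eq}. The convexity and gradient-identity steps are routine once the objective is written in the stated form.
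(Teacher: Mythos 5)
Your proposal is correct and follows essentially the same route as the paper's proof: both invoke the convexity of \eqref{eq:solving_transportation_equilibrium} together with a constraint qualification to make the KKT conditions necessary and sufficient, identify the stationarity condition with the route-cost vector $\mathbf c(\mathbf x,\bm\lambda) = \nu\mathbf 1 + \bm\xi$, and read off the equilibrium condition from complementary slackness. If anything, your write-up is more complete than the paper's, which only spells out the direction from optimality to equilibrium and leaves the converse construction of $(\nu,\bm\xi)$ implicit.
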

\begin{proof}
As the problem is convex and Slater's condition holds, the KKT conditions are necessary and sufficient for optimality. Let $\mathbf{x}^\star$ be an optimal solution given $\bm \lambda^\star$. The stationarity condition is:
    \begin{align}
        \bm \xi + \nu \mathbf{1} = (\Alr)^\top \left[\mathrm{diag}(\bm \alpha) \Alr \mathbf{x}^\star + \bm \beta\right] + \bm \pi(\bm \lambda^\star) = c_r(\mathbf{x}^\star, \bm \lambda^\star).
    \end{align}
    Therefore, $c_r(\mathbf{x}^\star, \bm \lambda^\star) = \mh{\nu}$ if $x_r^\star > 0$, and $c_r(\mathbf{x}^\star, \bm \lambda^\star) = \xi_r + \mh{\nu} \geq \mh{\nu}$ if $x_r^\star = 0$, which is consistent with Definition \ref{def:t:eq}.
\end{proof}

    

Equipped with the optimization-based characterization of the transportation UE, we can further construct an optimization with both transportation and power system variables and objective terms whose solution coincides with the GUE for the coupled system. This idea of converting GUE computation to an optimization problem is fundamentally identical to that used in \textit{potential games} \citep{monderer1996potential}, where strategic interactions can be transformed into an optimization framework that captures the equilibrium state. Such techniques have also been used in similar contexts in \cite{he2013integrated} when considering dynamic wireless charging and in \cite{he2013optimal} for charging station planning.
\begin{theorem}[GUE Computation]\label{thm:solving_wardrop}
    The \emph{(travel pattern, price)} pair $(\mathbf x^\star, \bm \lambda^\star)$ is a generalized user equilibrium (GUE) for the coupled power and transportation system if and only if it solves
    \begin{subequations}\label{eq:auxiliary_opt}
        \begin{align}
    J:=    \min_{\mathbf{x},\,\mathbf{g},\,\mathbf{p}} &\quad  \frac{1}{2} \mathbf g^\top \mathbf Q \mathbf g + \bm \mu^\top \mathbf g + \frac{1}{2} \mathbf x^\top (\Alr)^\top \mathrm{diag}(\bm \alpha) \Alr \mathbf x + \bm \beta^\top \Alr\mathbf x  \\            \mathrm{s.t.} &\quad \bm \lambda: \mathbf{p =  g-d(x)};
            \quad \gamma: \mathbf{1^\top p} = 0;
            \quad  \bm \eta: \mathbf{Hp \leq \bar f};
            \quad \nu: \mathbf{1^\top x} = 1;
            \quad \bm \xi: \mathbf{x \geq 0}.
        \end{align}
    \end{subequations}
\end{theorem}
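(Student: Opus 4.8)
The plan is to prove the equivalence through the Karush--Kuhn--Tucker (KKT) conditions of the merged program \eqref{eq:auxiliary_opt}, showing that its stationarity conditions decouple by variable block into exactly the optimality conditions of the economic dispatch \eqref{eq:economic_dispatch} and of the charging-aware transportation UE \eqref{eq:solving_transportation_equilibrium}. First I would observe that \eqref{eq:auxiliary_opt} is a convex program: the objective is convex because $\mathbf{Q}\succ 0$ and $\mathrm{diag}(\bm\alpha)\succeq 0$ make the quadratic forms in $\mathbf{g}$ and $\mathbf{x}$ convex while all remaining terms are linear, and every constraint is affine. Since the constraints are affine, no constraint qualification beyond feasibility is needed, so the KKT conditions are both necessary and sufficient for global optimality; identifying the multiplier of the coupling constraint $\mathbf{p}=\mathbf{g}-\mathbf{d}(\mathbf{x})$ with $\bm\lambda$ (and those of the remaining constraints with $\gamma,\bm\eta,\nu,\bm\xi$ as labeled in the statement) makes the correspondence with the two subproblems explicit.

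Then I would write the Lagrangian and collect the stationarity conditions in $\mathbf{g}$, $\mathbf{p}$, and $\mathbf{x}$ separately. The conditions in $\mathbf{g}$ and $\mathbf{p}$, together with primal feasibility of the three power constraints and the sign/complementarity conditions on $\bm\eta$, are verbatim the KKT system of the economic dispatch \eqref{eq:economic_dispatch} with the charging load $\mathbf{d}(\mathbf{x})$ treated as a fixed parameter; hence $(\mathbf{g},\mathbf{p})$ solves \eqref{eq:economic_dispatch} given $\mathbf{x}$ and the coupling multiplier equals the LMP, i.e.\ $\bm\lambda=\bm\lambda^\star(\mathbf{x})$, which is condition (b) of Definition~\ref{def:GUE}. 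The crucial computation is the stationarity in $\mathbf{x}$: since $\mathbf{d}(\mathbf{x})=\rho(\Acb)^\top\Acr\mathbf{x}$ enters the coupling constraint, the term $\partial_{\mathbf{x}}\!\left[\bm\lambda^\top\mathbf{d}(\mathbf{x})\right]=\rho(\Acr)^\top\Acb\bm\lambda=\bm\pi(\bm\lambda)$ reproduces exactly the charging-cost vector of \eqref{eq:pi}. Consequently the $\mathbf{x}$-stationarity, read together with $\mathbf{1}^\top\mathbf{x}=1$, $\mathbf{x}\ge\mathbf{0}$, and $\bm\xi^\top\mathbf{x}=0$, is identical to the KKT system of \eqref{eq:solving_transportation_equilibrium} with $\bm\lambda$ fixed, so Lemma~\ref{lemma:tse} gives that $\mathbf{x}$ is a transportation UE given $\bm\lambda$, which is condition (a). This establishes the forward direction (optimal $\Rightarrow$ GUE); the converse follows by reversing the argument, taking the economic-dispatch optimizers $(\mathbf{g}^\star,\mathbf{p}^\star)$ associated with a GUE $(\mathbf{x}^\star,\bm\lambda^\star)$ and verifying that the full KKT system of \eqref{eq:auxiliary_opt} holds, which by sufficiency certifies optimality.

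I anticipate the main obstacle to be bookkeeping around a single multiplier serving a dual role, rather than any deep difficulty. The same vector $\bm\lambda$ must simultaneously act as the LMP in the power block and as the charging price seen by travelers in the transportation block, so I must fix the sign convention on the coupling constraint so that $\partial_{\mathbf{g}}$-stationarity yields $\bm\lambda=\mathbf{Q}\mathbf{g}+\bm\mu$ (marginal generation cost, consistent with the LMP) while $\partial_{\mathbf{x}}$-stationarity yields precisely $\bm\pi(\bm\lambda)$ and not its negation. Because \eqref{eq:auxiliary_opt} carries exactly one multiplier for the one coupling constraint, this consistency is automatic and is precisely what encodes the fixed-point structure of the GUE; conceptually this is the potential-function construction familiar from potential games, and I would note that analogy in passing.
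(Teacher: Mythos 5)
Your proposal is correct and follows essentially the same route as the paper's own proof: establish that KKT conditions are necessary and sufficient for the joint convex program \eqref{eq:auxiliary_opt}, then match its stationarity/complementarity system blockwise with the KKT systems of the economic dispatch \eqref{eq:economic_dispatch} and the charging-aware transportation UE \eqref{eq:solving_transportation_equilibrium}, the key observation in both cases being that the multiplier of the coupling constraint produces exactly $\bm\pi(\bm\lambda)$ in the $\mathbf{x}$-stationarity. The only cosmetic difference is that you invoke the affine-constraint qualification where the paper cites Slater's condition; both are valid here.
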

\begin{proof}
As the Slater's condition holds under our parameter assumptions for \eqref{eq:economic_dispatch}, \eqref{eq:solving_transportation_equilibrium}, and \eqref{eq:auxiliary_opt}, 
    KKT conditions are necessary and sufficient for optimality.
    We then show an $(\mathbf x^\star, \bm \lambda^\star)$ pair satisfies \eqref{eq:economic_dispatch} and \eqref{eq:solving_transportation_equilibrium} (i.e., it is a GUE) if and only if it also satisfies \eqref{eq:auxiliary_opt}, by matching the KKT conditions of \eqref{eq:auxiliary_opt} with that of  \eqref{eq:economic_dispatch} and \eqref{eq:solving_transportation_equilibrium}.
\end{proof}

Theorem~\ref{thm:solving_wardrop} offers both an alternative characterization of the GUE for the coupled system and an efficient way to compute GUE as the solution of a convex quadratic program. As a byproduct, we have the following result on the existence and uniqueness of GUE for general networks.

\begin{corollary}[Existence and Uniqueness of GUE]\label{cor:existence}
    There exists at least one GUE for each $(\bm \alpha,  \bar{\mathbf{f}}) \in \mathbb{R}_+^{\mt} \times  \mathbb{R}_{++}^{m^\mathrm{p}}$. Moreover, \mh{if $\bm \alpha \in \mathbb{R}_{++}^{\mt}$ and $\Alr$ is of full column rank}, then GUE is unique.
\end{corollary}
\begin{proof}
The objective function of \eqref{eq:auxiliary_opt} is coercive and the feasible region is closed. The existence of an optimal solution is guaranteed by Weierstrass theorem. The optimal solution is in fact finite (i.e. there exists $M \in \mathbb{R}$ such that $\|(\mathbf{g}^\star,\mathbf{p}^\star)\| < M$) because we assume $\mathbf{Q}$ to be diagonal with strictly positive entries so any entry of $\mathbf{g}^\star$ cannot be infinity. Uniqueness is due to the strict convexity of \eqref{eq:auxiliary_opt} if $\bm \alpha > \mathbf{0}$ and $\Alr$ is  of full column rank.
\end{proof}


\begin{remark}[Caveat for the Existence of GUE]
A caveat for Corollary~\ref{cor:existence} is that we have assumed sufficient generation capacity at every bus of the power network to simplify our exposition, which in practice, may be implied by the (usually conservative) charging infrastructure interconnection/permitting processes. As a result, the spatial shift of charging loads driven by the capacity changes will not lead to infeasibility  for the economic dispatch problem~\eqref{eq:economic_dispatch} or the joint optimization for computing the GUE of the coupled system~\eqref{eq:auxiliary_opt} in our model. Whether this observation can be extrapolated to real systems depends on the permitting process in place for the particular system under consideration.
\end{remark}

\subsection{Generalization to Multiple O-D Pairs}\label{sec:generalization_to_multiple_O-D_pairs}
Let $\mathcal{W}$ be the set of all O-D pairs, $X_{i,j} \geq 0$ be the traffic demand for O-D pair $(i,j) \in \mathcal{W}$, $\mathcal{R}_{i,j}$ be the collection of routes connecting O-D pair $(i,j)$, and $x^r_{i,j} \geq 0$ be the traffic flow on route $r \in \mathcal{R}_{i,j}$. We collect all $x_{i,j}^r$ for a fixed $(i,j)$ and $r \in \mathcal{R}_{i,j}$ into a column vector $\mathbf{x}_{i,j} \in \mathbb{R}^{|\mathcal{R}_{i,j}|}_+$.

Define incidence matrices $\Alr_{(i,j)}\in \mathbb{R}^{m_\mathrm{T}\times |\mathcal{R}_{i,j}|}$ and $\Acr_{(i,j)} \in \mathbb{R}^{\nc \times |\mathcal{R}_{i,j}|}$ as:
\begin{subequations}
	\begin{align}
		&\left(\Alr_{(i,j)}\right)_{\ell,r} := \mathbbm{1}\left\{ \text{link $\ell$ is contained in route $r \in \mathcal{R}_{i,j}$} \right\}, \quad (\ell,r) \in [\mt] \times [|\mathcal{R}_{i,j}|],\\
		&\left(\Acr_{(i,j)}\right)_{z,r} := \mathbbm{1}\left\{ \text{charging station $z$ is contained in route $r \in \mathcal{R}_{i,j}$} \right\}, \quad (z,p) \in [\nc] \times [|\mathcal{R}_{i,j}|].
	\end{align}
\end{subequations}
Then, the link flows through O-D pair $(i,j)$ can be computed as $\mathbf{x}^{\mathrm{L}}_{i,j} := \Alr_{(i,j)} \mathbf{x}_{i,j}$ and loads contributed by traffic are $\mathcal{R}_{i,j}$ contributes are $\mathbf{d}_{i,j}(\mathbf{x}_{i,j}) := \rho \left(\Acb\right)^\top \Acr_{(i,j)} \mathbf{x}_{i,j}$. Therefore, the aggregate link flows and loads are $\mathbf{x}^{\mathrm{L}} = \sum_{(i,j) \in \mathcal{W}} \mathbf{x}^{\mathrm{L}}_{i,j}$ and $\mathbf{d}(\{\mathbf{x}_{i,j}:(i,j) \in \mathcal{W}\}) = \sum_{(i,j)\in \mathcal{W}} \mathbf{d}_{i,j}$. The flow constraints are:
\begin{subequations}
	\begin{align}
		& \mathbf{1^\top x}_{i,j} = X_{i,j}, \quad \forall (i,j) \in \mathcal{W},\\
		&\mathbf{x}_{i,j} \geq 0, \quad \forall (i,j) \in \mathcal{W}.
	\end{align}
\end{subequations}
\begin{lemma}[Transportation UE for Multiple O-D pairs Case]
	Given charging prices $\bm \lambda$, the transportation UE  is equivalent to optimal solution to the below convex program:
\begin{subequations}\label{eq:tse_multiple_OD}
	\begin{align}
		\min_{\mathbf{x}^{\mathrm{L}},\mathbf{x}_{i,j},(i,j)\in \mathcal{W}} ~& \frac{1}{2}(\mathbf{x}^{\mathrm{L}})^\top \mathrm{diag}(\bm \alpha)\mathbf{x}^{\mathrm{L}} + \bm \beta^\top \mathbf{x}^\mathrm{L}+\bm \lambda^\top \mathbf{d}(\{\mathbf{x}_{i,j}:(i,j) \in \mathcal{W}\}),\\
		\mathrm{s.t.} ~& \bm \omega:  \mathbf{x}^\mathrm{L} = \sum_{(i,j)\in \mathcal{W}} \Alr_{(i,j)}\mathbf{x}_{i,j},\\
		&\nu_{i,j}:\mathbf{1^\top x}_{i,j} = X_{i,j}, \quad \forall (i,j) \in \mathcal{W};\qquad \bm \xi_{i,j}:\mathbf{x}_{i,j} \geq 0, \quad \forall (i,j) \in \mathcal{W}.
	\end{align}
\end{subequations}
\end{lemma}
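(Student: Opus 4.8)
The plan is to mirror the proof of Lemma~\ref{lemma:tse}, upgrading it from a single commodity to the multi-commodity setting. First I would observe that \eqref{eq:tse_multiple_OD} is a convex program: the objective is a convex quadratic in $\mathbf{x}^\mathrm{L}$ plus terms that are \emph{linear} in the route-flow variables. In particular the charging term rewrites as $\bm\lambda^\top\mathbf{d} = \sum_{(i,j)\in\mathcal{W}}\bm\pi_{i,j}(\bm\lambda)^\top\mathbf{x}_{i,j}$ with $\bm\pi_{i,j}(\bm\lambda) := \rho(\Acr_{(i,j)})^\top\Acb\bm\lambda$ the vector of route charging costs for O-D pair $(i,j)$, exactly as in \eqref{eq:pi}. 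Since all constraints are affine and a strictly feasible point exists (e.g.\ splitting each demand $X_{i,j}$ uniformly across $\mathcal{R}_{i,j}$), Slater's condition holds, so the KKT conditions are necessary and sufficient for optimality. I would then form the Lagrangian with multipliers $\bm\omega$ for the link-flow coupling constraint, $\nu_{i,j}$ for each demand constraint, and $\bm\xi_{i,j}\ge \mathbf{0}$ for the nonnegativity constraints.

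The key step is to read off and interpret the stationarity conditions. Stationarity in $\mathbf{x}^\mathrm{L}$ gives $\bm\omega = \mathrm{diag}(\bm\alpha)\mathbf{x}^\mathrm{L}+\bm\beta$, so $\bm\omega$ is precisely the vector of link travel costs at the candidate flow. Stationarity in each $\mathbf{x}_{i,j}$ then reads $(\Alr_{(i,j)})^\top\bm\omega + \bm\pi_{i,j}(\bm\lambda) = \nu_{i,j}\mathbf{1} + \bm\xi_{i,j}$; substituting the expression for $\bm\omega$, the left-hand side is exactly the vector of total (travel-plus-charging) route costs for O-D pair $(i,j)$. Combining this with complementary slackness $\xi^r_{i,j}\,x^r_{i,j}=0$ and $\xi^r_{i,j}\ge 0$, I obtain that every used route of $(i,j)$ attains the common minimal cost $\nu_{i,j}$ while every unused route costs at least $\nu_{i,j}$. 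This is precisely the per-commodity Wardrop condition, i.e.\ the multi-O-D generalization of Definition~\ref{def:t:eq}, and it establishes both directions of the claimed equivalence simultaneously.

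The only genuinely new bookkeeping relative to Lemma~\ref{lemma:tse} is the auxiliary link-flow variable $\mathbf{x}^\mathrm{L}$ and its coupling constraint: in the single-commodity formulation the link flow was substituted directly as $\Alr\mathbf{x}$, whereas here I must eliminate the multiplier $\bm\omega$ through its own stationarity equation before the route-cost interpretation becomes transparent. I expect this to be the main (though modest) obstacle; once $\bm\omega$ is identified with the link costs, the remaining argument is structurally identical to the single-O-D case applied independently to each commodity $(i,j)\in\mathcal{W}$. A secondary point worth stating carefully is that the charging contribution $\bm\lambda^\top\mathbf{d}$, being linear, contributes exactly $\bm\pi_{i,j}(\bm\lambda)$ to the gradient in $\mathbf{x}_{i,j}$, so no cross-commodity coupling enters through the charging cost and the commodities decouple cleanly in the equilibrium conditions.
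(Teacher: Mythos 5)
Your proposal is correct and follows essentially the same route as the paper's proof: form the Lagrangian, read off stationarity in $\mathbf{x}^\mathrm{L}$ to identify $\bm\omega$ (up to a sign convention) with the vector of link travel costs, substitute it into the stationarity condition for each $\mathbf{x}_{i,j}$ to recover the per-commodity route costs, and invoke complementary slackness to obtain the Wardrop condition for each O-D pair, with the converse handled by constructing the multipliers from a given UE exactly as you indicate.
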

\begin{proof}
The stationarity conditions are:
\begin{subequations}
	\begin{align}
		&\mathrm{diag}(\bm \alpha) \mathbf{x}^\mathrm{L} + \bm \beta + \bm \omega = \mathbf{0},\label{eq:stationarity_multiple_OD_f}\\
		&\rho (\Acr_{(i,j)})^\top \Acb \bm \lambda - (\Alr_{(i,j)})^\top \bm \omega - \nu_{i,j}\mathbf{1} -\bm \xi_{i,j}= \mathbf{0}, \quad \forall (i,j) \in \mathcal{W}. \label{eq:stationarity_multiple_OD_x}
	\end{align}
\end{subequations}
Let $(i,j) \in \mathcal{W}$, left multiply \eqref{eq:stationarity_multiple_OD_f} by $(\Alr_{(i,j)})^\top$, and apply \eqref{eq:stationarity_multiple_OD_x}. Then we have
\begin{equation}\label{eq:eq_condition_multiple_OD}
	\left(\Alr_{(i,j)}\right)^\top \mathrm{diag}(\bm \alpha) \mathbf{x}^\mathrm{L} + \left(\Alr_{(i,j)}\right)^\top \bm \beta + \rho \left(\Alr_{(i,j)}\right)^\top \Acb \bm \lambda = \nu_{i,j}\mathbf{1} + \bm \xi_{i,j}.
\end{equation}
The first two terms in \eqref{eq:eq_condition_multiple_OD} is a vector with entries the travel costs for routes connecting $(i,j)$, and the third term is a vector with entries the charging costs associated those routes. \eqref{eq:eq_condition_multiple_OD} means any optimal solution $\mathbf{x}_{i,j}^\star$ to \eqref{eq:tse_multiple_OD} satisfies the transportation UE condition given $\bm \lambda$.

Conversely, given a transportation UE $\mathbf{x}^\star_{i,j}, (i,j) \in \mathcal{W}$, we can define $(\mathbf{x}^\mathrm{L})^\star:=\sum_{(i,j)\in \mathcal{W}}\Alr_{(i,j)}\mathbf{x}^\star_{i,j}$, $\bm \omega^\star:=-(\mathrm{diag}(\bm \alpha)(\mathbf{x}^\mathrm{L})^\star + \bm \beta)$, and dual variables $\nu_{i,j}^\star$ and $\bm \xi_{i,j}^\star$ in a way similar to the proof of Lemma \ref{lemma:tse}. It is straightforward to show the primal-dual tuple $((\mathbf{x}^\mathrm{L})^\star,\{\mathbf{x}^\star_{i,j},\nu^\star_{i,j},\bm \xi_{i,j}\}_{(i,j)\in \mathcal{W}})$ satisfies KKT conditions of \eqref{eq:tse_multiple_OD} and thus is optimal.
\end{proof}

\begin{theorem}[GUE Computation for Multiple O-D pairs]
	If we denote $\mathbf{x}:=[\mathbf{x}_{i,j}]_{(i,j)\in \mathcal{W}}$ as a vector with entries $\mathbf{x}_{i,j}$. The \emph{(price, travel pattern)}  pair $(\mathbf{x}^\star, \bm \lambda^\star)$ is a generalized user equilibrium for the coupled power and transportation system if and only if it solves
    \begin{subequations}\label{eq:GUE_Multiple_OD}
        \begin{align}
             \min_{\mathbf{x}_{i,j},(i,j)\in \mathcal{W},\mathbf{x}^\mathrm{L},\mathbf{g},\mathbf{p}} &~ \frac{1}{2}(\mathbf{x}^\mathrm{L})^\top \mathrm{diag}(\bm \alpha)\mathbf{x}^\mathrm{L} + \bm \beta^\top \mathbf{x}^\mathrm{L} + \frac{1}{2}\mathbf{g^\top Q g + \bm \mu^\top g},\\
            \mathrm{s.t.} &~ \bm \lambda: \mathbf{d(x) + p - g} = \mathbf{0};\quad \gamma:\mathbf{1^\top p} = 0;\quad \bm \eta:\mathbf{Hp \leq \bar f},\\
            &~ \bm \omega: \mathbf{x}^\mathrm{L} = \sum_{(i,j)\in \mathcal{W}}\Alr_{(i,j)}\mathbf{x}_{i,j};\quad \nu_{i,j}:\mathbf{1^\top x}_{i,j} = X_{i,j}; \quad \bm \xi_{i,j}:\mathbf{x}_{i,j} \geq 0, \quad \forall (i,j) \in \mathcal{W}.
        \end{align}
    \end{subequations}
\end{theorem}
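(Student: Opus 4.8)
The plan is to mirror the single-O-D argument of Theorem~\ref{thm:solving_wardrop}: since \eqref{eq:GUE_Multiple_OD} is a convex quadratic program for which Slater's condition holds under the standing parameter assumptions, its KKT system is both necessary and sufficient for optimality, and I will show that this KKT system decouples into a ``power block'' that reproduces the KKT conditions of the economic dispatch~\eqref{eq:economic_dispatch} and a ``transportation block'' that reproduces the equilibrium condition \eqref{eq:eq_condition_multiple_OD} of the preceding multiple-O-D transportation UE lemma. Establishing both directions of this correspondence then yields exactly conditions (a) and (b) of the generalized Definition~\ref{def:GUE}.

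First I would form the Lagrangian of \eqref{eq:GUE_Multiple_OD} with multipliers $\bm\lambda,\gamma,\bm\eta,\bm\omega,\{\nu_{i,j},\bm\xi_{i,j}\}$ attached to the listed constraints and read off the stationarity conditions. Differentiating in $\mathbf g$ gives $\mathbf{Qg}+\bm\mu=\bm\lambda$ and differentiating in $\mathbf p$ gives $\bm\lambda+\gamma\mathbf 1+\mathbf H^\top\bm\eta=\mathbf 0$; together with the primal feasibility relations $\mathbf d(\mathbf x)+\mathbf p-\mathbf g=\mathbf 0$, $\mathbf 1^\top\mathbf p=0$, $\mathbf{Hp}\le\bar{\mathbf f}$ and the complementary slackness on $\bm\eta$, these coincide, for any fixed $\mathbf x$ (hence fixed load $\mathbf d(\mathbf x)=\sum_{(i,j)\in\mathcal W}\mathbf d_{i,j}$), with the KKT conditions of \eqref{eq:economic_dispatch} under the sign conventions already fixed for the single-O-D case in Theorem~\ref{thm:solving_wardrop}. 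This identifies the multiplier $\bm\lambda$ of the coupling constraint with the LMP vector $\bm\lambda^\star(\mathbf x^\star)$, delivering part~(b) of the GUE definition.

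Next I would treat the transportation block. Differentiating in $\mathbf{x}^\mathrm{L}$ yields $\mathrm{diag}(\bm\alpha)\mathbf{x}^\mathrm{L}+\bm\beta+\bm\omega=\mathbf 0$, the analogue of \eqref{eq:stationarity_multiple_OD_f}, while differentiating in each $\mathbf{x}_{i,j}$ yields a relation among $\rho(\Acr_{(i,j)})^\top\Acb\bm\lambda$, $(\Alr_{(i,j)})^\top\bm\omega$, $\nu_{i,j}\mathbf 1$ and $\bm\xi_{i,j}$ matching \eqref{eq:stationarity_multiple_OD_x} up to the sign convention on $\nu_{i,j}$. Eliminating $\bm\omega$ exactly as in the lemma's proof reproduces \eqref{eq:eq_condition_multiple_OD} with the \emph{same} $\bm\lambda$, so by the preceding lemma $\mathbf x^\star$ is a transportation UE given the price $\bm\lambda^\star$, which is part~(a). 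The converse follows by construction: starting from a GUE $(\mathbf x^\star,\bm\lambda^\star)$, I would take $(\mathbf g^\star,\mathbf p^\star,\gamma^\star,\bm\eta^\star)$ from the economic dispatch solution, set $\bm\omega^\star=-(\mathrm{diag}(\bm\alpha)(\mathbf x^\mathrm L)^\star+\bm\beta)$, lift the transportation UE multipliers $\{\nu^\star_{i,j},\bm\xi^\star_{i,j}\}$ as in Lemma~\ref{lemma:tse}, and verify directly that the resulting primal-dual tuple satisfies every KKT condition of \eqref{eq:GUE_Multiple_OD}.

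The hard part is conceptual rather than computational: the multiplier $\bm\lambda$ of the coupling constraint must simultaneously serve as the LMP produced by the economic dispatch and as the fixed price parameter seen by travelers in \eqref{eq:tse_multiple_OD}, and the crux of the proof is that the single joint stationarity system forces these two roles to coincide, thereby resolving the flow--price fixed point of the GUE in one shot (the ``potential game'' mechanism invoked for Theorem~\ref{thm:solving_wardrop}). Beyond this point, the generalization to multiple O-D pairs is essentially bookkeeping: the extra index $(i,j)\in\mathcal W$ and the aggregation variable $\mathbf x^\mathrm L$ with its multiplier $\bm\omega$ must be carried through consistently, and I would confirm that Slater's condition and the finiteness/existence arguments of Corollary~\ref{cor:existence} survive once the per-O-D demands $X_{i,j}$ and flow constraints replace the single simplex constraint $\mathbf 1^\top\mathbf x=1$.
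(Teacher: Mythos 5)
Your proposal is correct and follows essentially the same route as the paper's own proof: both establish that Slater's condition makes the KKT system of \eqref{eq:GUE_Multiple_OD} necessary and sufficient, show that this system decomposes into the KKT conditions of the multiple-O-D transportation UE program \eqref{eq:tse_multiple_OD} and of the economic dispatch \eqref{eq:economic_dispatch} (with the coupling multiplier $\bm\lambda$ playing both the LMP and the travelers' price), and handle the converse by constructing the dual variables $\bm\omega^\star,\{\nu_{i,j}^\star,\bm\xi_{i,j}^\star\},\gamma^\star,\bm\eta^\star$ exactly as in Theorem~\ref{thm:solving_wardrop}. Your write-up simply spells out the stationarity conditions that the paper's terse proof leaves implicit.
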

\begin{proof}
	Its proof idea is the same as the proof of Theorem \ref{thm:solving_wardrop}. Since Slater's condition holds, KKT conditions are necessary and sufficient for optimality. KKT conditions of \eqref{eq:GUE_Multiple_OD} is the collection of KKT conditions of \eqref{eq:tse_multiple_OD} and \eqref{eq:economic_dispatch}. Therefore, any optimal solution $(\mathbf{x}^\star, \bm \lambda^\star)$ to \eqref{eq:GUE_Multiple_OD} must satisfy transportation UE condition and $\bm \lambda^\star$ is determined through economic dispatch given $\mathbf{x}^\star$. Conversely, given a GUE $(\mathbf{x}^\star, \bm \lambda^\star)$, we define $(\mathbf{x}^\mathrm{L})^\star := \sum_{(i,j)\in \mathcal{W}} \Alr_{(i,j)}\mathbf{x}_{i,j}^\star$ and dual variables $\gamma^\star, \bm \eta^\star, \bm\omega^\star,\nu_{i,j}^\star$ and $\bm \xi_{i,j}^\star$ similar to Theorem \ref{thm:solving_wardrop}. The primal-dual tuple satisfies KKT conditions of \eqref{eq:solving_transportation_equilibrium}.
\end{proof}

Throughout the paper, we only consider the case with one O-D pair and assume the traffic demand is unity. All results are based on convex programs in Lemma \ref{lemma:tse} and Theorem \ref{thm:solving_wardrop}. Problems \eqref{eq:tse_multiple_OD} and \eqref{eq:solving_transportation_equilibrium} generalizes those convex programs to multiple O-D pairs, and then all results in this paper can be easily generalized.
\subsection{\mh{Supplementary Materials for Section~\ref{sec:model} and Section~\ref{sec:examples}}}\label{apd:auxiliary_results}

\begin{proof}[Proof of Lemma \ref{lemma:equal_cost}]
	This is a direct consequence of Definition \ref{def:t:eq}. Let $\mathbf{x}^\star$ be a transportation UE given $\bm \lambda$. For any $r_1,r_2 \in [\nr]$ such that $\mathbf{x}^\star_{r_i} > 0, i = 1,2$, we have $c_{r_1}(\mathbf{x}^\star, \bm \lambda) \leq c_{r_2}(\mathbf{x}^\star, \bm \lambda)$ and $c_{r_2}(\mathbf{x}^\star, \bm \lambda) \leq c_{r_1}(\mathbf{x}^\star, \bm \lambda)$. Therefore, $c_{r_1}(\mathbf{x}^\star, \bm \lambda) = c_{r_2}(\mathbf{x}^\star, \bm \lambda)$. There exists some constant $C$ such that $c_{r}(\mathbf{x}^\star, \bm \lambda) = C$ for any $r$ such that $\mathbf{x}^\star_r > 0$.
\end{proof}

\begin{proof}[Proof of Proposition \ref{prop:2routes_uncongested}]
	This proposition is a direct consequence of the discussion in Section~\ref{sec:2x2:unc}.
\end{proof}

\begin{proof}[Proof of Proposition \ref{prop:2routes_congested}]
	This proposition is a direct consequence of the discussion in Section~\ref{sec:2x2:c}.
\end{proof}

\begin{proof}[Proof of Proposition \ref{prop:perfect_power_expansion_2by2}]
It follows from Theorem \ref{thm:fully_congested}-(b3) and (b4), since for this particular example, the power network is radial and there are two subnetworks and two active route bundles each containing exactly one active route. 

Without loss of generality, we assume power flows from bus $1$ to $2$, and therefore $\lambda_1 < \lambda_2$. It can be checked that \eqref{eq:pt_fully_congested} reduces to 
	\begin{align}
		Q_1 \psi_1 - Q_2 \psi_2 \propto (Q_1+Q_2)(\lambda_2-\lambda_1) > 0.
	\end{align}
	Therefore, type P-T BP does not occur.
	
	It can also be checked that \eqref{eq:pp_fully_congested} reduces to 
	\begin{align}
		\varsigma_1 - \varsigma_2 = \left(1 - \frac{\rho^2 (Q_1+Q_2)}{\alpha_1+\alpha_2+\rho^2(Q_1+Q_2)}\right)(\lambda_1 - \lambda_2) < 0.
	\end{align}
	Therefore, type P-P BP does not occur.
%
\end{proof}

\paragraph{The 2-Route 3-Bus Coupled System} We set the shift-factor matrix $\mathbf{H}$ to be
\begin{equation}
\mathbf{H} = \begin{bmatrix}
	\widehat{\mathbf H}\\
	-\widehat{\mathbf H}
\end{bmatrix} \quad \mbox{with} \quad 
    \widehat{\mathbf{H}} = \begin{bmatrix}
        0 & -0.8 & -0.6 \\
        0 & 0.2 & 0.4 \\
        0 & -0.2 & 0.6
    \end{bmatrix},
\end{equation}
which corresponds to some susceptance values of the power lines. The economic dispatch problem for the 3-Bus power network \red{can be equivalently written as:}
\begin{subequations}\label{eq:ed:3bus}
    \begin{align}
        \Phi_\mathrm{P}(\mathbf x) := \min_{\mathbf{g},\mathbf{f},\mathbf{p}} \quad & \frac{1}{2} \mathbf{g^\top Q g},\\
        \mathrm{s.t.} \quad & \lambda_1: p_1 = g_1 - \rho x_1 = f_1 - f_2;\\
        & \lambda_2: p_2 = g_2 - \rho x_2 = -f_1 - f_3;\\
        & \lambda_3: p_3 = g_3 = f_2 + f_3;\\
        & \mathbf f = \widehat{\mathbf H} \mathbf p;\\
        & |f_\ell| < \bar f_\ell,\quad  \ell=1,2,3;
    \end{align}
\end{subequations}
where we assume $\bm \mu = \mathbf{0}$ (i.e. the generation cost is purely quadratic), $f_\ell$ denotes the power flow between three buses with positive direction indicated in Figure \ref{fig:3-bus_2-route}, and $\bar f_\ell$ is flow capacity for the $\ell$-th line in both positive and negative flow directions (note that here we have deviated from our notation in \eqref{eq:economic_dispatch} slightly to simplify the exposition). 

\mh{Combining the UE equations given the charging prices~\eqref{eq:2bus:eq} and the economic dispatch problem~\eqref{eq:ed:3bus}, we can in fact analytically solve for GUE given any power system congestion pattern.
For instance,  when line congestion pattern is $f_1 = \bar f_1$ and $f_3 = \bar f_3$ (i.e., line 1 and line 3 are congested in the positive direction),  we have
\begin{subequations}
    \begin{align*}
    &x_1^\star = \frac{\alpha_2 + \rho^2 Q_2 - \frac{\rho Q_1}{3}(4\bar f_1 - \bar f_3)-\rho Q_2(\bar f_1+\bar f_3)}{\alpha_1+\alpha_2+\rho^2 (Q_1+Q_2)},\\
    &x_2^\star = \frac{\alpha_1 + \rho^2 Q_1 + \frac{\rho Q_1}{3}(4\bar f_1 - \bar f_3) + \rho Q_2(\bar f_1+\bar f_3)}{\alpha_1+\alpha_2+\rho^2 (Q_1+Q_2)},\\
    &g_1^\star = \rho x_1^\star + \frac{1}{3}(4\bar f_1-\bar f_3), \quad
    g_2^\star = \rho x_2^\star - (\bar f_1+\bar f_3),\\
    &g_3^\star = \frac{4\bar f_3 - \bar f_1}{3}, \quad f_2 = \frac{\bar f_1-\bar f_3}{3}, \quad
    \bm \lambda^\star = \mathbf{Q g^\star}.
    \end{align*}
\end{subequations}
}

\begin{proof}[Proof of Theorem \ref{thm:3by2_summary}]
	For this particular example, the occurrences of the mentioned BPs are all illustrated in Section~\ref{sec:2by3-example}. 
\end{proof}

\subsubsection{An Alternative Explanation of the Occurrence of Type P-T BP} For the 2-Route 3-Bus coupled system, we can also conclude the existence of type P-T \mh{BP} by focusing on the changes in LMPs. 
Denote $\Delta \lambda_{21}^\star:= \lambda^\star_2 - \lambda^\star_1$ (also see the second panel of Fig.~\ref{fig:phi_t_bar_f}). Then we can rewrite \eqref{eq:phitf3} as
\begin{equation}
	\frac{\partial \Phi_\mathrm{T}}{\partial \bar f_3} = \frac{\rho^2}{\alpha_1 + \alpha_2} \frac{\partial (\Delta \lambda_{21}^\star)^2}{\partial \bar f_3}. 
\end{equation}
Thus we have type P-T \mh{BP} for this system if the line capacity expansion enlarges the spatial LMP differential seen at the two charging locations. From Fig.~\ref{fig:phi_t_bar_f}, increasing $\bar f_3$ amplifies the price difference $\Delta \lambda_{21}^\star$ (Fig.~\ref{fig:phi_t_bar_f}). Therefore, $\Phi_\mathrm{T}$ increases.

\subsubsection{Shift of Line Congestion Pattern} It is easily seen from Fig. \ref{fig:phi_t_bar_f} that the power system congestion pattern shifts  
     at $\bar f_3 = 0.8$. For $\bar f_3 <0.8$, $f_1=-\bar f_1$ and $f_3 = \bar f_3$ so line 1 is congested in the negative direction and line 3 is congested in the positive direction; for $0.8<\bar f_3<1$, 
     $f_2 = \bar f_2$ and $f_3 = \bar f_3$ so lines 2 and 3 are congested in the positive direction while line 1 is no longer congested. We can uniquely determine the GUE under two different line congestion patterns:
\begin{equation}
    x_1^\star = \begin{cases}
        \frac{\alpha_2 + \rho^2 + \rho Q_1(\frac{4}{3}\bar f_1 + \frac{1}{3}\bar f_3)+\rho Q_2(\bar f_1-\bar f_3)}{\alpha_1+\alpha_2+\rho^2(Q_1+Q_2)}, & \bar f_3 <0.8,\\
        \frac{\alpha_2 + \rho^2 + \rho Q_1(4\bar f_2 - \bar f_3) + \rho Q_2(3\bar f_2 - 2\bar f_3)}{\alpha_1+\alpha_2+\rho^2(Q_1+Q_2)}, & 0.8<\bar f_3<1.
    \end{cases}
\end{equation}
In both cases, $x_1^\star$ is a linear function in $\bar f_3$. Under the second congestion pattern , $x_1^\star$ has a negative slope that is smaller than the negative slope of $x_1^\star$ under the first congestion pattern, which explains the sharp decrement in $x_1^\star$ after $\bar f_3 = 0.8$. Similarly, one can explain the sharp increment in price difference $\lambda_1^\star - \lambda_2^\star$. 
These results highlight the role that power network congestion patterns play in how line capacity expansion impacts the traffic flows at the GUE.

\subsubsection{Type T-T and T-P BPs in the 2-Route 3-Bus Coupled System} We visualize the occurrences of type T-T and T-P BPs in the 2-Route 3-Bus coupled system in Fig.~\ref{fig:phi_t_alpha} and Fig.~\ref{fig:phi_p_alpha}.
\begin{figure}[!htbp]
    \centering
    \includegraphics[width=.7\textwidth]{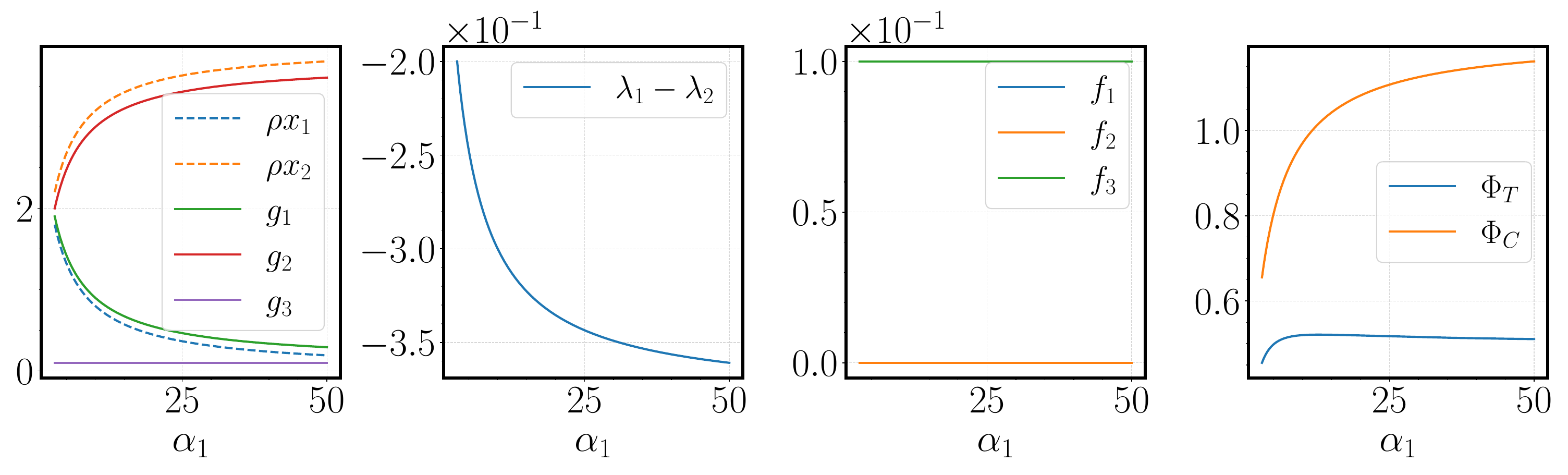}
    \caption{GUE and social cost metrics change with $\alpha_1$ where type T-T \mh{BP} occurs. Model parameters are set as $\bm \alpha := [\alpha_1,1]^\top, \rho := 4, \mathbf{Q} := \mathrm{diag}([0,0.1,0.1])$, and $\mathbf{\bar f} := [0.1,0.3,0.1]^\top$, where $\alpha_1\in[3,50]$.}
    \label{fig:phi_t_alpha}
    \vspace{-10pt}
\end{figure}

\begin{figure}[!htbp]
    \centering
\includegraphics[width=.7\textwidth]{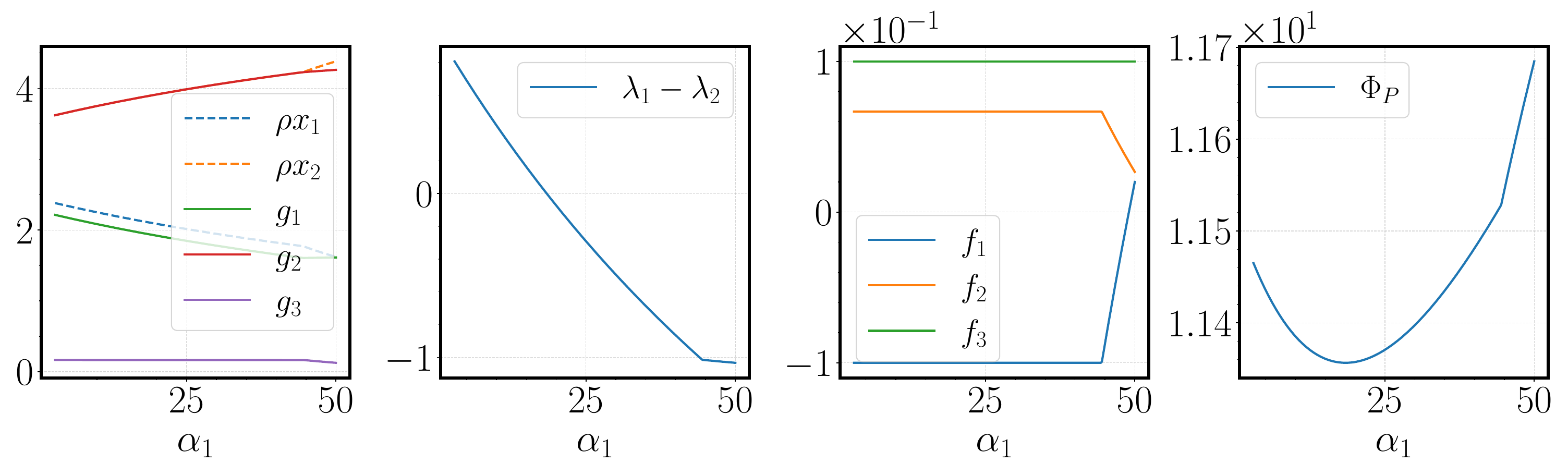}
    \caption{GUE and power system social cost change with $\alpha_1$ where type T-P \mh{BP} occurs. Model parameters are set as $\bm \alpha := [\alpha_1,10]^\top, \rho := 6, \mathbf{Q} := \mathrm{diag}([2,1,1])$, and $\mathbf{\bar f} := [0.1,0.3,0.1]^\top$, where $\alpha_1 \in [3,50]$.}
    \label{fig:phi_p_alpha}
    \vspace{-10pt}
\end{figure}

\subsection{Supplementary Materials for Section~\ref{sec:NS_conditions_sec}}\label{sec:supp_materials}
\subsubsection{Perturbation Analysis}\label{sec:perturbation_analysis}
Define an \textit{equivalence relation}\footnote{A binary relation $\sim$ on a set $\mathcal{X}$ is said to be an equivalence relation if and only if it is
\begin{enumerate}
	\item[(a)] reflexive: $x\sim x, \forall x \in \mathcal{X}$;
	\item[(b)] symmetric: $x\sim y \Leftrightarrow y \sim x, \forall x,y \in \mathcal{X}$;
	\item[(c)] transitive: $x\sim y, y\sim z \Rightarrow x\sim z$. \citep{grillet2007abstract}
\end{enumerate}
It could be easily checked that the ``$\sim$'' defined here is an equivalence relation on $\mathbb{R}_{++}^{\mt} \times \mathbb{R}_{+}^{\mp}$.} ``$\sim$'' such that two parameter pairs $(\bm \alpha_1, \bar{\mathbf{f}}) \sim (\bm \alpha_2,\bar{\mathbf{f}}_2)$ if their induced constraint binding patterns at the solution of \eqref{eq:auxiliary_opt} are the same. The relation $\sim$ induces equivalence classes on $\mathbb{R}_{+}^{\mt} \times \mathbb{R}_{++}^{\mp}$, which we call \textit{critical regions}. We assign indices to the $\mp+\nr$ inequality constraints by the set $[\mp + \nr]$. Any subset $\mathcal{B} \subseteq [\mp+\nr]$ uniquely specifies a constraint binding pattern that constraints indexed by $\mathcal{B}$ are binding, while other constraints are not binding.

\begin{definition}[Critical Region]
	Let $\mathcal{B} \subseteq [\mp + \nr]$ be the constraint binding pattern induced by the pair $(\bm \alpha, \bar{\mathbf{f}}) \in \mathbb{R}_{+}^{\mt} \times \mathbb{R}_{++}^{\np}$. Then, the equivalence class $\mathcal{C}_\mathcal{B} := \left\{(\bm \alpha',\bar{\mathbf{f}}'):(\bm \alpha',\bar{\mathbf{f}}') \sim (\bm \alpha, \bar{\mathbf{f}})\right\}$ is called the critical region of Problem \eqref{eq:auxiliary_opt} characterized by $\mathcal{B}$, and $\{\mathcal{C}_\mathcal{B}\}_{\mathcal{B} \subseteq [\mp+\nr]}$ is a finite collection of all critical regions. 
\end{definition}

Given a $(\bm \alpha, \bar{\mathbf{f}}) \in \mathcal{C}_\mathcal{B}$. \textit{We perturb $(\bm \alpha, \bar{\mathbf{f}})$ within $\mathrm{int}(\mathcal{C}_\mathcal{B})$}, which guarantees the constraint binding pattern of \eqref{eq:auxiliary_opt} remains unchanged. \mh{For it to work, we have the following lemma.}


\begin{lemma}\label{lemma:critical_region_properties}
	Critical region $\mathcal{C}_\mathcal{B} \subseteq \mathbb{R}_{+}^{\mt} \times \mathbb{R}_{++}^{\np}$ is convex  for all $\mathcal{B}$. Moreover, any critical region $\mathcal{C}_\mathcal{B}$ that has nonzero Lebesgue measure has nonempty interior.
\end{lemma}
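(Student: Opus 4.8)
The plan is to fix a binding pattern $\mathcal{B}$ and characterize $\mathcal{C}_\mathcal{B}$ through the KKT system of \eqref{eq:auxiliary_opt} with the active set held fixed, then read off convexity and, from it, nonemptiness of the interior. By Corollary~\ref{cor:existence} the optimizer of \eqref{eq:auxiliary_opt} is unique for every $(\bm\alpha,\bar{\mathbf{f}})\in\mathbb{R}_{++}^{\mt}\times\mathbb{R}_{++}^{\np}$, so $(\bm\alpha,\bar{\mathbf{f}})\in\mathcal{C}_\mathcal{B}$ precisely when this unique primal--dual solution has active-inequality index set exactly $\mathcal{B}$. First I would record the reduced KKT conditions: the always-present equalities together with the $\mathcal{B}$-active inequalities form a linear system $\mathbf{A}_\mathcal{B}(\mathbf{x},\mathbf{g},\mathbf{p})=\mathbf{b}_\mathcal{B}(\bar{\mathbf{f}})$ whose coefficient matrix $\mathbf{A}_\mathcal{B}$ is \emph{independent of the parameters} and whose right-hand side $\mathbf{b}_\mathcal{B}$ is affine in $\bar{\mathbf{f}}$; the inactive inequalities give strict inequalities $\mathbf{A}_\mathcal{N}(\mathbf{x},\mathbf{g},\mathbf{p})<\mathbf{b}_\mathcal{N}(\bar{\mathbf{f}})$; and dual feasibility requires the multipliers $\bm\eta,\bm\xi$ of the active inequalities to be nonnegative and those of inactive ones to vanish.

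Next I would argue convexity by a lifting-and-projection argument. Let $\widetilde{\mathcal{S}}_\mathcal{B}$ be the set of tuples $(\bm\alpha,\bar{\mathbf{f}},\mathbf{x},\mathbf{g},\mathbf{p},\bm\lambda,\gamma,\bm\eta,\bm\xi)$ satisfying the reduced KKT conditions, with the active-inequality multipliers constrained to be nonnegative and the inactive primal inequalities kept strict. Since $\mathcal{C}_\mathcal{B}$ is the image of $\widetilde{\mathcal{S}}_\mathcal{B}$ under the linear projection onto the $(\bm\alpha,\bar{\mathbf{f}})$ coordinates, and linear projections preserve convexity, it suffices to show $\widetilde{\mathcal{S}}_\mathcal{B}$ is convex. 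Every defining relation is an affine equality, a halfspace, or an open halfspace except the stationarity condition in $\mathbf{x}$, namely $(\Alr)^\top\mathrm{diag}(\bm\alpha)\Alr\mathbf{x}+(\Alr)^\top\bm\beta+\nu\mathbf{1}-\bm\xi-\rho(\Acr)^\top\Acb\bm\lambda=\mathbf{0}$, which is \emph{bilinear} in $(\bm\alpha,\mathbf{x})$ because $\bm\alpha$ enters the Hessian of \eqref{eq:auxiliary_opt} rather than only a linear objective term or a constraint right-hand side.

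Taming this bilinear term is the main obstacle, and it is genuinely delicate. For fixed $\mathcal{B}$ the active constraints confine $(\mathbf{x},\mathbf{g},\mathbf{p})$ to an affine subspace on which the objective is a strictly convex quadratic whose Hessian depends affinely on $\bm\alpha$; the unique minimizer on this subspace is the candidate solution, and substituting it into the inactive-primal and dual-feasibility inequalities yields the inequalities that cut out $\mathcal{C}_\mathcal{B}$. The crux is to show these resulting conditions remain convex in $(\bm\alpha,\bar{\mathbf{f}})$ despite the rational dependence of the minimizer on $\bm\alpha$. I expect this to require exploiting the separable structure $\tfrac12\sum_\ell\alpha_\ell(\Alr\mathbf{x})_\ell^2$ together with the nonnegativity of link flows and of multipliers, and possibly reparametrizing the link coefficients through the capacity variables $1/\alpha_\ell$, under which several of the relevant conditions linearize. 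This is precisely where a careful argument (or an additional structural hypothesis restricting how $\bm\alpha$ may be perturbed) must enter, since convexity in the $\bar{\mathbf{f}}$ directions is immediate from the affine right-hand-side dependence but convexity in the $\bm\alpha$ directions is not automatic.

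Finally, the nonempty-interior claim follows once convexity is in hand, via the standard fact that a convex subset of $\mathbb{R}^{\mt+\np}$ has nonempty interior if and only if it is full-dimensional, and a convex set of positive Lebesgue measure cannot lie in any hyperplane: if $\mathcal{C}_\mathcal{B}$ had empty interior it would be contained in its affine hull of dimension strictly below $\mt+\np$, forcing measure zero and contradicting the hypothesis. As an independent route that avoids convexity altogether, I would also note that the parametric-QP solution map is continuous (Berge's maximum theorem with uniqueness from Corollary~\ref{cor:existence}), so at any parameter where strict complementarity and strict feasibility of the inactive constraints hold these open conditions persist on a neighborhood, keeping the optimal active set equal to $\mathcal{B}$; since the degenerate locus where some active multiplier or inactive slack vanishes is a finite union of lower-dimensional algebraic sets of measure zero, a positive-measure $\mathcal{C}_\mathcal{B}$ must meet the nondegenerate part and hence contain an open ball.
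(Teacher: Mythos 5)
Your proposal does not actually prove the first claim of the lemma. You set up the lifting-and-projection argument, correctly observe that every defining relation of $\widetilde{\mathcal{S}}_\mathcal{B}$ is affine or a (strict) halfspace \emph{except} the stationarity condition in $\mathbf{x}$, which is bilinear in $(\bm\alpha,\mathbf{x})$ because $\bm\alpha$ enters the Hessian of \eqref{eq:base_optimization}, and then stop: ``this is precisely where a careful argument (or an additional structural hypothesis) must enter'' is an admission that convexity of $\mathcal{C}_\mathcal{B}$ in the $\bm\alpha$ directions is left unproven. Since convexity is exactly the content of the lemma's first sentence, this is a genuine gap and not a technicality: the bilinear coupling you identify is the whole difficulty, because on the active affine subspace the minimizer depends on $((\Alr)^\top\mathrm{diag}(\bm\alpha)\Alr)^{-1}$, so the inequalities cutting out $\mathcal{C}_\mathcal{B}$ are rational in $\bm\alpha$ and none of the polyhedral structure of right-hand-side parametric QP is available off the shelf. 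The paper does not fight this battle: it invokes the multiparametric-programming literature \cite{pistikopoulos2007multi} for convexity of critical regions, consistent with the device it uses in the proof of Lemma~\ref{lemma:existence_derivatives}, where the troublesome parameter is moved onto a constraint right-hand side via the substitution $\mathbf{s}=\bm\alpha$ so that the cited results on parameters entering constraints linearly apply. Your analysis in fact pinpoints why such a reduction (or some exploitation of the separable structure) is needed rather than optional, but identifying the obstacle is not the same as overcoming it.

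The second claim you handle correctly, and by essentially the paper's route: a convex set with empty interior lies in a proper affine subspace (equivalently, is contained in its boundary, which for convex sets has Lebesgue measure zero), so positive measure forces nonempty interior. Your ``independent route'' via Berge continuity and persistence of strict complementarity is not a valid substitute, however: it presupposes that the degenerate locus, where some active multiplier or inactive slack vanishes, has measure zero, and that is itself a statement the paper has to work to establish (Claim 5 in the proof of Lemma~\ref{lemma:existence_derivatives}), using precisely the convexity of critical regions that your first part leaves open. As written, that alternative argument is circular, so the nonempty-interior conclusion should rest on the convexity-based argument alone.
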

\begin{proof}
	The first part of the statement comes from \cite{pistikopoulos2007multi}. Since the boundary of a convex set has measure zero \citep{lang1986note}, a critical region with nonzero measure must have nonempty interior.
\end{proof}

\subsubsection{Relation Between LMP and Power Line Congestion Pattern}\label{sec:relation_LMP_congestion}
\begin{lemma}[Relation Between LMP and Power Line Congestion Pattern]\label{lemma:relation_LMP_congestion}
	Let $(\mathcal{V}_\mathrm{P},\mathcal{E}_\mathrm{P})$ be a \mh{radial} power network and let $v_1,v_2 \in \mathcal{V}_\mathrm{P}$. If the unique path connecting $v_1$ and $v_2$ is uncongested, then $v_1$ and $v_2$ have the same locational marginal price.
\end{lemma}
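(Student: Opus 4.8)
The plan is to read off the locational marginal prices from the stationarity conditions of the economic dispatch~\eqref{eq:economic_dispatch} and then exploit the fact that, on a radial network, the shift-factor row associated with each line is the indicator of a graph cut. First I would form the Lagrangian of~\eqref{eq:economic_dispatch} and write the stationarity condition with respect to the injection vector $\mathbf p$, which (with the dual conventions declared in~\eqref{eq:economic_dispatch}) yields the familiar LMP decomposition $\bm\lambda = \gamma\,\mathbf 1 + \mathbf H^\top \bm\eta$. Consequently, for the two buses of interest,
\[
\lambda_{v_1} - \lambda_{v_2} = (\mathbf e_{v_1} - \mathbf e_{v_2})^\top \mathbf H^\top \bm\eta = \sum_{\ell} \eta_\ell\,(H_{\ell v_1} - H_{\ell v_2}),
\]
so the claim reduces to showing that every term of this sum vanishes.

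Next I would invoke complementary slackness: since $\bm\eta \ge \mathbf 0$ and $\eta_\ell (\mathbf H\mathbf p - \bar{\mathbf f})_\ell = 0$, any line $\ell$ with $\eta_\ell \neq 0$ must be congested, i.e., its flow constraint is binding. Thus only congested lines can contribute to the sum. It remains to control the geometric factor $H_{\ell v_1} - H_{\ell v_2}$, and this is where the radial topology enters.

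The key structural step is the cut property of shift factors on a tree. Using $\mathbf 1^\top \mathbf p = 0$, for any line $\ell$ the flow $(\mathbf H\mathbf p)_\ell$ equals (up to the sign fixed by the chosen positive direction) the total net injection of the buses lying on one side of the cut obtained by deleting edge $\ell$ from the tree. Hence the $\ell$-th row of $\mathbf H$ is constant on each side of that cut, so $H_{\ell v_1} - H_{\ell v_2} \neq 0$ if and only if $v_1$ and $v_2$ lie on opposite sides of the cut. In a tree, deleting edge $\ell$ separates $v_1$ from $v_2$ precisely when $\ell$ belongs to the unique $v_1$--$v_2$ path $P$; this is the elementary fact that the path between two vertices is exactly the set of edges whose removal disconnects them. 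Therefore a line contributes to the sum only if it is both congested \emph{and} lies on $P$.

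Finally I would combine the two constraints. By hypothesis every line of $P$ is uncongested, so $\eta_\ell = 0$ for all $\ell \in P$; and every line off $P$ satisfies $H_{\ell v_1} = H_{\ell v_2}$. Either way each summand is zero, giving $\lambda_{v_1} = \lambda_{v_2}$, as claimed. I expect the main obstacle to be making the cut property of the shift-factor matrix precise for the model's particular sign and direction convention (including the possibility, as in~\eqref{eq:ed:3bus}, that a single physical line is represented by two directional flow constraints); once that is pinned down, the complementary-slackness and tree-path arguments are routine.
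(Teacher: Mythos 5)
Your proof is correct, but it takes a genuinely different route from the paper's. The paper first passes to the phase-angle (B--$\theta$) formulation of the dispatch problem~\eqref{eq:DCOPF}, takes stationarity with respect to each voltage angle to obtain the per-bus relation $\sum_{j \in \mathcal{N}_i} B_{ij}(\lambda_i-\lambda_j) = \sum_{j \in \mathcal{N}_i} B_{ij}(\mu_{ij}^+-\mu_{ij}^-)$, and then argues by induction on the number of buses: a leaf bus can be pruned because its stationarity relation cancels out of its parent's, and the uncongested-path hypothesis propagates equal LMPs leaf-by-leaf up the path. You instead stay entirely within the shift-factor formulation~\eqref{eq:economic_dispatch} that the paper actually uses in the main text: stationarity in $\mathbf p$ gives the global dual decomposition $\bm\lambda = \gamma\mathbf 1 \mp \mathbf H^\top\bm\eta$ (sign depending on convention; the paper itself uses $\bm\lambda = \gamma\mathbf 1 - \mathbf H^\top\bm\eta$ in the proof of Lemma~\ref{lemma:derivatives}), complementary slackness kills $\eta_\ell$ on uncongested lines, and the cut property of tree shift factors --- that on $\{\mathbf 1^\top\mathbf p = 0\}$ the row $H_\ell$ agrees with $\mathbf 1_{A_\ell}^\top$ up to a multiple of $\mathbf 1^\top$, hence is constant on each side of the cut induced by deleting $\ell$ --- kills $H_{\ell v_1}-H_{\ell v_2}$ on lines off the $v_1$--$v_2$ path. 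What each approach buys: yours is shorter, avoids both the induction and the (asserted but not proved) equivalence between the shift-factor and angle formulations, and isolates a reusable structural fact about radial networks; the paper's is purely local and handles the susceptance data explicitly, at the cost of the formulation switch and a leaf-pruning induction. Your one flagged obstacle --- two directional rows $\pm\widehat{H}_\ell$ per physical line, as in~\eqref{eq:ed:3bus} --- is handled exactly as you suggest: both rows share the cut property and an uncongested line has both duals equal to zero, so every term in the sum still vanishes.
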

\begin{proof}
	If $|\mathcal{V}_\mathrm{P}| = 1$ then it is trivially true. Consider the case when $|\mathcal{V}_\mathrm{P}| = 2$. If the power line connecting the only two buses is uncongested. \mh{It is known that our economic dispatch problem \eqref{eq:economic_dispatch} is equivalent to the below formulation using phase angles:} 
	\begin{subequations}\label{eq:DCOPF}
		\begin{align}
			\min_{g_i: i \in \mathcal{V}_\mathrm{P}} &\quad \mh{\frac{1}{2} \mathbf{g}^\top \mathbf{Q} \mathbf{g} + \bm \mu^\top \mathbf{g}},\\
			\mathrm{s.t} &\quad \lambda_i:g_i - d_i = \sum_{j \in \mathcal{N}_i} B_{ij}(\theta_i-\theta_j),\quad \forall i \in \mathcal{V}_\mathrm{P},\\
			&\quad \mu_{ij}^-, \mu_{ij}^+: -\bar f_{ij} \leq B_{ij}(\theta_i-\theta_j) \leq \bar f_{ij}, \quad \forall (i,j) \in \mathcal{E}_\mathrm{P},
		\end{align}
	\end{subequations}
	where $B_{ij}$ is the susceptance of line $(i,j)$, $\mathcal{N}_i$ is the buses that directly connected to bus $i$, and $\theta_i$ and $\theta_j$ are the voltage angles at buses $i$ and $j$, respectively.
	
	The Lagrangian of \eqref{eq:DCOPF} is:
	\begin{align}
\frac{1}{2} \mathbf{g}^\top \mathbf{Q} \mathbf{g} &+ \bm \mu^\top \mathbf{g} 
    + \sum_{i \in \mathcal{V}_\mathrm{P}} \lambda_i \left(\sum_{j \in \mathcal{N}_i} B_{ij}(\theta_i - \theta_j) - g_i + d_i\right)
    - \sum_{(i,j) \in \mathcal{E}_\mathrm{P}} \mu_{i,j}^{-}\!\left(\bar{f}_{ij} - B_{ij}(\theta_i-\theta_j)\right) \\
&\quad + \sum_{(i,j) \in \mathcal{E}_\mathrm{P}} \mu_{i,j}^{+}\!\left(B_{ij}(\theta_i-\theta_j) - \bar{f}_{ij}\right). \nonumber
\end{align}

	The stationarity condition with respect to the voltage angle $\theta_i$ implies that:
	\begin{equation}\label{eq:LMP_congestion_relation}
		\sum_{j \in \mathcal{N}_i} B_{ij}(\lambda_i-\lambda_j) = \sum_{j \in \mathcal{N}_i} B_{ij} (\mu_{ij}^+-\mu_{ij}^-),
	\end{equation}
	which establishes a relation between LMP and congestion conditions of power lines. When there are only two buses and one power line connecting them. The relation reduces to:
	\begin{equation}
		B_{12}(\lambda_1 - \lambda_2) = B_{12}(\mu_{12}^+-\mu_{12}^-).
	\end{equation}
	If the power line is uncongested, $\mu_{12}^+ = \mu_{12}^-$ by complementarity slackness. Therefore, $\lambda_1 = \lambda_2$ since $B_{12} \neq 0$. 
	
	Suppose the statement of the lemma holds for all power network with a tree topology and $K \geq 2$ buses. Consider a power system with a tree topology which has $K+1$ buses. Let $i,j$ be two buses. If the unique path connecting $i$ and $j$ has length $\leq K-1$, then $i$ and $j$ can be viewed as two buses in a power system with $K$ buses and the statement holds by our induction hypothesis. To see why this is true, pick an arbitrary leaf bus from the power system, say bus $r$. Let its parent bus be $r_\mathrm{pa}$. Then, the relation \eqref{eq:LMP_congestion_relation} for bus $r_\mathrm{pa}$ is:
	\begin{equation}
		B_{r_\mathrm{pa} r} (\lambda_{r_\mathrm{pa}} -\lambda_r) + B_{r'r_\mathrm{pa}}(\lambda_{r'}-\lambda_{r_\mathrm{pa}}) = B_{r_\mathrm{pa} r} (\mu^+_{r_\mathrm{pa} r} -\mu^{-}_{r_\mathrm{pa}r}) + B_{r'r_\mathrm{pa}}(\mu^+_{r'r_\mathrm{pa}}-\mu^-_{r'r_\mathrm{pa}}),
	\end{equation}
	where $r'$ is the parent bus of $r_\mathrm{pa}$ (if there is one). The relation \eqref{eq:LMP_congestion_relation} for bus $r$ is:
	\begin{equation}
		B_{r_\mathrm{pa}r}(\lambda_r-\lambda_{r_\mathrm{pa}}) = B_{r_\mathrm{pa}r}(\mu^-_{r_\mathrm{pa}r} - \mu^+_{r_\mathrm{\mathrm{pa}} r}).
	\end{equation}
	Therefore, the relation for bus $r_\mathrm{pa}$ is simply $B_{r'r_\mathrm{pa}}(\lambda_{r'}-\lambda_{r_\mathrm{pa}}) =  B_{r'r_\mathrm{pa}}(\mu^+_{r'r_\mathrm{pa}}-\mu^-_{r'r_\mathrm{pa}})$, acting as if bus $r$ does not appear in the power system. If $r_\mathrm{pa}$ has no parent (i.e. $r_\mathrm{pa}$ is the root bus), then similar argument still applies to show that relation between LMP and line congestion conditions at bus $r_\mathrm{pa}$ remains the same as if bus $r$ does not exist. Thus, we can ignore one leaf bus, which reduces the $(K+1)$-bus power system to a $K$-bus power system.
	
	What left is to show the statement continues to hold for the ignored leaf bus $r$ and an arbitrary bus $i$ in the power system. If the unique path connecting $i$ and $r$ is uncongested. The relation at bus $r$ tells that $r$ has the same LMP as its parent bus, we can apply the same argument to the parent bus again and again until we reach the conclusion that $i$ and $r$ have the same LMP.
	
	By mathematical induction, for any power system with a tree topology and finitely many power buses, any two buses whose connecting path is uncongested have the same LMP.
\end{proof}

\subsubsection{Additional Discussion of Corollary \ref{cor:BP_relations} (Proof of Corollary \ref{cor:BP_relations})}\label{sec:discussion_corollary1}
Corollary \ref{cor:BP_relations}-(a) and (b) are implied by \eqref{eq:tt_fully_congested} and \eqref{eq:pt_fully_congested}. If type P-T BP occurs when perturbing line $(i_r,i)$, $i \notin \mathcal{I}^\mathrm{act}$, \mh{with power flows from $i_r$ to $i$}, \eqref{eq:pt_fully_congested} implies $\psi_r < 0$, which further implies that \eqref{eq:tt_fully_congested} does not hold for route $r$. Similarly, one can argue (b) is true.
	
	Corollary \ref{cor:BP_relations}-(c) and (d) are implied by \eqref{eq:tp_fully_congested} and \eqref{eq:pp_fully_congested}. If type P-P BP does not occur when perturbing $(i_r,i)$ with $i\notin \mathcal{I}^\mathrm{act}$, \mh{and power flow $i_r \to i$},  \eqref{eq:pp_fully_congested} implies \mh{$\sum_{i' \in \mathcal{I}^\mathrm{act}} \tilde{\omega}_{i'} (\lambda_{i_r} - \lambda_{i'}) > 0$}. Theorem \ref{thm:fully_congested}-(b2) then concludes type T-P BP occurs. Similarly, one can argue (d) is true.
	
	The appearances of indicator functions in \eqref{eq:pt_fully_congested} and \eqref{eq:pp_fully_congested} are critical as if a line $(i,i'), i\notin \mathcal{I}^\mathrm{act}, i' \notin \mathcal{I}^\mathrm{act}$, \mh{with power flow $i \to i'$}, is perturbed, the change in prices $\lambda_i$ and $\lambda_{i'}$ is not perceived by travelers, and thus neither traffic flow nor load relocation is induced.

\subsubsection{Extension of Corollary \ref{cor:BP_relations} under $\hat{\bm \beta}=\kappa \mathbf{1}$ for some $\kappa \geq 0$} \label{sec:extension_corollary1}
Corollary \ref{cor:BP_relations} does not show any relation of \mh{Type T-T (P-T) and T-P (P-P) BPs}. If a further assumption \mh{$\hat{\bm \beta}=\kappa \mathbf{1}$ for some $\kappa \geq 0$}  is imposed, Type T-T (\mh{P-T}) and T-P (\mh{P-P}) BPs are related. We present the relations in the following corollary.	

	\begin{corollary}[\mh{Relations of BPs with Homogeneous $\hat{\bm \beta}$}]\label{cor:further_BP_relations}
		Under \mh{\textbf{A}1 and \textbf{A}2}, if we in addition assume $\hat{\bm \beta} = \kappa \mathbf{1}$ for some $\kappa \geq 0$, then, the following statements hold\mh{:}
		\begin{enumerate}
			\item[(a)] The occurrence of type T-T BP when perturbing a route $r \in \mathcal{R}^\mathrm{act}$ implies the non-occurrence of type T-P BP when perturbing route $r$;
			\item[(b)] Let bus $i \notin \mathcal{I}^\mathrm{act}$ be arbitrary. The following statements hold:
			\begin{enumerate}
				\item[(b1)] The non-occurrence of type P-T BP when perturbing a line $(i_r,i)$ (if exists) with power flow $i_r \to i$ implies the occurrence of type P-P BP when perturbing $(i_r,i)$;
				\item[(b2)] The occurrence of type P-T BP when perturbing a line $(i,i_r)$  (if exists) with power flow $i \to i_r$ implies the non-occurrence of type P-P BP when perturbing $(i,i_r)$;
				\item[(b3)] The occurrence of type P-T BP when perturbing a line $(i_r,i_{r'})$ (if exists) with power flow $i_r \to i_{r'}$ implies the non-occurrence of type P-P BP when perturbing $(i_r,i_{r'})$.
			\end{enumerate}
		\end{enumerate}
	\end{corollary}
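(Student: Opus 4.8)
The plan is to reduce all four implications to elementary sign comparisons by deriving, under the homogeneity assumption $\hat{\bm\beta}=\kappa\mathbf 1$, two clean identities that re-express the flow-relocation term $\psi_r$ and the load-relocation term $\varsigma_{i_r}$ purely in terms of the equilibrium LMPs. Once these identities are in place, each claim in Corollary~\ref{cor:further_BP_relations} becomes a one-line argument applied to the characterizations \eqref{eq:tt_fully_congested}--\eqref{eq:pp_fully_congested} of Theorem~\ref{thm:fully_congested}.

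First I would simplify $\psi_r$. Since $\sum_{r'}\tilde\omega_{r'}=1$ and $\hat\beta_{r'}=\kappa$ for every active route, the $\kappa$ terms cancel in \eqref{eq:tt_fully_congested}, giving $\psi_r=2\omega_r(\hat\alpha_r x_r^\star-\sum_{r'}\tilde\omega_{r'}\hat\alpha_{r'}x_{r'}^\star)$. The uniform-cost characterization (Lemma~\ref{lemma:equal_cost}) applied to non-overlapping active routes reads $\hat\alpha_r x_r^\star+\kappa+\rho\lambda_{i_r}^\star=C$, so $\hat\alpha_r x_r^\star=C-\kappa-\rho\lambda_{i_r}^\star$; substituting and again using $\sum_{r'}\tilde\omega_{r'}=1$ collapses the constant and yields
$$\psi_r=-2\rho\,\omega_r\sum_{r'}\tilde\omega_{r'}\bigl(\lambda_{i_r}^\star-\lambda_{i_{r'}}^\star\bigr).$$
Because $\rho>0$ and $\omega_r>0$, this shows that $\psi_r$ and the quantity in \eqref{eq:tp_fully_congested} have opposite signs: $\psi_r>0$ exactly when type T-P BP does not occur at route $r$, and $\psi_r<0$ exactly when it does. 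Substituting the same expression into the definition of $\varsigma_{i_r}$ then gives the second identity $\varsigma_{i_r}=\lambda_{i_r}^\star+\tfrac{\rho Q_{i_r}}{2}\psi_r$.

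With these identities the claims are immediate. For (a), type T-T BP means $x_r^\star<\psi_r$; since $r$ is active, $x_r^\star>0$, so $\psi_r>0$, and the first identity forces the T-P quantity to be negative, i.e.\ type T-P BP does not occur. For (b1), non-occurrence of P-T on the line $(i_r,i)$ with $i\notin\mathcal I^{\mathrm{act}}$ reduces, via \eqref{eq:pt_fully_congested} with the $i$-term annihilated by its indicator, to $\psi_r\ge0$, whence $\varsigma_{i_r}\ge\lambda_{i_r}^\star>0$ and \eqref{eq:pp_fully_congested} yields type P-P BP. For (b2), occurrence of P-T on $(i,i_r)$ reduces to $\psi_r>0$, so $\varsigma_{i_r}=\lambda_{i_r}^\star+\tfrac{\rho Q_{i_r}}2\psi_r>0$ and P-P does not occur. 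For (b3), both endpoints are active, so I would compute $\varsigma_{i_r}-\varsigma_{i_{r'}}=(\lambda_{i_r}^\star-\lambda_{i_{r'}}^\star)+\tfrac{\rho}{2}(Q_{i_r}\psi_r-Q_{i_{r'}}\psi_{r'})$; occurrence of P-T makes the second term negative by \eqref{eq:pt_fully_congested}, while the congested line carrying flow $i_r\to i_{r'}$ forces $\lambda_{i_r}^\star\le\lambda_{i_{r'}}^\star$, making the first term nonpositive, so $\varsigma_{i_r}<\varsigma_{i_{r'}}$ and P-P does not occur.

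The main obstacle is the two auxiliary facts about LMPs that feed these sign arguments: the strict positivity $\lambda_{i_r}^\star>0$ at active-route buses (used in (b1) and (b2)) and the ordering $\lambda_{i_r}^\star\le\lambda_{i_{r'}}^\star$ across a congested line in the direction of flow (used in (b3)). Both should follow from the KKT system of the economic dispatch \eqref{eq:economic_dispatch}: stationarity gives $\bm\lambda^\star=\mathbf Q\mathbf g^\star+\bm\mu$, so nonnegative generation and costs yield $\bm\lambda^\star\ge\mathbf 0$, with strictness at buses serving the positive charging load $\rho x_r^\star$; and complementary slackness on the congested line expresses $\lambda_{i_{r'}}^\star-\lambda_{i_r}^\star$ as a nonnegative multiple of the line's dual, matching the sign pattern already exhibited in \eqref{eq:2bus:con}. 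Pinning down these LMP sign facts in the fully congested radial setting—rather than the identities themselves, which are routine algebra—is where the real work lies.
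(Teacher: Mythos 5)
Your proposal is correct and follows essentially the same route as the paper's own inline justification: under $\hat{\bm\beta}=\kappa\mathbf{1}$, collapse $\psi_r$ to an LMP differential, obtain the identity $\varsigma_{i_r}=\lambda_{i_r}^\star+\tfrac{\rho Q_{i_r}}{2}\psi_r$, and read off each claim by sign comparison against the conditions of Theorem~\ref{thm:fully_congested}. You are in fact somewhat more careful than the paper in two places: for (b3) you supply the needed ordering $\lambda_{i_r}^\star\le\lambda_{i_{r'}}^\star$ across a congested line in the direction of flow (the paper only says ``the same argument applies''), and you correctly flag that (b1) additionally requires the strict positivity $\lambda_{i_r}^\star>0$ --- a point the paper glosses over, since Lemma~\ref{lemma:properties_subnetworks_route_bundles}-(a) only establishes $\lambda^\star\ge 0$.
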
 
	
If $\hat{\bm \beta} = \mh{\kappa}\mathbf{1}$ for some $\kappa \geq 0$, $\psi_r = 2\omega_r \sum_{r'=1}^R \tilde{\omega}_{r'} (\lambda_{i_{r'}} - \lambda_{i_r})$. Type \mh{T-T} BP occurs \mh{when perturbing route $r$} if and only if $\psi_\mh{r} > x_\mh{r}^\star > 0$, \mh{which is equivalent to $\sum_{r'=1}^R \tilde{\omega}_{r'} (\lambda_{i_{r'}} - \lambda_{i_r}) > 0$ and further equivalent to the non-occurrence of type T-P BP when perturbing route $r$.} Therefore, Corollary \ref{cor:further_BP_relations}-(a) holds. 

The non-occurrence of type P-T BP when perturbing a line $(i_r,i)$ \mh{with power flow $i_r \to i$} implies $\psi_{r_i} < 0$, which is equivalent to $\varsigma_{i} > 0$. Therefore, type P-P BP occurs when perturbing line $(i_r,i)$ and Corollary \ref{cor:further_BP_relations}-(b1) holds. 

The occurrence of type P-T BP when perturbing a line $(i,i_r)$ with power flow $i \to i_r$ implies $\psi_{r_i} > 0$, which is equivalent to $\varsigma_{i} > 0$. Therefore, type P-P does not occur when perturbing line $(i,i_r)$ and Corollary \ref{cor:further_BP_relations}-(b2) holds. The same argument applies if line $(i_r,i_{r'})$ with power flow $i_r \to i_{r'}$ is perturbed.

\subsubsection{Radial Power Network with Arbitrary Congestion Pattern}\label{sec:radial_arbitrary_congestion}

\mh{
In Section~\ref{subsec:uncongested} and Section~\ref{subsec:completely_congested}, BP characterizations are provided for two extreme cases. In this section, we aim to relax previous assumptions and generalize BP characterizations. For a coupled system with a radial power network, if we group buses in the power network connected by uncongested lines, we form a collection of \textit{subnetworks} $\{(\mathcal{V}_\mathrm{P}^k,\mathcal{E}_\mathrm{P}^k)\}_{k=1}^K$, where $K$ denotes the number of bus groups, $\mathcal{V}_\mathrm{P}^k$ is the $k$-th group, and $\mathcal{E}_\mathrm{P}^k$ is the set of lines connecting buses in $\mathcal{V}_\mathrm{P}^k$. If we view each bus group as a single bus, then the resulting reduced power network is radial and fully congested, and \textbf{A}1 automatically holds. Moreover, \mh{each subnetwork is associated with a set (bundle) of routes with chargers connecting to buses in this subnetwork.}
}

\begin{definition}[Subnetwork]\label{def:subnetwork}
	A network $(\mathcal{V}_\mathrm{P}^k,\mathcal{E}_\mathrm{P}^k)$ is a \textit{subnetwork} if for any $i_1,i_2 \in \mathcal{V}_\mathrm{P}^k$, buses $i_1$ and $i_2$ are connected by a (unique) uncongested path of power lines in $\mathcal{E}_\mathrm{P}^k$.
\end{definition} 

\begin{definition}[Route Bundle]\label{def:route_bundle}
	The collection of routes $\mathcal{P}^k\triangleq\left\{r\in [\nr]:x_r^\star>0;\exists i \in \mathcal{V}_\mathrm{P}^k, (\Acr)^\top_r \Acb\mathbf{e}_i = 1\right\},$ is said to be the \textit{route bundle} associated with subnetwork $(\mathcal{V}_\mathrm{P}^k,\mathcal{E}_\mathrm{P}^k)$, where $\mathbf{e}_i$ is the $i$-th canonical basis vector. Denote by $\mh{\hat{x}_k^\star}\triangleq\sum_{r \in \mathcal{P}^k} x_r^\star$ the \textit{aggregate traffic flow} of routes in $\mathcal{P}^k$.
\end{definition}

From the definitions, any buses in the same subnetwork share the same price, and any routes in the same route bundle have the same travel cost (see Lemma \ref{lemma:properties_subnetworks_route_bundles} in Section~\ref{sec:supp_reduced_transportation_network}). We without loss of generality assume $\mathcal{P}^k \neq \emptyset, \forall k \in [K]$. Otherwise $\mh{\hat{x}_k^\star} = 0$ and it only introduces minor book-keeping burdens. We assume  columns of $\Alr$ and $\Acr$ are reduced to contain only those corresponding to routes in $\{\mathcal{P}^k\}_{k=1}^K$, and are permuted according to the order of $\{\mathcal{P}^k\}_{k=1}^K$. Similarly, we assume $\mathbf{Q}$ and $\bm \mu$ are permuted according to the order of $\{(\mathcal{V}_\mathrm{P}^k,\mathcal{E}_\mathrm{P}^k)\}_{k=1}^K$. We denote by $\mathbf{g}_k$ the generation of buses in $\mathcal{V}_\mathrm{P}^k$, and $\hat{g}_k\triangleq\mathbf{1}^\top \mathbf{g}_k$ the aggregate generation. Furthermore, let $\mathbf{Q}_k$ and $\bm \mu_k$ be the blocks of $\mathbf{Q}$ and $\bm \mu$ corresponding to $\mathcal{V}_\mathrm{P}^k$, respectively. \mh{Since any line connecting different subnetworks is congested, we introduce a selection matrix $\widehat{\mathbf{S}} \in \{-1,0,1\}^{K \times \mp}$ to determine the net power flow out of subnetworks by the relation $\hat{\mathbf{f}} = \widehat{\mathbf{S}} \bar{\mathbf{f}}$, where the $k$-th entry $\hat{f}_k$ of $\hat{\mathbf{f}}$ is the net power flow out of the $k$-th subnetwork.} \mh{The next lemma justifies why each route bundle/subnetwork can be viewed as if it was a single route/bus, respectively.} 

\begin{lemma}[\mh{Aggregated Costs}]\label{lemma:common_travel_cost_affine} 
\mh{Let $\mathbf{x}^\star$ and $\mathbf{g}^\star$ be the equilibrium traffic flow and generation profile of the system of interest at GUE. Then, the following statements hold for all $k \in [K]$:}
\begin{enumerate}
	\item The common travel cost $\hat{c}_k$ incurred by traveling through any route in $\mathcal{P}^k$ is \mh{related to} $\hat{\mathbf{x}}^\star$. Specifically, there exist $\hat{\bm \alpha}_k \in \mathbb{R}^M$ and $\hat{\beta}_k \in \mathbb{R}$ such that 
    \begin{align}\label{eq:expression_c_m}
        \hat{c}_k = \hat{\bm \alpha}_k^\top \hat{\mathbf{x}}^\star + \hat{\beta}_k;
    \end{align}
	\item The total generation cost of buses in $\mathcal{V}_\mathrm{P}^k$ is $\frac{1}{2}\widehat{Q}_k (\hat{g}_k^\star)^2 + \hat{\mu}_k \hat{g}_k^\star$, \mh{up to a constant}, where $\widehat{Q}_k \triangleq (\mathbf{1}^\top \mathbf{Q}_k^{-1} \mathbf{1})^{-1}$, $\hat{\mu}_k \triangleq \mathbf{1}^\top \mathbf{Q}_k^{-1}\bm \mu_k \widehat{Q}_k$, \mh{and $\hat{g}_k^\star\triangleq\mathbf{1}^\top \mathbf{g}_k^\star$.}
\end{enumerate}
\end{lemma}


With Lemma \ref{lemma:common_travel_cost_affine}, the coupled system can be \mh{aggregated to become} a smaller system where subnetworks become buses with cost coefficients $\{(\widehat{Q}_k, \hat{\mu}_k)\}_{k=1}^K$, and route bundles become routes with travel cost coefficients $\{(\hat{\bm \alpha}_k, \hat{\beta}_k)\}_{k=1}^K$, which we \mh{call} \textit{aggregated system}. \mh{Given a GUE $(\mathbf{x}^\star, \bm \lambda^\star)$, we call $(\hat{\mathbf{x}}^\star, \hat{\bm \lambda}^\star)$ the \textit{aggregate GUE} of the aggregated system.} 

Now, perturbing $\alpha_{\ell_\mathrm{T}}$ affects $\Phi_s,s\in\{\mathrm{T},\mathrm{P}\}$ through $\widehat{\bm \alpha}\triangleq[\hat{\bm \alpha}_1,...,\hat{\bm \alpha}_K]^\top \in \mathbb{R}^{K \times K}$ and $\hat{\bm \beta} \triangleq [\hat{\beta}_1,...,\hat{\beta}_K]^\top \in \mathbb{R}^K$. Then, by the chain rule, 
\begin{align}\label{eq:chain_rule}
    \frac{\partial \Phi_s}{\partial \alpha_{\ell_\mathrm{T}}} &=\sum_{k,k' \in [K]} \frac{\partial \Phi_s}{\partial \hat{\alpha}_{k,k'}}\frac{\partial \hat{\alpha}_{k,k'}}{\partial \alpha_{\ell_\mathrm{T}}} + \sum_{k \in [K]} \frac{\partial \Phi_s}{\partial \hat{\beta}_k} \frac{\partial \hat{\beta}_k}{\partial \alpha_{\ell_\mathrm{T}}}.
\end{align}
The terms $\partial \hat{\alpha}_{k,k'}/\partial \alpha_{\ell_\mathrm{T}}$ and $\partial \hat{\beta}_k/\partial \alpha_{\ell_\mathrm{T}}$ measure the sensitivities of parameters $\hat{\alpha}_{k,k'}$ and $\hat{\beta}_k$ of the \mh{aggregated} system with respect to $\alpha_{\ell_\mathrm{T}}$ of the original network. They are unavailable if only the \mh{aggregated system} is given. However, the other two terms $\partial \Phi_s/\partial \hat{\alpha}_{k,k'}$ and $\partial \Phi_s/\partial \hat{\beta}_k$ can be computed based solely on the \mh{aggregated system}, and are meaningful as they resemble $\partial \Phi_s/\partial \alpha_{\ell_\mathrm{T}}$, whose sign ties with TBP. We thus introduce the concept of \textit{\mh{aggregated} TBP}.
\begin{definition}[\mh{Aggregated} TBP]\label{def:reduced_transportation_BP}
    Given an \mh{aggregated} system with parameters $\{\hat{\bm \alpha}_k, \hat{\beta}_k\}_{k=1}^K$,  we say \mh{aggregated} TBP (\mh{A}TBP) occurs (with respect to (wrt) $\mathcal{P}^k$) if either $\partial \Phi_s/\partial \hat{\alpha}_{k,k'} < 0$ or $\partial \Phi_s/\partial \hat{\beta}_k < 0$ holds for some $k,k' \in [K]$. Similarly, \mh{we} use the notion of type T-T (\mh{T-P}) \mh{A}TBP if $s = \mathrm{T}$ ($s = \mathrm{P}$).
\end{definition}

\mh{Inspired by \textbf{A}\mh{2}, we start with a simple case allowing link sharing for routes in the same route bundle, but ban route sharing for routes in different route bundles.}
\begin{enumerate}
	\item[\mh{\textbf{A}2'}] No two route bundles share links.
\end{enumerate}

\mh{\textbf{A}2'} simplifies analysis since perturbing link $\ell_\mathrm{T}$ directly affects at most one route bundle, while other route bundles are only indirectly affected through traffic flow relocations. Moreover, \mh{under \textbf{A}2',} $\widehat{\bm \alpha}$ and $\hat{\bm \beta}$ possess good structures. In particular, $\widehat{\bm \alpha}$ is diagonal, and if $\ell_\mathrm{T}$ is contained (only) in $\mathcal{P}^k$, $\partial \hat{\alpha}_{k_1,k_2}/\partial \alpha_{\ell_\mathrm{T}} = 0$ except that $k_1 = k_2 = k$, and $\partial \hat{\beta}_{k'}/\partial \alpha_{\ell_\mathrm{T}} = 0$ except that $k' = k$. The results are formalized in Lemma \ref{lemma:properties_of_independent_networks} in Section~\ref{sec:supp_reduced_transportation_network}.

The following theorem establishes the necessary and sufficient conditions of \mh{A}TBP under \mh{\textbf{A}2'}.

\begin{theorem}[\mh{Necessary and Sufficient Conditions of ATBP}]\label{thm:theorem_independent_network}
    Suppose the system of \mh{satisfies \textbf{A}2'}, and its \mh{aggregated} system has parameters $\{\hat{\bm \alpha}_k,\hat{\beta}_k\}_{k=1}^K$ and aggregate GUE $(\hat{\mathbf{x}}, \hat{\bm \lambda})$. Then, the following statements hold:
    \begin{enumerate}
		\item[(a)] If $K=1$, \mh{A}TBP never occurs;
        \item[(b)] If $K \geq 2$, 
        \begin{enumerate}
            \item[(b1)] \mh{type T-T \mh{A}TBP occurs wrt $\mathcal{P}^k$ if and only if}
            \begin{align}
            	\hat{x}_k < \hat{\psi}_k\triangleq\omega_k \sum_{j=1}^K \tilde{\omega}_{j} (2\hat{\alpha}_{k,k}\hat{x}_k + \hat{\beta}_k -\hat{\alpha}_{j,j}\hat{x}_{j} + \hat{\beta}_{j}), \nonumber
            \end{align}
            \mh{where we abuse the notation $\omega_k\triangleq1/(\hat{\alpha}_{k,k} + \widehat{Q}_k)$, and $\tilde{\omega}_k$ is similarly defined as in Theorem \ref{thm:fully_congested}}-(b1);
            \item[(b2)] \mh{type T-P ATBP occurs wrt $\mathcal{P}^k$ if and only if}
            \begin{align}
            	\sum_{k'=1}^K \tilde{\omega}_k \left(\hat{\lambda}_k - \hat{\lambda}_{k'}\right) > 0;
            \end{align}
        \end{enumerate}
        \item[(c)] If \mh{type T-T (T-P)} \mh{A}TBP does not occur, and it holds for all $k$ and for any $\ell_\mathrm{T}$ used by $\mathcal{P}^k$, $\partial \hat{\alpha}_{k,k}/\partial \alpha_{\ell_\mathrm{T}} \geq 0$ and $\partial \hat{\beta}_k/\partial \alpha_{\ell_\mathrm{T}}\geq 0$, then \mh{type T-T (T-P)} BP \mh{does not} occur.
    \end{enumerate}
\end{theorem}

Theorem \ref{thm:theorem_independent_network}-(b1) and (b2) have almost identical forms to Theorem \ref{thm:fully_congested}-(b1) and (b2), except that $\hat{\alpha}_r$ and $\lambda_{i_r}$ are replaced by $\hat{\alpha}_{k,k}$ and $\hat{\lambda}_k$, respectively. Theorem \ref{thm:theorem_independent_network}-(c) is a direct consequence of \eqref{eq:chain_rule}. Moreover, we have the following theorem. 

\begin{theorem}[Necessary and Sufficient Conditions of \mh{BP, Radial Power Network Case}]\label{thm:NS_conditions_independent_network}
	\mh{Under \textbf{A}2'}, the following hold:
	\begin{enumerate}
		\item[(a)] If type T-T (T-P) BP occurs, there exists $k \in [K]$ such that exactly one of the following statements is true.
    \begin{enumerate}
        \item[(a1)] Type T-T (T-P) \mh{A}TBP occurs wrt $\mathcal{P}^k$;
        \item[(a2)] Classical transportation BP occurs in $\mathcal{P}^k$ when viewed as a transportation network with fixed net flow $\hat{x}_k$.
    \end{enumerate}
    The reverse direction holds, if the false one of (a1) and (a2) corresponds to a strict non-occurrence, i.e., the corresponding derivative is strictly positive;
    \item[(b)] Type P-T BP occurs if and only if there exists $\ell_\mathrm{P}=(i,i') \in [K] \times [K]$ with \mh{power flow $i \to i'$} such that
  	\begin{align}\label{eq:hat_PT}
  		\widehat{Q}_{i} \hat{\psi}_{i} - \widehat{Q}_{i} \hat{\psi}_{i'} < 0;
  	\end{align}
  	\item[(c)] Type P-P BP occurs if there exists $\ell_\mathrm{P}=(i,i') \in [K] \times [K]$ with \mh{power flow $i \to i'$} such that
  	\begin{align}\label{eq:hat_PP}
  		\hat{\varsigma}_{i} - \hat{\varsigma}_{i'} > 0,
  	\end{align}
  	where $\hat{\varsigma}_i\triangleq\hat{\lambda}_i^\star - \rho^2 \widehat{Q}_i \omega_i \sum_{k=1}^K \tilde{\omega}_k (\hat{\lambda}_i^\star - \hat{\lambda}_k^\star)$.
	\end{enumerate}
\end{theorem}

Theorem \ref{thm:NS_conditions_independent_network}-(a) has the form of the \textit{theorem of alternative}. Its reverse direction provides an interesting viewpoint to understand TBP. Intuitively, exactly one of (a1) and (a2) holds means either the \mh{aggregated} system or the transportation network formed by $\mathcal{P}^k$ is \textit{bad}. The effect of perturbing link $\ell_\mathrm{T}$ first affects the route bundle containing link $\ell_\mathrm{T}$, and then gets propagated to other route bundles through traffic flow relocation. If exactly one of the effects is BP-promoting, and the other effect is not inefficacious (i.e., strict non-occurrence), then TBP occurs. However, if both effects are BP-promoting, the effects get cancelled off, and TBP does not occur.

Theorem \ref{thm:NS_conditions_independent_network}-(b) and (c) generalize Theorem \ref{thm:fully_congested}-(b3) and (b4). All previous interpretations are valid, with route and bus understood as route bundle and subnetwork, respectively.

We provide examples demonstrating the usage of Theorem \ref{thm:NS_conditions_independent_network}-(a), the discussion on the extension of Corollary \ref{cor:BP_relations}, and generalizations without \mh{\textbf{A}2'} in Section~\ref{sec:supp_NS_conditions_proofs}.

\subsubsection{\mh{Properties of Subnetwork and Route Bundle}}\label{sec:supp_reduced_transportation_network}
\mh{In this section, we provide proofs of results we obtained for coupled systems with general radial networks. Before we proceed, we first prove some properties:}
\begin{lemma}[Properties of Subnetworks and Route Bundles]\label{lemma:properties_subnetworks_route_bundles}
	Let $(\mathbf{x}^\star, \bm \lambda^\star)$ be a GUE, $\{(\mathcal{V}_\mathrm{P}^k,\mathcal{E}_\mathrm{P}^k)\}_{k=1}^K$ be subnetworks of the power network $(\mathcal{V}_\mathrm{P},\mathcal{E}_\mathrm{P})$, and $\{\mathcal{P}^k\}_{k=1}^K$ the corresponding route bundles. Then, the following statements hold\mh{:}
	\begin{enumerate}
		\item[(a)] \mh{F}or any $k$ and any $i_1, i_2 \in \mathcal{V}_\mathrm{P}^k$, $\lambda_{i_1}^\star = \lambda_{i_2}^\star \triangleq \hat \lambda_k \geq 0$;
		\item[(b)] \mh{A}ny routes in $\mathcal{P}^k$ have the same travel cost $\hat{c}_k \triangleq (\Alr)_r^\top \left[\mathrm{diag}(\bm \alpha)\Alr \mathbf{x}^\star + \bm \beta\right] > 0, \forall r \in \mathcal{P}^k, \forall k \in [K]$.
	\end{enumerate}
\end{lemma}
The proof of Lemma \ref{lemma:properties_subnetworks_route_bundles} is straightforward. Lemma \ref{lemma:properties_subnetworks_route_bundles}-(a) directly follows from Lemma \ref{lemma:relation_LMP_congestion}, and Lemma \ref{lemma:properties_subnetworks_route_bundles}-(b) is a direct consequence of Lemma \ref{lemma:properties_subnetworks_route_bundles}-(a) since any routes in the same route bundle have the same charging cost. 

\begin{lemma}[\mh{Properties of $\widehat{\bm \alpha}$ and $\hat{\bm \beta}$}]\label{lemma:properties_of_independent_networks}
    \mh{Under \textbf{A}2'}, the following statements hold:
    \begin{enumerate}
        \item[(a)] $\widehat{\bm \alpha}$ is diagonal, i.e. $\hat{\alpha}_{k,k'} = 0$ for all $k \neq k'$;
        \item[(b)] Provided that $\ell_\mathrm{T}$ is contained (only) in $\mathcal{P}^k$, $\partial \hat{\alpha}_{k_1,k_2}/\partial \alpha_{\ell_\mathrm{T}} = 0$ except that $k_1 = k_2 = k$;
        \item[(c)] Provided that $\ell_\mathrm{T}$ is contained (only) in $\mathcal{P}^k$, $\partial \hat{\beta}_{k'}/\partial \alpha_{\ell_\mathrm{T}} = 0$ except that $k' = k$.
    \end{enumerate}
\end{lemma}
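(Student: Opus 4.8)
The plan is to exploit the block structure that assumption \textbf{A}2' forces on the link--route incidence matrix $\Alr$, and then read each claim off the resulting expression for the aggregated cost coefficients. First I would set up the structural backbone: restrict $\Alr$ to its active-route columns and discard links traversed by no active route. By \textbf{A}2' every remaining link lies on routes of a single bundle, so after permuting columns to group routes by bundle and rows to group links by the owning bundle, $\Alr$ becomes block diagonal, $\Alr=\mathrm{blkdiag}(\mathbf B_1,\dots,\mathbf B_K)$, where $\mathbf B_k$ records the bundle-$k$ routes over the links they use. Writing $\mathbf x_k:=(x_{r})_{r\in\mathcal P^k}$ and letting $(\bm\alpha_k,\bm\beta_k)$ be the link parameters of the bundle-$k$ links, the common travel cost of any route $r\in\mathcal P^k$ (Lemma \ref{lemma:properties_subnetworks_route_bundles}-(b)) reads $\hat c_k=(\mathbf B_k)_r^\top[\mathrm{diag}(\bm\alpha_k)\mathbf B_k\mathbf x_k+\bm\beta_k]$. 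Crucially this depends only on the within-bundle flows $\mathbf x_k$ and on $(\bm\alpha_k,\bm\beta_k)$, since the block structure makes the link flows on bundle-$k$ links equal to $\mathbf B_k\mathbf x_k$ with no contribution from other bundles.

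For part (a) I would combine this with Lemma \ref{lemma:common_travel_cost_affine}-(a), which asserts the affine representation $\hat c_k=\hat{\bm\alpha}_k^\top\hat{\mathbf x}^\star+\hat\beta_k$. On the interior of a critical region the set of active routes is fixed, so the within-bundle equilibrium pins down $\mathbf x_k$---and hence $\hat c_k$---as a function of the aggregate $\hat x_k$ alone. Differentiating the affine representation then gives $\hat\alpha_{k,k'}=\partial\hat c_k/\partial\hat x_{k'}=0$ for every $k'\neq k$, i.e. $\widehat{\bm\alpha}$ is diagonal.

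For parts (b) and (c), I would note that $\ell_\mathrm{T}$ lies only in $\mathcal P^k$, so $\alpha_{\ell_\mathrm{T}}$ enters only the block $\bm\alpha_k$ and appears in none of the parameters $(\bm\alpha_{k_1},\bm\beta_{k_1})$ with $k_1\neq k$. Because the first step exhibits $\hat\alpha_{k_1,k_1}$ and $\hat\beta_{k_1}$ as functions of the bundle-$k_1$ parameters only, both are independent of $\alpha_{\ell_\mathrm{T}}$ whenever $k_1\neq k$, yielding $\partial\hat\alpha_{k_1,k_1}/\partial\alpha_{\ell_\mathrm{T}}=0$ and $\partial\hat\beta_{k_1}/\partial\alpha_{\ell_\mathrm{T}}=0$; the off-diagonal entries $\hat\alpha_{k_1,k_2}$ with $k_1\neq k_2$ vanish identically by (a) and so do their derivatives. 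Together these cover every case except $k_1=k_2=k$, establishing (b) and (c).

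The hard part will be justifying cleanly that $\hat c_k$ is a function of the aggregate $\hat x_k$ alone rather than of the particular way $\hat x_k$ splits across the bundle-$k$ routes. I would address this through the restriction to the interior of a critical region together with Lemma \ref{lemma:common_travel_cost_affine}-(a): with the active-route pattern frozen, the within-bundle equilibrium link flows---and therefore the common cost $\hat c_k$---are uniquely determined by $\hat x_k$, even though the individual route flows need not be unique. This well-definedness is all that the partial-derivative statements require, so once it is in place the three claims follow directly from the block-diagonal form.
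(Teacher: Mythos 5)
Your structural setup is sound and is in fact the same mechanism the paper exploits: under \textbf{A}2' the (reduced) $\Alr$ is block diagonal, so the equal-cost-plus-aggregation linear system that ties $\mathbf{x}$ to $\hat{\mathbf{x}}$ decouples bundle by bundle, and parts (b) and (c) then follow exactly as you say once (a) is in hand. The problem is the step by which you extract (a) from the decoupling.

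You identify $\hat{\alpha}_{k,k'}$ with $\partial \hat{c}_k/\partial \hat{x}_{k'}$ by ``differentiating the affine representation.'' But $\hat{\bm\alpha}_k$ is a \emph{specific} vector produced by the construction in Lemma \ref{lemma:common_travel_cost_affine} (namely $\hat{\bm\alpha}_k \propto \widehat{\mathbf{C}}^\top(\Alr)^\top\mathrm{diag}(\bm\alpha)\Alr_k\mathbf{1}$), and the identity $\hat{c}_k=\hat{\bm\alpha}_k^\top\hat{\mathbf{x}}^\star+\hat{\beta}_k$ you invoke is stated at aggregate equilibrium flows, all of which satisfy $\mathbf{1}^\top\hat{\mathbf{x}}^\star=1$. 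On that hyperplane an affine representation is \emph{not} unique: $\hat{\bm\alpha}_k^\top\hat{\mathbf{x}}+\hat{\beta}_k=(\hat{\bm\alpha}_k+c\mathbf{1})^\top\hat{\mathbf{x}}+(\hat{\beta}_k-c)$ for every scalar $c$. Hence the fact that $\hat{c}_k$, restricted to achievable aggregates, depends only on $\hat{x}_k$ does not force the paper's off-diagonal coefficients to vanish --- they could a priori be a common nonzero constant absorbed into the intercept. (The well-definedness worry you address at the end --- non-uniqueness of the within-bundle split --- is a different issue and is not the one that bites here.) The repair is to note that, by Lemma \ref{lemma:relation}-(b), $\hat{\mathbf{x}}\mapsto\mathbf{x}=\widehat{\mathbf{C}}\hat{\mathbf{x}}+\hat{\mathbf{q}}$ is an affine map defined on all of $\mathbb{R}^K$ (it comes from inverting the system matrix $\mathbf{M}$, with no simplex constraint), so the identity $\hat{c}_k(\hat{\mathbf{x}})=\hat{\bm\alpha}_k^\top\hat{\mathbf{x}}+\hat{\beta}_k$ holds for all $\hat{\mathbf{x}}\in\mathbb{R}^K$ and its coefficients are unique; only then is your differentiation step legitimate. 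Once you do this, your argument essentially collapses into the paper's own proof: under \textbf{A}2' the matrix $\mathbf{M}$, hence $\widehat{\mathbf{C}}$, is block diagonal (Remark \ref{remark:structure_of_hat_C}), the vector $(\Alr)^\top\mathrm{diag}(\bm\alpha)\Alr_k\mathbf{1}$ is supported on the bundle-$k$ coordinates, so $\hat{\bm\alpha}_k$ is supported on entry $k$, and the derivative statements (b), (c) are read off from the fact that only bundle-$k$ link parameters enter the $k$-th blocks.
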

\begin{proof}
    (a) We present the proof by analyzing the structure of $\hat{\alpha}_k$ for each $k$. Recall that $\hat{\alpha}_k \propto \widehat{\mathbf{C}}^\top (\Alr)^\top \mathrm{diag}(\bm \alpha) \Alr_k \mathbf{1}$. The $\ell$-th entry of the vector $\mathrm{diag}(\bm \alpha) \Alr_k \mathbf{1}$ is $n_\ell^k \alpha_\ell$, where $n_\ell^k$ is defined to be the number of times link $\ell$ is used by routes in $\mathcal{P}^k$. Note that the entry is $0$ if link $\ell$ does not appear in $\mathcal{P}^k$. Then, the $r$-th entry of $(\Alr)^\top \mathrm{diag}(\bm \alpha) \Alr_k \mathbf{1}$ is $\sum_{\ell:\Alr_{\ell,r}=1} n_\ell^k \alpha_\ell$. Note that only the $|\mathcal{P}^{k-1}|+1$-th to $|\mathcal{P}^k|$-th entries are nonzero, since other entries correspond to routes in other route bundles, and by \textbf{A}2', those routes do not contain any links used by $\mathcal{P}^k$. By Remark \ref{remark:structure_of_hat_C}, $\widehat{\mathbf{C}} \in \mathbb{R}^{R \times K}$ is block diagonal with $M$ blocks, and its $k$-th block is a $|\mathcal{P}^k|$-d vector. Therefore, the multiplication of $\widehat{\mathbf{C}}^\top$ and $(\Alr)^\top \mathrm{diag}(\bm \alpha) \Alr_k \mathbf{1}$ is a vector whose $k$-th entry is nonzero, while all other entries are $0$'s. Since this holds for all $k \in [K]$, we conclude that $\hat{\alpha}_{k,s} = 0$ for all $k \neq s$.

    (b) Let $\ell_\mathrm{T}$ be used only by $\mathcal{P}^k$. It can be seen from the analysis in Part (a) of the proof that $\hat{\alpha}_{k,s}$ does not depend on $\alpha_{\ell_\mathrm{T}}$ for all $s \neq k$. Moreover, for any $k' \neq s$, we know $\mathcal{P}^{k'}$ does not contain $\ell_\mathrm{T}$ and thus $\partial{\hat{\alpha}_{k',s}}/\partial \alpha_{\ell_\mathrm{T}} = 0$ for all $s \in [K]$.

    (c) According to the analyis of the structure of $(\Alr)^\top \mathrm{diag}(\bm \alpha) \Alr_k \mathbf{1}$ in Part (a), it can be similarly concluded that only $\hat{\beta}_k$ depends on $\alpha_{\ell_\mathrm{T}}$, provided that $\ell_\mathrm{T}$ is contained only in $\mathcal{P}^k$. Therefore, $\partial \hat{\beta}_{k'}/\partial \alpha_{\ell_\mathrm{T}} = 0$ for all $k' \neq k$.
\end{proof}



The proof of Lemma \ref{lemma:common_travel_cost_affine} is based on the following lemma, which states that there exists and one-to-one correspondence between GUE and its aggregate GUE.

\begin{lemma}[Relation between Flow and Aggregate Flow]\label{lemma:relation}
    Let $\mathbf{x} \in \mathbb{R}^{R}$ be the flow of a GUE, and the corresponding aggregate flow be $\hat{\mathbf{x}} \in \mathbb{R}^K$. Then,
    \begin{enumerate}
        \item[(a)] the flow $\mathbf{x}$ and the aggregate flow $\hat{\mathbf{x}}$ satisfy the following equations
        \begin{subequations}
            \begin{align}
            c^\mathrm{tr}_{i+\sum_{j=1}^k |\mathcal{P}^j|} &= c^\mathrm{tr}_{i+1+\sum_{j=1}^k |\mathcal{P}^j|}, \quad \mh{k = 0,...,K-1}, \quad i = 1,...,|\mathcal{P}^k|-1; \label{eq:equations_for_b_1}\\
            \mathbf{1}_k^\top \mathbf{x} &= \hat x_k, \quad k = 1,...,K,\label{eq:equations_for_b_2}
        \end{align}
        \mh{where $\mathbf{1}_k \in \{0,1\}^R$ is the vector whose entries corresponding to routes in $\mathcal{P}^k$ are $1$'s;}
        \end{subequations}
        \item[(b)] there exist a matrix $\widehat{\mathbf{C}} \in \mathbb{R}^{R \times K}$ and a vector $\hat{\mathbf{q}} \in \mathbb{R}^{R}$ such that
        \begin{align}\label{eq:relation_flow_aggregate_flow}
            \mathbf{x} = \widehat{\mathbf{C}} \hat{\mathbf{x}} + \hat{\mathbf{q}}.
        \end{align}
    \end{enumerate}
\end{lemma}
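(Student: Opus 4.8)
The plan is to read the two families of equations in part (a) as a square linear system in the unknown route flow $\mathbf{x}$, whose right-hand side depends affinely on the aggregate flow $\hat{\mathbf{x}}$, and then to show that the coefficient matrix of this system is invertible, so that its unique solution is an affine function of $\hat{\mathbf{x}}$.

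For part (a), I would observe that \eqref{eq:equations_for_b_1} is an immediate consequence of Lemma \ref{lemma:properties_subnetworks_route_bundles}-(b): every route in a fixed bundle $\mathcal{P}^k$ shares the common travel cost $\hat{c}_k$, so under the bundle-ordered indexing consecutive routes within a bundle have equal travel cost. The equations \eqref{eq:equations_for_b_2} are simply the definition of the aggregate flow, $\hat{x}_k = \sum_{r\in\mathcal{P}^k} x_r$, written as $\mathbf{1}_k^\top\mathbf{x} = \hat{x}_k$.

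For part (b), I would use that $c^\mathrm{tr}_r(\mathbf{x}) = (\Alr)_r^\top[\mathrm{diag}(\bm\alpha)\Alr\mathbf{x}+\bm\beta]$ is affine in $\mathbf{x}$, so each within-bundle difference equation in \eqref{eq:equations_for_b_1} is linear in $\mathbf{x}$ with a constant term coming only from the $\bm\beta$ contribution. Stacking the $R-K$ difference equations together with the $K$ aggregate equations gives a square system $\mathbf{M}\mathbf{x} = \mathbf{b}$ with $\mathbf{M}\in\mathbb{R}^{R\times R}$, where by construction $\mathbf{b}$ is affine in $\hat{\mathbf{x}}$: the aggregate rows contribute the entries $\hat{x}_k$ linearly while the difference rows contribute fixed constants. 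It then suffices to prove $\mathbf{M}$ is invertible, after which $\mathbf{x} = \mathbf{M}^{-1}\mathbf{b}$ is affine in $\hat{\mathbf{x}}$, and reading off the $\hat{\mathbf{x}}$-linear and constant parts yields $\widehat{\mathbf{C}}$ and $\hat{\mathbf{q}}$ in \eqref{eq:relation_flow_aggregate_flow}.

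To establish invertibility I would identify this system with the KKT conditions of the equality-constrained quadratic program $\min_{\mathbf{x}}\ \tfrac{1}{2}\mathbf{x}^\top\mathbf{A}\mathbf{x} + [(\Alr)^\top\bm\beta]^\top\mathbf{x}$ subject to $\mathbf{1}_k^\top\mathbf{x}=\hat{x}_k$ for $k\in[K]$, where $\mathbf{A}:=(\Alr)^\top\mathrm{diag}(\bm\alpha)\Alr$: stationarity reads $\mathbf{A}\mathbf{x}+(\Alr)^\top\bm\beta = \sum_k \nu_k\mathbf{1}_k$, which upon eliminating the multipliers by within-bundle differencing reproduces \eqref{eq:equations_for_b_1}, while primal feasibility reproduces \eqref{eq:equations_for_b_2}. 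Under the standing assumption $\bm\alpha\in\mathbb{R}_{++}^{\mt}$ together with full column rank of $\Alr$ (as invoked for uniqueness of the GUE in Corollary \ref{cor:existence}), we have $\mathbf{A}\succ 0$, and the constraint gradients $\{\mathbf{1}_k\}_{k=1}^K$ are linearly independent since the bundles have disjoint supports; hence the KKT matrix is nonsingular, equivalently $\mathbf{M}$ is invertible. The main obstacle is precisely this nonsingularity step: it is where the positivity of $\bm\alpha$ and full-column-rank of $\Alr$ (hence strict convexity) and the disjointness of the route bundles are all essential, whereas everything else reduces to bookkeeping with the bundle-ordered indexing.
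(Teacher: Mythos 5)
Your proposal is correct, and its skeleton coincides with the paper's proof: part (a) is argued identically (equal within-bundle travel costs from Lemma \ref{lemma:properties_subnetworks_route_bundles}-(b), and \eqref{eq:equations_for_b_2} holding by definition of the aggregate flow), and part (b) likewise stacks the $R-K$ within-bundle cost-difference equations and the $K$ aggregation equations into a square system $\mathbf{M}\mathbf{x}=\mathbf{b}$ whose right-hand side is affine in $\hat{\mathbf{x}}$, reducing everything to invertibility of $\mathbf{M}$.

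Where you genuinely diverge is the justification of that invertibility, which you correctly identify as the crux. The paper argues it indirectly: if $\mathbf{M}$ were singular, a given aggregate flow would be consistent with more than one equilibrium flow, contradicting uniqueness of the GUE (Corollary \ref{cor:existence}); no rank condition is stated. You argue it directly, reading $\mathbf{M}\mathbf{x}=\mathbf{b}$ as the multiplier-eliminated KKT system of the bundle-constrained QP and invoking $\mathbf{A}=(\Alr)^\top\mathrm{diag}(\bm\alpha)\Alr\succ 0$ together with linear independence of the disjointly supported vectors $\mathbf{1}_k$. Two caveats on your route. First, the assertion ``KKT matrix nonsingular, equivalently $\mathbf{M}$ invertible'' is stated without proof; it is true, but needs the kernel correspondence: $\mathbf{M}\mathbf{z}=\mathbf{0}$ forces $\mathbf{A}\mathbf{z}$ to be constant on each bundle, i.e.\ $\mathbf{A}\mathbf{z}=\sum_k \nu_k \mathbf{1}_k$ for some scalars $\nu_k$, so $(\mathbf{z},\bm\nu)$ lies in the KKT kernel, and conversely. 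Second, and more importantly, $\mathbf{A}\succ 0$ requires the active-route-reduced $\Alr$ to have full column rank, which is \emph{not} among the standing assumptions of \S\ref{sec:NS_conditions_sec} (only $\bm\alpha\in\mathbb{R}_{++}^{\mt}$ and positive diagonal $\mathbf{Q}$ are assumed there); the paper invokes that rank condition only inside the proof of Corollary \ref{cor:existence}. So your argument proves the lemma under a strictly stronger hypothesis, whereas the paper's terser uniqueness-based argument runs under exactly the hypotheses in force for the section. What your approach buys in exchange is transparency: it is constructive, parallels the explicit formula for $\widehat{\mathbf{C}}$ in Remark \ref{remark:structure_of_hat_C}, and pinpoints precisely how degeneracy can arise (e.g., two active routes in one bundle with identical link sets make $\mathbf{M}$ singular), a failure mode the paper's contradiction argument hides inside ``uniqueness of GUE.''
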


\begin{proof}
    (a) Since $\mathbf{x}$ is an equilibrium flow, the travel costs of routes in the same route bundle are the same according to Lemma \ref{lemma:properties_subnetworks_route_bundles}-(c), which is equivalent to \eqref{eq:equations_for_b_1}. Since $\hat{\mathbf{x}}$ is defined to be the corresponding aggregate flow, \eqref{eq:equations_for_b_2} automatically holds.

    (b) We can write the system of equations \eqref{eq:equations_for_b_1}-\eqref{eq:equations_for_b_2} in the following matrix form
    \begin{align}
        \mathbf{M}\mathbf{x} = \begin{bmatrix}
        \mathbf{b}\\
        \hat{\mathbf{x}}
        \end{bmatrix} \quad \Rightarrow \quad \mathbf{x} = \mathbf{M}^{-1} \begin{bmatrix}
        \mathbf{b}\\
        \hat{\mathbf{x}}
        \end{bmatrix} = \mh{\begin{bmatrix}
\mathbf{M}_{\leftarrow}^{-1} & \mathbf{M}_{\rightarrow}^{-1}	
\end{bmatrix} \begin{bmatrix}
        \mathbf{b}\\
        \hat{\mathbf{x}}
        \end{bmatrix} 
} = \mathbf{M}^{-1}_\rightarrow \hat{\mathbf{x}} + \mathbf{M}^{-1}_\leftarrow \mathbf{b},
    \end{align}
    where $\mathbf{M}\in \mathbb{R}^{R \times R}$ is the coefficient matrix, $\mathbf{b} \in \mathbb{R}^{R - K}$ depends only on $\bm \beta$, $\mathbf{M}^{-1}_\leftarrow$ is the left $R \times (R-K)$ block, and $\mathbf{M}^{-1}_\rightarrow$ is the right $R \times K$ block of $\mathbf{M}^{-1}$. In general, $\mathbf{M}$, as a function of $\bm \alpha$ is not always invertible over the parameter space. But it is invertible \mh{with the given parameter $(\bm \alpha, \bar{\mathbf{f}})$}, since in this case \mh{if a given $\hat{\mathbf{x}}$ corresponds to two different GUEs, it contradicts with the uniqueness of GUE}. In the lemma statement, we define $\widehat{\mathbf{C}}:=\mathbf{M}^{-1}_\rightarrow$ and $\mh{\hat{\mathbf{q}}}:=\mathbf{M}^{-1}_\leftarrow \mathbf{b}$.
\end{proof}
\begin{remark}\label{remark:structure_of_hat_C}
    In fact, the coefficient matrix $\mathbf{M}$ and $\mathbf{b}$ have the following analytical expression
    \begin{align}
        &\mathbf{M} = \begin{bmatrix}
            \mathbf{E}\widetilde{\mathbf{A}}\\
            \mathbf{1}_1^\top\\
            \vdots\\
            \mathbf{1}_K^\top
        \end{bmatrix}, \quad \mathbf{b} = -\mathbf{E}(\Alr)^\top \bm \beta,
    \end{align}
    where \mh{$\widetilde{\mathbf{A}}:=(\Alr)^\top \mathrm{diag}(\bm \alpha) \Alr$} and the matrix $\mathbf{E}$ is defined to be a block digonal matrix
    \begin{align}
        \mathbf{E} := \begin{bmatrix}
            \mathbf{E}_1 & & \\
            & \ddots & \\
            & & \mathbf{E}_K
        \end{bmatrix}, \quad \mathrm{with} \quad \mathbf{E}_k:=\begin{bmatrix}
            1 & -1 & 0 & \cdots & 0 & 0\\
            0 & 1 & -1 & \cdots & 0 & 0\\
            \vdots & \vdots & \ddots & \ddots & \vdots & \vdots \\
            0 & 0 & \cdots & \cdots & 1 & -1
        \end{bmatrix} \in \mathbb{R}^{(|\mathcal{P}^k|-1) \times |\mathcal{P}^k|}.
    \end{align}
    It can be shown that
    \begin{align}
        \widehat{\mathbf{C}} = \mathbf{M}_\rightarrow^{-1} = \left(\begin{bmatrix}
            \mathbf{1}\mathbf{1}^\top & & \\
            & \ddots & \\
            & & \mathbf{1}\mathbf{1}^\top
        \end{bmatrix} + \widetilde{\mathbf{A}}\begin{bmatrix}
            \mathbf{E}_1^\top \mathbf{E}_1 & & \\
            & \ddots & \\
            & & \mathbf{E}_K^\top \mathbf{E}_K
        \end{bmatrix}\widetilde{\mathbf{A}}\right)^\dagger \begin{bmatrix}
            \mathbf{1}_1 & \mathbf{1}_2 & \cdots & \mathbf{1}_K
        \end{bmatrix}, 
    \end{align}
    where the block matrices in the parenthesis both have $K$ blocks and the $k$-th block has dimension $|\mathcal{P}^k| \times |\mathcal{P}^k|$.

    One important property of $\widehat{\mathbf{C}}$ is when the underlying network satisfies \textbf{A}2', $\widehat{\mathbf{C}}$ becomes a block diagonal matrix with $K$ blockes, and its $k$-th block has dimension $|\mathcal{P}^k|$. The reason behind is in this case $\widetilde{\mathbf{A}}$ is a block diagonal matrix, and the Moore-Penrose inverse of a block diagonal matrix is block diagonal as well.
\end{remark}

\begin{proof}[Proof of Lemma \ref{lemma:common_travel_cost_affine}-(a)]
    Since the travel costs of routes in the same route bundle $\mathcal{P}^k$ are the same, the common travel cost $\hat{c}_k$ can be computed as the average of travel costs of routes in $\mathcal{P}^k$, i.e.,
    \begin{align}
        \hat{c}_k &= \frac{1}{|\mathcal{P}^k|} \mathbf{1}^\top (\Alr_k)^\top \left( \mathrm{diag}(\bm \alpha) \Alr \mathbf{x} + \bm \beta \right)\\
        &=\frac{1}{|\mathcal{P}^k|} \left(\widehat{\mathbf{C}}^\top (\Alr)^\top \mathrm{diag}(\bm \alpha) \Alr_k \mathbf{1} \right)^\top \hat{\mathbf{x}} + \frac{(\Alr_k \mathbf{1})^\top}{|\mathcal{P}^k|} \left( \mathrm{diag}(\bm \alpha) \Alr \mh{\hat{\mathbf{q}}} + \bm \beta \right) := \hat{\bm \alpha}_k^\top \hat{\mathbf{x}} + \hat{\beta}_k,
    \end{align}
    where the second line follows from \eqref{eq:relation_flow_aggregate_flow}.
\end{proof}

\begin{proof}[Proof of Lemma \ref{lemma:common_travel_cost_affine}-(b)]
    Let $\mathbf{g}^\star$ be the generation of GUE, it satisfies
    \begin{align}
        \mathbf{Q}_k \mathbf{g}_k + \bm \mu_k = \hat{\lambda}_k \mathbf{1}, \qquad \mathbf{1}^\top \mathbf{g}_k = \rho \hat{x}_k + \hat{f}_k, \quad \forall k \in [K].
    \end{align}
    Solving the system of equations gives
    \begin{subequations}
    	\begin{align}
        \mathbf{g}_k^\star &= \frac{\rho \mathbf{Q}_k^{-1} \mathbf{1}}{\mathbf{1}^\top \mathbf{Q}_k^{-1} \mathbf{1}} \hat{x}_k + \frac{\mathbf{Q}_k^{-1} \mathbf{1} \hat{f}_k}{\mathbf{1}^\top \mathbf{Q}_k^{-1} \mathbf{1}} - \mathbf{Q}_k^{-1}\left(\mathbf{Q}_k - \frac{\mathbf{1}\mathbf{1}^\top}{\mathbf{1}^\top \mathbf{Q}_k^{-1} \mathbf{1}}\right)\mathbf{Q}_k^{-1}\bm \mu_k\\
        &=\widehat{Q}_k \hat{g}_k \mathbf{Q}_k^{-1} \mathbf{1} - \mathbf{Q}_k^{-1}\bm \mu_k + \widehat{Q}_k \mathbf{Q}_k^{-1} \mathbf{1}\mathbf{1}^\top \mathbf{Q}_k^{-1} \bm \mu_k,\quad \forall k \in [K].
    \end{align}
    \end{subequations}
    
    It is not hard to check $\frac{1}{2} \mathbf{g}_k^\top \mathbf{Q}_k \mathbf{g}_k + \bm \mu_k^\top \mathbf{g}_k = \frac{1}{2} \widehat{Q}_k \hat{g}_k^2 + \hat{\mu}_k\hat{g}_k$ (up to constants independent of $\hat{g}_k$).
\end{proof}

\subsubsection{Two Examples Demonstrating The Usage of Theorem \ref{thm:NS_conditions_independent_network}-(a)} \begin{example}\label{eg:a_holds_b_fails}
We demonstrate that TBP occurs through a network with 6 links and 4 routes ($1\to 2$, $1 \to 3 \to 5$, $4\to 5$, and $6$) as shown in Figure \ref{fig:eg_fig}. Suppose that the corresponding power network is a simple 2-Bus network connecting by one power line with capacity $\bar{f}$. \mh{The network formed by three routes $1\to 2, 1 \to 3 \to 5, 4 \to 5$ (also known as Wheatstone network \citep{milchtaich2006network})} have charging stations connecting to bus 1, and the remaining lower route has a charging station connecting to bus 2. We set $\bm \alpha = [1,2,2,2,1,1]^\top$, $\bm \beta = \mathbf{0}$, $\mathbf{Q} = \mathrm{diag}([2,1]^\top)$, $\bm \mu = \mathbf{0}$, $\rho = 1$, and $\bar{f} = 0.01$. It can be checked that
\begin{align}
    \widehat{\bm \alpha} = \frac{1}{7}\begin{bmatrix}
        10 & 0\\
        0 & 7
    \end{bmatrix},
    \quad
    \hat{\bm \beta} = \mathbf{0}.
\end{align}
Moreover, solving for GUE yields $\hat{x}_1 \approx 0.37$, $\hat{x}_2 \approx 0.63$, $\hat{\lambda}_1 \approx 0.73$, $\hat{\lambda}_2 \approx 0.63$. \mh{Since the LMPs are different,} the upper Wheatstone network is itself a route bundle, and the remaining route consisting of link 6 forms the other route bundle, which we denote by $\mathcal{P}^1$ and $\mathcal{P}^2$, respectively. Obviously, \mh{\textbf{A}2' holds.}

\begin{figure}[ht]
\centering
\begin{tikzpicture}[
    node distance=2cm,
    main/.style={circle, draw, thick, minimum size=0.7cm},
    intermediate/.style={circle, draw, thick, minimum size=0.7cm}
]

\node[circle, draw, thick, fill=white, minimum size=0.8cm] (1) at (0,0) {$o$};
\node[circle, draw, thick, fill=white, minimum size=0.8cm] (2) at (3,0) {$d$};
\node[circle, draw, thick, fill=white, minimum size=0.8cm] (3) at (1.5,1) {};
\node[circle, draw, thick, fill=white, minimum size=0.8cm] (4) at (1.5,-1) {};
\node at (1.5,-1.9) {6};
\node at (0.7,0.75) {1};
\node at (2.3,0.75) {2};
\node at (1.3,0) {3};
\node at (0.7,-0.75) {4};
\node at (2.3,-0.75) {5};

\draw[->,>=latex,thick,magenta!80!Black] (3) -- (2);
\draw[->,>=latex,thick,magenta!80!Black] (1) -- (3);
\draw[->,>=latex,thick,magenta!80!Black] (1) -- (4);
\draw[->,>=latex,thick,magenta!80!Black] (4) -- (2);
\draw[->,>=latex,thick,magenta!80!Black] (3) -- (4);
\draw[->,>=latex,thick,blue!80!Black] (1) .. controls (0,-2) and (3,-2) .. (2);  
\end{tikzpicture}
\caption{An example 4-Route 2-Bus system where independence holds.}
\label{fig:eg_fig}
\end{figure}

It can be checked Theorem \ref{thm:theorem_independent_network}-(b1) does not hold \mh{for $\mathcal{P}^1$}, which implies type T-T \mh{A}TBP does not occur with respect to $\mathcal{P}^1$ (i.e., perturbing any link contained by $\mathcal{P}^1$), and thus Theorem \ref{thm:NS_conditions_independent_network}-(a1) fails for type T-T \mh{A}TBP. Note that $\hat{c}_1 = \hat{\alpha}_{1,1} \hat{x}_1$ and \mh{it can be checked that $\partial \hat{\alpha}_{1,1}/\partial \alpha_3 < 0$,} Theorem \ref{thm:NS_conditions_independent_network}-(a2) holds \mh{for $\mathcal{P}^1$}. Therefore, type T-T BP occurs according to the reverse direction of Theorem \ref{thm:NS_conditions_independent_network}-(a).

Both Theorem \ref{thm:theorem_independent_network}-(b1) and (b2) fail \mh{for $\mathcal{P}^2$}, and thus \mh{A}TBP does not occur with respect to $\mathcal{P}^2$. Note that $\hat{c}_2 = \alpha_6 \hat{x}_2$, and thus Theorem \ref{thm:NS_conditions_independent_network}-(a2) fails. Since both Theorem \ref{thm:NS_conditions_independent_network}-(a1) and (a2) fail, TBP \mh{does not} occur when perturbing $\alpha_6$.
\end{example}

\begin{example}
	We consider the same system as in Example \ref{eg:a_holds_b_fails}, except now setting \mh{$\mathbf{Q}=\mathrm{diag}([1,2]^\top)$}. The aggregate GUE becomes \mh{$\hat{x}_1 \approx 0.55$, $\hat{x}_2 \approx 0.45$, $\hat{\lambda}_1 \approx 0.55$, and $\hat{\lambda}_2 \approx 0.89$.} Note that $\hat{\lambda}_2 > \hat{\lambda}_1$ and thus  Theorem \ref{thm:theorem_independent_network}-(b2) holds for $\mathcal{P}^2$, which means type T-P \mh{A}TBP occurs with respect to $\mathcal{P}^2$. Therefore, Theorem \ref{thm:NS_conditions_independent_network}-(a1) holds for $\mathcal{P}^2$. It can be easily seen Theorem \ref{thm:NS_conditions_independent_network}-(a2) fails, as $\mathcal{P}^2$ consists of only one link, which is immune to \mh{classical} BP (cf. Theorem 1 in \cite{milchtaich2006network}). In conclusion, Theorem \ref{thm:NS_conditions_independent_network}-(a) guarantees that type T-P BP occurs when perturbing $\alpha_6$.
\end{example}

\subsubsection{Relations of BPs in General Systems with Radial Power Network} At the beginning of Section~\ref{subsec:general_networks}, we assume each subnetwork is associated with an active route bundle (i.e., all routes in the route bundle have nonzero flows). If such an assumption is not imposed, it is possible to have power lines connecting two subnetworks where one is associated with an active route bundle, and the other is not. Then, similar to \eqref{eq:pt_fully_congested} and \eqref{eq:pp_fully_congested}, both \eqref{eq:hat_PT} and \eqref{eq:hat_PP} will include indicator functions specifying if subnetworks $i$ and $i'$ are associated with active routes.
	
	In this case, we have similar relations of BPs as Corollary \ref{cor:BP_relations}, but classical BP also plays a role. For example, if type P-T BP occurs when perturbing $\ell_\mathrm{P} = (i_k,i)$, where subnetwork $i_k$ is associated with active route bundle $k$, and $i$ is not associated with any active route bundle, then \eqref{eq:hat_PT} implies $\hat{\psi}_k < 0$, and Theorem \ref{thm:theorem_independent_network}-(b1) further implies type T-T ATBP does not occur with respect to $\mathcal{P}^k$ (in fact, a strict non-occurrence). Finally, the reverse direction of Theorem \ref{thm:NS_conditions_independent_network}-(a) indicates type T-T BP occurs if and only if classical BP occurs in $\mathcal{P}^k$, which gives a general version of Corollary \ref{cor:BP_relations}-(a). Note that in the special case considered in Section~\ref{subsec:completely_congested}, each active route bundle contains only one route, so classical BP never occurs, and thus the conclusion reduces to Corollary \ref{cor:BP_relations}-(a). General versions of Corollary \ref{cor:BP_relations}-(b), (c), (d), and (e) can be similarly obtained.

\subsubsection{Proofs of Necessary and Sufficient Conditions}\label{sec:supp_NS_conditions_proofs}
\mh{We present in this section the most general necessary and sufficient conditions of BPs (cf. Theorem \ref{thm:NS_conditions_for_reduced_network} and \ref{thm:NS_conditions_power}), assuming still that the network has a radial power network, but removing \textbf{A}2'. Then, we show how those theorems can be specialized to prove Theorem \ref{thm:uncongested}, \ref{thm:fully_congested}, and all results in Section~\ref{subsec:general_networks}.}

\mh{We start by characterizing the aggregate GUE.} It can be easily checked that the aggregate GUE $(\hat{\mathbf{x}}^\star,\hat{\bm \lambda}^\star)$ solves
\begin{align}\label{eq:characterize_aggregate_GUE}
\begin{cases}
    \widehat{\bm \alpha} \hat{\mathbf{x}} + \hat{\bm \beta} + \rho \hat{\bm \lambda} = \eta \mathbf{1},\\
    \mathrm{diag}\left(\widehat{\mathbf{Q}}\right) \left(\rho \hat{\mathbf{x}} + \hat{\mathbf{f}}\right) + \hat{\bm \mu} = \hat{\bm \lambda},\\
    \mathbf{1}^\top \hat{\mathbf{x}} = 1,
\end{cases}
\quad := \quad
\widehat{\mathbf{B}}
\begin{bmatrix}
    \hat{\mathbf{x}}\\
    \eta
\end{bmatrix}
=\hat{\mathbf{v}}.
\end{align}
We write the LHS equations in a compact form by defining
\begin{align}
    \widehat{\mathbf{B}}:=\begin{bmatrix}
        \widehat{\bm \alpha} + \rho^2 \mathrm{diag}\left(\widehat{\mathbf{Q}}\right) & -\mathbf{1}\\
        \mathbf{1}^\top & 0
    \end{bmatrix}, \quad \hat{\mathbf{v}}:=-\begin{bmatrix}
        \hat{\bm \beta} + \rho \mathrm{diag}\left(\widehat{\mathbf{Q}}\right)\hat{\mathbf{f}}+\mh{\rho}\hat{\bm \mu}\\
        1
    \end{bmatrix}.
\end{align}

\paragraph{Analytical Expression of $\widehat{\mathbf{B}}_\mathrm{ul}^{-1}$.} Define $\bm \Gamma := \widehat{\bm \alpha}+\rho^2\mathrm{diag}(\widehat{\mathbf{Q}})$ the upper left $K \times K$ block of $\widehat{\mathbf{B}}$. The inverse of $\widehat{\mathbf{B}}$ can be computed using its Schur complement $D:=\mathbf{1}^\top \bm \Gamma^{-1} \mathbf{1}$, 
\begin{align}
    \widehat{\mathbf{B}}^{-1} = \begin{bmatrix}
        \bm \Gamma^{-1} - \frac{1}{D}\bm \Gamma^{-1} \mathbf{1}\mathbf{1}^\top \bm \Gamma^{-1} & \frac{1}{D} \bm \Gamma^{-1} \mathbf{1}\\
        -\frac{1}{D} \mathbf{1}^\top \bm \Gamma^{-1} & \frac{1}{D}
    \end{bmatrix}.
\end{align}
Therefore, the upper left $K \times K$ block is
\begin{align}\label{eq:B_ul}
    \widehat{\mathbf{B}}_{\mathrm{ul}}^{-1} = \bm \Gamma^{-1} - \frac{1}{D}\bm \Gamma^{-1} \mathbf{1}\mathbf{1}^\top \bm \Gamma^{-1}.
\end{align}

\begin{lemma}
    The aggregate GUE flow $\hat{\mathbf{x}}$ is
    \begin{align}
        \hat{\mathbf{x}} = -\widehat{\mathbf{B}}_\mathrm{ul}^{-1}\left(\hat{\bm \beta} + \rho \mathrm{diag}(\widehat{\mathbf{Q}})\hat{\mathbf{f}} + \rho \hat{\bm \mu}\right) - \frac{1}{D} \bm \Gamma^{-1} \mathbf{1}.
    \end{align}
\end{lemma}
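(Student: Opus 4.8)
The plan is to obtain $\hat{\mathbf{x}}$ by simply solving the linear system \eqref{eq:characterize_aggregate_GUE} and reading off the top (block) row of $\widehat{\mathbf{B}}^{-1}\hat{\mathbf{v}}$. By construction, the aggregate GUE flow $\hat{\mathbf{x}}$ together with the scalar $\eta$ forms the stacked unknown $(\hat{\mathbf{x}},\eta)$ that satisfies the square system $\widehat{\mathbf{B}}(\hat{\mathbf{x}},\eta) = \hat{\mathbf{v}}$, with $\widehat{\mathbf{B}}$ and $\hat{\mathbf{v}}$ as defined immediately above the lemma. The first step is to record that $\widehat{\mathbf{B}}$ is invertible. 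This is already implicit in the preceding paragraph, where $\widehat{\mathbf{B}}^{-1}$ is written out via the Schur complement $D := \mathbf{1}^\top \bm\Gamma^{-1}\mathbf{1}$ of the upper-left block $\bm\Gamma := \widehat{\bm\alpha} + \rho^2\,\mathrm{diag}(\widehat{\mathbf{Q}})$; nonsingularity of $\widehat{\mathbf{B}}$ (equivalently $D \neq 0$, given $\bm\Gamma$ invertible) also follows from the uniqueness of the GUE established in Corollary \ref{cor:existence} under the standing assumption $\bm\alpha \in \mathbb{R}_{++}^{\mt}$, since a unique aggregate solution forces the coefficient matrix to be nonsingular. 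Hence $(\hat{\mathbf{x}},\eta) = \widehat{\mathbf{B}}^{-1}\hat{\mathbf{v}}$.

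The second step is to substitute the explicit block inverse and retain only its first block row. Abbreviate $\mathbf{u} := \hat{\bm\beta} + \rho\,\mathrm{diag}(\widehat{\mathbf{Q}})\hat{\mathbf{f}} + \rho\hat{\bm\mu}$, so that $\hat{\mathbf{v}} = -(\mathbf{u},1)$ as a stacked vector. From the computed inverse, the top-left block of $\widehat{\mathbf{B}}^{-1}$ is $\widehat{\mathbf{B}}_\mathrm{ul}^{-1}$ as in \eqref{eq:B_ul}, and its top-right block is $\tfrac{1}{D}\bm\Gamma^{-1}\mathbf{1}$. Multiplying the first block row of $\widehat{\mathbf{B}}^{-1}$ against $\hat{\mathbf{v}}$ therefore gives $\hat{\mathbf{x}} = \widehat{\mathbf{B}}_\mathrm{ul}^{-1}(-\mathbf{u}) + \big(\tfrac{1}{D}\bm\Gamma^{-1}\mathbf{1}\big)(-1) = -\widehat{\mathbf{B}}_\mathrm{ul}^{-1}\mathbf{u} - \tfrac{1}{D}\bm\Gamma^{-1}\mathbf{1}$, which upon expanding $\mathbf{u}$ is precisely the claimed identity.

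The computation is pure block-matrix algebra once the inverse is in hand, so I do not expect any deep obstacle: the only substantive point is the invertibility/nonvanishing-Schur-complement argument above, and everything else is mechanical. The one place I would be careful is the sign bookkeeping in $\hat{\mathbf{v}}$, since it is the minus sign attached to the constraint (bottom) row that produces the term $-\tfrac{1}{D}\bm\Gamma^{-1}\mathbf{1}$ rather than its negative; I would verify that this sign convention is used consistently with the normalization $\mathbf{1}^\top\hat{\mathbf{x}}=1$ encoded in \eqref{eq:characterize_aggregate_GUE}. Note that no finer structural assumption (such as \textbf{A}2') is needed here, as the derivation relies only on the generic forms of $\widehat{\mathbf{B}}$ and $\hat{\mathbf{v}}$.
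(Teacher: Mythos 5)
Your proposal is correct and follows exactly the paper's approach: the paper's own proof is the one-line observation that the lemma "follows from directly solving \eqref{eq:characterize_aggregate_GUE} by inverting $\widehat{\mathbf{B}}$," which is precisely your computation of $(\hat{\mathbf{x}},\eta)=\widehat{\mathbf{B}}^{-1}\hat{\mathbf{v}}$ using the Schur-complement block inverse. Your added care about nonsingularity of $\widehat{\mathbf{B}}$ and the sign convention in $\hat{\mathbf{v}}$ is sound and only makes explicit what the paper leaves implicit.
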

\begin{proof}
    This follows from directly solving \eqref{eq:characterize_aggregate_GUE} by inverting $\widehat{\mathbf{B}}$.
\end{proof}

\begin{lemma}
    The derivatives of $\hat{\mathbf{x}}$ with respect to $\hat{\alpha}_{k,k'}, k,k' \in [K]$ and $\hat{\beta}_k, k \in [K]$ are
    \begin{align}
        \frac{\partial \hat{\mathbf{x}}}{\partial \hat{\alpha}_{k,k'}} = -\hat{x}_{k'} \widehat{\mathbf{B}}_\mathrm{ul}^{-1} \mathbf{e}_k, \quad \frac{\partial \hat{\mathbf{x}}}{\partial \hat{\beta}_{k}} = -\widehat{\mathbf{B}}_\mathrm{ul}^{-1}\mathbf{e}_k, \quad \frac{\partial \hat{\mathbf{x}}}{\partial \bar{f}_{\ell_\mathrm{P}}} = -\rho \widehat{\mathbf{B}}_\mathrm{ul}^{-1} \mathrm{diag}\left(\widehat{\mathbf{Q}}\right) \widehat{\mathbf{S}} \mathbf{e}_{\ell_\mathrm{P}},
    \end{align}
    \mh{where $\mathbf{e}_k$ is the canonical vector whose $k$-th entry is $1$ and all other entries are $0$'s.}
\end{lemma}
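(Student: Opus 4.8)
The plan is to differentiate the closed-form expression for the aggregate flow
$\hat{\mathbf{x}} = -\widehat{\mathbf{B}}_\mathrm{ul}^{-1}\left(\hat{\bm \beta} + \rho \mathrm{diag}(\widehat{\mathbf{Q}})\hat{\mathbf{f}} + \rho \hat{\bm \mu}\right) - \frac{1}{D} \bm \Gamma^{-1} \mathbf{1}$
established in the preceding lemma, treating $\hat{\alpha}_{k,k'}$, $\hat{\beta}_k$, and $\bar{f}_{\ell_\mathrm{P}}$ as independent parameters. The key observation that makes this tractable is that most of the parameter dependence is concentrated in $\widehat{\mathbf{B}}_\mathrm{ul}^{-1}$ (through $\bm \Gamma = \widehat{\bm \alpha} + \rho^2 \mathrm{diag}(\widehat{\mathbf{Q}})$ and hence through $D = \mathbf{1}^\top \bm \Gamma^{-1}\mathbf{1}$), so the main technical tool is the matrix-inverse derivative identity $\partial \mathbf{M}^{-1}/\partial t = -\mathbf{M}^{-1}(\partial \mathbf{M}/\partial t)\mathbf{M}^{-1}$.

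For the derivative with respect to $\hat{\beta}_k$, the dependence is simplest: $\hat{\bm \beta}$ appears linearly and only in the first factor, with $\widehat{\mathbf{B}}_\mathrm{ul}^{-1}$ and the last term independent of $\hat{\bm \beta}$. Since $\partial \hat{\bm \beta}/\partial \hat{\beta}_k = \mathbf{e}_k$, differentiating immediately yields $\partial \hat{\mathbf{x}}/\partial \hat{\beta}_k = -\widehat{\mathbf{B}}_\mathrm{ul}^{-1}\mathbf{e}_k$. For the derivative with respect to $\bar{f}_{\ell_\mathrm{P}}$, I would use the relation $\hat{\mathbf{f}} = \widehat{\mathbf{S}}\bar{\mathbf{f}}$ from the subnetwork construction, so $\partial \hat{\mathbf{f}}/\partial \bar{f}_{\ell_\mathrm{P}} = \widehat{\mathbf{S}}\mathbf{e}_{\ell_\mathrm{P}}$; again $\widehat{\mathbf{B}}_\mathrm{ul}^{-1}$ does not depend on $\bar{\mathbf{f}}$, and the chain rule through the $\rho\,\mathrm{diag}(\widehat{\mathbf{Q}})\hat{\mathbf{f}}$ term gives $\partial \hat{\mathbf{x}}/\partial \bar{f}_{\ell_\mathrm{P}} = -\rho\,\widehat{\mathbf{B}}_\mathrm{ul}^{-1}\mathrm{diag}(\widehat{\mathbf{Q}})\widehat{\mathbf{S}}\mathbf{e}_{\ell_\mathrm{P}}$.

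The derivative with respect to $\hat{\alpha}_{k,k'}$ is the most delicate, and I expect it to be the main obstacle, because $\hat{\alpha}_{k,k'}$ enters $\bm \Gamma$ (and thus $\widehat{\mathbf{B}}_\mathrm{ul}^{-1}$ and $D$) as well as appearing implicitly through the equilibrium condition. The cleanest route is to differentiate the defining linear system $\widehat{\mathbf{B}}[\hat{\mathbf{x}};\eta] = \hat{\mathbf{v}}$ rather than the solved form. Noting $\partial \bm \Gamma/\partial \hat{\alpha}_{k,k'} = \mathbf{e}_k \mathbf{e}_{k'}^\top$ (the single-entry perturbation) and that $\hat{\mathbf{v}}$ does not depend on $\hat{\alpha}_{k,k'}$, implicit differentiation gives $\widehat{\mathbf{B}}\,\partial[\hat{\mathbf{x}};\eta]/\partial \hat{\alpha}_{k,k'} = -(\partial \widehat{\mathbf{B}}/\partial \hat{\alpha}_{k,k'})[\hat{\mathbf{x}};\eta]$. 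The perturbation matrix has $\mathbf{e}_k\mathbf{e}_{k'}^\top$ in its upper-left block and zeros elsewhere, so the right-hand side is $-\hat{x}_{k'}[\mathbf{e}_k; 0]$. Restricting to the top $K$ components via the upper-left block $\widehat{\mathbf{B}}_\mathrm{ul}^{-1}$ and checking that the bottom equation $\mathbf{1}^\top \partial\hat{\mathbf{x}}/\partial \hat{\alpha}_{k,k'}=0$ is consistent (which follows from the structure of $\widehat{\mathbf{B}}^{-1}$, whose upper-left block rows sum to zero since $\mathbf{1}^\top \widehat{\mathbf{B}}_\mathrm{ul}^{-1} = \mathbf{0}$ by \eqref{eq:B_ul}), I obtain $\partial \hat{\mathbf{x}}/\partial \hat{\alpha}_{k,k'} = -\hat{x}_{k'}\widehat{\mathbf{B}}_\mathrm{ul}^{-1}\mathbf{e}_k$. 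The verification that the flow-conservation constraint is preserved under perturbation—i.e. that $\widehat{\mathbf{B}}_\mathrm{ul}^{-1}$ correctly projects out the $\eta$ coordinate—is the step requiring the most care, and I would handle it by exploiting the explicit Schur-complement form of $\widehat{\mathbf{B}}^{-1}$ already derived in the excerpt.
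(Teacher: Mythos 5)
Your proposal is correct and follows essentially the same route as the paper: the $\hat{\beta}_k$ and $\bar{f}_{\ell_\mathrm{P}}$ derivatives are obtained by differentiating the closed-form expression for $\hat{\mathbf{x}}$, and the $\hat{\alpha}_{k,k'}$ derivative by implicitly differentiating the defining system $\widehat{\mathbf{B}}\,[\hat{\mathbf{x}};\eta]=\hat{\mathbf{v}}$, whose perturbed right-hand side is $-\hat{x}_{k'}[\mathbf{e}_k;0]$, then reading off the top $K$ components through the upper-left block of $\widehat{\mathbf{B}}^{-1}$. Your additional verification that $\mathbf{1}^\top\widehat{\mathbf{B}}_\mathrm{ul}^{-1}=\mathbf{0}$ is a correct but unnecessary sanity check, since solving with the full inverse already enforces the constraint row automatically.
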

\begin{proof}
    The derivative with respect to $\hat{\beta}_k$ and $\bar{f}_{\ell_\mathrm{P}}$ can be obtained by differentiate the analytical expression of $\hat{\mathbf{x}}$. To obtain the derivative with respect to $\hat{\alpha}_{k,k'}$, we differentiate both sides of equation \eqref{eq:characterize_aggregate_GUE} with respect to $\hat{\alpha}_{k,k'}$, 
    \begin{align}
        \widehat{\mathbf{B}} \begin{bmatrix}
            \partial \hat{\mathbf{x}}/\partial \hat{\alpha}_{k,k'}\\
            \partial \eta/\partial \hat{\alpha}_{k,k'}
        \end{bmatrix} = -\hat{x}_{k'} \mathbf{e}_k \quad \Rightarrow \quad \frac{\partial \hat{\mathbf{x}}}{\partial \hat{\alpha}_{k,k'}} = -\hat{x}_{k'} \widehat{\mathbf{B}}_\mathrm{ul}^{-1} \mathbf{e}_k.
    \end{align}
\end{proof}

\begin{theorem}[Necessary and Sufficient conditions of ATBP]\label{thm:NS_conditions_for_reduced_network}
    For \mh{an aggregated} system characterized by parameters $\{\hat{\bm \alpha}_k,\hat{\beta}_k\}_{k=1}^K$, let its aggregate GUE be $(\hat{\mathbf{x}}^\star, \hat{\bm \lambda}^\star)$, then \mh{A}TBP occurs if and only if any of the following holds
    \begin{enumerate}
        \item[(a)] $\partial \Phi_\mathrm{T}/\partial \hat{\alpha}_{k,k'} = (\hat{x}_{k'}^\star)^2 - \hat{x}_{k'}^\star \left(2\widehat{\bm \alpha} \hat{\mathbf{x}}^\star +\hat{\bm \beta}\right)^\top  \widehat{\mathbf{B}}^{-1}_\mathrm{ul} \mathbf{e}_k < 0$ for some $k,k' \in [K]$;
        \item[(b)] $\partial \Phi_\mathrm{T}/\partial \hat{\beta}_k = \hat{x}_k^\star - \left(2\widehat{\bm \alpha}\hat{\mathbf{x}}^\star+\hat{\bm \beta}\right)^\top \widehat{\mathbf{B}}^{-1}_\mathrm{ul} \mathbf{e}_k < 0$ for some $k \in [K]$;
        \item[(c)] $\partial \Phi_\mathrm{P}/\partial \hat{\alpha}_{k,k'} = -\rho \hat{x}_{k'}^\star (\hat{\bm \lambda}^\star)^\top \widehat{\mathbf{B}}^{-1}_\mathrm{ul} \mathbf{e}_k < 0$ for some $k,k' \in [K]$;
        \item[(d)] $\partial \Phi_\mathrm{P}/\partial \hat{\beta}_k = -\rho (\hat{\bm \lambda}^\star)^\top \widehat{\mathbf{B}}^{-1}_\mathrm{ul} \mathbf{e}_k < 0$ for some $k \in [K]$,
    \end{enumerate}
    where $\widehat{\mathbf{B}}_\mathrm{ul}^{-1}$ is defined in \eqref{eq:B_ul}.
\end{theorem}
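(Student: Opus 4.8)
The plan is to derive the theorem as an immediate consequence of Definition \ref{def:reduced_transportation_BP} once the four displayed derivative formulas are established in closed form. By definition, ATBP with respect to $\mathcal{P}^k$ occurs exactly when at least one of $\partial\Phi_\mathrm{T}/\partial\hat\alpha_{k,k'}$, $\partial\Phi_\mathrm{T}/\partial\hat\beta_k$, $\partial\Phi_\mathrm{P}/\partial\hat\alpha_{k,k'}$, $\partial\Phi_\mathrm{P}/\partial\hat\beta_k$ is negative, so the ``if and only if'' reduces to computing each derivative. First I would record that, on the interior of the relevant critical region, both the line-congestion pattern and the set of active route bundles are locally constant, so the subnetwork/route-bundle decomposition $\{(\mathcal{V}_\mathrm{P}^k,\mathcal{E}_\mathrm{P}^k),\mathcal{P}^k\}$ is invariant under infinitesimal perturbation of $(\widehat{\bm\alpha},\hat{\bm\beta})$. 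This invariance is what makes the aggregate GUE $(\hat{\mathbf{x}}^\star,\hat{\bm\lambda}^\star)$ a smooth function of these parameters and legitimizes differentiating the linear characterization \eqref{eq:characterize_aggregate_GUE}.

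Next I would express the two social cost metrics in aggregate variables. By Lemma \ref{lemma:common_travel_cost_affine}-(a) the common bundle cost is $\hat c_k = (\widehat{\bm\alpha}\hat{\mathbf{x}})_k + \hat\beta_k$, so the transportation cost aggregates to $\Phi_\mathrm{T} = \sum_k \hat x_k\hat c_k = \hat{\mathbf{x}}^\top\widehat{\bm\alpha}\hat{\mathbf{x}} + \hat{\bm\beta}^\top\hat{\mathbf{x}}$; by Lemma \ref{lemma:common_travel_cost_affine}-(b) the power cost aggregates to $\Phi_\mathrm{P} = \sum_k(\tfrac12\widehat Q_k\hat g_k^2 + \hat\mu_k\hat g_k)$ with $\hat g_k = \rho\hat x_k + \hat f_k$ and $\hat f_k$ depending only on $\bar{\mathbf{f}}$. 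I would then apply the chain rule, writing each derivative as a direct partial (holding $\hat{\mathbf{x}}$ fixed) plus an indirect term $(\nabla_{\hat{\mathbf{x}}}\Phi_s)^\top(\partial\hat{\mathbf{x}}/\partial\,\cdot\,)$, and substitute the sensitivities $\partial\hat{\mathbf{x}}/\partial\hat\alpha_{k,k'} = -\hat x_{k'}\widehat{\mathbf{B}}_\mathrm{ul}^{-1}\mathbf{e}_k$ and $\partial\hat{\mathbf{x}}/\partial\hat\beta_k = -\widehat{\mathbf{B}}_\mathrm{ul}^{-1}\mathbf{e}_k$ already obtained from differentiating \eqref{eq:characterize_aggregate_GUE} in the preceding lemma, with $\widehat{\mathbf{B}}_\mathrm{ul}^{-1}$ as in \eqref{eq:B_ul}.

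For $\Phi_\mathrm{P}$ the computation is clean, so I would do it first: since $\hat f_k$ is unaffected by transportation parameters, $\Phi_\mathrm{P}$ has no direct dependence on $(\widehat{\bm\alpha},\hat{\bm\beta})$, and differentiating the aggregate generation cost gives the envelope-type identity $\partial\Phi_\mathrm{P}/\partial\hat x_m = \rho(\widehat Q_m\hat g_m + \hat\mu_m) = \rho\hat\lambda_m$, i.e. $\nabla_{\hat{\mathbf{x}}}\Phi_\mathrm{P} = \rho\hat{\bm\lambda}$, where the last equality uses the per-subnetwork optimality relation $\hat\lambda_k = \widehat Q_k\hat g_k + \hat\mu_k$ from the aggregate KKT system \eqref{eq:characterize_aggregate_GUE}. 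Substituting the two sensitivities then yields statements (c) and (d) directly. For $\Phi_\mathrm{T}$ I would combine the direct derivative of the quadratic form with $\nabla_{\hat{\mathbf{x}}}\Phi_\mathrm{T} = 2\widehat{\bm\alpha}\hat{\mathbf{x}} + \hat{\bm\beta}$ to obtain statements (a) and (b); here one must be careful when separating the explicit quadratic-form derivative from the gradient term, which is the one place where genuine bookkeeping (and the symmetry convention on $\widehat{\bm\alpha}$) enters.

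The main obstacle I anticipate is not any single calculation but ensuring the differentiation is valid and self-consistent: justifying that the aggregation itself (the congestion-induced partition into subnetworks and the associated matrices $\widehat{\mathbf{B}}$ and $\widehat{\mathbf{B}}_\mathrm{ul}^{-1}$) is invariant under the perturbation, so that the precomputed sensitivity lemma applies, and verifying the optimality identity $\hat\lambda_k = \widehat Q_k\hat g_k + \hat\mu_k$ that collapses $\nabla_{\hat{\mathbf{x}}}\Phi_\mathrm{P}$ to $\rho\hat{\bm\lambda}$. Once these are secured, the equivalence follows immediately by reading off the signs of the four derivatives against Definition \ref{def:reduced_transportation_BP}.
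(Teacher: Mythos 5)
Your proposal is correct and follows essentially the same route as the paper's proof: express $\Phi_\mathrm{T} = \hat{\mathbf{x}}^\top\widehat{\bm\alpha}\hat{\mathbf{x}} + \hat{\bm\beta}^\top\hat{\mathbf{x}}$, apply the chain rule with the sensitivities $\partial\hat{\mathbf{x}}/\partial\hat\alpha_{k,k'} = -\hat x_{k'}\widehat{\mathbf{B}}_\mathrm{ul}^{-1}\mathbf{e}_k$ and $\partial\hat{\mathbf{x}}/\partial\hat\beta_k = -\widehat{\mathbf{B}}_\mathrm{ul}^{-1}\mathbf{e}_k$ from the preceding lemma, and collapse $\nabla_{\hat{\mathbf{x}}}\Phi_\mathrm{P}$ to $\rho\hat{\bm\lambda}$ via the stationarity relation $\hat\lambda_k = \widehat Q_k\hat g_k + \hat\mu_k$ (the paper does this at the bus level, $\bm\lambda = \mathbf{Qg}+\bm\mu$, before aggregating, which is the same identity), then read off the signs against the ATBP definition. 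The only cosmetic difference is your explicit attention to the invariance of the critical region and the symmetry convention on $\widehat{\bm\alpha}$, both of which the paper handles implicitly.
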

\begin{proof}
    Note that $\Phi_\mathrm{T} = \hat{\mathbf{c}}^\top \hat{\mathbf{x}} = (\widehat{\bm \alpha} \hat{\mathbf{x}} + \hat{\bm \beta})^\top \hat{\mathbf{x}} = \hat{\mathbf{x}}^\top \widehat{\bm \alpha} \hat{\mathbf{x}} + \hat{\bm \beta}^\top \hat{\mathbf{x}}$. Hence,
    \begin{align}
        &\frac{\partial \Phi_\mathrm{T}}{\partial \hat{\alpha}_{k,k'}} = \hat{x}_{k'}^2 + \left(2\widehat{\bm \alpha}\hat{\mathbf{x}}+\hat{\bm \beta}\right)^\top \frac{\partial \hat{\mathbf{x}}}{\partial \hat{\alpha}_{k,k'}} = \hat{x}_{k'}^2 - \hat{x}_{k'}\left(\hat{\mathbf{c}} + \widehat{\bm \alpha}\hat{\mathbf{x}}\right)^\top \widehat{\mathbf{B}}_\mathrm{ul}^{-1} \mathbf{e}_k,\\
        &\frac{\partial \Phi_\mathrm{T}}{\partial \hat{\beta}_k} = \hat{x}_k + \left(2\widehat{\bm \alpha}\hat{\mathbf{x}}+\hat{\bm \beta}\right)^\top \frac{\partial \hat{\mathbf{x}}}{\partial \hat{\beta}_k} = \hat{x}_k - \left(\hat{\mathbf{c}} + \widehat{\bm \alpha}\hat{\mathbf{x}}\right)^\top \widehat{\mathbf{B}}_\mathrm{ul}^{-1} \mathbf{e}_k.
    \end{align}
    Note also that
    \begin{align}
        \frac{\partial{\Phi_\mathrm{P}}}{\partial \zeta} = (\mathbf{Q}\mathbf{g} + \bm \mu)^\top\frac{\partial \mathbf{g}}{\partial \zeta} = \bm \lambda^\top \frac{\partial \mathbf{g}}{\partial \zeta} = \hat{\bm \lambda}^\top \frac{\partial \hat{\mathbf{g}}}{\partial\zeta} = \hat{\bm \lambda}^\top \frac{\partial}{\partial \zeta}\left(\rho \hat{\mathbf{x}} + \hat{\mathbf{f}}\right) = \rho \hat{\bm \lambda}^\top \frac{\partial \hat{\mathbf{x}}}{\partial \zeta},
    \end{align}
    where $\zeta \in \{\hat{\alpha}_{k,k'},\hat{\beta}_k\}$. Therefore,
    \begin{align}
        &\frac{\partial \Phi_\mathrm{P}}{\partial \hat{\alpha}_{k,k'}} = -\rho \hat{x}_{k'} \hat{\bm \lambda}^\top \widehat{\mathbf{B}}_\mathrm{ul}^{-1} \mathbf{e}_k, \quad \frac{\partial \Phi_\mathrm{P}}{\partial \hat{\beta}_k} = -\rho \hat{\bm \lambda}^\top \widehat{\mathbf{B}}_\mathrm{ul}^{-1} \mathbf{e}_k.
    \end{align}
\end{proof}

Theorem \ref{thm:NS_conditions_for_reduced_network} provides criteria to check the occurrences of \mh{A}TBP, which, together with \eqref{eq:chain_rule}, leads to the following corollary --- a \textit{sufficient condition} of the non-occurrence of TBP.
\begin{corollary}[A Sufficient Condition of Non-occurrence of TBP]\label{cor:sufficient_condition_for_no_TSIBP}
    For any system, if \mh{A}TBP does not occur, and $\partial \hat{\alpha}_{k,k'}/\partial \alpha_{\ell_\mathrm{T}} \geq 0$ and $\partial \hat{\beta}_k/\partial \alpha_{\ell_\mathrm{T}} \geq 0$ hold for all $k,k' \in [K]$ and for all $\ell_\mathrm{T} \in [\mt]$, then, TBP \mh{does not} occur.
\end{corollary}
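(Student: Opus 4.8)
The plan is to prove the statement directly by tracking signs through the chain-rule decomposition \eqref{eq:chain_rule}. The key structural observation is that every summand on the right-hand side of \eqref{eq:chain_rule} factors as a product of a \emph{sensitivity} term (either $\partial \hat{\alpha}_{k,k'}/\partial \alpha_{\ell_\mathrm{T}}$ or $\partial \hat{\beta}_k/\partial \alpha_{\ell_\mathrm{T}}$) and an \emph{aggregated} social-cost derivative (either $\partial \Phi_s/\partial \hat{\alpha}_{k,k'}$ or $\partial \Phi_s/\partial \hat{\beta}_k$). If both factors in each product are non-negative, the entire sum is non-negative, which is precisely the non-occurrence of transportation-expansion BP. By Definition \ref{def:gbp}, it suffices to show $\partial \Phi_s/\partial \alpha_{\ell_\mathrm{T}} \ge 0$ for every $\ell_\mathrm{T} \in [\mt]$ and every $s \in \{\mathrm{T},\mathrm{P}\}$.

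First I would fix $s \in \{\mathrm{T},\mathrm{P}\}$ and an arbitrary $\ell_\mathrm{T} \in [\mt]$. The hypothesis that ATBP does not occur, read through Definition \ref{def:reduced_transportation_BP}, is exactly the negation of ``$\partial \Phi_s/\partial \hat{\alpha}_{k,k'} < 0$ or $\partial \Phi_s/\partial \hat{\beta}_k < 0$ for some $k,k'$''; hence $\partial \Phi_s/\partial \hat{\alpha}_{k,k'} \ge 0$ and $\partial \Phi_s/\partial \hat{\beta}_k \ge 0$ for all $k,k' \in [K]$. Combining these with the assumed sensitivity bounds $\partial \hat{\alpha}_{k,k'}/\partial \alpha_{\ell_\mathrm{T}} \ge 0$ and $\partial \hat{\beta}_k/\partial \alpha_{\ell_\mathrm{T}} \ge 0$, every product in \eqref{eq:chain_rule} is a product of two non-negative reals and thus non-negative. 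Summing non-negative terms gives $\partial \Phi_s/\partial \alpha_{\ell_\mathrm{T}} \ge 0$, and since $\ell_\mathrm{T}$ and $s$ were arbitrary, neither type T-T nor type T-P BP occurs, i.e., TBP does not occur.

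The only step demanding genuine care---rather than sign arithmetic---is the legitimacy of invoking \eqref{eq:chain_rule} itself: this requires $\Phi_s$ to be a differentiable function of the aggregated coefficients $(\widehat{\bm \alpha}, \hat{\bm \beta})$ and each aggregated coefficient to be differentiable in $\alpha_{\ell_\mathrm{T}}$. Both hold because perturbations are confined to the interior of a single critical region of \eqref{eq:base_optimization}, on which the GUE varies smoothly; the composition $\alpha_{\ell_\mathrm{T}} \mapsto (\widehat{\bm \alpha},\hat{\bm \beta}) \mapsto \Phi_s$ is then differentiable, so the chain rule applies and the sensitivity derivatives appearing in the hypothesis are well-defined. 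With this differentiability in hand, the argument reduces entirely to the sign bookkeeping above, and no further estimation is needed.
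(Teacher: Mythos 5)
Your proof is correct and follows essentially the same route as the paper's own proof: both apply the chain-rule decomposition \eqref{eq:chain_rule}, read non-occurrence of ATBP as non-negativity of all aggregated derivatives $\partial \Phi_s/\partial \hat{\alpha}_{k,k'}$ and $\partial \Phi_s/\partial \hat{\beta}_k$, and conclude by sign bookkeeping that $\partial \Phi_s/\partial \alpha_{\ell_\mathrm{T}} \ge 0$ for $s \in \{\mathrm{T},\mathrm{P}\}$. Your added remark justifying differentiability via the interior-of-critical-region restriction is a point the paper handles in its setup rather than in the proof itself, but it is consistent with the paper's framework.
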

\begin{proof}
    This follows directly from the expression below
    \begin{align}
        \frac{\partial \Phi_s}{\partial \alpha_{\ell_\mathrm{T}}} = \sum_{k,k' \in [K]} \frac{\partial \Phi_s}{\partial \hat{\alpha}_{k,k'}}\frac{\partial \hat{\alpha}_{k,k'}}{\partial \alpha_{\ell_\mathrm{T}}} + \sum_{k \in [K]}\frac{\partial \Phi_s}{\partial \hat{\beta}_k}\frac{\partial \hat{\beta}_k}{\partial \alpha_{\ell_\mathrm{T}}}, \quad s \in \{\mathrm{T,P}\}.
    \end{align}
    which is obtained through the chain rule. If the condition of Corollary \ref{cor:sufficient_condition_for_no_TSIBP} holds, then $\partial \Phi_s/\partial \alpha_{\ell_\mathrm{T}}$ is non-negative, which implies TBPs never occur.
\end{proof}

Finally, Theorem \ref{thm:NS_conditions_power} provides characterizations of PBP without \mh{\textbf{A}2'}.

\begin{theorem}[Necessary and Sufficient conditions for PBP]\label{thm:NS_conditions_power}
    For any general coupled system,
    \begin{enumerate}
        \item[(a)] Type P-T BP occurs if and only if there exists $\ell_\mathrm{P}$ such that
        \begin{align}
            \frac{\partial \Phi_\mathrm{T}}{\partial \bar{f}_{\ell_\mathrm{P}}} = -\rho \left(2 \widehat{\bm \alpha} \hat{\mathbf{x}}^\star + \hat{\bm \beta}\right)^\top \widehat{\mathbf{B}}^{-1}_\mathrm{ul} \mathrm{diag}(\widehat{\mathbf{Q}}) \widehat{\mathbf{S}} \mathbf{e}_{\ell_\mathrm{P}} > 0;
        \end{align}
        \item[(b)] Type P-P BP occurs if and only if there exists $\ell_\mathrm{P}$ such that
        \begin{align}
            \frac{\partial \Phi_\mathrm{P}}{\partial \bar{f}_{\ell_\mathrm{P}}} = (\hat{\bm \lambda}^\star)^\top \left(\mathbf{I} - \rho^2 \widehat{\mathbf{B}}_\mathrm{ul}^{-1} \mathrm{diag}(\widehat{\mathbf{Q}}) \right) \widehat{\mathbf{S}} \mathbf{e}_{\ell_\mathrm{P}} > 0.
        \end{align}
    \end{enumerate}
\end{theorem}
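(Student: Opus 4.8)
The plan is to obtain both derivative formulas by the chain rule, reusing the flow sensitivity $\partial \hat{\mathbf{x}}/\partial \bar{f}_{\ell_\mathrm{P}} = -\rho\, \widehat{\mathbf{B}}_\mathrm{ul}^{-1}\mathrm{diag}(\widehat{\mathbf{Q}})\widehat{\mathbf{S}}\mathbf{e}_{\ell_\mathrm{P}}$ already established, and then reading off the sign conditions from Definition \ref{def:gbp} (type P-T/P-P BP occurs iff the corresponding derivative is strictly positive for some $\ell_\mathrm{P}$). Throughout I would work in the interior of a fixed critical region, so that the congestion pattern is invariant and $\hat{\mathbf{f}} = \widehat{\mathbf{S}}\bar{\mathbf{f}}$ holds with a constant selection matrix $\widehat{\mathbf{S}}$.

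For part (a), I recall from the proof of Theorem \ref{thm:NS_conditions_for_reduced_network} that $\Phi_\mathrm{T} = \hat{\mathbf{x}}^\top \widehat{\bm \alpha}\hat{\mathbf{x}} + \hat{\bm \beta}^\top \hat{\mathbf{x}}$, whose gradient in $\hat{\mathbf{x}}$ is $2\widehat{\bm \alpha}\hat{\mathbf{x}}^\star + \hat{\bm \beta}$. Since $\widehat{\bm \alpha}$ and $\hat{\bm \beta}$ are transportation cost coefficients and do not depend on $\bar{f}_{\ell_\mathrm{P}}$, the only channel through which the line capacity affects $\Phi_\mathrm{T}$ is the equilibrium aggregate flow $\hat{\mathbf{x}}^\star$. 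Applying the chain rule and substituting the sensitivity formula gives $\partial \Phi_\mathrm{T}/\partial \bar{f}_{\ell_\mathrm{P}} = -\rho\,(2\widehat{\bm \alpha}\hat{\mathbf{x}}^\star + \hat{\bm \beta})^\top \widehat{\mathbf{B}}_\mathrm{ul}^{-1}\mathrm{diag}(\widehat{\mathbf{Q}})\widehat{\mathbf{S}}\mathbf{e}_{\ell_\mathrm{P}}$, which is the stated expression; positivity for some $\ell_\mathrm{P}$ is exactly the occurrence of type P-T BP.

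For part (b), the extra ingredient is that $\bar{f}_{\ell_\mathrm{P}}$ enters $\Phi_\mathrm{P}$ through two channels. Using Lemma \ref{lemma:common_travel_cost_affine}-(b), I would write $\Phi_\mathrm{P} = \sum_{k}\bigl(\tfrac{1}{2}\widehat{Q}_k \hat{g}_k^2 + \hat{\mu}_k \hat{g}_k\bigr)$ up to a constant, with aggregate generation $\hat{g}_k = \rho \hat{x}_k + \hat{f}_k$ and $\hat{\mathbf{f}} = \widehat{\mathbf{S}}\bar{\mathbf{f}}$. Differentiating and invoking the stationarity relation $\widehat{Q}_k \hat{g}_k + \hat{\mu}_k = \hat{\lambda}_k$ from \eqref{eq:characterize_aggregate_GUE} yields $\partial \Phi_\mathrm{P}/\partial \bar{f}_{\ell_\mathrm{P}} = \hat{\bm \lambda}^\top\, \partial \hat{\mathbf{g}}/\partial \bar{f}_{\ell_\mathrm{P}}$. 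The generation sensitivity splits into the direct capacity-relaxation term $\widehat{\mathbf{S}}\mathbf{e}_{\ell_\mathrm{P}}$ (from $\hat{\mathbf{f}}$) and the indirect load-relocation term $\rho\,\partial \hat{\mathbf{x}}/\partial \bar{f}_{\ell_\mathrm{P}}$; combining these with the flow sensitivity formula gives $\partial \hat{\mathbf{g}}/\partial \bar{f}_{\ell_\mathrm{P}} = \bigl(\mathbf{I} - \rho^2 \widehat{\mathbf{B}}_\mathrm{ul}^{-1}\mathrm{diag}(\widehat{\mathbf{Q}})\bigr)\widehat{\mathbf{S}}\mathbf{e}_{\ell_\mathrm{P}}$, and hence the claimed formula, whose positivity characterizes type P-P BP.

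The main obstacle is part (b): correctly separating the two channels and justifying the reduction $\Phi_\mathrm{P} = \sum_k\bigl(\tfrac{1}{2}\widehat{Q}_k \hat{g}_k^2 + \hat{\mu}_k \hat{g}_k\bigr)$ together with $\hat{\mathbf{f}} = \widehat{\mathbf{S}}\bar{\mathbf{f}}$. This relies on the radial topology and the fixed congestion pattern inside the critical region, so that the within-subnetwork generation split is pinned down by the aggregate $\hat{g}_k$ (per Lemma \ref{lemma:common_travel_cost_affine}-(b)) while the inter-subnetwork net flows equal the congested-line capacities through $\widehat{\mathbf{S}}$. Once the aggregation lemma and the stationarity identity $\widehat{Q}_k \hat{g}_k + \hat{\mu}_k = \hat{\lambda}_k$ are in place, the remainder is routine linear algebra, and the envelope-type direct term is automatically captured by the explicit dependence of $\hat{\mathbf{g}}$ on $\bar{\mathbf{f}}$.
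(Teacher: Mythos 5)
Your proposal is correct and matches the paper's own proof: both reduce each derivative via the chain rule to the flow sensitivity $\partial\hat{\mathbf{x}}/\partial\bar f_{\ell_\mathrm{P}} = -\rho\,\widehat{\mathbf{B}}_\mathrm{ul}^{-1}\mathrm{diag}(\widehat{\mathbf{Q}})\widehat{\mathbf{S}}\mathbf{e}_{\ell_\mathrm{P}}$, and for part (b) both split $\partial\hat{\mathbf{g}}/\partial\bar f_{\ell_\mathrm{P}}$ into the direct capacity term $\widehat{\mathbf{S}}\mathbf{e}_{\ell_\mathrm{P}}$ plus the load-relocation term $\rho\,\partial\hat{\mathbf{x}}/\partial\bar f_{\ell_\mathrm{P}}$ and pair it with $\hat{\bm \lambda}$. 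The only cosmetic difference is that you justify the identity $\partial\Phi_\mathrm{P}/\partial\bar f_{\ell_\mathrm{P}} = \hat{\bm \lambda}^\top\partial\hat{\mathbf{g}}/\partial\bar f_{\ell_\mathrm{P}}$ by differentiating the aggregated cost of Lemma \ref{lemma:common_travel_cost_affine}-(b) with the aggregate stationarity $\widehat{Q}_k\hat{g}_k + \hat{\mu}_k = \hat{\lambda}_k$, whereas the paper differentiates the bus-level cost and uses $\mathbf{Q}\mathbf{g}+\bm \mu = \bm \lambda$ together with equal LMPs within each subnetwork---the same computation in different coordinates.
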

\begin{proof}
    \begin{align}
        &\frac{\partial \Phi_\mathrm{T}}{\partial \bar{f}_{\ell_\mathrm{P}}} = \left(2\widehat{\bm \alpha}\hat{\bm \alpha} + \hat{\bm \beta}\right)^\top \frac{\partial \hat{\mathbf{x}}}{\partial \bar{f}_{\ell_\mathrm{P}}} = -\rho \left(\hat{\mathbf{c}} + \widehat{\bm \alpha}\hat{\mathbf{x}}\right)^\top \widehat{\mathbf{B}}_\mathrm{ul}^{-1} \mathrm{diag}\left(\widehat{\mathbf{Q}}\right) \widehat{\mathbf{S}} \mathbf{e}_{\ell_\mathrm{P}},\\
        &\frac{\partial \Phi_\mathrm{P}}{\partial \bar{f}_{\ell_\mathrm{P}}} = \rho \hat{\bm \lambda}^\top \frac{\partial \hat{\mathbf{x}}}{\partial \bar{f}_{\ell_\mathrm{P}}} + \hat{\bm \lambda}^\top \hat{\mathbf{S}} \mathbf{e}_{\ell_\mathrm{P}} = \hat{\bm \lambda}^\top \left(\mathbf{I} - \rho^2 \widehat{\mathbf{B}}_\mathrm{ul}^{-1} \mathrm{diag}\left(\widehat{\mathbf{Q}}\right)\right) \widehat{\mathbf{S}} \mathbf{e}_{\ell_\mathrm{P}}.
    \end{align}
\end{proof}

\mh{Theorem \ref{thm:theorem_independent_network} and \ref{thm:NS_conditions_independent_network} in Section~\ref{subsec:general_networks} are direct corollaries of Theorem \ref{thm:NS_conditions_for_reduced_network} and \ref{thm:NS_conditions_power}, since under Assumption \textbf{A}2', Lemma \ref{lemma:properties_of_independent_networks} guarantees that $\widehat{\bm \alpha}$ is diagonal and $\hat{\bm \beta}$ has at most one non-zero entry.}

\begin{proof}[Proof of Theorem \ref{thm:theorem_independent_network}]
	(a) If $K=1$, $\widehat{\mathbf{B}}_\mathrm{ul}^{-1}$ is a scalar and equals $0$.

    (b) By Lemma \ref{lemma:properties_of_independent_networks}, $\widehat{\bm \alpha}$ is diagonal, and thus the upper left $K \times K$ block of $\widehat{\mathbf{B}}$ is,
    \begin{align}
        \begin{bmatrix}
            \hat{\alpha}_{1,1} + \rho^2 \widehat{Q}_1 & \cdots & 0\\
            \vdots & \ddots & \vdots\\
            0 & \cdots & \hat{\alpha}_{K,K} + \rho^2 \widehat{Q}_K &\\
        \end{bmatrix}.
    \end{align}
    Then it can be easily checked that the $k$-th column of $\widehat{\mathbf{B}}_\mathrm{ul}^{-1}$ is
    \begin{align}\label{eq:column_k_Bul}
        \widehat{\mathbf{B}}_\mathrm{ul}^{-1} \mathbf{e}_k = \gamma_k \left(\mathbf{e}_k-\tilde{\bm \gamma}\right).
    \end{align}
    Substitute the expression into the results in Theorem \ref{thm:NS_conditions_for_reduced_network} completes the proof.

    (c) By Lemma \ref{lemma:properties_of_independent_networks} and the chain rule we know
    \begin{align}
        \frac{\partial \Phi_s}{\partial \alpha_{\ell_\mathrm{T}}} = \frac{\partial \Phi_s}{\partial \hat{\alpha}_{k,k}}\frac{\partial \hat{\alpha}_{k,k}}{\partial \alpha_{\ell_\mathrm{T}}} + \frac{\partial \Phi_s}{\partial \hat{\beta}_k}\frac{\partial \hat{\beta}_k}{\partial \alpha_{\ell_\mathrm{T}}}.
    \end{align}
    If the condition in the theorem statement holds, $\partial \Phi_s/\partial \alpha_{\ell_\mathrm{T}} \geq 0$ and thus TBP do not occur.
\end{proof}

\begin{proof}[Proof of Theorem \ref{thm:NS_conditions_independent_network}]
	(a) follows from \eqref{eq:chain_rule} and Lemma \ref{lemma:properties_of_independent_networks}; Both (b) and (c) are direct consequences of the more general Theorem \ref{thm:NS_conditions_power}, as when \textbf{A}2' holds, $\widehat{B}_\mathrm{ul}^{-1}$ is simplified \mh{and the analytical expression of its $k$-th column is given in \eqref{eq:column_k_Bul}.}
\end{proof}

\begin{proof}[Proof of Theorem \ref{thm:uncongested}]
	Theorem \ref{thm:uncongested} is a corollary of both Theorem \ref{thm:NS_conditions_independent_network} and \ref{thm:NS_conditions_power} since under the setting assumed in Section~\ref{subsec:uncongested}, there is only one subnetwork (route bundle). Even though we do not assume the power network is radial in Section~\ref{subsec:uncongested}, the results in Section~\ref{subsec:general_networks} uses the tree topology of the power network to only conclude that routes in the same route bundle see the same LMP, which automatically holds under the setting of Section~\ref{subsec:uncongested} \citep{oren1998transmission}.
\end{proof}

\begin{proof}[Proof of Theorem \ref{thm:fully_congested}]
	Theorem \ref{thm:fully_congested} can also be viewed as a corollary of the previously mentioned two theorems. It is merely an extreme case in which each bus is a subnetwork and an active route bundle contains at most one active route. However, in Section~\ref{subsec:completely_congested}, we allow the possibility that subnetworks are not associated with any active route bundle. In fact, our proofs in this section still holds by without loss of generality assuming all subnetworks are associated with route bundles (with some route bundles being empty), which only introduces slightly more bookkeeping burdens. Further simplifications of the results in this section are possible because each route bundle contains only one route.
\end{proof}

\subsection{Supplementary Materials for Section~\ref{sec:mitigation}}\label{sec:proofs}
\begin{lemma}[Transportation UE Given Pricing Policy]\label{lemma:trans_UE_spp}
    An admissible $\mathbf{x^\star}$ is a transportation UE for a given pricing policy $\bm \Pi$ if and only if it solves
    \begin{subequations}\label{eq:solving_equilibrium_general_pricing}
    \begin{align}
        \min_{\mathbf{x}} ~& \frac{1}{2} \mathbf{(\Alr x)^\top \mathrm{diag}(\bm \alpha) \Alr x + (\bm \beta^\top \Alr + \mh{\bm \Pi^\top}) x},\\
        \mathrm{s.t.} ~& \nu \in \mathbb{R}: \mathbf{1^\top x} = 1;\quad \bm \xi \geq \mathbf{0}: \mathbf{x} \geq \mathbf{0}.
    \end{align}
    \end{subequations}
\end{lemma}
\begin{proof}
The proof follows from the proof of Lemma \ref{lemma:tse} by replacing $\bm \lambda$ by $\bm \Pi$.
%
\end{proof}

\begin{lemma}[Existence of Derivatives (Theorem 1 in \cite{amos2017optnet})]\label{lemma:existence_derivatives}
	Social cost metric $\Phi_s, s \in \{\mathrm{T},\mathrm{P},\mathrm{C}\}$ is almost everywhere\footnote{Social cost metrics $\Phi_\mathrm{s}, s \in \{\mathrm{T,P,C}\}$ fail to be differentiable only at boundaries of critical regions, which aggregately has (Lebesgue) measure zero.} differentiable with respect to $\alpha_{\ell_\mathrm{T}},\ell_\mathrm{T} \in [\mt]$ and $\bar f_{\ell_\mathrm{P}}, \ell_\mathrm{P}\in [m_\mathrm{P}]$ over $(\bm \alpha,  \bar{\mathbf{f}}) \in \mathbb{R}_{++}^{\mt} \times  \mathbb{R}_{++}^{m_\mathrm{p}}$. 
\end{lemma}
\begin{proof}[Proof of Lemma \ref{lemma:existence_derivatives}]
	Consider an optimization problem equivalent to Problem \eqref{eq:auxiliary_opt}:
	\begin{subequations}\label{eq:equivalent_RHS_formulation}
		\begin{align}
			\min_{\mathbf{x},\,\mathbf{g},\,\mathbf{p},\mathbf{s}} &\quad  \frac{1}{2} \mathbf g^\top \mathbf Q \mathbf g + \bm \mu^\top \mathbf g + \frac{1}{2} \mathbf x^\top (\Alr)^\top \mathrm{diag}(\mathbf{s}) \Alr \mathbf x + \bm \beta^\top \Alr\mathbf x,  \\
            \mathrm{s.t.} &\quad \bm \lambda: \mathbf{p =  g-d(x)};
            \quad \gamma: \mathbf{1^\top p} = 0;
            \quad  \bm \eta: \mathbf{Hp \leq \bar f};
            \quad \nu: \mathbf{1^\top x} = 1;
            \quad \bm \xi: \mathbf{x \geq 0},\\
            &\quad \mathbf{s} = \bm \alpha.
		\end{align}
	\end{subequations}
	The advantage of Problem \eqref{eq:equivalent_RHS_formulation} over the original problem \eqref{eq:auxiliary_opt} is all parameters appear on the right hand side of constraints, which is a standard structure studied in sensitivity analysis of nonlinear program. Define $(\bm \alpha_1, \mathbf{\bar f}_1) \sim (\bm \alpha_2, \mathbf{\bar f}_2)$ if constraint binding patterns of \eqref{eq:equivalent_RHS_formulation} under them are the same. `$\sim$' can be proved to be an equivalence relation and thus partitions $\mathbb{R}_+^{\mt} \times \mathbb{R}^{\mp}_{++}$ into subsets called \textit{critical regions} for \eqref{eq:equivalent_RHS_formulation}, and also \eqref{eq:auxiliary_opt}. Since there are only finitely many constraint binding patterns (because we have finitely many inequality constraints), the number of critical regions is finite. We assign indices to $\mp+\nr$ inequality constraints by set $[\mp+\nr]$, denote collectively optimal dual variables, and the associated inequality constraints when parameter is given as $(\bm \alpha,\mathbf{\bar f})$ by $\mathbf{u}(\bm \alpha,\mathbf{\bar f}):=[
		\bm \eta~ \bm \xi]^\top$, and $\mathbf{h}(\bm \alpha, \mathbf{\bar f}):=[\mathbf{Hp-\bar f} ~ -\mathbf{x}]^\top$, respectively. Let $\mathcal{B}' \subseteq [\mp+\nr]$ be a constraint binding pattern\footnote{For this proof, we choose a different way to label binding constraints compared to Definition \ref{def:constraint_binding} for simplicity.} and \mh{$\mathcal{C}_{\mathcal{B}'}:=\{(\bm \alpha, \bar{\mathbf{f}}): h_i(\bm \alpha, \mathbf{\bar f}) = 0 , \forall i \in \mathcal{B}', h_j(\bm \alpha,\mathbf{\bar f}) < 0, \forall j \notin \mathcal{B}'\}$} be the corresponding critical region characterized by $\mathcal{B}'$. Define 
		\begin{subequations}
		\begin{align}
			&\mathcal{C}_{\mathcal{B}'}^0 := \{(\bm \alpha, \mathbf{\bar f}):u_i(\bm \alpha,\mathbf{\bar f}) > 0, \forall i \in \mathcal{B}', h_j(\bm \alpha,\mathbf{\bar f}) < 0, \forall j \notin \mathcal{B}'\}, \label{eq:c_b_0}\\
			&\overline{\mathcal{C}_{\mathcal{B}'}} := \{(\bm \alpha,\mathbf{\bar f}):u_i(\bm \alpha,\mathbf{\bar f}) = 0, \exists i \in \mathcal{B}', h_i(\bm \alpha, \mathbf{\bar f}) = 0 , \forall i \in \mathcal{B}', h_j(\bm \alpha,\mathbf{\bar f}) < 0, \forall j \notin \mathcal{B}'\}. \label{eq:c_b_bar}
		\end{align}
		\end{subequations}
	Obviously the sets defined in \eqref{eq:c_b_0} and \eqref{eq:c_b_bar} satisfy $\mathcal{C}_{\mathcal{B}'}^0 \subseteq \mathcal{C}_{\mathcal{B}'}$ and $\overline{\mathcal{C}_{\mathcal{B}'}} \subseteq \mathcal{C}_{\mathcal{B}'}$ since we assume the binding pattern of \eqref{eq:equivalent_RHS_formulation} given $(\bm \alpha,\mathbf{\bar f})$ is ${\mathcal{B}'}$, and $\mathcal{C}_{\mathcal{B}'}^0 \cup \overline{\mathcal{C}_{\mathcal{B}'}} = \mathcal{C}_{\mathcal{B}'}$ by definition. Moreover, we have:
	
	\textit{Claim 1. The interior of critical region $\mathcal{C}_{\mathcal{B}'}$ for any ${\mathcal{B}'}$ contains $\mathcal{C}_{\mathcal{B}'}^0$.} 
	
	By Berge's theorem (also known as maximum theorem) \citep{berge1877topological}, optimal solutions to \eqref{eq:equivalent_RHS_formulation} are continuous functions of $(\bm \alpha, \mathbf{\bar f})$. Therefore, all optimal dual variables are continuous functions of $(\bm \alpha, \mathbf{\bar f})$ since they are all continuous functions of optimal primal variables. For any $(\bm \alpha_0, \mathbf{\bar f}_0) \in \mathcal{C}_{\mathcal{B}'}^0$, by continuity, there is a sufficiently small $\epsilon_0 > 0$ such that for all $(\bm \alpha, \mathbf{\bar f}) \in B_{\epsilon_0}(\bm \alpha_0, \mathbf{\bar f}_0)$ still satisfy $u_i((\bm \alpha, \mathbf{\bar f})) > 0, \forall i \in {\mathcal{B}'}$ and $h_j(\bm \alpha, \mathbf{\bar f}) < 0, \forall j \notin {\mathcal{B}'}$, which implies $B_{\epsilon_0}((\bm \alpha_0, \mathbf{\bar f}_0))\subseteq \mathcal{C}^0_{\mathcal{B}'} \subseteq \mathcal{C}_{\mathcal{B}'}$.
	
	\textit{Claim 2. For any $(\bm \alpha, \mathbf{\bar f}) \in \mathcal{C}^0_{\mathcal{B}'}$, strict complementarity slackness (SCC) holds for  \eqref{eq:equivalent_RHS_formulation} given $(\bm \alpha, \mathbf{\bar f})$.} 
	
	For any $(\bm \alpha, \mathbf{\bar f}) \in \mathcal{C}^0_{\mathcal{B}'}$, by definition the associated optimal dual variables for binding inequality constraints are strictly greater than zero. Therefore, (SCS) holds.
	
	\textit{Claim 3. Optimal dual variables of \eqref{eq:equivalent_RHS_formulation} are unique for any $(\bm \alpha, \mathbf{\bar f})$ and (MFCQ) holds at all feasible points of \eqref{eq:equivalent_RHS_formulation}.} 
	
		We analyze the equivalent problem \eqref{eq:auxiliary_opt}. Any optimal primal-dual tuple of \eqref{eq:auxiliary_opt} satisfies KKT conditions of both \eqref{eq:solving_transportation_equilibrium} and \eqref{eq:economic_dispatch}. It can be easily shown that linearly independent constraint qualification (LICQ) holds at all feasible points of \eqref{eq:solving_transportation_equilibrium}. Together with the fact that \eqref{eq:solving_transportation_equilibrium} is strictly convex, optimal primal-dual tuple is unique. Moreover, solving the dual of \eqref{eq:economic_dispatch} shows optimal dual variables are unique. If optimal dual variables of \eqref{eq:auxiliary_opt} are not unique, it contradicts the fact that optimal dual variables of \eqref{eq:solving_transportation_equilibrium} and \eqref{eq:economic_dispatch} are unique. Uniqueness of optimal dual variables immediately imply (MFCQ) holds at all feasible points for \eqref{eq:auxiliary_opt}, and thus for \eqref{eq:equivalent_RHS_formulation} \citep{wachsmuth2013licq}.
	
	\textit{Claim 4. Optimal primal variables of \eqref{eq:equivalent_RHS_formulation} given any $(\bm \alpha_0, \mathbf{\bar f}_0) \in \mathcal{C}_{\mathcal{B}'}^0$ are all differentiable with respect to $(\bm \alpha, \mathbf{\bar f})$.}
	
	Since (KKT condition), (second-order sufficient condition), (SCS), and (MFCQ) holds for the problem \eqref{eq:equivalent_RHS_formulation} for any given $(\bm \alpha_0, \mathbf{\bar f}_0) \in \mathcal{C}^0_{\mathcal{B}'}$, optimal primal variables are differentiable with respect to $(\bm \alpha, \mathbf{\bar f})$ \citep{giorgi2018tutorial}.  
	
	\textit{Claim 5. The set $\overline{\mathcal{C}_{\mathcal{B}'}}$ has Lebesgue measure zero.}
	
	First note that set $\overline{\mathcal{C}_{\mathcal{B}'}}$ has no intersection with the interior of $\mathcal{C}_{\mathcal{B}'}$, which means it is a subset of boundary of $\mathcal{C}_{\mathcal{B}'}$, $\partial \mathcal{C}_{\mathcal{B}'}$. To see this, suppose for contradiction there exists $(\bm \alpha_0, \mathbf{\bar f}_0) \in \overline{\mathcal{C}_{\mathcal{B}'}} \cap \mathrm{int}(\mathcal{C}_{\mathcal{B}'})$. Then, there is a sufficiently small $\epsilon > 0$ such that the ball centered at $(\bm \alpha_0,\mathbf{\bar f}_0)$ with radius $\epsilon$, $B_\epsilon(\bm \alpha_0,\mathbf{\bar f}_0) \subseteq \mathcal{C}_{\mathcal{B}'}$. Let $i_0$ be the index of the binding constraint whose associated optimal dual variable is zero. We can always change model parameter such that the $i_0$-th inequality constraint becomes unbinding while the model parameter still remains in $B_{\epsilon}(\bm \alpha_0,\mathbf{\bar f}_0)$. Now, constraint binding pattern is no longer ${\mathcal{B}'}$, which yields a contradiction. Since $\partial \mathcal{C}_{\mathcal{B}'}$ is the boundary of a convex set (it is known critical regions of a nonlinear program whose constraints are linear in parameters, are convex polytope \citep{pistikopoulos2007multi}), it has measure zero \citep{lang1986note}. Together with the fact $\overline{\mathcal{C}_{\mathcal{B}'}} \subseteq \partial \mathcal{C}_{\mathcal{B}'}$, we conclude $\overline{\mathcal{C}_{\mathcal{B}'}}$ has measure zero.

	By Claim 5, we know $\bigcup_{{\mathcal{B}'}} \overline{\mathcal{C}_{\mathcal{B}'}}$ has  Lebesgue measure zero since it is a finite union of measure zero sets. Claim 4 implies optimal primal variables of \eqref{eq:equivalent_RHS_formulation} are differentiable with respect to $(\bm \alpha, \mathbf{\bar f})$ at any $(\bm \alpha_0, \mathbf{\bar f}_0) \in \bigcup_{\mathcal{B}'}\mathcal{C}^0_{\mathcal{B}'}$. Together with the fact that the parameter space is $\bigcup_{{\mathcal{B}'}} \mathcal{C}_{\mathcal{B}'} = \bigcup_{{\mathcal{B}'}} \left(\mathcal{C}^0_{\mathcal{B}'}\cup \overline{\mathcal{C}_{\mathcal{B}'}}\right) = \left[\bigcup_{\mathcal{B}'}\mathcal{C}^0_{\mathcal{B}'}\right] \bigcup \left[\bigcup_{\mathcal{B}'} \overline{\mathcal{C}_{\mathcal{B}'}}\right]$, we know except the set $\bigcup_{\mathcal{B}'} \overline{\mathcal{C}_{\mathcal{B}'}}$, which is of measure zero, optimal primal variables are differentiable with respect to $(\bm \alpha, \bar{\mathbf{f}})$. The proof is thus complete.
\end{proof}

\begin{lemma}[Derivative Computation]\label{lemma:derivatives}
The derivatives of the social cost metrics with respect of capacity parameters satisfy, for each $\lt \in [\mt]$ and $\lp \in [\mp]$,
\begin{subequations}
\begin{align}
\frac{\partial \Phi_\mathrm{T}}{\partial \alpha_{\ell_\mathrm{T}}} = \frac{\partial (\mathbf{x^\top [c(x,\bm \lambda)-\bm \pi(\bm \lambda)])}}{\partial \alpha_{\ell_\mathrm{T}}}, \quad &
\frac{\partial \Phi_\mathrm{P}}{\partial \alpha_{\ell_\mathrm{T}}} = \bm \lambda^\top \frac{ \partial \mathbf{d(x)}}{\partial \alpha_{\ell_\mathrm{T}}},  \\
\frac{\partial \Phi_\mathrm{T}}{\partial \bar f_{\ell_\mathrm{P}}} = \frac{\partial (\mathbf{x^\top [c(x,\bm \lambda)-\bm \pi(\bm \lambda)])}}{\partial \bar f_{\ell_\mathrm{P}}}, \quad &
\frac{\partial \Phi_\mathrm{P}}{\partial \bar f_{\ell_\mathrm{P}}} = \bm \lambda^\top \frac{ \partial \mathbf{d(x)}}{\partial \bar f_{\ell_\mathrm{P}}} - \eta_{\ell_\mathrm{P}}.
\end{align}
\end{subequations}
\end{lemma}
\begin{proof}[Proof of Lemma \ref{lemma:derivatives}]
	Derivatives $\partial \Phi_\mathrm{T}/\partial \alpha_{\ell_\mathrm{T}}$ and $\partial \Phi_\mathrm{T}/\partial \bar f_{\ell_\mathrm{P}}$ follow directly from the definition that 
	\begin{equation}
	\Phi_\mathrm{T}:=\mathbf{x}^\top \left[\mathbf{c(x,\bm \lambda)-\bm \pi(\bm \lambda)}\right].
	\end{equation}
	To derive $\partial \Phi_\mathrm{P}/\partial \alpha_{\ell_\mathrm{T}}$ and $\partial \Phi_\mathrm{P}/\partial \bar f_{\ell_\mathrm{P}}$, note first that:
    \begin{subequations}
    	\begin{align}
        \frac{\partial (\bm \lambda^\top \mathbf{d})}{\partial \alpha_{\ell_\mathrm{T}}} &= \frac{\partial \bm \lambda^\top (\mathbf{g}-\mathbf{p})}{\partial \alpha_{\ell_\mathrm{T}}}=2\mathbf{g^\top Q} \frac{\partial \mathbf{g}}{\partial \alpha_{\ell_\mathrm{T}}} + \bm \mu^\top \frac{\partial \mathbf{g}}{\partial \alpha_{\ell_\mathrm{T}}} -\frac{\partial \bm \lambda^\top \mathbf{p}}{\partial \alpha_{\ell_\mathrm{T}}}\\
        &=  2\mathbf{g^\top Q} \frac{\partial \mathbf{g}}{\partial \alpha_{\ell_\mathrm{T}}} + \bm \mu^\top \frac{\partial \mathbf{g}}{\partial \alpha_{\ell_\mathrm{T}}} + \frac{\partial \bm \eta^\top \mathbf{Hp}}{\partial \alpha_{\ell_\mathrm{T}}}\\
        &=\frac{\partial \Phi_\mathrm{P}}{\partial \alpha_{\ell_\mathrm{T}}} + \mathbf{g^\top Q}\frac{\partial \mathbf{g}}{\partial \alpha_{\ell_\mathrm{T}}}+ \frac{\partial \bm \eta^\top \mathbf{Hp}}{\partial \alpha_{\ell_\mathrm{T}}},
    \end{align}
    \end{subequations}
    where $\mathbf{x},\mathbf{g},\mathbf{p},\bm \lambda,\gamma, \bm \eta$ are optimal solutions to \eqref{eq:auxiliary_opt} viewed as functions of $(\bm \alpha, \mathbf{\bar f})$. We apply complementary slackness $\bm \lambda^\top (\mathbf{g}-\mathbf{p}-\mathbf{d}) = 0$ to obtain the first equaility, $\bm \lambda = \mathbf{Qg+\bm \mu}$ to obtain the second equality, $\bm \lambda = \gamma \mathbf{1} - \mathbf{H^\top \bm \eta}$ to obtain the third equaility, and $\Phi_\mathrm{P} = \frac{1}{2}\mathbf{g^\top Q g + \bm \mu^\top g}$ to obtain the last equaility. Therefore,
    \begin{subequations}
        \begin{align}\label{eq:de1_1}
            \bm \lambda^\top \frac{\partial \mathbf{d}}{\partial \alpha_{\ell_\mathrm{T}}} &= \frac{\partial \Phi_\mathrm{P}}{\partial \alpha_{\ell_\mathrm{T}}} + \mathbf{g^\top Q}\frac{\partial \mathbf{g}}{\partial \alpha_{\ell_\mathrm{T}}}+ \frac{\partial \bm \eta^\top \mathbf{Hp}}{\partial \alpha_{\ell_\mathrm{T}}} - \mathbf{d}^\top \frac{\partial \bm \lambda}{\partial \alpha_{\ell_\mathrm{T}}}\\
            &=\frac{\partial \Phi_\mathrm{P}}{\partial \alpha_{\ell_\mathrm{T}}} + \mathbf{g}^\top \frac{\partial \bm \lambda}{\partial \alpha_{\ell_\mathrm{T}}} -\mathbf{p}^\top \frac{\partial \bm \lambda}{\partial \alpha_{\ell_\mathrm{T}}} - \bm \lambda^\top \frac{\partial \mathbf{p}}{\partial \alpha_{\ell_\mathrm{T}}} - \mathbf{d}^\top \frac{\partial \bm \lambda}{\partial \alpha_{\ell_\mathrm{T}}}\\
            &=\frac{\partial \Phi_\mathrm{P}}{\partial \alpha_{\ell_\mathrm{T}}}-\bm \lambda^\top \frac{\partial \mathbf{p}}{\partial \alpha_{\ell_\mathrm{T}}} = \frac{\partial \Phi_\mathrm{P}}{\partial \alpha_{\ell_\mathrm{T}}},
        \end{align}
    \end{subequations}
    where we use $\bm \lambda^\top \partial \mathbf{p}/\partial \alpha_{\ell_\mathrm{T}} = (\gamma\mathbf{1}^\top - \bm \eta^\top \mathbf{H})\partial \mathbf{p}/\partial \alpha_{\ell_\mathrm{T}} = -\bm \eta^\top \partial \mathbf{Hp}/\partial \alpha_{\ell_\mathrm{T}}= -(\mathbf{\bar f}-\mathbf{f})^\top \partial \bm \eta/\partial \alpha_\mathrm{\ell_\mathrm{T}} = 0$. This is because the last inner product can be written as a summation whose summand takes the form $(\bar f_\ell - f_\ell) \cdot \partial \eta_\ell/\partial \alpha_{\ell_\mathrm{T}}$. If $f_\ell < \bar f_\ell$, then $\eta_l = 0$ according to complementary slackness. Since optimal solutions to \eqref{eq:auxiliary_opt} are continuous in $(\bm \alpha, \mathbf{\bar f})$ according to the proof of Lemma \ref{lemma:existence_derivatives}, changing $\alpha_{\ell_\mathrm{T}}$ by a small amount would not make the constraint $(\mathbf{Hp})_\ell \leq \bar f_\ell$ binding. Therefore, $\partial \eta_\ell/\partial \alpha_{\ell_\mathrm{P}} = 0$. If $f_\ell = \bar f_\ell$, then the summand is trivially zero. The proof for $\bm \lambda^\top \partial \mathbf{d}/\partial \bar f_l$ follows from a similar argument. Note that:
    \begin{subequations}
    	\begin{align}
        \frac{\partial (\bm \lambda^\top \mathbf{d})}{\partial \bar f_{\ell_\mathrm{P}}} &= \frac{\partial \bm \lambda^\top (\mathbf{g}-\mathbf{p})}{\partial \bar f_{\ell_\mathrm{P}}}=2\mathbf{g^\top Q} \frac{\partial \mathbf{g}}{\partial \bar f_{\ell_\mathrm{P}}} + \bm \mu^\top \frac{\partial \mathbf{g}}{\partial \bar f_{\ell_\mathrm{P}}} -\frac{\partial \bm \lambda^\top \mathbf{p}}{\partial \bar f_{\ell_\mathrm{P}}}\\
        &=\frac{\partial \Phi_\mathrm{P}}{\partial \bar f_{\ell_\mathrm{P}}} + \mathbf{g^\top Q}\frac{\partial \mathbf{g}}{\partial \bar f_{\ell_\mathrm{P}}}-\bm \lambda^\top \frac{\partial \mathbf{p}}{\partial \bar f_{\ell_\mathrm{P}}} - \mathbf{p}^\top \frac{\partial \bm \lambda}{\partial \bar f_{\ell_\mathrm{P}}}\\
        &=\frac{\partial \Phi_\mathrm{P}}{\partial \bar f_{\ell_\mathrm{P}}} + \eta_{\ell_\mathrm{P}}+ \mathbf{g^\top Q}\frac{\partial \mathbf{g}}{\partial \bar f_{\ell_\mathrm{P}}} - \mathbf{p}^\top \frac{\partial \bm \lambda}{\partial \bar f_{\ell_\mathrm{P}}},
    \end{align}
    \end{subequations}
    where we apply $-\bm \lambda^\top \partial \mathbf{p}/\partial \bar f_{\ell_\mathrm{P}} = (\mathbf{\bar f - f})^\top \partial \bm \eta/\partial \bar f_{\ell_\mathrm{P}} + \eta_{\ell_\mathrm{P}} = \eta_{\ell_\mathrm{P}}$ to obtain the last equality. Rearranging terms we have:    \begin{equation}
    	\frac{\partial \Phi_\mathrm{P}}{\partial \bar f_{\ell_\mathrm{P}}} = \bm \lambda^\top\frac{\partial \mathbf{d(x)}}{\partial \bar f_{\ell_\mathrm{P}}} - \eta_{\ell_\mathrm{P}}. \label{eq:de1_2}
    \end{equation}
Finally, we prove $\mathbf{c(x,\bm \lambda)}^\top (\partial \mathbf{x}/\partial \alpha_{\ell_\mathrm{T}}) = \mathbf{c(x,\bm \lambda)}^\top (\partial \mathbf{x}/\partial \bar f_{\ell_\mathrm{P}}) = 0$. Denote the optimal value of Problem \eqref{eq:auxiliary_opt} as a function of $(\bm \alpha, \mathbf{\bar f})$, $\Phi^\star(\bm \alpha,\mathbf{\bar f})$, and apply sensitivity analysis techniques on Problem \eqref{eq:auxiliary_opt} we have:
\begin{equation}
	\frac{\partial \Phi^\star}{\partial \alpha_{\ell_\mathrm{T}}} = \frac{1}{2}(\Alr\mathbf{x})_{\ell_\mathrm{T}}^2, \quad \frac{\partial \Phi^\star}{\partial \bar f_{\ell_\mathrm{P}}} = -\eta_{\ell_\mathrm{P}}.
\end{equation}
Note also that $\Phi_\mathrm{C} = \Phi^\star + \frac{1}{2}(\Alr \mathbf{x})^\top \mathrm{diag}(\bm \alpha) \Alr \mathbf{x}$ when it is evaluated at a GUE. Therefore, 
\begin{subequations}\label{eq:prove_sufficient_condition_lemma}
\begin{align}
		&\frac{\partial \Phi_\mathrm{C}}{\partial \alpha_{\ell_\mathrm{T}}} = (\Alr \mathbf{x})_{\ell_\mathrm{T}}^2 + (\Alr \mathbf{x})^\top \mathrm{diag}(\bm \alpha)\Alr \frac{\partial \mathbf{x}}{\partial \alpha_{\ell_\mathrm{T}}} = \frac{\partial \Phi_\mathrm{T}}{\partial \alpha_{\ell_\mathrm{T}}} - \left[\mathbf{c(x,\bm \lambda) - \bm \pi}\right]^\top \frac{\partial \mathbf{x}}{\partial \alpha_{\ell_\mathrm{T}}}, \label{eq:de2_1}\\
		&\frac{\partial \Phi_\mathrm{C}}{\partial \bar f_{\ell_\mathrm{P}}} =
(\Alr \mathbf{x})^\top \mathrm{diag}(\bm \alpha)\Alr \frac{\partial \mathbf{x}}{\partial \bar f_{\ell_\mathrm{P}}} - \eta_{\ell_\mathrm{P}} = \frac{\partial \Phi_\mathrm{T}}{\partial \bar f_{\ell_\mathrm{P}}}-\left[\mathbf{c(x,\bm \lambda) - \bm \pi}\right]^\top \frac{\partial \mathbf{x}}{\partial \bar f_{\ell_\mathrm{P}}} - \eta_{\ell_\mathrm{P}}. \label{eq:de2_2}
\end{align}	
\end{subequations}
Comparing Equations \eqref{eq:de1_1} and \eqref{eq:de1_2} with Equations \eqref{eq:de2_1} and \eqref{eq:de2_2} we have:
    \begin{subequations}
        \begin{align}
        &\bm \lambda^\top \frac{\partial \mathbf{d(x)}}{\partial \alpha_{\ell_\mathrm{T}}} = \frac{\partial \Phi_\mathrm{P}}{\partial \alpha_{\ell_\mathrm{T}}} = \bm \lambda^\top \frac{\partial \mathbf{d(x)}}{\partial \alpha_{\ell_\mathrm{T}}}-\mathbf{c(x, \bm \lambda)}^\top \frac{\partial \mathbf{x}}{\partial \alpha_{\ell_\mathrm{T}}}, \\
        &\bm \lambda^\top \frac{\partial \mathbf{d(x)}}{\partial \bar f_{\ell_\mathrm{P}}} - \eta_{\ell_\mathrm{P}}= \frac{\partial \Phi_\mathrm{P}}{\partial \bar f_{\ell_\mathrm{P}}} = \bm \lambda^\top \frac{\partial \mathbf{d(x)}}{\partial \bar f_{\ell_\mathrm{P}}} - \mathbf{c(x,\bm \lambda)}^\top\frac{\partial \mathbf{x}}{\partial \bar f_{\ell_\mathrm{P}}} - \eta_{\ell_\mathrm{P}}.
        \end{align}
    \end{subequations}
    Rearranging terms we conclude that $\mathbf{c(x,\bm \lambda)}^\top (\partial \mathbf{x}/\partial \alpha_{\ell_\mathrm{T}}) = \mathbf{c(x,\bm \lambda)}^\top (\partial \mathbf{x}/\partial \bar f_{\ell_\mathrm{P}}) = 0$.
\end{proof}

\begin{proof}[Proof of Theorem \ref{thm:sysoptPi}]
The main idea for the proof is constructing optimization problems whose optimal solutions are equivalent to GUE induced by a specific charging pricing policy $\bm \Pi$. Then, analyzing the properties of those auxiliary problems.

	($\bm \Pi^\star_\mathrm{T}$): Consider an auxiliary optimization problem:
	\begin{subequations}\label{eq:trans_social_optimum}
	\begin{align}
		\min_{\mathbf{x}} ~& \frac{1}{2}(\Alr \mathbf{x})^\top \mathrm{diag}(\bm \alpha)\Alr \mathbf{x} + \bm \beta^\top \Alr \mathbf{x} + \frac{1}{2}(\bm \Pi_\mathrm{T}^\star)^\top \mathbf{x},\\
		\mathrm{s.t.} ~& \nu: \mathbf{1^\top x} = 1, \quad \bm \xi: \mathbf{x} \geq \mathbf{0}.
	\end{align}
	\end{subequations}
	The stationarity condition is:
	\begin{equation}
		\underbrace{(\Alr)^\top \mathrm{diag}(\bm \alpha)\Alr \mathbf{x} + (\Alr)^\top \bm \beta}_{\text{Travel cost}} + \underbrace{(\Alr)^\top \mathrm{diag}(\bm \alpha)\Alr \mathbf{x}}_{\bm \Pi_\mathrm{T}^\star} = \nu \mathbf{1} + \bm \xi.
			\end{equation}
Imitate the proof of Lemma \ref{lemma:tse} one can show GUE induced by $\bm \Pi^\star_\mathrm{T}$ is equivalent to optimal solutions to \eqref{eq:trans_social_optimum}. Note that the objective function of \eqref{eq:trans_social_optimum} is in fact $\Phi_\mathrm{T}$ so the induced GUE achieves minimal $\Phi_\mathrm{T}$ for the given model parameter. Apply sensitivity analysis and it is not hard to show optimal value of \eqref{eq:trans_social_optimum} would never decrease if $\alpha_{\ell_\mathrm{T}}$ increases for any $\ell_\mathrm{T} \in [\mt]$, which means type T-T BP would never occur. Moreover, as Problem \eqref{eq:trans_social_optimum} does not depend on transmission line capacity $\mathbf{\bar f}$, changing $\mathbf{\bar f}$ has no impact on the GUE and thus, has no impact on $\Phi_\mathrm{T}$. Since power load $\mathbf{d}$ does not change with $\mathbf{\bar f}$. The feasible region of economic dispatch \eqref{eq:economic_dispatch} enlarges as one increases $\bar f_{\ell_\mathrm{P}}$ for any $\ell_\mathrm{P} \in [\mp]$. Therefore, PBP never occurs.
	
	($\bm \Pi^\star_\mathrm{P}$): Consider an auxiliary optimization problem:
	\begin{subequations}\label{eq:power_social_optimum}
		\begin{align}
		\min_{\mathbf{x,g,p}}~& \frac{1}{2}\mathbf{g^\top Q g+\bm \mu^\top g},\\
		\mathrm{s.t.}~& \mathbf{d(\mathbf{x}) + p = g}; \quad \mathbf{1^\top p} = 0; \quad \mathbf{Hp \leq \bar f};\quad \nu:\mathbf{1^\top x} = 1; \quad \bm \xi:\mathbf{x \geq 0}.	
		\end{align}
	\end{subequations}
	The objective function is $\Phi_\mathrm{P}$ and the stationarity condition in terms of $\mathbf{x}$ is:
	\begin{equation}
		\underbrace{(\Alr)^\top \mathrm{diag}(\bm \alpha)\Alr \mathbf{x} + (\Alr)^\top \bm \beta}_{\text{Travel cost}} + \underbrace{\bm \pi(\bm \lambda) - (\Alr)^\top \mathrm{diag}(\bm \alpha)\Alr \mathbf{x} - (\Alr)^\top \bm \beta}_{\bm \Pi_\mathrm{P}^\star} = \nu \mathbf{1} + \bm \xi.
	\end{equation}
	One can similarly prove GUE induced by $\bm \Pi_\mathrm{P}^\star$ is equivalent to optimal solutions to \eqref{eq:power_social_optimum}. Since Problem \eqref{eq:power_social_optimum} does not depend on $\bm \alpha$, changing $\bm \alpha$ does not affect GUE and thus TBP never occurs. Moreover, for any $\ell_\mathrm{P}\in [\mp]$, increasing $\bar f_{\ell_\mathrm{P}}$ enlarges feasible region of \eqref{eq:power_social_optimum} and thus would never increase power system social cost.
	
	($\bm \Pi^\star_\mathrm{C}$): Consider the auxiliary optimization problem:
		\begin{subequations}\label{eq:total_social_optimum}
		\begin{align}
		\min_{\mathbf{x,g,p}}~& (\Alr \mathbf{x})^\top \mathrm{diag}(\bm \alpha) \Alr \mathbf{x} + \bm \beta^\top \Alr \mathbf{x}+\frac{1}{2}\mathbf{g^\top Q g+\bm \mu^\top g},\\
		\mathrm{s.t.}~& \mathbf{d(\mathbf{x}) + p = g}; \quad \mathbf{1^\top p} = 0; \quad \mathbf{Hp \leq \bar f};\quad \nu:\mathbf{1^\top x} = 1; \quad \bm \xi:\mathbf{x \geq 0}.	
		\end{align}
	\end{subequations}
	Following the similar idea as the previous two cases, one can prove optimal solutions to \eqref{eq:total_social_optimum} are equivalent to GUE induced by $\bm \Pi_\mathrm{C}^\star$. The objective function is total social cost so the induced GUE achieves mimimum total social cost. Through sensitivity analysis, one can show increasing $\alpha_{\ell_\mathrm{T}}$ for any $\ell_\mathrm{T}$ would never decrease the optimal objective value and increasing $\bar f_{\ell_\mathrm{P}}$ for any $\ell_\mathrm{P}$ would never increase optimal objective function. Hence, neither type T-C nor P-C BP occur.
\end{proof}

\begin{proposition}\label{prop:derivatives_TT_TP}
    Let $(\mathbf{x},\bm \lambda)$ be a GUE induced by static pricing policy $\bm \Pi$. The derivatives of social costs with respect to $\alpha_{\ell_\mathrm{T}}, \ell_\mathrm{T} \in [m_\mathrm{T}]$, are 
    \begin{subequations}
        \begin{align}
             &\frac{\partial \Phi_\mathrm{T}}{\partial \alpha_{\ell_\mathrm{T}}} = (\Alr \mathbf{x})_{\lt}^2 + \left(2(\Alr\mathbf{x})^\top \mathrm{diag}(\bm\alpha) + \bm \beta^\top\right) \Alr \frac{\partial \mathbf{x}}{\partial \alpha_{\ell_\mathrm{T}}}, 
          \qquad \frac{\partial \Phi_\mathrm{P}}{\partial \alpha_{\ell_\mathrm{T}}} = \bm \pi (\bm \lambda)^\top \frac{\partial \mathbf{x}}{\partial \alpha_{\ell_\mathrm{T}}}. 
        \end{align}
    \end{subequations}
\end{proposition}

\begin{proof}[Proof of Proposition \ref{prop:derivatives_TT_TP}]
Directly differentiate $\Phi_\mathrm{T}:=(\Alr \mathbf{x})^\top \Alr \mathbf{x} + \bm \beta^\top \Alr \mathbf{x}$ with respect to $\alpha_{\ell_\mathrm{T}}$ gives the desired result. Formula for $\partial \Phi_\mathrm{P}/\partial \alpha_{\ell_\mathrm{T}}$ follows from Lemma \ref{lemma:derivatives}. 
\end{proof}

\begin{proof}[Proof of Theorem \ref{thm:GUE_affine}]
	Let $\mathcal{B} = (\mathcal{R},\mathcal{L}_\mathrm{P})$ be a given constraint binding pattern and let $\bm \Pi \in \mathcal{P}_\mathcal{B}^\mathrm{cr}$ be fixed.

	(a) We first prove $\mathbf{x}$ is piecewise affine in $\bm \Pi$.  Consider a reduced version of \eqref{eq:solving_transportation_equilibrium} with $\bm \pi$ replaced by $\bm \Pi$, for which we drop all binding constraints. Let $\mathbf{x}^\star$ be one optimal solution to the original optimization problem and $\mathbf{S}$ be the selection matrix defined based on $\mathcal{R}$ that selects all nonzero entries from $\mathbf{x}^\star$. A shortened vector $\mathbf{Sx^\star}$ optimizes the reduced problem. The Lagrangian function of the reduced optimization problem is:
    \begin{subequations}
        \begin{align}
           L &= \frac{1}{2} \mathbf{x^\top (\Alr S^\top S)^\top \mathrm{diag}(\bm \alpha) (\Alr S^\top S) x} + \left( (\Alr \mathbf{S}^\top)^\top \bm \beta + \mathbf{S} \bm \Pi - \nu \mathbf{S 1}\right)^\top \mathbf{S x} + \nu,
        \end{align}
    \end{subequations}
    in which $\mathbf{S} \bm \xi$ is eliminated since it is equal to $\mathbf{0}$. The optimal $\mathbf{x}^\star$ that optimizes $L$
    \mh{satisfies:
    \begin{equation}
   (\Alr \mathbf{S}^\top \mathbf{S})^\top  \mathrm{diag}(\bm \alpha) (\Alr \mathbf{S}^\top \mathbf{S}) \mathbf{x}^\star(\nu) = \mathbf{S^\top} (\nu \mathbf{S 1} - (\Alr \mathbf{S}^\top)^\top \bm \beta - \mathbf{S} \bm \Pi),
    \end{equation}
    where $(\Alr \mathbf{S}^\top \mathbf{S})^\top  \mathrm{diag}(\bm \alpha) (\Alr \mathbf{S}^\top \mathbf{S})$ is bijective from $\{\mathbf{x}:\mathbf{(I-S^\top S)x = 0}\}$ to $\{\mathbf{x}:\mathbf{(I-S^\top S)x = 0}\}$. Therefore, there is a unique $\mathbf{x^\star}(\nu)$ such that $(\mathbf{I-S^\top S})\mathbf{x}^\star(\nu) = \mathbf{0}$ that solves the linear equations. Left multiplying $\mathbf{S}$ on both sides and use the fact $\mathbf{S^\top S x^\star}(\nu) = \mathbf{x}^\star(\nu)$ yields: 
    \begin{equation}
    	\underbrace{\mathbf{S}\left[ (\Alr \mathbf{S}^\top \mathbf{S})^\top  \mathrm{diag}(\bm \alpha) (\Alr \mathbf{S}^\top \mathbf{S}) \right] \mathbf{S^\top}}_{\text{invertible}} \mathbf{Sx}^\star(\nu) = \mathbf{S}\mathbf{S^\top} (\nu \mathbf{S 1} - (\Alr \mathbf{S}^\top)^\top \bm \beta - \mathbf{S} \bm \Pi).
    \end{equation}
    The coefficient matrix of $\mathbf{Sx^\star}(\nu)$ is invertible since if there exists $\mathbf{y \neq 0}$ such that the product of the matrix and $\mathbf{y}$ is $\mathbf{0}$, then it contradicts the fact that $(\Alr)^\top \mathrm{diag}(\bm \alpha) \Alr$ is positive definite since $\mathbf{S^\top S S^\top y \neq 0}$. Therefore,
    \begin{subequations}
    \begin{align}
    	&\mathbf{Sx}^\star(\nu) = \left[\mathbf{S}\left[ (\Alr \mathbf{S}^\top \mathbf{S})^\top  \mathrm{diag}(\bm \alpha) (\Alr \mathbf{S}^\top \mathbf{S}) \right] \mathbf{S^\top}\right]^{-1} \mathbf{S}\mathbf{S^\top} (\nu \mathbf{S 1} - (\Alr \mathbf{S}^\top)^\top \bm \beta - \mathbf{S} \bm \Pi),\\
    	\Rightarrow & \mathbf{x^\star} = \underbrace{\mathbf{S^\top} \left[\mathbf{S}\mathbf{S^\top S} (\Alr)^\top  \mathrm{diag}(\bm \alpha) \Alr \mathbf{S}^\top \mathbf{S}  \mathbf{S^\top}\right]^{-1} \mathbf{S}}_{\mathbf{M}(\bm \alpha) \in \mathbb{R}^{\nr\times \nr}}\mathbf{S^\top} (\nu \mathbf{S 1} - (\Alr \mathbf{S}^\top)^\top \bm \beta - \mathbf{S} \bm \Pi),
    	\end{align}
    	\label{eq:M}
    \end{subequations}
    where we again use $\mathbf{S^\top S x^\star}(\nu) = \mathbf{x}^\star(\nu)$.}
Therefore, the dual problem is:
    \begin{subequations}
        \begin{align}
            \max_{\nu} ~& -\frac{1}{2}(\nu \mathbf{S 1} - (\Alr \mathbf{S}^\top)^\top \bm \beta - \mathbf{S} \bm \Pi)^\top \mathbf{S} \mathbf{M} \mathbf{S}^\top (\nu \mathbf{S 1} - (\Alr \mathbf{S}^\top)^\top \bm \beta - \mathbf{S} \bm \Pi) + \nu.
        \end{align}
    \end{subequations}
    The optimal $\nu^\star$ is:
    \begin{subequations}
        \begin{align}
        \nu^\star &= \frac{1+\mathbf{1^\top S^\top S \left[ (\Alr \mathbf{S}^\top \mathbf{S})^\top \mathrm{diag}(\bm \alpha) (\Alr \mathbf{S}^\top \mathbf{S}) \right]^{-1} S^\top (\mathbf{S} \bm \Pi + (\Alr \mathbf{S}^\top)^\top \bm \beta)}}{ \mathbf{1^\top S^\top S} \left[ (\Alr \mathbf{S}^\top \mathbf{S})^\top \mathrm{diag}(\bm \alpha) (\Alr \mathbf{S}^\top \mathbf{S}) \right]^{-1} \mathbf{S^\top S 1}} \\
        &= \frac{1+\mathbf{1^\top S^\top S \mathbf{M} S^\top (\mathbf{S} \bm \Pi + (\Alr \mathbf{S}^\top)^\top \bm \beta)}}{ \mathbf{1^\top S^\top S} \mathbf{M} \mathbf{S^\top S 1}}.
        \end{align}
    \end{subequations}
    Since $\mathbf{x}^\star$ is affine in $\nu^\star$ and $\bm \Pi$, and $\nu^\star$ is affine in $\bm \Pi$, $\mathbf{x}^\star$ is affine in $\bm \Pi$ upon substituition. In particular, we have:
    \begin{equation}
        \mathbf{x}^\star = \mathbf{K} \bm \Pi + \mathbf{v}, 
    \end{equation}
    where
    \begin{subequations}
        \begin{align}
            &\mathbf{K} := -\mathbf{S^\top S}\left( \mathbf{M} - \frac{\mathbf{MS^\top S 1 1^\top S^\top SM}}{\mathbf{1^\top S^\top S M S^\top S 1}} \right)\mathbf{ S^\top S}, \\
            &\mathbf{v} := \mathbf{S^\top S}\left(\frac{\mathbf{M S^\top S 1 1^\top S^\top S}}{\mathbf{1^\top S^\top S M S^\top S 1}} - \mathbf{M} \right)\mathbf{S^\top S (\Alr)^\top \bm \beta} + \frac{\mathbf{S^\top SMS^\top S 1}}{\mathbf{1^\top S^\top S M S^\top S 1}},
        \end{align}
    \end{subequations}
    and $\mathbf{x}^\star$ depends on $\bm \alpha$ only through $\mathbf{M}$.
    
    (b) The idea of proving $\bm \lambda$ is also affine in $\bm \Pi$ is the same as part (a). We assume a constraint binding pattern $\mathcal{L}_\mathrm{P}$ and consider a reduced version of economic dispatch \eqref{eq:economic_dispatch}. Then, consider the corresponding dual problem (which is easier since unbinding constraints contribute zero dual variables). The derivation is teadious but if one goes through the whole process would see optimal dual variables of \eqref{eq:economic_dispatch} are unique and 
    \begin{equation}
    	\bm \lambda = \mathbf{C\bm \Pi + w},
    \end{equation}
    where
    \begin{subequations}
    \begin{align}
    	&\mathbf{C} := \rho \left( \mathbf{\frac{1 1^\top}{1^\top Q^{-1} 1} + \left(\frac{H^\mathrm{net} Q^{-1} 1 1^\top}{1^\top Q^{-1}1}-H^\mathrm{net}\right)^\top R^{-1} \left(\frac{H^\mathrm{net} Q^{-1} 1 1^\top}{1^\top Q^{-1}1}-H^\mathrm{net}\right)}\right) \mathbf{(\Acb)^\top \Acr},\\
    	&\mathbf{R := \left( H^\mathrm{net}Q^{-1}H^\mathrm{net}-\frac{H^\mathrm{net}Q^{-1}11^\top Q^{-1} (H^\mathrm{net})^\top}{1^\top Q^{-1} 1} \right)},\\
        	&\begin{aligned}
\mathbf{w}:=&\left(\frac{\mathbf{1 1^\top Q^{-1}}}{\mathbf{1^\top Q^{-1}1^\top}}-\mathbf{I}\right) \mathbf{H}^\mathrm{net} \left(\mathbf{H}^\mathrm{net} \mathbf{Q^{-1}}(\mathbf{H}^\mathrm{net})^\top-\frac{\mathbf{H}^\mathrm{net}\mathbf{Q^{-1} 1 1^\top Q^{-1} (H^\mathrm{net})^\top}}{\mathbf{1^\top Q^{-1} 1}}\right)^{-1}\\
    		\cdot &\left(\frac{\mathbf{H}^\mathrm{net} \mathbf{Q^{-1} 1 1^\top Q^{-1} \bm \mu}}{\mathbf{1^\top Q^{-1} 1}}-\mathbf{H}^\mathrm{net} \mathbf{Q^{-1}} \bm \mu-\mathbf{\bar f}^\mathrm{net}\right) +\frac{\mathbf{1 1^\top Q^{-1}}}{\mathbf{1^\top Q^{-1} 1}} \bm \mu.
    	\end{aligned}
    	\end{align}
    \end{subequations}
    and $\mathbf{H}^\mathrm{net}$ is a reduced version of $\mathbf{H}$ in which only rows associated with binding constraints are preserved.
    
    (c) Let $\bm \Pi_1, \bm \Pi_2 \in \mathcal{P}_\mathcal{B}^\mathrm{cr}$, $t \in (0,1)$, and $\bm \Pi_t:=t\bm \Pi_1 + (1-t) \bm \Pi_2$. We first show that  $\mathbf{x}(\bm \Pi_t)$ has constraint binding pattern $\mathcal{R}$. By Part (a), there exist $\mathbf{K}$ and $\mathbf{v}$ such that $\mathbf{x}(\bm \Pi_i) = \mathbf{K} \bm \Pi_i + \mathbf{v}, i = 1,2$. 
    
    \textit{Claim 1. $\mathbf{x}_t := t\mathbf{x}(\bm \Pi_1) + (1-t)\mathbf{x}(\bm \Pi_2)$ is the unique optimal solution to Problem \eqref{eq:solving_equilibrium_general_pricing} given $\bm \Pi_c$.} Both $\mathbf{x}(\bm \Pi_1)$ and $\mathbf{x}({\bm \Pi_2})$ satisfy KKT conditions of Problem \eqref{eq:solving_equilibrium_general_pricing} with optimal dual variables, say, $(\nu_1, \bm \xi_1)$ and $(\nu_2,\bm \xi_2)$, respectively. Define $\nu := t\nu_1 + (1-t)\nu_2$ and $\bm \xi:=t\bm \xi_1 + (1-t)\bm \xi_2$. First, it is easy to check that $\mathbf{x}_t$ satisfies primal feasibility since it is a convex combination of $\mathbf{x}(\bm \Pi_1)$ and $\mathbf{x}(\bm \Pi_2)$. Multiply the stationarity condition for $\mathbf{x}(\bm \Pi_1)$ by $t$ and that of $\mathbf{x}(\bm \Pi_2)$ by $(1-t)$, and add them together, we obtain:
    \begin{equation}
    	(\Alr)^\top \mathrm{diag}(\bm \alpha) \Alr \mathbf{x}_t + \Alr \bm \beta + \bm \Pi_t = \nu \mathbf{1} + \bm \xi.
    \end{equation}
    Finally, we prove complementary slackness:
    \begin{subequations}
    \begin{align}
    	\bm \xi^\top \mathbf{x}_t &= (t\bm \xi_1 + (1-t)\bm \xi_2)^\top (t\mathbf{x}(\bm \Pi_1) + (1-t)\mathbf{x}(\bm \Pi_2))\\
    	&=t(1-t)\left[\bm \xi_1^\top \mathbf{x}(\bm \Pi_2) + \bm \xi_2^\top \mathbf{x}(\bm \Pi_1)\right] = 0,
    \end{align}
    \end{subequations}
    where the second equality is due to complementary slackness $\bm \xi_i^\top \mathbf{x}(\bm \Pi_i) = 0, i=1,2$ and the last inequality is because $\mathbf{x}(\bm \Pi_1)$ and $\mathbf{x}(\bm \Pi_2)$ have the same constraint binding pattern. Therefore, the tuple $(\mathbf{x}_t,\nu,\bm \xi)$ satisfies KKT conditions of Problem \eqref{eq:solving_equilibrium_general_pricing} given $\bm \Pi_t$. Since the problem is strictly convex, $\mathbf{x}_t$ is the unique optimal solution.
    
    By \textit{Claim 1}, we conclude $\mathbf{x}_t = \mathbf{x}(\bm \Pi_t)$ so the transportation UE induced by $\bm \Pi_t$ has constraint binding pattern $\mathcal{R}$ as it is a convex combination of $\mathbf{x}(\bm \Pi_i),i=1,2$.
    
    By Part (b), we know there exist $\mathbf{C}$ and $\mathbf{w}$ such that $\bm \lambda(\bm \Pi_i) = \mathbf{C}\bm \Pi_i + \mathbf{w},i=1,2$. Stationarity condition of economic dispatch implies $\mathbf{g}(\bm \Pi_i) = \mathbf{Q}^{-1}(\bm \lambda(\bm \Pi_i)-\bm \mu),i=1,2$, where $\mathbf{g}(\bm \Pi_i)$ is the optimal solution to \eqref{eq:economic_dispatch} given $\mathbf{x}(\bm \Pi_i)$. The optimal power injection induced by $\bm \Pi_t$ can be computed by:
    \begin{subequations}
    \begin{align}
    	\mathbf{p}(\bm \Pi_t) &= \mathbf{g}(\bm \Pi_t) - \mathbf{d}(\mathbf{x}(\bm \Pi_t)) = \mathbf{Q}^{-1}(\mathbf{C}\bm \Pi_t+\mathbf{w}-\bm \mu)-\rho \Acb (\Acr)^\top \mathbf{x}(\bm \Pi_t)\\
    	&=t\left[\mathbf{Q}^{-1}(\mathbf{C}\bm \Pi_1+\mathbf{w}-\bm \mu)-\rho \Acb (\Acr)^\top \mathbf{x}(\bm \Pi_1)\right] \\
    	&+ (1-t)\left[ \mathbf{Q}^{-1}(\mathbf{C}\bm \Pi_2+\mathbf{w}-\bm \mu)-\rho \Acb (\Acr)^\top \mathbf{x}(\bm \Pi_2)\right]\\
    	&=t\mathbf{p}({\bm \Pi_1})+(1-t)\mathbf{p}(\bm \Pi_2).
    \end{align}
    \end{subequations}
    Hence, $\mathbf{p}(
    \bm \Pi_t)$, as a convex combination of $\mathbf{p}(\bm \Pi_1)$ and $\mathbf{p}(\bm \Pi_2)$, has constraint binding pattern $\mathcal{L}_\mathrm{P}$. Therefore, the critical region $\mathcal{P}_\mathcal{B}^\mathrm{cr}$ is convex.
\end{proof}

\begin{proof}[Proof of Theorem \ref{thm:BP_convex}]
Since both $\mathbf{x}$ and $\bm \lambda$ are proved to be an affine function of $\bm \Pi$ in a critical region, we can derive explicit expression for $\Phi_\mathrm{T}$:
\begin{subequations}
	\begin{align}
		\Phi_\mathrm{T} &= (\mathbf{K\bm \Pi+v})^\top (\Alr)^\top \mathrm{diag}(\bm \alpha) \Alr (\mathbf{K\bm \Pi+v}) + \bm \beta^\top \Alr (\mathbf{K\bm \Pi+v})\\
		&=\bm \Pi^\top \mathbf{K^\top (\Alr)^\top \mathrm{diag}(\bm \alpha) \Alr K} \bm \Pi + \left(2\mathbf{v^\top (\Alr)^\top \mathrm{diag}(\bm \alpha) \Alr + \bm \beta^\top \Alr} \right) \mathbf{K} \bm \Pi\\
		&+\mathbf{v^\top (\Alr)^\top \mathrm{diag}(\bm \alpha) \Alr v} + \mathbf{\bm \beta^\top \Alr v}.
	\end{align}
\end{subequations}
Notice that $\Phi_\mathrm{T}$ is a quadratic function of $\bm \Pi$ within a critical region. Therefore, over the whole space, it is a piecewise quadratic function of $\bm \Pi$.

	(a) \textit{For any given critical region, function $\partial \Phi_\mathrm{T}/\partial \alpha_{\ell_\mathrm{T}}$ is concave in $\bm \Pi$ for any $\ell_\mathrm{T} \in [\mt]$.} The coefficient matrix for the quadratic term in $\bm \Pi$ is:
	\begin{equation}
		\frac{\partial }{\partial \alpha_{\ell_\mathrm{T}}} \left[ \mathbf{(\Alr K)^\top \mathrm{diag}(\bm \alpha) \Alr K } \right].
\label{eq:coefficient_matrix}
	\end{equation}
	Plug the expression $\mathbf{K = - S^\top S A S^\top S}$ into \eqref{eq:coefficient_matrix}, \mh{where $\mathbf{A}:=\mathbf{M-\frac{M S^\top S 1 1 ^\top S^\top S M}{1^\top S^\top S M S^\top S 1}}$}. We have:
	\begin{subequations}
		\begin{align}
			\eqref{eq:coefficient_matrix} &= \frac{\partial}{\partial \alpha_{\ell_\mathrm{T}}} \left[ \mathbf{S^\top S A (\Alr S^\top S)^\top \mathrm{diag}(\bm \alpha) \Alr S^\top S A S^\top S } \right] \\
			&=\frac{\partial}{\partial \alpha_{\ell_\mathrm{T}}} \left[ \mathbf{S^\top S A M^{-1} A S^\top S } \right]\\
			&=\frac{\partial}{\partial \alpha_{\ell_\mathrm{T}}} \left[ \mathbf{S^\top S \left(M-\frac{M S^\top S 1 1^\top S^\top S M}{1^\top S^\top S M S^\top S 1}\right) M^{-1} \left(M-\frac{M S^\top S 1 1^\top S^\top S M}{1^\top S^\top S M S^\top S 1}\right) S^\top S} \right]\\
			&=\frac{\partial}{\partial \alpha_{\ell_\mathrm{T}}} \left[ \mathbf{S^\top S M S^\top S - S^\top S \frac{MS^\top S 11 ^\top S^\top S M}{1^\top S^\top S M S^\top S 1} S^\top S} \right]\\
			&=\frac{\partial}{\partial \alpha_{\ell_\mathrm{T}}} \left[\mathbf{S^\top S \left(M-\frac{M S^\top S 1 1^\top S^\top S M}{1^\top S^\top S M S^\top S 1}\right) S^\top S}\right] = \frac{\partial}{\partial \alpha_{\ell_\mathrm{T}}}\left[\mathbf{S^\top S A S^\top S}\right] = \mathbf{S^\top S} \frac{\partial \mathbf{A}}{\partial \alpha_{\ell_\mathrm{T}}} \mathbf{S^\top S}.
		\end{align}
	\end{subequations}
	To show \eqref{eq:coefficient_matrix} is negative semi-definite, it suffices to show that $\partial \mathbf{A}/\partial \alpha_{\ell_\mathrm{T}}$ is negative semi-definite. To compute $\partial \mathbf{A}/\partial \alpha_{\ell_\mathrm{T}}$, first note that:
	\begin{equation}
		\frac{\partial}{\partial \alpha_{\ell_\mathrm{T}}} \left[ \mathbf{(\Alr S^\top S S^\top)^\top \mathrm{diag}(\bm \alpha) \Alr S^\top S S^\top}\right] = \mathbf{(\Alr S^\top S S^\top)^\top} \mathbf{e}_{\ell_\mathrm{T}} \mathbf{e}_{\ell_\mathrm{T}}^\top \mathbf{(\Alr S^\top S S^\top)}.
	\end{equation}
	The derivative of $\mathbf{M}$ with respect to $\alpha_{\ell_\mathrm{T}}$ can be computed as:
	\begin{subequations}
		\begin{align}
		\frac{\partial \mathbf{M}}{\partial \alpha_{\ell_\mathrm{T}}} &= \mathbf{S^\top} \frac{\partial}{\partial \alpha_{\ell_\mathrm{T}}}\left[ \underbrace{\mathbf{(\Alr S^\top S S^\top)^\top \mathrm{diag}(\bm \alpha) \Alr S^\top S S^\top}}_{\mathbf{U}} \right]^{-1} \mathbf{S}\\
		&= -\mathbf{S^\top U^{-1}}  \frac{\partial \mathbf{U}}{\partial \alpha_{\ell_\mathrm{T}}} \mathbf{U^{-1} S} = -\mathbf{ S^\top U^{-1}} \mathbf{(\Alr S^\top S S^\top)^\top} \mathbf{e}_{\ell_\mathrm{T}} \mathbf{e}_{\ell_\mathrm{T}}^\top\mathbf{(\Alr S^\top S S^\top)} \mathbf{U^{-1} S}\\
		&=-\mathbf{M S^\top S (\Alr)^\top}\mathbf{e}_{\ell_\mathrm{T}} \mathbf{e}_{\ell_\mathrm{T}}^\top \mathbf{\Alr S^\top S M},
		\end{align}
	\end{subequations}
	which is symmetric and negative semi-definite. 
	\begin{subequations}
		\begin{align}
			\frac{\partial \mathbf{A}}{\partial \alpha_{\ell_\mathrm{T}}} &= \frac{\partial \mathbf{M}}{\partial \alpha_{\ell_\mathrm{T}}} - 2\frac{\mathbf{MS^\top S 1 1^\top S^\top S}}{\mathbf{1^\top S^\top S M S^\top S 1}}\frac{\partial \mathbf{M}}{\partial \alpha_{\ell_\mathrm{T}}} + \frac{\mathbf{1^\top S^\top S }\frac{\partial \mathbf{M}}{\partial \alpha_{\ell_\mathrm{T}}} \mathbf{S^\top S 1}}{\mathbf{1^\top S^\top S M S^\top S 1}} \cdot \frac{\mathbf{MS^\top S 1^\top 1S^\top S M}}{\mathbf{1^\top S^\top S M S^\top S 1}}\\
			&=\frac{\partial \mathbf{M}}{\partial \alpha_{\ell_\mathrm{T}}}-2\mathbf{P}\frac{\partial \mathbf{M}}{\partial \alpha_{\ell_\mathrm{T}}}+\epsilon \mathbf{P M},
		\end{align}
	\end{subequations}
	where $\mathbf{P}:=(\mathbf{MS^\top S 1 1^\top S^\top S})/(\mathbf{1^\top S^\top S M S^\top S 1})$ is a rank-1 correction, and $\epsilon:=\frac{\mathbf{1^\top S^\top S \mathbf{\frac{\partial \mathbf{M}}{\partial \alpha_{\ell_\mathrm{T}}}}S^\top S 1}}{\mathbf{1^\top S^\top S MS^\top S 1}} \leq 0$. We next prove that $\bm \Pi^\top \mathbf{S^\top S \frac{\partial A}{\partial \alpha_{\ell_\mathrm{T}}}S^\top S} \bm \Pi \leq 0$ for all $\bm \Pi$ from the given critical region. 
	
	First note that matrix $\mathbf{P}^\top$ has only eigenvector $t \mathbf{S^\top S 1}, t \in \mathbb{R}$ with corresponding eigenvalue 1. Any other vectors are all eigenvectors of $\mathbf{P}^\top$ with eigenvalue 0. (i) If $\mathbf{S^\top S \bm \Pi} = t\mathbf{S^\top S 1}$ for some $t$, then
	\begin{subequations}
		\begin{align}
			\bm \Pi^\top \mathbf{S^\top S \frac{\partial A}{\partial \alpha_{\ell_\mathrm{T}}}S^\top S} \bm \Pi &= \bm \Pi^\top \mathbf{S^\top S \left(\epsilon M-\frac{\partial M}{\partial \alpha_{\ell_\mathrm{T}}}\right)S^\top S} \bm \Pi = \mathbf{1^\top S^\top S \left(\epsilon M-\frac{\partial M}{\partial \alpha_{\ell_\mathrm{T}}}\right)S^\top S 1}=0.
		\end{align}
	\end{subequations}
	(ii) If $\mathbf{S^\top S \bm \Pi} \neq t \mathbf{S^\top S 1}$ for any $t \in \mathbb{R}$, then:
	\begin{equation}
		\bm \Pi^\top \mathbf{S^\top S \frac{\partial A}{\partial \alpha_{\ell_\mathrm{T}}}S^\top S} \bm \Pi = \bm \Pi^\top \mathbf{S^\top S \frac{\partial M}{\partial \alpha_{\ell_\mathrm{T}}}S^\top S} \bm \Pi \leq 0,
	\end{equation}
	since matrix $\partial \mathbf{M}/\partial \alpha_{\ell_\mathrm{T}}$ is negative semi-definite. Therefore, $\partial \Phi_\mathrm{T}/\partial \alpha_{\ell_\mathrm{T}}$ is a concave function.
	
	(b). \textit{For any given critical region, function $\partial \Phi_\mathrm{P}/\partial \alpha_{\ell_\mathrm{T}}$ is concave in $\bm \Pi$ for any $\ell_\mathrm{T} \in [\mt]$.} Since $\partial \Phi_\mathrm{P}/\partial \alpha_{\ell_\mathrm{T}} = \bm \pi(\bm \lambda)^\top \partial \mathbf{x}/\partial \alpha_{\ell_\mathrm{T}}$ and both $\mathbf{x}$ and $\bm \lambda$ are affine functions of $\bm \Pi$ with in a critical region, we can express $\partial \Phi_\mathrm{P}/\partial \alpha_{\ell_\mathrm{T}}$ as a function of $\bm \Pi$:
	\begin{equation}
		\frac{\partial \Phi_\mathrm{P}}{\partial \alpha_{\ell_\mathrm{T}}} = \rho \bm \Pi^\top \mathbf{C}^\top (\Acb)^\top \Acr \frac{\partial \mathbf{K}}{\partial \alpha_{\ell_\mathrm{T}}} \bm \Pi + \rho \mathbf{w}^\top (\Acb)^\top \Acr \frac{\partial \mathbf{K}}{\partial \alpha_{\ell_\mathrm{T}}} \bm \Pi.
	\end{equation}
	It is not hard to use the definition of matrix $\mathbf{C}$ to show that the matrix $\mathbf{C}':=\mathbf{C}^\top (\Acb)^\top \Acr$ is symmetric and positive semi-definite. Moreover, we have shown in (a) that the matrix $\partial \mathbf{K}/\partial \alpha_{\ell_\mathrm{T}}$ is symmetric and negative semi-definite. Since $\partial \Phi_\mathrm{P}/\partial \alpha_{\ell_\mathrm{T}} = \bm \pi(\bm \lambda)^\top \partial \mathbf{x}/\partial \alpha_{\ell_\mathrm{T}} = (\partial \mathbf{x}/\partial \alpha_{\ell_\mathrm{T}})^\top \bm \pi(\bm \lambda)$, we can show matrices $\mathbf{C}'$ and $\partial \mathbf{K}/\partial \alpha_{\ell_\mathrm{T}}$ commute. Under this condition, their product preserves negative semi-definiteness.
	
	(c). \textit{Function $\bm \Pi^\top \mathbf{x}(\bm \Pi)-\kappa$ is concave in $\bm \Pi$.} For any given $\bm \Pi$, it must lie in some critical region, so there exist $\mathbf{K}$ and $\mathbf{v}$, depending on the critical region, such that $\mathbf{x}(\bm \Pi) = \mathbf{K} \bm \Pi + \mathbf{v}$. Substitute it so the function is quadratic in $\bm \Pi$ and the coefficient matrix for the quadratic term is $\mathbf{K}$.
	
	\textit{Claim 1.} Matrix $\mathbf{K}$ is negative semi-definite.
	
	Note that $-\mathbf{K}$ can be rewritten as:
	\begin{equation}
		-\mathbf{K = S^\top S \left(\underbrace{\mathbf{M-M^{1/2} \frac{M^{1/2} S^\top S 1 1 ^\top S^\top S M^{1/2}}{1^\top S^\top S M S^\top S 1}M^{1/2}}}_{A}\right)S^\top S}.
	\end{equation}
	Matrix $\mathbf{A}$  is a rank-1 correction of $\mathbf{M}$. Define $\mathbf{u}_1:=\mathbf{M^{1/2}S^\top S1}$ and $\mathbf{u}_2^\top := \mathbf{1^\top S^\top S M^{1/2}}$. We know that matrix $\mathbf{u}_1 \mathbf{u}_2^\top $ has only eigenvalues 0 and 1. Therefore, $\mathbf{A}$ is positive semi-definite and thus $\mathbf{K}$ is negative semi-definite.
	
	Since matrix $\mathbf{K}$ is negative semi-definite, function $\bm \Pi^\top \mathbf{K} \bm \Pi -\kappa$ is concave in $\bm \Pi$.
	
	Combining Parts (a), (b), and (c) we conclude that all constraints in Problem \eqref{eq:BP_elimination} are convex.
 As critical region $\mathcal{P}_\mathcal{B}^\mathrm{cr}$ is convex, Problem \eqref{eq:BP_elimination} is a convex program.
\end{proof}
\subsection{Supplementary Materials for Section~\ref{sec:numerical}}\label{sec:additional_numerical_study}
\subsubsection{The coupled network setting.} The base parameters for the coupled system considerd in our numeircal study is summarized in Table \ref{table:power_system_setting}.
\begin{table*}[h]
  \centering
  \renewcommand{\arraystretch}{1}
  \setlength{\tabcolsep}{3pt}
  \begin{tabularx}{\textwidth}{@{}X r@{}}
    \toprule
    \multicolumn{2}{l}{\textit{Power System}}\\
    \midrule
    Quadratic coefficients & $\mathbf{Q}=\mathrm{diag}([0.11,0.085,0.1225,0,0,0,0,0,0]^\top)$ (\$/MW$^2$)\\
    Linear coefficients & $\bm \mu = [5,1.2,1,0,0,0,0,0,0]^\top$ (\$/MW)\\
    Base load & $\mathbf{d}_0 = [0,480,0,10,160,80,0,40,120]^\top$ (MW)\\
    Transmission line capacities & $\mathbf{\bar f} = [250,250,25,300,10,250,250,250,250]^\top$ (MW)\\
    \midrule
    \multicolumn{2}{l}{\textit{Transportation System}}\\
    \midrule
    OD Pair & (\textit{Davis}, \textit{San Jose})\\
    Number of EVs & $N = 15,000$ (vehicles)\\
    Common charging demand & $\rho = 20\mbox{ (kW)} = 0.02$ (MW)\\
    Travel cost linear coefficient & $\bm \alpha=10^{-3} \times [3.2,3.2,3.2,6.4,9.6,6.4,9.6]^\top$ (\$/vehicle)\\
    Fixed travel cost & $\bm \beta=[1.6, 20.8, 22.4, 19.2, 12.8, 12.8, 19.2]^\top$ (\$)\\
    \bottomrule
  \end{tabularx}
  \caption{Parameters of the coupled system. 
  }
  \label{table:power_system_setting}
\end{table*}

\subsubsection{Sensitivities of BP}
In Section~\ref{sec:numerical} we fix network setting\jqe{s} and study the BPs induced by perturbations of $\bm \alpha$ and $\bar{\mathbf{f}}$. It is also interesting to study \textit{sensitivities of BPs}, i.e., under a setting where some type of BP occurs, \mh{how parameters other than $\bm \alpha$ and $\bar{\mathbf{f}}$ can impact the extent of BPs.} We are particularly interested in changes of two aspects: (i) \textit{BP strength}, i.e., the magnitude of the derivatives $\partial \Phi_\mh{s}/\partial \zeta$, where $s \in \{\mathrm{T,P}\}$ and $\zeta$ can be either $\alpha_{\ell_\mathrm{T}}$ or $\bar{f}_{\ell_\mathrm{P}}$; (ii) \textit{overall social cost increment caused by BP}, i.e., the maximal possible increment in $\Phi_s$ cased by the BP \mh{over certain range of the varying parameter}. \mh{BP strength} and \mh{overall social cost increment} reflect how bad BP could be if one road/line gets expanded by a \textit{tiny}/\textit{large} amount. \jqe{Both are important for coupled infrastructure system planning.}

We use the same setting as that where type P-T and P-P BPs occur (see  Fig.~\ref{fig:my_2by3} (c)-(f)). We fix the number of EVs $N$ and \mh{consider two} per-vehicle charging demand values $\rho \in \{0.002,0.01\}$MW \mh{(the two numbers correspond to the typical Level 1 and Level 2 charger capacities)}. This allows us to isolate the impact of charging intensity. A higher $\rho$ represents situations \mh{with chargers that have higher power ratings.}

\begin{table}[h]
  \centering
  \begin{tabularx}{\columnwidth}{X|X|X|l}
    \toprule
    $\rho$ & $\partial\Phi_\mathrm{T}/\partial \bar{f}_{6,7}$ & $\partial \Phi_\mathrm{P}/\partial \bar{f}_{6,7}$ & Total Increment of $\Phi_\mathrm{T} \& \Phi_\mathrm{P}$ \\
    \midrule
    $0.002$MW & $99.35$ & $46.97$ & $43923(8.4\%)~|~1302(6.3\%)$\\
    $0.01$MW & $5.30$ & $42.62$ & $20349(3.9\%)~|~4855(19.6\%)$\\
    \bottomrule
  \end{tabularx}
  \caption{Sensitivities of type P-T and P-P BPs to the charging demand $\rho$.}
  \label{table:sensitivity_rho}
  \vspace{-10pt}
\end{table}

Figure \ref{fig:sensitivity_rho} illustrates \mh{BP strength} and \mh{overall social cost increments} when varying $\rho$, and Table \ref{table:sensitivity_rho} includes specific numerical values of BP strength and overall social cost increments. The derivatives in the second and third columns are evaluated at $\bar{f}_{6,7} = 10$MW (i.e., the slopes at $10$MW of lines in Fig.~\ref{fig:sensitivity_rho}), and the social cost increments in the fourth column are taken as the maximal possible increments when expanding line $(6,7)$ from $10$MW to $80$MW (i.e., the total increments of lines in Figure \ref{fig:sensitivity_rho}). The numbers in parentheses are the percentages of increases compared to the social cost values when $\bar{f}_{6,7} = 10$MW.

The second and third  columns of Table \ref{table:sensitivity_rho} show \mh{lower charging demand} corresponds to stronger strengths of type P-T and P-P BPs. Type P-T BP strength \mh{increases} because reducing $\rho$ \mh{encourages traffic flow to relocate to route $11$, which as discussed in the previous section, is crucial for the increase of $\Phi_\mathrm{T}$. Meanwhile, the traffic flow relocation induces load relocation adding generation burden to the expensive generator $2$ (see also Fig.~\ref{fig:my_2by3}-(c)), which explains the increase in type P-P BP strength.} However, the \mh{type P-T BP} strength in fact drops to $0$ when $\rho$ decreases to $0$, as then flow relocation does not induce load relocation, \mh{and type P-P BP strength becomes negative when $\rho$ approaches $0$, since in this case transportation and power systems decouple, and expanding a congested line improves $\Phi_\mathrm{P}$.}

Both Figure \ref{fig:sensitivity_rho} and the fourth column of Table \ref{table:sensitivity_rho} demonstrate \mh{lower} charging demand corresponds to \mh{higher} total increment of $\Phi_\mathrm{T}$, and \mh{lower} total increment of $\Phi_\mathrm{P}$. \mh{This is because lower $\rho$ corresponds to smaller charging price differentials, which results in traffic flow relocation to route 11, worsening the transportation system performance. Although the type P-P BP strength increases as $\rho$ decreases, the interval of $\bar{f}_{6,7}$ within which $\Phi_\mathrm{P}$ increases is getting narrower, since it is easier to make line $(6,7)$ uncongested with smaller $\rho$. This explains why the increment in $\Phi_\mathrm{P}$ decreases as $\rho$ decreases.}

The occurrences of the phenomena shown in Figure \ref{fig:sensitivity_rho} and Table \ref{table:sensitivity_rho} can be attributed to the short-cut route $11$. \mh{Without route $11$ (i.e., removing (\textit{Fremont}, \textit{Mtn.View}))}, increasing \mh{charging demand} no longer induces the same changes, as shown in Figure \ref{fig:sensitivity_rho_2}. \mh{In this case, both $\Phi_\mathrm{T}$ and $\Phi_\mathrm{P}$ remain constant.}

\begin{figure}[!htbp]
  \centering
  \begin{subfigure}[t]{.4\textwidth}
    \centering
    \includegraphics[width=.6\textwidth]{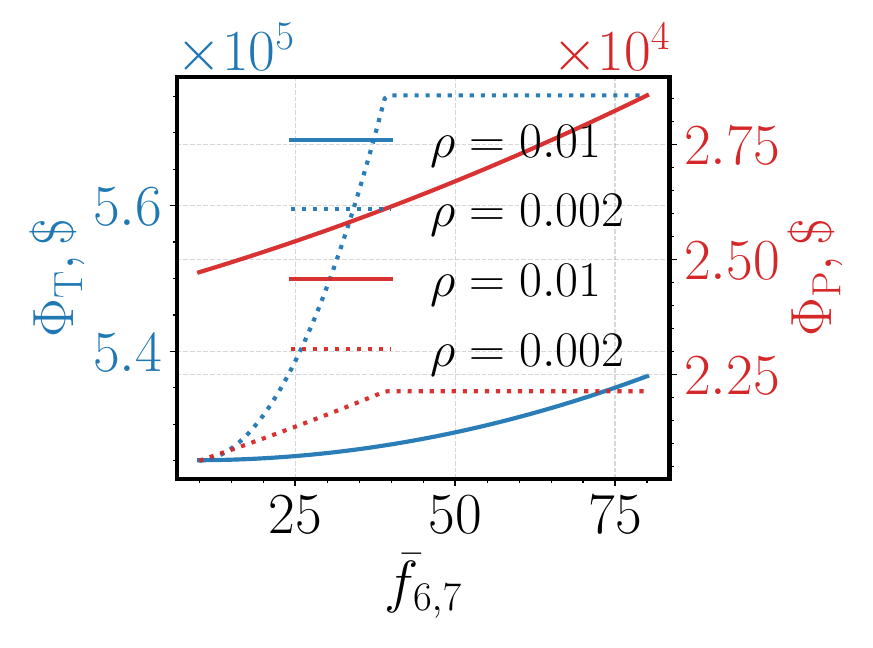}
    \caption{\mh{With route $11$}.}
    \label{fig:sensitivity_rho}
  \end{subfigure}
  \qquad
  \begin{subfigure}[t]{.4\textwidth}
    \centering
    \includegraphics[width=.6\textwidth]{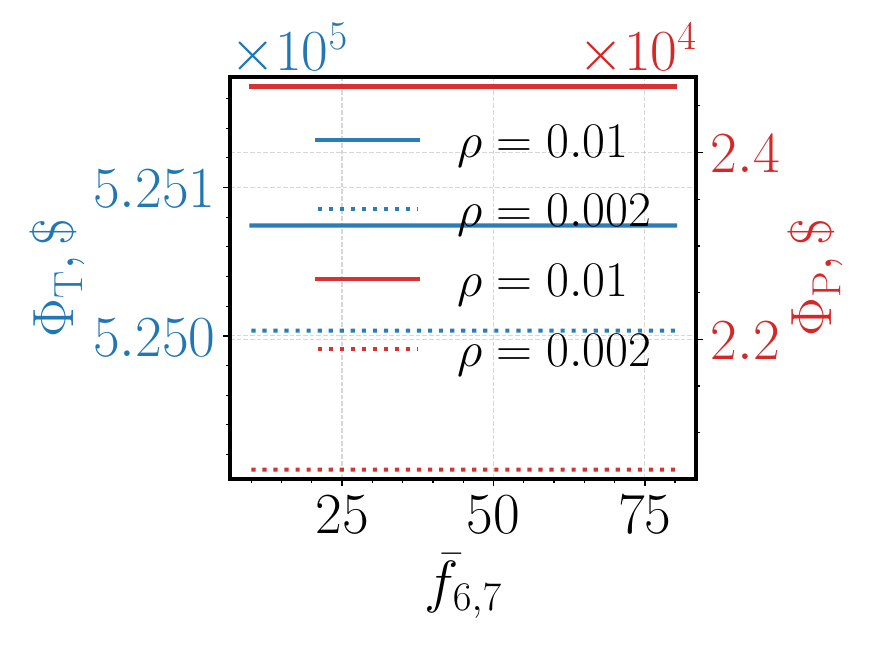}
    \caption{\mh{Without route $11$ (i.e., setting $\alpha_{\mathrm{Fr,M}} = 10^{10}$}).}
    \label{fig:sensitivity_rho_2}
  \end{subfigure}
  \caption{Sensitivities of BPs to changes in $\rho$. \mh{The network has the same setting as that of Figure \ref{fig:my_2by3}-(a)\&(b).}}
  \label{fig:sensitivity_rho_overall}
  \vspace{-10pt}
\end{figure}

%
%
%
%

\begin{table}[!h]
  \centering
  \begin{tabularx}{\columnwidth}{l|X|X|l}
    \toprule
    Scaling Factor & $\partial\Phi_\mathrm{T}/\partial \bar{f}_{6,7}$ & $\partial \Phi_\mathrm{P}/\partial \bar{f}_{6,7}$ & Total Increment of $\Phi_\mathrm{T} \& \Phi_\mathrm{P}$ \\
    \midrule
    $\sigma = 0.8$ & $3.90$ & $29.17$ & $4830~|~1073$\\
    $\sigma = 1.0$ & $5.42$ & $35.55$ & $4752~|~1382$\\
    $\sigma = 1.2$ & $7.21$ & $41.63$ & $4660~|~1707$\\
    \bottomrule
  \end{tabularx}
  \caption{Sensitivities of type P-T and P-P BPs to $\mathbf{Q}$.}
  \label{table:sensitivity_Q}
\end{table}

We also study the sensitivities of type P-T and P-P BPs to the quadratic coefficients $\mathbf{Q}$ of generation costs, under the same setting generating Figure \ref{fig:my_2by3}-(c)-(f). We choose to scale all quadratic coefficients simultaneously by a factor $\sigma > 0$. Table \ref{table:sensitivity_Q} summarizes derivatives and increments of $\Phi_\mathrm{T}$ and $\Phi_\mathrm{P}$. The results show that upscaling $\mathbf{Q}$ enhances the strengths of both BPs. In Figure \ref{fig:additional_ns}, we plot both derivatives as functions of the scaling factor $\sigma$, which corroborates our finding that larger curvature (i.e., $\mathbf{Q}$) of the generation costs could lead to stronger BP strengths that could be detrimental if a slight amount of line expansion is deployed. Figure \ref{fig:increments} shows how the increments of $\Phi_\mathrm{T}$ and $\Phi_\mathrm{P}$ vary with $\sigma$. Larger scaling factor $\sigma$ leads to larger increase $\Phi_\mathrm{P}$, since the generators are made more expensive. However, larger $\sigma$ leads to smaller increase in $\Phi_\mathrm{T}$, which can be attributed to the larger charging price differentials prevent travelers from shifting to routes worsening the transportation system.

\begin{figure}[!htbp]
  \centering
  \begin{subfigure}[t]{.4\textwidth}
    \centering
    \includegraphics[width=.6\textwidth]{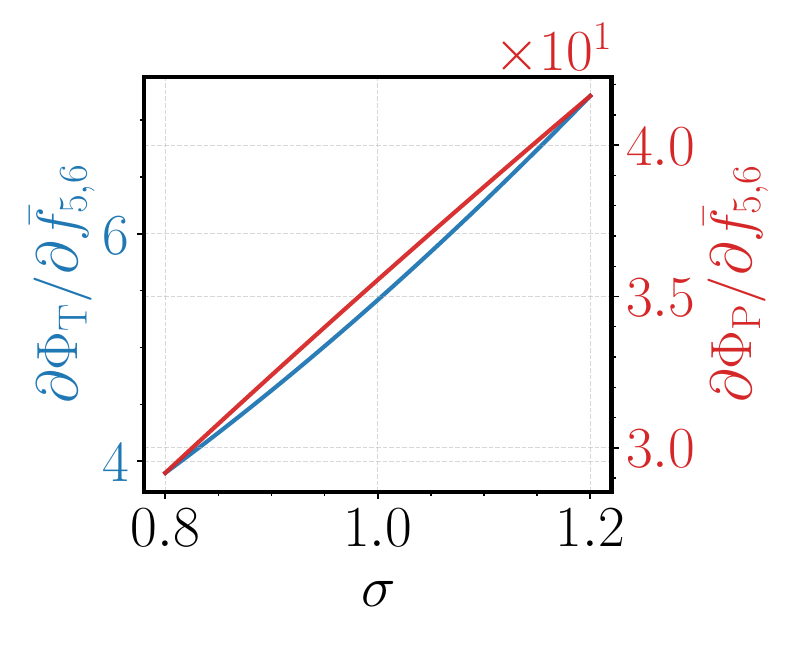}
    \caption{$\partial \Phi_\mathrm{T}/\partial \bar{f}_{5,6}$ and $\partial \Phi_\mathrm{P}/\partial \bar{f}_{5,6}$ as functions of $\sigma$. Both are evaluated at $\bar{f}_{5,6} = 10$MW.}
    \label{fig:derivatives_trend}
  \end{subfigure}
  \qquad
  \begin{subfigure}[t]{.4\textwidth}
    \includegraphics[width=.6\textwidth]{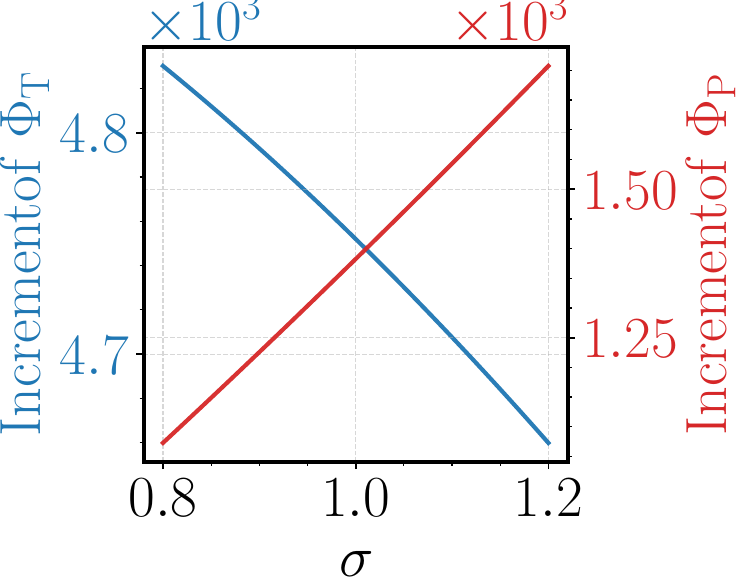}
    \caption{Increments of $\Phi_\mathrm{T}$ and $\Phi_\mathrm{P}$ as functions of $\sigma$.}
    \label{fig:increments}
  \end{subfigure}
  \caption{\mh{BP strength and increments of social cost metrics when perturbing $\mathbf{Q}$.}}
  \label{fig:additional_ns}
  \vspace{-10pt}
\end{figure}

\subsubsection{BP Mitigation}\label{subsec:mitigation}
We study the mitigation effectiveness of pricing policies proposed in Section~\ref{sec:mitigation}. Figure \ref{fig:system_optimal_pricing} presents the results of applying transportation/power system-optimal pricing policies $\bm \Pi_\mathrm{T}^\star$ and $\bm \Pi_\mathrm{P}^\star$ on previous examples.

Fig.~\ref{fig:eliminate_a} shows type T-P BP (see Fig.~\ref{fig:new_tp}) is eliminated by $\bm \Pi_\mathrm{P}^\star$. Power system-optimal pricing, as explained in Section~\ref{sec:mitigation}, reimburses route travel costs, so effectively incentivizes route selection based on LMPs. Expanding (\textit{Fremont, San Jose}) then has no effect  on route choices, and thus does not lead to load relocation to expensive generators.

Figure \ref{fig:eliminate_b} adopts the same setting as that induces type T-T BP (see Fig.~\ref{fig:my_2by3}), except that we expand line capacity of $(1,4)$ to $280$MW, which does not affect the occurrence of type T-T BP as the power system is uncongested even when $(1,4)$ has capacity $250$MW. The expansion makes $\bm \Pi_\mathrm{T}^\star$ applicable since under $\bm \Pi_\mathrm{T}^\star$, route choices are independent of LMPs, which could lead to a charging load spatial distribution infeasible to the power system operation. As explained in Section~\ref{sec:mitigation}, under $\bm \Pi_\mathrm{T}^\star$, the traffic UE is optimal in the sense that it minimizes $\Phi_\mathrm{T}$. Expanding any road could only lead to improvement of $\Phi_\mathrm{T}$ by simple sensitivity analysis.

Fig.~\ref{fig:eliminate_c} shows type P-P BP is eliminated by $\bm \Pi_\mathrm{P}^\star$, while type P-T BP still occurs. Under $\bm \Pi_\mathrm{P}^\star$, load would be distributed to minimize $\Phi_\mathrm{P}$, and thus expanding any line would never make $\Phi_\mathrm{P}$ worse off. Failure to eliminate type P-T BP is consistent with Theorem \ref{thm:sysoptPi}-(b), since the LMP-only dependent route choices could lead to traffic flow distribution that worsens  the traffic condition (i.e., travelers may accumulate at cheap chargers and congest routes).

What is interesting in Fig.~\ref{fig:system_optimal_pricing} is for eliminated BPs, the corresponding social costs are lower compared to those when BPs occur, which is also consistent with our discussion in Section~\ref{sec:mitigation}. However, it seems that in Fig.~\ref{fig:eliminate_c}, the improvement in $\Phi_\mathrm{P}$ \mh{(see the red solid and dashed lines)} comes with the cost of degradation of $\Phi_\mathrm{T}$ (see the blue solid and dashed lines). Type P-T BP becomes even stronger in this case (i.e., the blue dashed line is much steeper than the blue solid line), which is obviously not what system planners want to see.


\begin{figure}[!htbp] 
\centering
\begin{subfigure}[t]{0.2\textwidth}
\centering
\includegraphics[width=.95\textwidth]{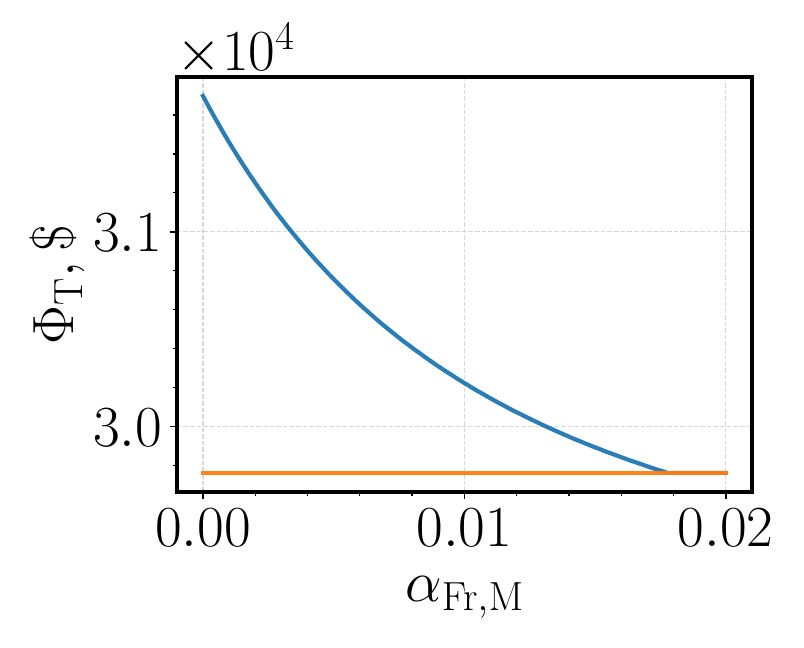}
\caption{Type T-P BP eliminated.}
\label{fig:eliminate_a}
\end{subfigure}
\qquad
\begin{subfigure}[t]{0.2\textwidth}
\centering
\includegraphics[width=.95\textwidth]{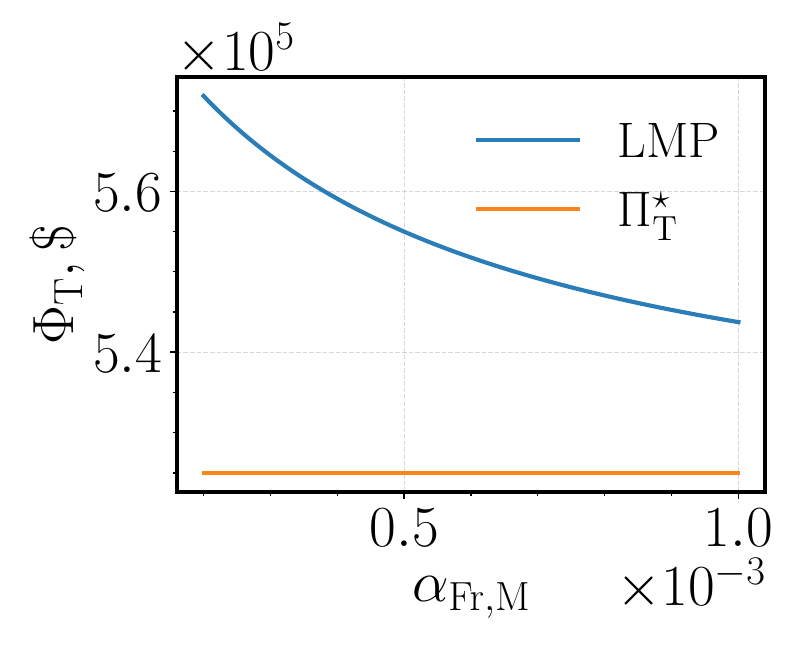}
\caption{Type T-T BP eliminated.}
\label{fig:eliminate_b}
\end{subfigure} \\
\vspace{2pt}
\begin{subfigure}[t]{0.2\textwidth}
\centering
\includegraphics[width=\textwidth]{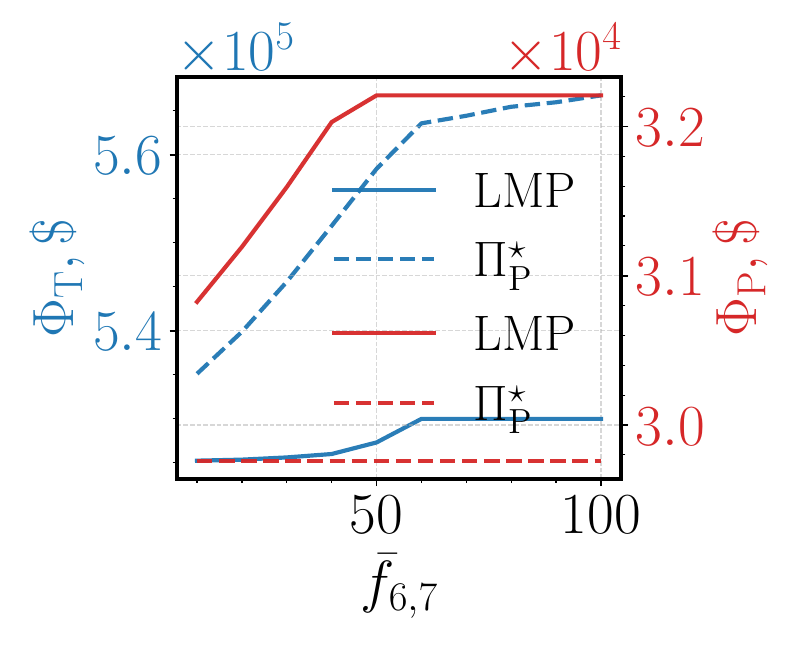}
\caption{Type P-P BP eliminated.}
\label{fig:eliminate_c}
\end{subfigure}
\qquad
\begin{subfigure}[t]{0.2\textwidth}
\centering
\includegraphics[width=\textwidth]{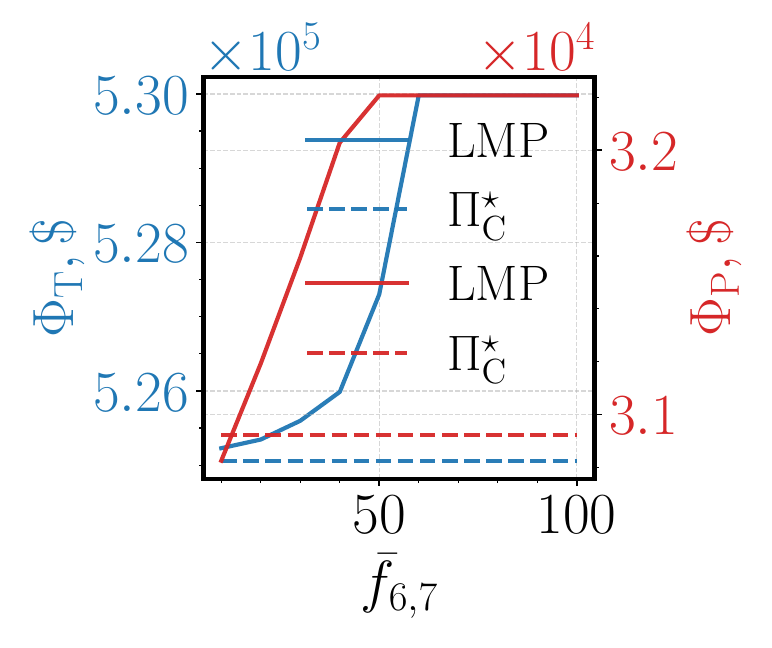}
\caption{Type P-T BP eliminated.}
\label{fig:eliminate_d}
\end{subfigure}
\caption{BPs are eliminated by system-optimal pricing policies $\bm \Pi_\mathrm{T}^\star$, $\bm \Pi_\mathrm{P}^\star$, and $\bm \Pi_\mathrm{C}^\star$. Fig.~\ref{fig:eliminate_a}, \ref{fig:eliminate_b}, \ref{fig:eliminate_c}, and \ref{fig:eliminate_d} use the same setting generating Fig.~\ref{fig:new_tp}, \ref{fig:my_2by3}-(a)\&(b), \ref{fig:my_2by3}-(c)-(f), and \ref{fig:my_2by3}-(c)-(f), respectively.}
\label{fig:system_optimal_pricing}
\end{figure}

\mh{
For the same setting as Fig.~\ref{fig:eliminate_c}, we attempted to eliminate the type P-T BP using $\bm \Pi_\mathrm{T}^\star$ (see Theorem \ref{thm:sysoptPi}-(a)). However, $\bm \Pi_\mathrm{T}^\star$ induces a power load distribution that is infeasible for power system operation, which is not a surprise since transportation system-optimal pricing policy optimizes only the transportation system and may therefore cause power system failures. In contrast, the combined system-optimal pricing $\bm \Pi_\mathrm{C}^\star$ successfully eliminates both type P-T and P-P BPs, as shown in Fig.~\ref{fig:eliminate_d}, and achieves an even lower $\Phi_\mathrm{T}$ than that in Fig.~\ref{fig:eliminate_c}. Although Theorem \ref{thm:sysoptPi}-(c) does not gurantee such an elimination, the results show that a system-optimal pricing policy may still eliminate them in practice.
}

\end{document}